\newenvironment{myitemize}
{ \begin{itemize}
    \setlength{\itemsep}{0pt}
    \setlength{\parskip}{0pt}
    \setlength{\parsep}{0pt}     }
{ \end{itemize}                  } 
\newenvironment{myenumerate}
{ \begin{enumerate}
    \setlength{\itemsep}{0pt}
    \setlength{\parskip}{0pt}
    \setlength{\parsep}{0pt}     }
{ \end{enumerate} 				 }
\newcounter{rowno}
\newcommand\ellipsebyfoci[4]{
	\path[#1] let \p1=(#2), \p2=(#3), \p3=($(\p1)!.5!(\p2)$)
	in \pgfextra{
		\pgfmathsetmacro{\angle}{atan2(\y2-\y1,\x2-\x1)}
		\pgfmathsetmacro{\focal}{veclen(\x2-\x1,\y2-\y1)/2/1cm}
		\pgfmathsetmacro{\lentotcm}{\focal*2*#4}
		\pgfmathsetmacro{\axeone}{(\lentotcm - 2 * \focal)/2+\focal}
		\pgfmathsetmacro{\axetwo}{sqrt((\lentotcm/2)*(\lentotcm/2)-\focal*\focal}
	}
	(\p3) ellipse[x radius=\axeone cm,y radius=\axetwo cm, rotate=\angle];
}
\theoremstyle{plain}
\newtheorem{thm}{Theorem}
\newtheorem*{thm*}{Theorem}
\newtheorem{cor}[thm]{Corollary}
\newtheorem{lem}[thm]{Lemma}
\newtheorem{obs}[thm]{Observation}
\theoremstyle{definition}
\newtheorem{defn}[thm]{Definition}
\newtheorem{rem}[thm]{Remark}
\newtheorem*{leminducedWR3General}{Lemma~\ref{lem:inducedWR3General}}
\newtheorem*{leminducedCycleGeneral}{Lemma~\ref{lem:inducedCycleGeneral}}
\newcommand{\Hom}[1]{\mathrm{\#\textsc{Hom}}\left(#1\right)}
\newcommand{\Ret}[1]{\mathrm{\#\textsc{Ret}}(#1)}
\newcommand{\LHom}[1]{\mathrm{\#\textsc{LHom}}(#1)}
\newcommand{\bis}{\#\mathrm{\textsc{BIS}}}
\newcommand{\sat}{\#\mathrm{\textsc{SAT}}}
\newcommand{\csp}{\mathrm{\#\textsc{CSP}}}
\newcommand{\TCut}[1]{\mathrm{\#\textsc{MultiterminalCut}(#1)}}
\newcommand{\largecut}{\mathrm{\#\textsc{LargeCut}}}
\newcommand{\Zivny}{{\v{Z}}ivn{\'y}}
\renewcommand{\P}{\mathrm{P}}
\newcommand{\FP}{\mathrm{FP}}
\newcommand{\NP}{\mathrm{NP}}
\newcommand{\RP}{\mathrm{RP}}
\newcommand{\numP}{\#\mathrm{P}}
\newcommand{\leap}{\le_\mathrm{AP}}
\newcommand{\eqap}{\equiv_\mathrm{AP}}
\newcommand{\WR}[1]{\mathrm{WR}_{#1}}
\newcommand{\boldS}{\mathbf{S}}
\newcommand{\calA}{\mathcal{A}}
\newcommand{\calG}{\mathcal{G}}
\newcommand{\calH}{\mathcal{H}}
\newcommand{\calI}{\mathcal{I}}
\newcommand{\calL}{\mathcal{L}}
\newcommand{\hatN}{\widehat{N}}
\newcommand{\Ecut}{\mathrm{Cut}}
\newcommand{\NH}{\Gamma_H}
\newcommand{\NHb}{\Gamma_{H_b}}
\newcommand{\Nb}[1]{\Gamma_{#1}}
\newcommand{\abs}[1]{\left\vert #1 \right\vert}
\newcommand{\ceil}[1]{\left\lceil #1 \right\rceil}
\newcommand{\floor}[1]{\left\lfloor #1 \right\rfloor}
\newcommand*\from{\colon}
\let\epsilon=\varepsilon
\newcommand{\eps}{\ensuremath{\varepsilon}}
\newcommand{\ucp}[2]{#1 \times #2}
\newcommand{\stirling}{\genfrac\{\}{0pt}{}}
\let\originalleft\left
\let\originalright\right
\renewcommand{\left}{\mathopen{}\mathclose\bgroup\originalleft}
\renewcommand{\right}{\aftergroup\egroup\originalright}
\renewcommand{\hom}[3][]{{N^{#1}\bigl(#2 \rightarrow #3\bigr)}}
\newcommand{\Imp}{\ensuremath{\mathrm{Imp}}}
\newcommand{\Iv}{I_{\mathrm{v}}}
\newcommand{\Ie}{I_{\mathrm{e}}}
\newcommand{\Cv}{C_{\mathrm{v}}}
\newcommand{\Ce}{C_{\mathrm{e}}}
\newcommand{\Hve}[1]{H_{#1}}
\newcommand{\Dv}[1]{D_v(#1)}
\newcommand{\De}[1]{D_e(#1)}
\newcommand{\X}[3]{X(#1,#2,#3)}
\newcommand{\Jpqt}{J(p,q,t)}
\newcommand{\bisgraphs}{\mathcal{H}_{\mathrm{BIS}}}
\newcommand{\prob}[3]{
\vbox{
\begin{description}
 \item[\bf Name:] #1
 \vspace{-1.75ex}
 \item[\bf Input:] #2  
 \vspace{-1.75ex}
 \item[\bf Output:] #3
\end{description}
}
}
\newcommand{\examplebox}[1]{
	\bigskip
	\noindent\fbox{\parbox{\dimexpr \textwidth-2\fboxsep-2\fboxrule}{
			#1
	}}
	\bigskip
}
\title{The Complexity of Approximately Counting Retractions to Square-Free Graphs}
\author{Jacob Focke, Leslie Ann Goldberg and  Stanislav \Zivny 
	\thanks{
		The research leading to these results has received funding from 
		the European Research Council under the European Union's Seventh Framework Programme (FP7/2007-2013) ERC grant agreement no.\ 334828 and under the European Union's Horizon 2020 research and innovation programme (grant agreement No 714532). Jacob Focke has received funding from the Engineering and Physical Sciences Research Council (grant ref: EP/M508111/1). Stanislav \Zivny\ was supported by a Royal Society University Research Fellowship. The paper 
		reflects only the authors' views and not the views of the ERC or the European Commission. The European Union is not liable for any use that may be made of the information contained therein.}}
\date{24 February 2021}
\begin{document}

\maketitle
\begin{abstract}
	A \emph{retraction} is a homomorphism from a graph $G$ to an induced subgraph $H$ of $G$ that is the identity on $H$. In a long line of research, retractions have been studied under various algorithmic settings. Recently, the problem of approximately counting retractions was considered. We give a complete trichotomy for the complexity of approximately counting retractions to all square-free graphs (graphs that do not contain a cycle of length $4$).
	It turns out there is a rich and interesting class of graphs for which this problem is complete in the class $\bis$. As retractions generalise homomorphisms, our easiness results extend to the important problem of approximately counting homomorphisms. By giving new $\bis$-easiness results we now settle the complexity of approximately counting homomorphisms for a whole class of non-trivial graphs which were previously unresolved.
\end{abstract}

\section{Introduction}
A function $h$ that maps the vertices of a graph $G$ to the vertices of a graph $H$ is a \emph{homomorphism} from $G$ to $H$ if $h$ preserves the edges of $G$, i.e.~if for every pair of adjacent vertices $u,v\in V(G)$ we have $\{h(u),h(v)\}\in E(H)$. It is well-known that homomorphisms represent graph-theoretic structures including proper vertex colourings and independent sets. For example, consider the graphs $I$, $K_3$ and $C_4$ given in Figure~\ref{fig:introExampleH}.
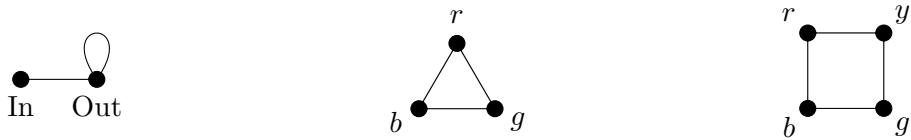
\begin{figure}[ht]
	\centering
	\begin{minipage}{.32 \textwidth}
		\centering
		\begin{tikzpicture}[scale=1, every loop/.style={min distance=10mm,looseness=10}]
		
		\filldraw (0,0) node(a){} circle[radius=3pt] --++ (0:1cm) node(b){} circle[radius=3pt];
		
		\path[-] (b.center) edge  [in=125,out=55,loop] node {} ();
		
		\node at ($(a)+(-90:.35cm)$) {In};
		\node at ($(b)+(-90:.35cm)$) {Out};	
		\end{tikzpicture}
	\end{minipage}
	\centering
	\begin{minipage}{.32 \textwidth}
		\centering
		\begin{tikzpicture}[scale=1]
		
		\filldraw (0,0) node(a){} circle[radius=3pt] --++ (60:1cm) node(b){} circle[radius=3pt] --++ (-60:1cm) node(c){} circle[radius=3pt] -- (a.center);
		
		\node at ($(a)+(210:.35cm)$) {$b$};
		\node at ($(b)+(90:.35cm)$) {$r$};
		\node at ($(c)+(-30:.35cm)$) {$g$};		
		\end{tikzpicture}
	\end{minipage}
	\centering
	\begin{minipage}{.32 \textwidth}
		\centering
		\begin{tikzpicture}[scale=1]
		
		\filldraw (0,0) node(a){} circle[radius=3pt] --++ (90:1cm) node(b){} circle[radius=3pt] --++ (0:1cm) node(c){} circle[radius=3pt] --++ (-90:1cm) node(d){} circle[radius=3pt]-- (a.center);
		
		\node at ($(a)+(225:.35cm)$) {$b$};
		\node at ($(b)+(135:.35cm)$) {$r$};
		\node at ($(c)+(45:.35cm)$) {$y$};
		\node at ($(d)+(-45:.35cm)$) {$g$};		
		\end{tikzpicture}
	\end{minipage}
	\caption{The graphs $I$ (on the left), $K_3$ (in the middle) and $C_4$ (on the right).}
	\label{fig:introExampleH}
\end{figure}
A homomorphism from a graph $G$ to $I$ corresponds to an independent set in $G$, whereas a homomorphism from $G$ to $K_3$ corresponds to a proper $3$-colouring of $G$. Finally, a homomorphism from $G$ to $C_4$ corresponds to a $4$-colouring of $G$ that uses the colours red ($r$), blue ($b$), green ($g$) and yellow ($y$), but for which $\{r,g\}$- and $\{b,y\}$-coloured edges are forbidden. 

If for each vertex $v \in V(G)$ we specify a so-called ``list'' $S_v\subseteq V(H)$ and set $\boldS=\{S_v \mid v\in V(G)\}$, then $h$ is a homomorphism from $(G,\boldS)$ to $H$ if it is a homomorphism from $G$ to $H$ such that for all $v\in V(G)$ it holds that $h(v)\in S_v$. This generalisation of a homomorphism is known as a \emph{list homomorphism}. For example one could consider list homomorphisms from the Petersen graph (see Figure~\ref{fig:petersenGraph} on the left) to the graph $K_3$. These homomorphisms then correspond to proper $3$-colourings of the Petersen graph, where some vertices have pre-assigned colours. 

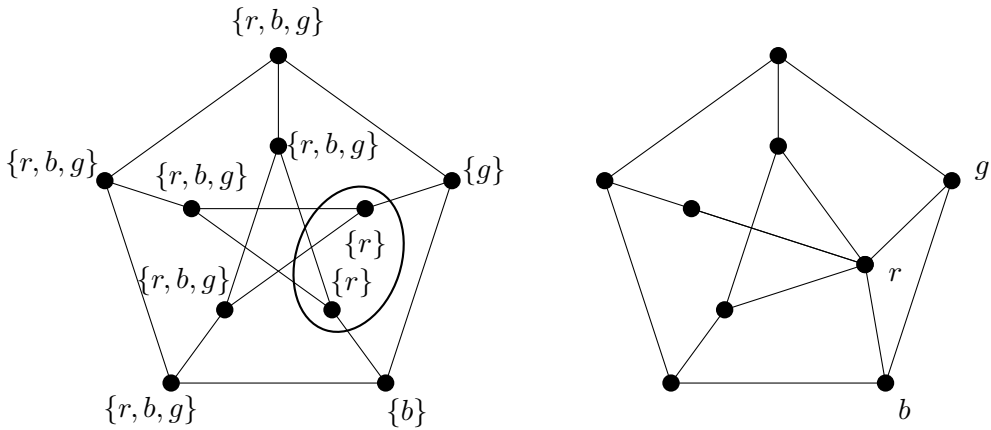
\begin{figure}[ht]
	\centering
	\begin{minipage}{.45 \textwidth}
		\centering
		\begin{tikzpicture}[scale=1.2]
		
		\foreach \x [count=\i] in {18, 90,162,234,306}
		{
			\node[circle,fill=black,inner sep=0pt,minimum size=6.5pt] (v\i) at (\x:1cm){};
			\node[circle,fill=black,inner sep=0pt,minimum size=6.5pt] (u\i) at (\x:2cm){};
			\draw (v\i.center) -- (u\i.center);
		}
		\foreach \i [evaluate=\i as \j using {mod(\i,5)+1}, 
		evaluate=\i as \k using {mod(\i+1,5)+1}] in {1,2,3,4,5}
		{
			\pgfmathtruncatemacro\intj{round(\j)}
			\pgfmathtruncatemacro\intk{round(\k)}
			\draw ({u\i}.center) -- ({u\intj}.center);
			\draw ({v\i}.center) -- ({v\intk}.center);
		}

		\ellipsebyfoci{draw,thick, name=R}{v1.center}{v5.center}{1.4};
		
		\node at ($(v1)+(-90:.4cm)$) {$\{r\}$};
		\node at ($(v2)+(0:.6cm)$) {$\{r,b,g\}$};
		\node at ($(v3)+(72:.37cm)$) {$\{r,b,g\}$};
		\node at ($(v4)+(144:.55cm)$) {$\{r,b,g\}$};
		\node at ($(v5)+(54:.37cm)$) {$\{r\}$};
		
		\node at ($(u1)+(18:.37cm)$) {$\{g\}$};
		\node at ($(u2)+(90:.37cm)$) {$\{r,b,g\}$};
		\node at ($(u3)+(162:.6cm)$) {$\{r,b,g\}$};
		\node at ($(u4)+(234:.37cm)$) {$\{r,b,g\}$};
		\node at ($(u5)+(306:.4cm)$) {$\{b\}$};		
		\end{tikzpicture}
	\end{minipage}
	\centering
	\begin{minipage}{.45 \textwidth}
		\centering
		\begin{tikzpicture}[scale=1.2]
		
		\foreach \x [count=\i] in {18, 90,162,234,306}
		{
			\node[circle,fill=black,inner sep=0pt,minimum size=6.5pt] (u\i) at (\x:2cm){};
		}
		\foreach \x [count=\i] in {-18, 90,162,234,-18}
		{
			\node[circle,fill=black,inner sep=0pt,minimum size=6.5pt] (v\i) at (\x:1cm){};
		}
		\foreach \i [evaluate=\i as \j using {mod(\i,5)+1}, 
		evaluate=\i as \k using {mod(\i+1,5)+1}] in {1,2,3,4,5}
		{
			\pgfmathtruncatemacro\intj{round(\j)}
			\pgfmathtruncatemacro\intk{round(\k)}
			\draw ({u\i}.center) -- ({u\intj}.center);
			\draw ({v\i}.center) -- ({v\intk}.center);
			\draw (v\i.center) -- (u\i.center);
		}

		\ellipsebyfoci{draw, name=R}{v1.center}{v5.center}{1.4};
		
		\node at ($(v1)+(-18:.35cm)$) {$r$};
		\node at ($(u1)+(18:.35cm)$) {$g$};
		\node at ($(u5)+(306:.35cm)$) {$b$};	
		
		\path ($(u2)+(90:.37cm)$) node[opacity=0] {$\{r,b,g\}$};
		\path ($(u5)+(306:.4cm)$) node[opacity=0] {$\{b\}$};	
		\end{tikzpicture}
	\end{minipage}
	\caption{Petersen graph with lists (on the left) and Petersen graph with vertices identified according to single-vertex lists (on the right).}
	\label{fig:petersenGraph}
\end{figure}

If every list contains either just a single vertex or all of the vertices of $H$, then such a list homomorphism is called a \emph{retraction}. This definition of a retraction is equivalent to the definition from the abstract, where a retraction was defined as a homomorphism from a graph $G$ to an induced subgraph $H$ of $G$ that is the identity on $H$. The equivalence (in the sense of parsimonious polynomial-time interreducibility) was shown by Feder and Hell~\cite[Theorem 4.1]{FederLHomRefl}.
The intuition behind the equivalence is that one can identify all of the vertices in $G$ that have the same single-vertex list $\{u\}$ with the corresponding vertex $u$ of $H$. 
Homomorphisms from this new graph 
to $H$ are  retractions in the sense of the abstract.
For the Petersen graph in our example, the modified graph is displayed in Figure~\ref{fig:petersenGraph} on the right and contains $K_3$ as an induced subgraph.

Retractions are also known under the names  \emph{one-or-all list homomorphisms} (e.g.~\cite{FederLHomRefl, FederLHomIrrefl}) and \emph{pre-colouring extensions} (e.g.~\cite{BHT1992,BJW1994,JS1997,KS1997,Tuza1997,Marx2006,FHH2009}).
Related work on retractions is described in Section~\ref{sec:relatedwork}.

In the study of approximate counting there are three important classes of problems~\cite{DGGJApprox}:
(1)~problems that have fully-polynomial-time randomised approximation
schemes (FPRASes),
(2)~problems that are approximation-equivalent to $\bis$, the problem of counting
independent sets in a bipartite graph, and
(3)~problems that are approximation-equivalent to $\sat$, the problem of counting satisfying assignments to a Boolean formula (these problems have no
FPRAS, unless $\NP=\RP$). It is believed that these three classes are disjoint, so there are no FPRASes for the $\bis$-equivalent and $\sat$-equivalent problems.
The problems that are interreducible with $\bis$ under approximation-preserving (AP-)reductions are  complete in a complexity class 
which is sometimes
called $\#\mathrm{RH}\Pi_1$ and is sometimes just called~$\bis$.
For convenience we say that a graph $H$ is $\bis$-easy or $\bis$-hard
if the problem of approximately counting retractions to $H$ is $\bis$-easy or $\bis$-hard, respectively. 
We similarly use the terms $\sat$-easy, and $\sat$-hard.

In this work we give a complete complexity trichotomy for approximately counting retractions to all square-free graphs 
(graphs that do not contain a $4$-cycle)
and we show that all of these problems fall within the three  given complexity classes. An interesting feature is that the class of $\bis$-equivalent graphs turns out to be surprising and rich. 

First we give some illustrative examples. Afterwards we describe the class of $\bis$-equivalent graphs in detail.
A key idea that emerges in the proofs is the role of triangles (3-cycles). It
turns out that triangles in graphs can induce hardness, but they can
also ``turn'' $\sat$-hard cases into $\bis$-easy ones. For example, consider Figure~\ref{fig:introBIS1}. The graph on the left was shown to be $\sat$-hard~\cite[Lemma 2.15]{FGZRet}. In comparison, we will show that the graph on the right is actually $\bis$-easy. Note that this is not because a vertex was added: If one deletes any of the edges of the triangle, the resulting graph is $\sat$-hard again.
\begin{figure}[h!]\centering
	\centering
	\begin{minipage}{.45 \textwidth}
		\centering
		\begin{tikzpicture}[scale=1, every loop/.style={min distance=10mm,looseness=10}]
		
		\filldraw (0,0) node(o){} circle[radius=3pt] --++ (0:1.5cm) node(t0){} circle[radius=3pt] --++ (0:1.5cm) node(t1){} circle[radius=3pt];
		
		\path[-] (o.center) edge  [in=125,out=55,loop] node {} ();	
		\path[-] (t0.center) edge  [in=125,out=55,loop] node {} ();
		\path[-] (t1.center) edge  [in=125,out=55,loop] node {} ();
		
		\filldraw (t0.center) -- (1.25,-1.5cm) node(a22){} circle[radius=3pt];
		\filldraw (t0.center) -- (1.75,-1.5cm) node(a23){} circle[radius=3pt];
		
		\path[opacity=0] ($(t0)+(60:1.5cm)$) node(t) {};
		\path[opacity=0] (t.center) edge  [in=125,out=55,loop] node {} ();	
		\end{tikzpicture}
	\end{minipage}
	\begin{minipage}{.45 \textwidth}
		\centering
		\begin{tikzpicture}[scale=1, every loop/.style={min distance=10mm,looseness=10}]
		
		\filldraw (0,0) node(o){} circle[radius=3pt] --++ (0:1.5cm) node(t0){} circle[radius=3pt] --++ (60:1.5cm) node(t1){} circle[radius=3pt] --++ (-60:1.5cm) node(t2){} circle[radius=3pt];
		
		\draw (t0.center) -- (t2.center);
		
		\path[-] (o.center) edge  [in=125,out=55,loop] node {} ();	
		\path[-] (t0.center) edge  [in=125,out=55,loop] node {} ();
		\path[-] (t1.center) edge  [in=125,out=55,loop] node {} ();
		\path[-] (t2.center) edge  [in=125,out=55,loop] node {} ();	
		
		\filldraw (t0.center) -- (1.25,-1.5cm) node(a22){} circle[radius=3pt];
		\filldraw (t0.center) -- (1.75,-1.5cm) node(a23){} circle[radius=3pt];
		
		\end{tikzpicture}
	\end{minipage}
	\caption{Triangles can induce $\bis$-easiness: The graph on the left is $\sat$-hard whereas the graph on the right is $\bis$-easy.}
	\label{fig:introBIS1}
\end{figure}
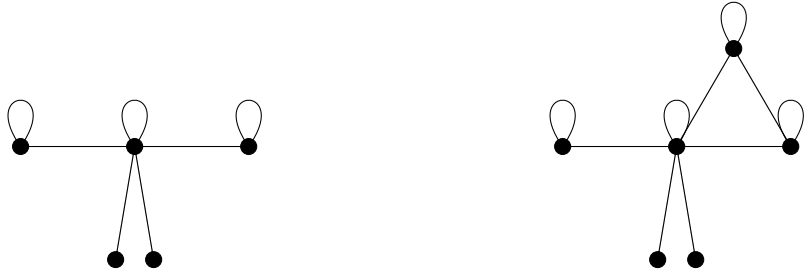
To give an even more striking example, it seems surprising that the graph on the left in Figure~\ref{fig:introBIS1} is $\sat$-hard, but the graph depicted in Figure~\ref{fig:introBIS2} turns out to be $\bis$-easy.
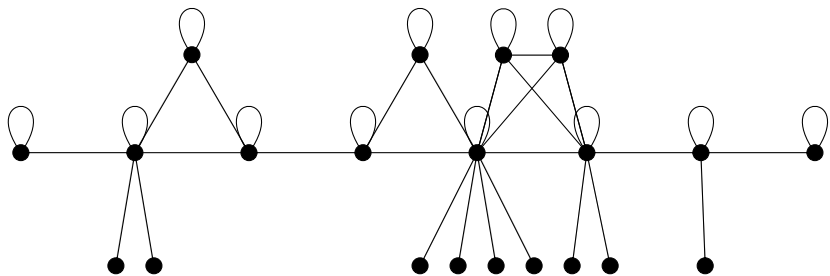
\begin{figure}[h!]\centering
	{\def\scaleFactor{1}
		\begin{tikzpicture}[scale=1, every loop/.style={min distance=10mm,looseness=10}]
		
		\filldraw (0,0) node(o){} circle[radius=3pt] --++ (0:1.5cm) node(t0){} circle[radius=3pt] --++ (60:1.5cm) node(t1){} circle[radius=3pt] --++ (-60:1.5cm) node(t2){} circle[radius=3pt] --++ (0:1.5cm) node(p0){} circle[radius=3pt] --++ (60:1.5cm) node(p1){} circle[radius=3pt] --++ (-60:1.5cm) node(p2){} circle[radius=3pt] --++ (75:1.34cm) node(p3a){} circle[radius=3pt] --++ (0:.75cm) node(p3b){} circle[radius=3pt] --++ (-75:1.34cm) node(p4){} circle[radius=3pt] --++ (0:1.5cm) node(p5){} circle[radius=3pt] --++ (0:1.5cm) node(p6){} circle[radius=3pt];
		
		\draw (t0.center) -- (t2.center);
		\draw (p0.center) -- (p2.center);
		\draw (p2.center) -- (p4.center);
		\draw (p3a.center) -- (p2.center);
		\draw (p3a.center) -- (p4.center);
		\draw (p3b.center) -- (p2.center);
		\draw (p3b.center) -- (p4.center);
		
		\path[-] (o.center) edge  [in=125,out=55,loop] node {} ();	
		\path[-] (t0.center) edge  [in=125,out=55,loop] node {} ();
		\path[-] (t1.center) edge  [in=125,out=55,loop] node {} ();
		\path[-] (t2.center) edge  [in=125,out=55,loop] node {} ();	
		\path[-] (p0.center) edge  [in=125,out=55,loop] node {} ();
		\path[-] (p1.center) edge  [in=125,out=55,loop] node {} ();
		\path[-] (p2.center) edge  [in=125,out=55,loop] node {} ();
		\path[-] (p3a.center) edge  [in=125,out=55,loop] node {} ();
		\path[-] (p3b.center) edge  [in=125,out=55,loop] node {} ();		
		\path[-] (p4.center) edge  [in=125,out=55,loop] node {} ();
		\path[-] (p5.center) edge  [in=125,out=55,loop] node {} ();
		\path[-] (p6.center) edge  [in=125,out=55,loop] node {} ();

		\filldraw (t0.center) -- (1.25,-1.5cm) node(a22){} circle[radius=3pt];
		\filldraw (t0.center) -- (1.75,-1.5cm) node(a23){} circle[radius=3pt];
		\filldraw (p2.center) -- (5.25,-1.5cm) node(b21){} circle[radius=3pt];
		\filldraw (p2.center) -- (5.75,-1.5cm) node(b22){} circle[radius=3pt];
		\filldraw (p2.center) -- (6.25,-1.5cm) node(b23){} circle[radius=3pt];
		\filldraw (p2.center) -- (6.75,-1.5cm) node(b24){} circle[radius=3pt];
		\filldraw (p4.center) -- (7.25,-1.5cm) node(b41){} circle[radius=3pt];
		\filldraw (p4.center) -- (7.75,-1.5cm) node(b42){} circle[radius=3pt];
		\filldraw (p5.center) -- (9,-1.5cm) node(b51){} circle[radius=3pt];
		
		\end{tikzpicture}
	}
	\caption{Example graph which turns out to be $\bis$-easy.}
	\label{fig:introBIS2}
\end{figure}
There exists an interesting underlying balancing process between looped cliques in the neighbourhood of a looped vertex and its number of unlooped neighbours, which decides whether a graph is $\bis$-easy or $\sat$-hard.  

We now informally define the $\bis$-easy class $\bisgraphs$
(it is defined formally in Definition~\ref{def:bisgraphs}). 
This class also contains graphs with squares and is more general than what we will need to classify all square-free graphs.
A graph in $\bisgraphs$ is a path $P$ of looped vertices with some attached unlooped degree-$1$ vertices (bristles, depicted below the path in Figure~\ref{fig:introBIS2}) and some attached looped vertices forming cliques with two consecutive vertices of $P$ (depicted above the path in Figure~\ref{fig:introBIS2}). For each vertex $p$ of the path $P$ the number of attached bristles satisfies the following properties:
\begin{myitemize}
	\item If $p$ is an endpoint of $P$ then it does not have a bristle.
	\item If $p$ is not an endpoint of $P$ then it is part of exactly two reflexive cliques $K_L$ (``to the left'' of $p$) and $K_R$ (``to the right'' of $p$). Then $p$ has at most $(\abs{K_L}-1)\cdot (\abs{K_R}-1)$ bristles.
\end{myitemize}

Consider again the graph in Figure~\ref{fig:introBIS2}. Note that the graph is not square-free as this is
not required by the definition of $\bisgraphs$. The third vertex from the right
on the path is part of a reflexive $4$-clique to the left and a reflexive $2$-clique to the right.
Hence the vertex can have at most $(4-1)\cdot (2-1)=3$ bristles (and has in fact
only $2$ bristles).

\newcommand{\ThmbisEasyRet}{
	Let $H$ be a graph in $\bisgraphs$. Then 
	approximately counting retractions to~$H$	
	is $\bis$-equivalent under approximation-preserving reductions.
}

\begin{thm}\label{thm:bisEasyRet}
	\ThmbisEasyRet
\end{thm}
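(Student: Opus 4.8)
The plan is to prove the two directions of the claimed equivalence separately --- that $\Ret H$ is $\bis$-hard, and that $\Ret H$ is $\bis$-easy --- with essentially all of the work in the second.

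\medskip
\noindent\textit{Hardness.} Here I would exploit that, by Definition~\ref{def:bisgraphs} (which excludes the degenerate graphs --- e.g.\ a single reflexive clique --- for which $\Ret H$ is trivially in $\FP$), the path $P$ has at least three vertices, so some interior vertex $p$ of $P$ lies in two reflexive cliques $K_L,K_R$ each of size at least $2$; then $p$ together with one vertex of $K_L\setminus\{p\}$ and one of $K_R\setminus\{p\}$ induces a reflexive path on three vertices (a Widom--Rowlinson-type sub-model) in $H$. It is known that approximately counting homomorphisms to a reflexive $P_3$ is $\bis$-complete \cite{DGGJApprox,KelkThesis}, and I would transfer this lower bound to $\Ret H$ using the pinning and gadget reductions developed for retractions in \cite{FGZRet}, pinning input vertices onto vertices of $P$ so as to confine images to the relevant sub-configuration. (If $H$ carries a bristle one may instead work with that bristle and its path-vertex.) Membership $\Ret H \in \bis$ then follows from the easiness direction.

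\medskip
\noindent\textit{Easiness.} Let $(G,\mathbf{S})$ be an instance of $\Ret H$. I may assume $G$ connected (otherwise multiply the counts over components) and that pins have been propagated along edges --- in particular, a neighbour of a vertex pinned to a bristle $b$ is itself pinned, onto the unique $H$-neighbour of $b$. Let $H^{\mathrm{refl}}$ denote the reflexive skeleton of $H$ (the union of the reflexive cliques $K_i$). Consecutive cliques share a single path-vertex and non-consecutive ones are disjoint, so $H^{\mathrm{refl}}$ is a chain of cliques and hence a reflexive \emph{interval} graph; thus $\LHom{H^{\mathrm{refl}}}$, and a fortiori $\Ret{H^{\mathrm{refl}}}$, is $\bis$-easy by the trichotomy of \cite{GGJList}. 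The structural observation driving the reduction is that in any homomorphism $h\colon G\to H$ agreeing with $\mathbf{S}$, the set $B=h^{-1}(\{\text{bristles of }H\})$ is an \emph{independent} set of $G$ (distinct bristles are non-adjacent and bristles are irreflexive), each $v\in B$ comes with a choice of bristle on some path-vertex $p_v$, the whole neighbourhood $N_G(v)$ lies in $G-B$ and is mapped to $p_v$, and $h$ restricted to $G-B$ is a retraction into $H^{\mathrm{refl}}$. Hence the number of retractions sought is a sum, over these pairs $(B,\beta)$, of numbers of retractions of $G-B$ into $H^{\mathrm{refl}}$ subject to the induced pins.

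\medskip
\noindent\textit{The crux.} The remaining --- and main --- task is to realise this sum, which ranges over exponentially many sets $B$, inside a \emph{single} instance of $\bis$. I would do this by taking the standard $\bis$-reduction for the reflexive interval graph $H^{\mathrm{refl}}$ (equivalently, reduction to counting independent sets via a log-supermodular $\csp$, or an explicit Widom--Rowlinson-style bipartite-graph gadget) and augmenting it: for each interior path-vertex $p$ and each input vertex $v$ that can plausibly land on a bristle of $p$, add a small bipartite gadget offering the alternative ``$v$ uses a bristle of $p$'' with the correct multiplicity, wired to the skeleton part so that this alternative is consistent exactly when all of $N_G(v)$ sits at $p$. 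This is precisely where the hypothesis $H\in\bisgraphs$ enters: the balancing inequality, which bounds the number of bristles on $p$ by $(\abs{K_L}-1)(\abs{K_R}-1)$, is exactly the feasibility condition allowing each bristle of $p$ to be simulated by a distinct pair $(x,y)\in(K_L\setminus\{p\})\times(K_R\setminus\{p\})$ of reflexive neighbours of $p$ within the $\bis$-easy part of the construction, while the condition that endpoints of $P$ carry no bristle keeps the skeleton an interval graph at its ends. I expect that the real obstacle is not any single step but showing that these gadgets compose into a correct approximation-preserving reduction which dovetails with the skeleton reduction; the complementary ``too many bristles'' regime is provably $\sat$-hard, exactly as witnessed by the graphs obtained from members of $\bisgraphs$ by adding one further bristle.
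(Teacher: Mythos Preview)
Your hardness direction is more work than necessary. Every $H\in\bisgraphs$ is connected and, since $Q\ge 1$ forces $p_0$ and $p_2$ to be distinct non-adjacent looped vertices, is neither a reflexive clique nor an irreflexive complete bipartite graph; hence $\bis\leap\Hom{H}\leap\Ret{H}$ follows directly from \cite[Theorem~1]{GGJBIS} and Observation~\ref{obs:HomToRetToLHom}. No gadgetry or pinning is needed.

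For easiness, your route is genuinely different from the paper's, and the difference is instructive. The paper does \emph{not} decompose by bristle-preimages and then reduce to the reflexive skeleton. Instead it invokes the machinery of Lemma~\ref{lem:OALHomToCSPImplies} (from \cite{FGZRet}): any graph that can be written as $\Hve{\Iv,\Ie}$ for two $\csp(\{\Imp\})$ instances $\Iv,\Ie$ on a common variable set automatically has $\Ret{\Hve{\Iv,\Ie}}\leap\bis$. The entire proof of Theorem~\ref{thm:bisEasyRet} therefore consists of exhibiting, for each $H\in\bisgraphs$, explicit constraint sets $\Cv,\Ce$ over variables indexed by the looped vertices of $H$ so that $H\cong\Hve{(X,\Cv),(X,\Ce)}$. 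Looped vertices are encoded by ``monotone'' Boolean assignments (path assignments), clique structure by deleting from $\Ce$ the implications internal to each $K_i$, and bristles by deleting from $\Cv$ carefully chosen implications between consecutive cliques. Your observation that a bristle at $p_i$ corresponds to a pair in $(K_{i-1}\setminus\{p_{i-1}\})\times(K_i\setminus\{p_i\})$ is exactly right and is precisely how the paper parameterises its bristle assignments $\beta_i[a,b]$; the balancing inequality is what guarantees there are enough such pairs.

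However, your ``crux'' paragraph is a genuine gap, and you seem aware of it. You correctly note that your decomposition yields an exponential sum of $\Ret{H^{\mathrm{refl}}}$-instances, and you propose to realise this sum by ``augmenting'' an unspecified $\bis$-reduction for $H^{\mathrm{refl}}$ with unspecified bipartite gadgets. None of this is carried out: you do not say which reduction from \cite{GGJList} you start from, what the gadget is, how the alternative ``$v$ uses a bristle of $p$'' is encoded so that it composes correctly with the skeleton reduction, or why the result is still an instance of a $\bis$-easy problem rather than, say, a weighted sum of such instances. The sentence ``I expect that the real obstacle is\ldots showing that these gadgets compose'' is an acknowledgement that the proof is missing. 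The paper's $\Hve{\Iv,\Ie}$ framework is exactly the device that makes your ``pair'' intuition rigorous without ever writing down the exponential sum; absent that or an equivalent construction, your argument does not go through.
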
 

$\bisgraphs$ is a fairly broad class of graphs but it turns out that approximately counting retractions to any member of $\bisgraphs$ is $\bis$-easy. For the class of square-free graphs, $\bisgraphs$ completely captures the truth --- together with the class of 
non-trivial irreflexive caterpillars (defined momentarily) they form precisely the class of $\bis$-equivalent square-free graphs. 
Here (following a couple of necessary definitions)
is the full complexity classification.

A graph $H$ is $\emph{reflexive}$ if every vertex of $H$ is looped and it is \emph{irreflexive} if it contains only unlooped vertices.
A \emph{square} is a cycle of length $4$. A connected irreflexive graph is a \emph{star} if it contains at most one vertex of degree greater than $1$, and it is a \emph{caterpillar} if it contains a path $P$ such that all vertices outside of $P$ have degree $1$. A graph is \emph{trivial} if it is a reflexive clique or an irreflexive complete bipartite graph.  

\newcommand{\ThmRetNoSquare}{
	
	Let $H$ be a {square-free} graph. 
	
	\renewcommand{\theenumi}{\roman{enumi})}
	\renewcommand{\labelenumi}{\theenumi}
	\begin{myenumerate}
		
		\item If  every connected component of $H$ is trivial then approximately counting retractions to $H$ is in FP.
		
		\item Otherwise, if every connected component of $H$ is 
		
		\begin{myitemize}
			
			\item trivial,
			
			\item in the class $\bisgraphs$, or
			
			\item is an irreflexive caterpillar
			
		\end{myitemize} 
		then approximately counting retractions to $H$ is $\bis$-equivalent. 
		
		\item Otherwise, approximately counting retractions to $H$ is $\sat$-equivalent. 
		
	\end{myenumerate}
}

\begin{thm}\label{thm:RetNoSquare}
	\ThmRetNoSquare
\end{thm}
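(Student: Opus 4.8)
The plan is to prove the three parts by combining a few \emph{easiness} facts with a fine structural analysis of connected square-free graphs and matching \emph{hardness} reductions. Two easiness facts are generic: $\Ret{H}\in\numP$ for every $H$, so $\Ret{H}\leap\sat$ always; and, by Theorem~\ref{thm:bisEasyRet}, $\Ret{H}\eqap\bis$ whenever $H\in\bisgraphs$. First I would establish (or invoke from \cite{FGZRet}) a \emph{component lemma}: if $H$ has connected components $H_1,\dots,H_k$, then $\Ret{H}$ is $\sat$-equivalent iff some $\Ret{H_i}$ is, it is $\bis$-equivalent iff every $\Ret{H_i}$ is $\bis$-easy and at least one is $\bis$-hard, and it is in $\FP$ iff every $\Ret{H_i}$ is. The ``$\geq$'' directions are immediate (restrict all lists of an instance to subsets of one $V(H_i)$); the ``$\leq$'' directions use that in any retraction each connected component of the instance maps entirely into one $H_i$. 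This reduces everything to \emph{connected} square-free $H$.

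For part~(i): a connected trivial square-free graph is either a reflexive clique, for which every list-respecting map is a retraction and $\Ret{H}$ is a product of list sizes, or an irreflexive complete bipartite graph, which---being square-free---is a star $K_{1,n}$, for which $\Ret{H}$ reduces to a simple polynomial-time computation; so $\Ret{H}\in\FP$. For the easiness direction of part~(ii), beyond the two generic facts, I need $\Ret{H}$ to be $\bis$-easy when $H$ is a non-trivial irreflexive caterpillar; I would prove this by a direct AP-reduction to $\bis$, using that a caterpillar is bipartite with a spine path, encoding a retraction instance as a weighted bipartite-independent-set instance in the style of the $\bis$-easiness constructions of \cite{GGJBIS,FGZRet}, with the degree-one legs realised through local weights. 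For the hardness direction of part~(ii), $\Ret{H}$ must be $\bis$-hard whenever $H$ is non-trivial: for $H\in\bisgraphs$ this is the ``equivalent'' half of Theorem~\ref{thm:bisEasyRet}, for non-trivial irreflexive caterpillars it follows from a direct reduction from $\bis$, and for the graphs treated in part~(iii) it is subsumed by the $\sat$-hardness below (since $\bis\leap\sat$). Together with the component lemma this yields (i) and (ii).

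The heart of the theorem---and what I expect to be the \textbf{main obstacle}---is the hardness in part~(iii): every connected square-free $H$ that is non-trivial, not in $\bisgraphs$, and not an irreflexive caterpillar satisfies $\Ret{H}\eqap\sat$. The engine is a combinatorial structure theorem for connected square-free graphs, whose main driver is the elementary observation that in a square-free graph any two vertices have at most one common neighbour (otherwise they lie on a $4$-cycle); in particular every clique has at most $3$ vertices, and this already forces ``path-of-cliques'' shapes and is exactly why $\bisgraphs$ is the right easy class. I would split into two regimes. (a) \emph{$H$ irreflexive}: then $H$ is a tree or contains an induced cycle; a tree that is not a caterpillar contains an induced subdivided star, and an induced cycle is a triangle or has length $\ge 5$---in each sub-case a bounded-size $\sat$-hard configuration is present (the triangle $K_3$ and the $6$-cycle being basic examples, with the list-homomorphism hardness structures of \cite{GGJList} also feeding in). (b) \emph{$H$ has a loop}: square-freeness forces the subgraph induced by the looped vertices to be ``path-of-triangles-and-edges''-like (else it contains, e.g., an induced reflexive $C_n$ with $n\ge 5$ or a reflexive branch point, each of which is hard), and once the looped core has this shape, membership in $\bisgraphs$ is governed precisely by the bristle bound $(\abs{K_L}-1)\cdot(\abs{K_R}-1)$ together with mild conditions on how unlooped vertices attach; any violation exposes a hard gadget whose prototype is the reflexive $P_3$ carrying two bristles on its centre, shown $\sat$-hard in \cite[Lemma 2.15]{FGZRet}. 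Here ``exposes'' means: pinning $H$ (restricting to singleton-or-full lists supported on a suitable induced subgraph) yields some $H'$ with $\Ret{H'}\leap\Ret{H}$, so that $\sat$-hardness of $H'$ suffices.

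Finally, for each hard gadget $H'$ I would give an AP-reduction from $\sat$, or from $\Ret{H''}$ for a previously-established $\sat$-hard $H''$, building instances $(G,\boldS)$ whose retraction counts encode weighted structures (independent sets, or partition functions of two-state spin systems, cf.\ \cite{GJPTwoSpin}) with parameters pushed outside the ``$\bis$-easy'' regime; square-freeness of $H$ enters twice---it limits which substructures can appear around each vertex, so the structural case analysis terminates, and it guarantees the pinning gadgets create no new $4$-cycles. Thus the genuine difficulty lies entirely in part~(iii): matching the combinatorial boundary (the $(\abs{K_L}-1)\cdot(\abs{K_R}-1)$ bristle threshold, and ``caterpillar vs.\ not'' in the irreflexive world) exactly to the complexity boundary, and checking that no exceptional square-free graph slips between ``contains a known $\sat$-hard gadget'' and ``lies in $\bisgraphs$ or is an irreflexive caterpillar''. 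The ``balancing between looped cliques and unlooped neighbours'' flagged in the introduction is precisely what must be pinned down quantitatively at this step.
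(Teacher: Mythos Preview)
Your overall architecture matches the paper's: reduce to connected components (the paper invokes \cite[Remark 1.15]{FGZRet}), handle part~(i) via the Dyer--Greenhill classification, obtain $\bis$-easiness for irreflexive caterpillars from the existing classification in \cite[Theorem 2.3]{FGZRet} (so you need not redo this reduction), obtain $\bis$-hardness from \cite[Theorem 1]{GGJBIS}, and put the real work into a structural case analysis for $\sat$-hardness in part~(iii). The split into ``irreflexive'' versus ``has a loop'' and, in the looped case, into ``looped part is a reflexive triangle-extended path'' versus ``contains a hard reflexive obstruction'' is exactly what the paper does.

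However, your structural dichotomy for the looped subgraph $H^*$ is incomplete, and this is a genuine gap. You claim that if $H^*$ is not path-of-triangles-and-edges shaped then it contains an induced reflexive $C_n$ with $n\ge 5$ or a ``reflexive branch point'' (i.e.\ $\WR{3}$). This is false: the reflexive \emph{net} (a looped triangle with a looped pendant at each corner) is square-free, contains no induced $\WR{3}$, and contains no induced $C_n$ with $n\ge 5$, yet is not a triangle-extended path. The paper's Lemma~\ref{lem:extendedCycle} explicitly excludes \emph{both} $\WR{3}$ and the net to get the triangle-extended-path conclusion, and the net requires its own dedicated $\sat$-hardness reduction (Lemma~\ref{lem:inducedNet}), which is not at all routine: it reduces from $\TCut{3}$ and needs Dirichlet's simultaneous approximation theorem to balance the contributions of the three branches. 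You also omit the mixed-triangle cases (one or two unlooped vertices on a triangle), which the paper disposes of first (Lemmas~\ref{lem:hardtriangles1}, \ref{lem:hardtriangles2}) so that all subsequent structural reasoning may assume every triangle is reflexive.

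Finally, once $H^*$ is a triangle-extended path, ``any violation of the bristle bound exposes a hard gadget'' is correct in spirit but substantially harder than your prototype $\X{2}{2}{0}$ suggests. The relevant neighbourhood graphs are $\X{k_1}{0}{1}$, $\X{k_1}{1}{1}$, $\X{k_1}{0}{2}$ with specific thresholds on $k_1$ (namely $k_1\ge 1$, $k_1\ge 3$, $k_1\ge 5$ respectively), and the reductions are delicate: for small $k_1$ the paper uses $\largecut$ with the $\Jpqt$ gadget and a maximal-type analysis (Tables~\ref{tab:configs1}, \ref{tab:configs2}) that only works in narrow ranges of $k_1$, while for large $k_1$ it switches to Kelk's bipartite-domination criterion (Lemma~\ref{lem:Kelk5.1}). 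A further case---an unlooped neighbour of degree $\ge 2$ attached to the looped core---is handled separately (Lemma~\ref{lem:degree2bristle}). None of these fall out of the single prototype you cite, and the precise $k_1$ thresholds are exactly where the $(\abs{K_L}-1)(\abs{K_R}-1)$ boundary bites, so they cannot be finessed.
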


As a second central contribution of this work we emphasise the implications of Theorem~\ref{thm:bisEasyRet} on the complexity of approximately counting homomorphisms.  
This problem has been studied intensively in the past~\cite{DGGJApprox, GJTreeHoms, DGJSampling, GGJList, GGJBIS, KelkThesis, GKP2004, FGZRet}, but despite these efforts its complexity is still mostly open. For example, there are graphs with as few as four vertices for which its complexity is unresolved. In a partial result, Galanis, Goldberg and Jerrum~\cite{GGJBIS} proved that approximately counting homomorphisms
to a connected graph~$H$ is $\bis$-hard unless 
$H$ is an irreflexive complete bipartite graph or a reflexive clique. However, the knowledge about $\bis$-easiness to complement this result is very fragmented. 
Since approximately counting homomorphisms reduces to
approximately counting retractions (\cite[Observation 1.2]{FGZRet}), our $\bis$-easiness results from Theorem~\ref{thm:bisEasyRet}
carry over to the homomorphism-counting domain. 
Thus, we are able to resolve the complexity of approximately counting homomorphisms to graphs in~$\bisgraphs$.

\begin{cor}\label{cor:bisEasyHom}
	Let $H$ be a graph in $\bisgraphs$. Then 
	approximately counting homomorphisms to~$H$	
	is $\bis$-equivalent under approximation-preserving reductions.
\end{cor}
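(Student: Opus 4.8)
The plan is to obtain the asserted $\bis$-equivalence by combining Theorem~\ref{thm:bisEasyRet} with two facts that are already available: the standard AP-reduction from homomorphism-counting to retraction-counting, and the $\bis$-hardness theorem of Galanis, Goldberg and Jerrum~\cite{GGJBIS}. Since $\bis$-equivalence means $\Hom{H}\eqap\bis$, I would establish the two reductions $\Hom{H}\leap\bis$ and $\bis\leap\Hom{H}$ separately.

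For $\Hom{H}\leap\bis$ (the $\bis$-easiness direction), I would first invoke \cite[Observation 1.2]{FGZRet}, which gives $\Hom{H}\leap\Ret{H}$ for every fixed graph $H$, and then apply Theorem~\ref{thm:bisEasyRet}, which gives $\Ret{H}\leap\bis$ whenever $H\in\bisgraphs$; composing the two AP-reductions yields $\Hom{H}\leap\bis$. No new argument is needed here beyond transitivity of AP-reductions.

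For $\bis\leap\Hom{H}$ (the $\bis$-hardness direction), I would appeal to \cite{GGJBIS}: approximately counting homomorphisms to a connected graph is $\bis$-hard unless the graph is an irreflexive complete bipartite graph or a reflexive clique. So the task reduces to checking that a graph $H\in\bisgraphs$ is neither of these exceptional graphs. By Definition~\ref{def:bisgraphs}, $H$ is connected and contains the looped path $P$, so $H$ has at least one loop; in particular it is not irreflexive, and hence not an irreflexive complete bipartite graph. Likewise $H$ is not a reflexive clique --- this is read off directly from the structure prescribed by Definition~\ref{def:bisgraphs}, and it must hold in any case, since counting retractions to a reflexive clique lies in $\FP$, which by Theorem~\ref{thm:bisEasyRet} is impossible for a member of $\bisgraphs$. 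With both exceptional cases excluded, \cite{GGJBIS} gives $\bis\leap\Hom{H}$, and combining this with the easiness direction gives $\Hom{H}\eqap\bis$.

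The hardness direction is the only place that needs any attention, and even there the work is purely bookkeeping: confirming from Definition~\ref{def:bisgraphs} that every graph in $\bisgraphs$ is connected, carries a loop, and is not a reflexive clique. Once that is in place, the entire weight of the hardness is carried by the cited result of \cite{GGJBIS}, and there is no genuine obstacle to overcome.
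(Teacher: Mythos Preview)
Your proposal is correct and follows essentially the same route as the paper: the paper does not give a separate proof of the corollary, but the surrounding text together with the proof of Theorem~\ref{thm:bisEasyRet} supplies exactly the two ingredients you use --- $\Hom{H}\leap\Ret{H}\leap\bis$ for easiness, and $\bis\leap\Hom{H}$ via~\cite{GGJBIS} for hardness (the latter is already spelled out inside the proof of Theorem~\ref{thm:bisEasyRet}).

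One small remark: your backup argument that $H$ cannot be a reflexive clique ``since counting retractions to a reflexive clique lies in $\FP$, which by Theorem~\ref{thm:bisEasyRet} is impossible'' tacitly assumes $\FP\neq\bis$, which is believed but not known; stick with the direct structural check from Definition~\ref{def:bisgraphs} (with $Q\ge 1$ the vertices $p_0$ and $p_2$ are non-adjacent), which you already mention and which is what the paper relies on.
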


\begin{rem}
	As a side note, the $\bis$-easiness result given in Corollary~\ref{cor:bisEasyHom} also extends to the problems of approximately counting surjective homomorphisms and approximately counting compactions. This follows from the reductions given in~\cite[Theorem 1.5]{FGZRet}.
\end{rem}

\subsection{Related Work}
\label{sec:relatedwork}

The concept of a retraction has been studied as early as the 1930's and originates in work on continuous functions from topological spaces into subspaces~\cite{Borsuk1931}. Subsequently, retractions between discrete structures, and graphs in particular, have  received a lot of research attention~\cite{HellThesis, Hell1974, HR1987, Pesch1988, FV1999}. See~\cite{HN2008} for an overview.
The algorithmic problem of deciding  whether
a retraction exists can be expressed naturally as a constraint satisfaction problem (CSP) and has been studied  
in e.g.,~\cite{HellNesetrilBook, VikasCompRetCSP, Vikas4Vertex, Vikas2017, FederPseudoForest}. 
Retractions have also been studied in the context of directed graphs. A work by Larose~\cite{LaroseSurvey} surveys CSPs with a digraph as the fixed template. The survey has a special focus on CSPs with additional unary constraints, such as the digraph retraction problem. 
Applications of retractions  reach from classical results such as Brouwer's fixed-point theorem and its equivalent no-retraction theorem~\cite[pp. 272-273]{HYTopology} to specific problems such as solving Sudoku puzzles~\cite{HMSudoku}.

The complexity of deciding whether there is a retraction to a fixed graph $H$ has been studied in different contexts, such as CSPs and list homomorphisms~\cite{FederLHomRefl, FederLHomIrrefl, HN2008,FederPseudoForest, HellNesetrilBook}, surjective homomorphisms~\cite{VikasCompRetCSP, Vikas4Vertex, Vikas2017, BKMsurvey}, as well as pre-colouring extensions and scheduling~\cite{BHT1992, HT1993, HT1996, BJW1994,JS1997,KS1997,Tuza1997,Marx2006,FHH2009}. A complete complexity dichotomy for the retraction decision problem is now known as a consequence of the CSP dichotomy~\cite{BulatovCSPDichotomy,ZhukCSPdichotomy} (assuming $\P \neq \NP$). However, a corresponding graph-theoretical characterisation is not known. 
A characterisation is known \cite{FederPseudoForest} for pseudoforests, which are graphs in which 
each connected component has at most one cycle.

The complexity of exactly counting retractions to a fixed graph $H$ is also classified completely~\cite{DG} (assuming $\FP \neq \numP$). 
In this case, there is a characterisation -- if every connected component of $H$ is trivial, then counting retractions is in $\FP$. Otherwise, counting retractions is $\numP$-complete.

The first result on approximately counting retractions is the following classification for graphs which are both square-free and triangle-free.
\begin{thm}[{\cite[Theorem 1.1]{FGZRet}}]\label{thm:RetGirth5}
	Let $H$ be a graph of girth at least $5$.
	{
		\renewcommand{\theenumi}{\roman{enumi})}
		\renewcommand{\labelenumi}{\theenumi}
		\begin{myenumerate}            
			\item If every connected component of $H$ is an irreflexive star, a single looped vertex, or an edge with two loops, then 
			approximately counting retractions to~$H$			
			is in $\FP$.
			\item Otherwise, if every connected component of $H$ is an irreflexive caterpillar or a partially bristled reflexive path, then 
			approximately counting retractions to~$H$			
			is approximation-equivalent to $\bis$.
			\item Otherwise,  approximately counting retractions to~$H$ is approximation-equivalent to $\sat$.
		\end{myenumerate}
	}
\end{thm}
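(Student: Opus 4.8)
The plan is to treat the three parts of the trichotomy in turn, after first passing to the connected case. A retraction instance $(G,\mathbf{S})$ for a disconnected target splits along the connected components of $G$: a component of $G$ carrying a pinned vertex is forced into the component of $H$ containing that vertex, and a pin‑free component of $G$ contributes a sum, over the components of $H$, of homomorphism counts; so $\Ret{H}$ lies in $\FP$, is $\bis$‑equivalent, or is $\sat$‑equivalent according to the ``hardest'' component of $H$. Assume then that $H$ is connected and of girth at least $5$, i.e.\ triangle‑free and square‑free (loops allowed). For part~(i): among such graphs the \emph{trivial} ones --- reflexive cliques and irreflexive complete bipartite graphs --- are precisely the single looped vertex, the reflexive edge, and the irreflexive stars, since a reflexive $K_3$ contains a triangle and $K_{2,2}=C_4$; and for a trivial target even \emph{exact} counting of retractions is in $\FP$ by the known classification of exact retraction counting~\cite{DG}, a fortiori so is approximate counting.

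The easiness half of part~(ii) is the crux of the positive direction: the plan is to show $\Ret{H}\leap\bis$ when $H$ is a non‑trivial irreflexive caterpillar or a partially bristled reflexive path, by reducing to the problem of counting downsets of a poset, which is $\bis$‑equivalent. Take $H$ to be a reflexive path $0{-}1{-}\cdots{-}\ell$ with pendant unlooped bristles attached to interior vertices. Given an instance $(G,\mathbf{S})$, build a poset whose ground set contains $V(G)\times\{1,\dots,\ell\}$: each free vertex $v$ of $G$ contributes a chain $(v,1)<\cdots<(v,\ell)$, whose downset records the level $\phi(v)\in\{0,\dots,\ell\}$, and for each edge $\{x,y\}$ of $G$ and each level $j$ one adds the relations $(y,j)<(x,j+1)$ and $(x,j)<(y,j+1)$, which are equivalent to $\abs{\phi(x)-\phi(y)}\le 1$. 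Then downsets of this poset correspond bijectively to homomorphisms $G\to H$ that use only path vertices; pinning a free vertex to a path vertex is conditioning the downset count on an element being in or out of the downset, which passes to a sub‑poset, and mapping a vertex to a bristle (or, for caterpillars, to a pendant leaf) is encoded by additional chain‑elements whose covering relations impose the constraint that such a vertex forces all its $G$‑neighbours to a single level. For a non‑trivial irreflexive caterpillar the same construction is carried out along the spine, where a homomorphism is a walk on the spine together with independent choices on the leaves, paralleling the tree‑homomorphism analysis of~\cite{GJTreeHoms}. (Alternatively one can check that the induced $\csp$ constraint language is of the log‑supermodular / IM‑conjunctive type, but the poset reduction seems cleanest.)

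The hardness half of part~(ii) is short. Since $\Hom{H}\leap\Ret{H}$~\cite{FGZRet}, and, by~\cite{GGJBIS}, approximately counting homomorphisms to any connected graph that is neither irreflexive complete bipartite nor a reflexive clique is $\bis$‑hard, it suffices to note that a non‑trivial irreflexive caterpillar is not complete bipartite (a complete bipartite caterpillar is a star) and a partially bristled reflexive path that is not a single looped vertex or a reflexive edge carries a loop but is not a reflexive clique. Together with the previous paragraph this gives $\Ret{H}\eqap\bis$.

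Part~(iii) is the main obstacle. One must show that a connected triangle‑free, square‑free graph $H$ that is not trivial, not a non‑trivial irreflexive caterpillar, and not a partially bristled reflexive path makes $\Ret{H}$ $\sat$‑hard. The plan has two components. The first is a combinatorial classification: any connected triangle‑free, square‑free graph that avoids a short explicit list of ``hard cores'' is trivial, a non‑trivial irreflexive caterpillar, or a partially bristled reflexive path. The hard cores one expects are the subdivided claw ($K_{1,3}$ with each edge subdivided once, the minimal irreflexive non‑caterpillar tree), a looped vertex of looped‑degree at least $3$ or a looped vertex on a cycle (so that the looped part is not a disjoint union of paths), and the ``mixed'' configurations witnessing that some unlooped vertex fails to be a bristle of the reflexive‑path skeleton; the classification is proved by a case analysis on whether $H$ has a loop, whether $H$ is a tree, the degrees of its branch vertices, and how the loops sit on the tree/path structure, using girth at least $5$ to forbid short cycles at every step. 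The second component is, for each hard core, an approximation‑preserving reduction from $\sat$: using singleton lists to pin down a two‑element subdomain of $H$, one implements a Boolean (or weighted) relation that is not of the $\bis$‑easy type and invokes the complexity of the corresponding Boolean $\csp$. I expect the exhaustiveness of the hard‑core list, together with these gadget constructions, to be the principal difficulty; the downset reduction above is the second most delicate step.
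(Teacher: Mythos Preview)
This theorem is not proved in the present paper: it is quoted verbatim as \cite[Theorem~1.1]{FGZRet} and used as a black box. There is therefore no proof in this paper to compare your proposal against.

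That said, your outline is broadly aligned with how the cited work proceeds. In particular, your downset/poset reduction for part~(ii) is essentially equivalent to the $\csp(\{\Imp\})$ framework (Lemma~\ref{lem:OALHomToCSPImplies} here, and \cite[Section~2.2.1]{FGZRet}): the relation $\Imp$ is precisely the order relation of a poset, and satisfying assignments are downsets. The $\bis$-hardness via \cite{GGJBIS} and the $\FP$ cases via \cite{DG} are exactly as you say. For part~(iii), the actual proof in \cite{FGZRet} does follow a ``hard-core plus gadget'' strategy, but the specific list of hard substructures and the corresponding reductions (from $\largecut$, from weighted independent sets, and from $\TCut{q}$) are rather more involved than your sketch suggests; the combinatorial classification showing that avoiding all hard cores forces $H$ to be a caterpillar or partially bristled reflexive path is the bulk of the work, and your list of hard cores (subdivided claw, looped vertex of high looped-degree, looped cycle, mixed configurations) is incomplete as stated --- for instance, one also needs separate arguments for graphs where an unlooped bristle has degree at least $2$, and the precise bristle-count thresholds that separate $\bis$-easy from $\sat$-hard partially bristled paths require dedicated gadget constructions.
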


Counting homomorphisms to square-free graphs has been studied before~\cite{GGRMod2, KBModp} though those results apply to counting modulo a prime number, and so they cannot be applied here.

In the past, graphs without squares have been studied in various other combinatorial settings: 
For example,~\cite{Arends2011} investigates the complexity of finding $101$-colourings of square-free graphs. Polarity graphs are another natural class of square-free graphs that have been studied, e.g.~\cite{Bondy1999,Abreu2010}. Wrochna showed that square-free graphs are multiplicative~\cite{Wrochna2017, WrochnaThesis}, solving a problem related to Hedetniemi's conjecture\footnote{This conjecture has been refuted lately~\cite{Shitov2019}.}~\cite[Problem 7.1]{TardifHedetniemiConjecture} and improving earlier results about the multiplicativity of square-free graphs that contain triangles~\cite{Delhomme2002}. Furthermore, square-free graphs play an important role in extremal graph theory and the study of the Tur\'{a}n number, e.g.~\cite{PikhurkoTuranFunction, Furedi1983, Furedi2013, FirkeSquarefreeExtremalGraphs, GarnickExtremalGraphs, Clapham1989}. Recently, it has been shown that certain bounds on the chromatic number with respect to distance-two colouring of planar graphs are unique to square-free graphs~\cite{Choi2018}.

\subsection{Preliminaries}

For a non-negative integer $k$ we use $[k]$ to denote the set $\{1,\dots, k\}$. For sets $X$ and $Y$ we define $\ucp{X}{Y}= \{\{x,y\} \mid x\in X, y\in Y\}$ as an unordered version of   the Cartesian product. The elements of $\ucp{X}{Y}$ are \emph{multisets} of size exactly $2$. Using this notation the set of edges $E(H)$ of a graph $H=(V(H), E(H))$ is a subset of $\ucp{V(H)}{V(H)}$. An edge with two identical elements is a \emph{loop}. Correspondingly, a vertex $v\in V(H)$ is called \emph{looped} if $\{v,v\}\in E(H)$ and unlooped otherwise. The \emph{girth} of a graph $H$ is the length of a shortest cycle in $H$. All cycles have length at least $3$.

We have already defined reflexive and irreflexive graphs. A graph $H$ is a \emph{mixed} graph if it contains both looped and unlooped vertices, i.e.~if it is neither reflexive nor irreflexive. Given a graph $H$ and a subset $U$ of $V(H)$, $H[U]$ is the \emph{subgraph of $H$ induced by $U$}.

Given graphs $G$ and $H$, $\calH(G,H)$ is the set of homomorphisms from $G$ to $H$
and $\hom{G}{H}$ denotes its size.
Analogously, given a corresponding set of lists $\boldS$, $\calH((G,\boldS),H)$ is the set of homomorphisms from $(G,\boldS)$ to $H$
and  $\hom{(G,\boldS)}{H}$ denotes its size. 

We use $\Ret{H}$ to denote the problem of approximately counting retractions to~$H$
(for a fixed graph~$H$ which may have loops but does not have multi-edges).
We use $\Hom{H}$ to denote the problem of approximately counting homomorphisms to~$H$.
Formally, these problems are defined as follows.

\prob
	{
		$\Ret{H}$.
	}
	{
		An irreflexive graph $G$ and a collection of lists $\boldS=\{S_v\subseteq V(H) \mid v\in V(G)\}$ such that, for all $v\in V(G)$, $\abs{S_v}\in \{1,\abs{V(H)}\}$.
	}
	{
		$\hom{(G,\boldS)}{H}$.
	}
\prob
	{
		$\Hom{H}$.
	}
	{
		An irreflexive graph $G$.
	}
	{
		$\hom{G}{H}$.
	}

The list homomorphisms counting problem, defined as follows, is a generalisation of $\Ret{H}$.

\prob
{
	$\LHom{H}.$
}
{
	An irreflexive graph $G$ and a collection of lists $\boldS=\{S_v\subseteq V(H)\mid v\in V(G)\}$.
}
{
	$\hom{(G,\boldS)}{H}$.
}

If there is an approximation-preserving reduction~\cite{DGGJApprox} from a problem $A$ to a problem $B$, we write $A \leap B$.
\begin{obs}[{\cite[Observation 1.2]{FGZRet}}]\label{obs:HomToRetToLHom}
	Let $H$ be a graph. Then $\Hom{H}\leap\Ret{H}\leap\LHom{H}$.
\end{obs}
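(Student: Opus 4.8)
Proof plan for Observation~\ref{obs:HomToRetToLHom} ($\Hom{H}\leap\Ret{H}\leap\LHom{H}$).

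The second reduction, $\Ret{H}\leap\LHom{H}$, is immediate from the definitions: an instance of $\Ret{H}$ is an irreflexive graph $G$ together with a collection of lists $\boldS=\{S_v\subseteq V(H)\mid v\in V(G)\}$ in which every list has size $1$ or $\abs{V(H)}$, and this is simply a special case of an instance of $\LHom{H}$, which allows arbitrary lists. The identity map on instances is therefore a (trivially) approximation-preserving reduction, since the output quantity $\hom{(G,\boldS)}{H}$ is literally the same in both problems.

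The first reduction, $\Hom{H}\leap\Ret{H}$, is the one requiring a small construction. The plan is: given an instance $G$ of $\Hom{H}$ (an irreflexive graph), output the instance $(G,\boldS)$ of $\Ret{H}$ where $S_v=V(H)$ for every $v\in V(G)$. Every list has size $\abs{V(H)}$, so this is a valid $\Ret{H}$ instance, and it is computable in polynomial time. One then observes that a homomorphism from $(G,\boldS)$ to $H$ is by definition a homomorphism $h$ from $G$ to $H$ with $h(v)\in S_v=V(H)$ for all $v$, which is no constraint at all; hence $\calH((G,\boldS),H)=\calH(G,H)$ and in particular $\hom{(G,\boldS)}{H}=\hom{G}{H}$. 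Thus an oracle for $\Ret{H}$ returning an approximation to $\hom{(G,\boldS)}{H}$ returns, with the same relative error and success probability, an approximation to $\hom{G}{H}$, which is exactly what is needed for an AP-reduction in the sense of~\cite{DGGJApprox}.

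There is essentially no obstacle here — both reductions are ``inclusion of problems'' arguments, and the only thing to check is that the constructed $\Ret{H}$ instance meets the syntactic requirement that each list has size in $\{1,\abs{V(H)}\}$, which holds since we use the full list $V(H)$ everywhere. Composing the two reductions (AP-reducibility is transitive~\cite{DGGJApprox}) gives the chain $\Hom{H}\leap\Ret{H}\leap\LHom{H}$, completing the proof.
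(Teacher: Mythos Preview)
Your proposal is correct. The paper does not give its own proof of this observation but simply cites it from~\cite{FGZRet}; your argument is the standard ``inclusion of problems'' one and is exactly what one would expect the cited proof to be.
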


For approximately counting list homomorphisms Galanis, Goldberg and Jerrum~\cite{GGJList} give the following complete classification.
\begin{thm}[\cite{GGJList}]\label{thm:LHomTricho}
	Let $H$ be a connected graph. 
	{
		\renewcommand{\theenumi}{\roman{enumi})}
		\renewcommand{\labelenumi}{\theenumi}
		\begin{myenumerate}            
			\item If $H$ is an irreflexive complete bipartite graph or a reflexive complete graph, then
			$\LHom{H}$ is in $\FP$.
			\item 
			Otherwise, if $H$ is an irreflexive bipartite permutation graph or a reflexive proper interval graph, then
			$\LHom{H}$ is approximation-equivalent to $\bis$. 
			\item Otherwise, $\LHom{H}$ is approximation-equivalent to $\sat$.
		\end{myenumerate}
	}
\end{thm}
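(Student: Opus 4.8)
The plan is to align the three cases of the trichotomy with three structural families of graphs: the \emph{trivial} graphs (reflexive complete graphs and irreflexive complete bipartite graphs) for case~(i); the reflexive proper interval graphs and irreflexive bipartite permutation graphs for case~(ii); and everything else for case~(iii). For case~(i) I would give direct algorithms. If $H=K_{m,n}$ is an irreflexive complete bipartite graph with sides $A,B$, then a list homomorphism from $(G,\boldS)$ to $H$ amounts to choosing, for each connected component $C$ of $G$, one of the at most two proper $2$-colourings of $C$ that map it into $\{A,B\}$ (which requires $C$ to be bipartite, else there are no homomorphisms), and then independently picking an image from $S_v\cap A$ or from $S_v\cap B$ for each vertex of $C$ according to its colour; summing the resulting products over the $\le 2$ colourings of each component and multiplying over components computes $\hom{(G,\boldS)}{H}$ in polynomial time. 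If $H$ is a reflexive complete graph then every list-respecting assignment is a homomorphism, so $\hom{(G,\boldS)}{H}=\prod_{v\in V(G)}\abs{S_v}$. Hence $\LHom{H}\in\FP$ in these cases.

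For the upper bound in case~(ii) I would show $\LHom{H}\leap\bis$ when $H$ is a reflexive proper interval graph or an irreflexive bipartite permutation graph. Both classes are exactly the (reflexive, resp.\ bipartite) graphs admitting a ``staircase'' ordering: a linear order $<$ of $V(H)$ (of each side, in the bipartite case) such that $a<b<c$ and $\{a,c\}\in E(H)$ imply $\{a,b\},\{b,c\}\in E(H)$. Given such an ordering I would encode the image $c(v)\in V(H)$ of each vertex $v\in V(G)$ by the monotone Boolean chain whose $i$-th bit records whether $c(v)$ lies above the $i$-th vertex of the order; the staircase property makes each edge constraint of $H$ equivalent to a conjunction of $2$-clauses that are monotone in this encoding, i.e.\ a log-supermodular pattern, and such constraints are precisely the ones realisable by independent sets of an auxiliary bipartite graph (equivalently, by downsets of a poset, a $\bis$-equivalent problem). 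Assembling this gadget over all vertices and edges of $(G,\boldS)$ yields the AP-reduction to $\bis$. The matching lower bound is essentially free: for every connected $H$ that is not a trivial graph, $\Hom{H}$ is $\bis$-hard by~\cite{GGJBIS}, and this transfers to $\LHom{H}$ by Observation~\ref{obs:HomToRetToLHom}; so every connected $H$ in the middle family is $\bis$-equivalent, and in fact $\bis\leap\LHom{H}$ for \emph{every} non-trivial connected $H$.

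For case~(iii), let $H$ be connected and outside both families above; I would prove $\sat\leap\LHom{H}$, the converse $\LHom{H}\leap\sat$ being automatic since $\sat$ is $\numP$-complete under AP-reductions. These $H$ fall into four (not necessarily disjoint) types. If $H$ is \emph{mixed}, then being connected it has an edge $\{u,w\}$ with $u$ looped and $w$ unlooped (any connected mixed graph does), and restricting every list of an input instance to $\{u,w\}$ turns $\LHom{H}$ into counting independent sets in an arbitrary irreflexive graph (the $w$-preimage is exactly an independent set), which is $\sat$-equivalent. If $H$ is irreflexive and not bipartite it contains an induced odd cycle, hence in particular a copy of $K_3$ that can be pinned out with singleton lists, and $\LHom{K_3}$ already includes counting proper $3$-colourings, which is $\sat$-equivalent (longer odd cycles are handled by an analogous reduction). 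If $H$ is irreflexive, bipartite, but neither complete bipartite nor a permutation graph, I would invoke the finite forbidden-induced-subgraph characterisation of bipartite permutation graphs to pin out one of finitely many minimal obstructions and reduce a $\sat$-equivalent problem such as $\mathrm{\#\textsc{IS}}$ (counting independent sets in a general graph) to list-homomorphism counting on that obstruction. If $H$ is reflexive and not a proper interval graph, then either $H$ is a reflexive interval graph that is not proper, in which case it contains an induced reflexive claw $K_{1,3}$ by Roberts' characterisation, or $H$ is not an interval graph at all, in which case it contains one of the Lekkerkerker--Boland obstructions; for the pinnable obstruction I would give a direct $\sat$-hardness reduction. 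Together these yield $\sat$-equivalence of $\LHom{H}$ in all remaining cases.

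The main obstacle is case~(iii). One must (a) verify that every connected $H$ outside the two structural families contains a hard core that can be isolated by pinning with singleton lists --- this is the structural heart, combining the forbidden-subgraph descriptions of reflexive proper interval graphs and of bipartite permutation graphs with the characterisation of interval graphs --- and (b) prove $\sat$-hardness for each of the finitely many base obstructions by explicit approximation-preserving reductions, the reflexive claw $K_{1,3}$ being the representative new difficulty since the irreflexive obstructions reduce to classical $\sat$-equivalent counting problems. The $\bis$-easiness reduction of case~(ii) is also delicate, since it must output a single weighted bipartite instance whose (weighted) independent sets reproduce $\hom{(G,\boldS)}{H}$ exactly, and it must exploit the staircase ordering uniformly for the reflexive and the bipartite classes.
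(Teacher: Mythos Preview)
The paper does not prove Theorem~\ref{thm:LHomTricho}; it is quoted verbatim from Galanis, Goldberg and Jerrum~\cite{GGJList} and used as a black box. So there is no ``paper's own proof'' to compare against, and your task here was really to sketch the proof from~\cite{GGJList} rather than anything the present paper contributes.

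That said, your outline is broadly in the spirit of~\cite{GGJList}. Case~(i) is fine. For case~(ii) the original paper does indeed exploit the min-ordering/staircase structure of proper interval and bipartite permutation graphs, though the actual reduction is routed through a Boolean $\csp(\{\Imp\})$ encoding (essentially the one the present paper reuses in Lemma~\ref{lem:OALHomToCSPImplies}) rather than a direct downset argument; your description is close enough in spirit. For case~(iii) you have the right decomposition, and the reflexive obstructions you allude to are exactly the reflexive claw ($\WR{3}$), the net, the $3$-sun, and induced reflexive cycles of length $\geq 4$ --- which is why the present paper devotes Sections~\ref{sec:inducedWR3}--\ref{sec:inducedCycle} to analogues of those obstructions in the retraction setting. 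One small slip: an irreflexive non-bipartite graph need not contain an induced $K_3$ (a $5$-cycle does not); the argument in~\cite{GGJList} handles odd cycles directly rather than via $K_3$, so your parenthetical ``longer odd cycles are handled by an analogous reduction'' is carrying real weight and should not be waved away. Beyond that, the genuine work in~\cite{GGJList} is the explicit $\sat$-hardness gadgetry for each base obstruction, which your sketch (reasonably) does not attempt.
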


\subsection{Paper Outline}

Theorem~\ref{thm:bisEasyRet} is proved in Section~\ref{sec:bis-easiness}. Theorem~\ref{thm:RetNoSquare} is proved in Section~\ref{sec:finaltheorems}. The corresponding $\bis$-easiness follows mainly from Theorem~\ref{thm:bisEasyRet}. The corresponding $\sat$-hardness results are collected in Section~\ref{sec:sat-hardness}. Proving $\sat$-hardness is the bulk of this work because of the combinatorial complexity of designing reductions which establish $\sat$-hardness for all square-free graphs
(apart from reflexive cliques, irreflexive caterpillars and those in $\bisgraphs$).

\section{$\bis$-Easiness Results}\label{sec:bis-easiness}
In this section we prove Theorem~\ref{thm:bisEasyRet}, which states that approximately counting retractions to any graph from the class $\bisgraphs$ (Definition~\ref{def:bisgraphs}) is $\bis$-equivalent. The proof is built on a method for generating $\bis$-easiness results from~\cite[Section 2.2.1]{FGZRet} which uses the framework of constraint satisfaction problems. Intuitively, the method takes as input two CSP instances, say $\Iv$ and $\Ie$, and produces a graph $\Hve{\Iv,\Ie}$ for which $\Ret{\Hve{\Iv,\Ie}}\leap \bis$. The challenge is to find the right instances $\Iv$ and $\Ie$ and to identify and generate corresponding general classes of $\bis$-easy graphs.
For the convenience of the reader we repeat some definitions introduced in~\cite{FGZRet}. 
Let $\calL$ be a set of  Boolean relations.

\prob
{
	$\csp(\calL)$.
}
{
	A set of variables $X$ and a set of constraints $C$, where each constraint applies a relation from~$\calL$ to a list of variables from~$X$.
}
{
	The number of assignments $\sigma\from X\to \{0,1\}$ that satisfy all constraints in $C$.
} 

$\Imp=\{(0,0), (0,1), (1,1)\}$ is an arity-two Boolean relation.
The constraint $\Imp(x,y)$  
ensures that, in any satisfying assignment~$\sigma$, we have
$\sigma(x) \implies \sigma(y)$.

\begin{defn}[{\cite[Definition 2.6]{FGZRet}}]\label{def:Hve}
	Let $\Iv=(X,\Cv)$ and $\Ie=(X,\Ce)$ be instances of $\csp(\{\Imp\})$.
	We define the   undirected graph $\Hve{\Iv,\Ie}$ as follows. 
	The vertices of $\Hve{\Iv,\Ie}$ are the satisfying assignments of $\Iv$. 
	Given any  assignments $\sigma$ and $\sigma'$ in $V(\Hve{\Iv,\Ie})$, there is an edge $\{\sigma, \sigma'\}$ in $\Hve{\Iv,\Ie}$ if and only if 
	the following holds:
	For every constraint $\Imp(x,y)$ in $\Ce$,
	we have  $\sigma(x) \implies \sigma'(y)$  and $\sigma'(x) \implies \sigma(y)$.
\end{defn}

\begin{lem}[{\cite[Lemmas 2.5 and 2.8]{FGZRet}}]\label{lem:OALHomToCSPImplies}
	Let 
	$\Iv=(X,\Cv)$ and $\Ie=(X,\Ce)$ be instances of $\csp(\{\Imp\})$.
	Then $\Ret{\Hve{\Iv,\Ie}} \leap \bis$.
\end{lem}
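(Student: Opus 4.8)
The plan is to exhibit a parsimonious (hence approximation-preserving) reduction from $\Ret{\Hve{\Iv,\Ie}}$ to the problem $\csp(\{\Imp\})$ enriched with the ability to fix (``pin'') variables to Boolean constants, and then to reduce that enriched problem to $\bis$. Write $H=\Hve{\Iv,\Ie}$, and recall from Definition~\ref{def:Hve} that the vertices of $H$ are exactly the satisfying assignments of $\Iv=(X,\Cv)$. Given an input $(G,\boldS)$ of $\Ret{H}$ --- so $G$ is irreflexive and each $S_v$ is either a single vertex of $H$ or all of $V(H)$ --- introduce a Boolean variable $(v,x)$ for every $v\in V(G)$ and every $x\in X$, whose intended meaning is the value $h(v)(x)$ of a homomorphism $h$. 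A list homomorphism $h\colon(G,\boldS)\to H$ then corresponds to exactly those assignments to these variables that satisfy: (a) $\Imp((v,x),(v,y))$ for each $v\in V(G)$ and each $\Imp(x,y)\in\Cv$, which encodes that $h(v)$ really is a vertex of $H$; (b) $\Imp((u,x),(v,y))$ and $\Imp((v,x),(u,y))$ for each edge $\{u,v\}\in E(G)$ and each $\Imp(x,y)\in\Ce$, which by Definition~\ref{def:Hve} is precisely the condition $\{h(u),h(v)\}\in E(H)$; and (c) for each $v$ with $S_v=\{\sigma\}$, the pins $(v,x)=\sigma(x)$ for all $x\in X$, which force $h(v)=\sigma$ (a vertex with $S_v=V(H)$ imposes nothing). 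The map $h\mapsto\bigl((v,x)\mapsto h(v)(x)\bigr)$ is a bijection between $\calH((G,\boldS),H)$ and the satisfying assignments of the resulting $\csp(\{\Imp\})$-with-pins instance $\calB=\calB(G,\boldS)$, so $\hom{(G,\boldS)}{H}$ equals the number of satisfying assignments of $\calB$; and $\calB$ is polynomial-time computable from $(G,\boldS)$ because $H$ (hence $X,\Cv,\Ce$) is fixed.

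The remaining step is to remove the pins and land in $\bis$. The pins in $\calB$ can be eliminated by ordinary unit propagation along implication arcs (a variable pinned to $1$ forces everything it implies to $1$; a variable pinned to $0$ forces everything implying it to $0$): in polynomial time this either detects inconsistency --- in which case $\calB$ has no satisfying assignment --- or produces a pin-free $\csp(\{\Imp\})$ instance $\calB'$ on the undetermined variables with the same number of satisfying assignments. A satisfying assignment of $\calB'$ is exactly a set of variables closed under its implication arcs, i.e.\ an up-set of the preorder those arcs generate; contracting strongly connected components identifies these with the up-sets --- equivalently, via complementation, the down-sets --- of a finite poset, and counting down-sets of a poset is AP-reducible to $\bis$ (a standard fact; see~\cite{DGGJApprox}). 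Composing the two approximation-preserving reductions gives $\Ret{\Hve{\Iv,\Ie}}\leap\bis$.

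The one genuinely delicate point is the verification in the first step that clause (b) faithfully reproduces the edge relation of Definition~\ref{def:Hve}: one must check that the symmetric pair of implications per constraint of $\Ce$ is exactly what yields a (necessarily symmetric) edge between $h(u)$ and $h(v)$, and conversely that every edge of $H$ arises in this way. Everything else is routine bookkeeping, but it is worth emphasising that it is the restriction built into $\Ret$ --- lists of size $1$ or $\abs{V(H)}$ --- that keeps every constraint an instance of $\Imp$; a general list would require a non-implicational constraint, which is precisely why this argument establishes $\bis$-easiness for $\Ret{H}$ and not for $\LHom{H}$. (Irreflexivity of $G$ is convenient but inessential: a loop at $v$ would merely add constraints $\Imp((v,x),(v,y))$ for $\Imp(x,y)\in\Ce$, still of the required form.)
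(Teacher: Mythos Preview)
The paper does not prove this lemma; it is quoted directly from~\cite[Lemmas~2.5 and~2.8]{FGZRet} without argument. Your reconstruction is correct and is essentially the natural (and presumably the original) proof: encode a homomorphism $h\colon(G,\boldS)\to\Hve{\Iv,\Ie}$ by the Boolean variables $(v,x)\mapsto h(v)(x)$, so that Definition~\ref{def:Hve} translates verbatim into $\Imp$-constraints plus pins; eliminate the pins by unit propagation; and finish with the standard fact that $\csp(\{\Imp\})$, equivalently counting down-sets of a poset, is AP-interreducible with $\bis$~\cite{DGGJApprox}. The bijection and the propagation step are both sound as you describe them, and your remark that the one-or-all restriction on lists is exactly what keeps all constraints in $\{\Imp,\delta_0,\delta_1\}$ is the right observation.
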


\begin{defn}\label{def:bisgraphs}
	A graph~$H$ is in  $\bisgraphs$ if it can be defined as follows.
	For some positive integer~$Q$,
	the vertex set $V(H)$ is of the form 
	$V(H)= \bigcup_{i=0}^Q K_i \cup \bigcup_{i=1}^Q B_i$
	where 
	$K_0,\ldots,K_{Q}$  induce reflexive cliques in~$H$, 
	and
	$B_1,\ldots,B_Q$ are   disjoint sets of unlooped degree-$1$ vertices (called \emph{bristles}).
	There are $Q+2$ vertices $p_0,\ldots,p_{Q+1}$ in~$V(H)$  such that
	each clique $K_i$ contains both~$p_i$ and~$p_{i+1}$.
	The intersection of the cliques is given as follows.
	\begin{itemize}
		\item For $i\in[Q]$,   $K_{i-1}\cap K_{i} =\{p_i\}$.
		\item For   $i,j\in \{0,\dots, Q\}$ with 
		$\abs{j-i}> 1$,
		$K_{i}\cap K_j =\emptyset$.
	\end{itemize}
	The size of each set $B_i$ of bristles satisfies
	$  0 \leq \abs{B_i}\le \left(\abs{K_{i-1}}-1\right)\cdot \left(\abs{K_i}-1\right)$.
	Finally, the edge set of $H$ is given as follows.
	$$E(H) = \bigcup_{i=0}^Q \left(\ucp{K_i}{K_i}\right) \cup \bigcup_{i=1}^Q \left(\{p_i\} \times B_i\right).$$
\end{defn}

For an example graph from the class $\bisgraphs$ see Figure~\ref{fig:HBISexample}.

\begin{figure}[h!]\centering
	{\def\scaleFactor{1}
		\begin{tikzpicture}[scale=1, every loop/.style={min distance=10mm,looseness=10}]
		
		\filldraw (0,0) node(p0){} circle[radius=3pt] --++ (0:1.5cm) node(p1){} circle[radius=3pt] --++ (0:1.5cm) node(p2){} circle[radius=3pt] --++ (0:1.5cm) node(p3){} circle[radius=3pt] --++ (0:1.5cm) node(p4){} circle[radius=3pt] --++ (0:1.5cm) node(p5){} circle[radius=3pt] --++ (0:1.5cm) node(p6){} circle[radius=3pt];
		
		\path[-] (p0.center) edge  [in=-135,out=-45] (p2.center);
		\path[-] (p2.center) edge  [in=-135,out=-45] (p4.center);
		
		\path[-] (p0.center) edge  [in=125,out=55,loop] node {} ();
		\path[-] (p1.center) edge  [in=125,out=55,loop] node {} ();
		\path[-] (p2.center) edge  [in=125,out=55,loop] node {} ();
		\path[-] (p3.center) edge  [in=125,out=55,loop] node {} ();		
		\path[-] (p4.center) edge  [in=125,out=55,loop] node {} ();
		\path[-] (p5.center) edge  [in=125,out=55,loop] node {} ();
		\path[-] (p6.center) edge  [in=125,out=55,loop] node {} ();

		\filldraw (p2.center) -- (2.25,-1.5cm) node(b21){} circle[radius=3pt];
		\filldraw (p2.center) -- (2.75,-1.5cm) node(b22){} circle[radius=3pt];
		\filldraw (p2.center) -- (3.25,-1.5cm) node(b23){} circle[radius=3pt];
		\filldraw (p2.center) -- (3.75,-1.5cm) node(b24){} circle[radius=3pt];
		\filldraw (p4.center) -- (5.75,-1.5cm) node(b41){} circle[radius=3pt];
		\filldraw (p4.center) -- (6.25,-1.5cm) node(b42){} circle[radius=3pt];
		\filldraw (p5.center) -- (7.5,-1.5cm) node(b51){} circle[radius=3pt];

		\node[above=.5cm of p0]{$p_0$};
		\node[above=.5cm of p1]{$r_1$};
		\node[above=.5cm of p2]{$p_1$};
		\node[above=.5cm of p3]{$r_2$};
		\node[above=.5cm of p4]{$p_2$};
		\node[above=.5cm of p5]{$p_3$};
		\node[above=.5cm of p6]{$p_4$};

		\draw [decorate,decoration={brace,amplitude=10pt},xshift=0pt,yshift=1.25cm]
		(-.25,0) -- (3.25,0)node [black,midway,xshift=0pt, yshift=.6cm] {\footnotesize
			$K_0$};
		
		\draw [decorate,decoration={brace,amplitude=10pt},xshift=0pt,yshift=1.5cm]
		(2.75,0) -- (6.25,0)node [black,midway,xshift=0pt, yshift=.6cm] {\footnotesize
			$K_1$};
		
		\draw [decorate,decoration={brace,amplitude=10pt},xshift=0pt,yshift=1.25cm]
		(5.75,0) -- (7.75,0)node [black,midway,xshift=0pt, yshift=.6cm] {\footnotesize
			$K_2$};
		
		\draw [decorate,decoration={brace,amplitude=10pt},xshift=0pt,yshift=1.5cm]
		(7.25,0) -- (9.25,0)node [black,midway,xshift=0pt, yshift=.6cm] {\footnotesize
			$K_3$};
		
		\draw [decorate,decoration={brace,amplitude=10pt},xshift=0pt,yshift=.4cm]
		(4,-2) -- (2,-2)node [black,midway,xshift=0pt, yshift=-.6cm] {\footnotesize
			$B_1$};
		
		\draw [decorate,decoration={brace,amplitude=10pt},xshift=0pt,yshift=.4cm]
		(6.5,-2) -- (5.5,-2)node [black,midway,xshift=0pt, yshift=-.6cm] {\footnotesize
			$B_2$};
		
		\draw [decorate,decoration={brace,amplitude=10pt},xshift=0pt,yshift=.4cm]
		(8,-2) -- (7,-2)node [black,midway,xshift=0pt, yshift=-.6cm] {\footnotesize
			$B_3$};
		
		\end{tikzpicture}
	}
	\caption{Example graph from $\bisgraphs$ for $Q=3$. Note that $\abs{B_1}= 4 =  \left(\abs{K_{0}}-1\right)\cdot \left(\abs{K_1}-1\right)$, $\abs{B_2}= 2 =  \left(\abs{K_{1}}-1\right)\cdot \left(\abs{K_2}-1\right)$ and $\abs{B_3}= 1 =  \left(\abs{K_{2}}-1\right)\cdot \left(\abs{K_3}-1\right)$.}
	\label{fig:HBISexample}
\end{figure}

Let $H\in \bisgraphs$ be as defined in Definition~\ref{def:bisgraphs}.
The high-level-structure of the proof of Theorem~\ref{thm:bisEasyRet} is as follows. We first define two instances $\Iv$ and $\Ie$ of $\csp(\{\Imp\})$, then we establish that $H$ is isomorphic to $\Hve{\Iv, \Ie}$ (Lemma~\ref{lem:HisoHIvIe}), which then allows us to apply Lemma~\ref{lem:OALHomToCSPImplies}.

\examplebox{
	To give more intuition we will use a running example where $H$ is the graph depicted in Figure~\ref{fig:HBISexample}. To separate this example from the rest of the proof we use text boxes.
}

Let $V^*$ be the set of looped vertices in $H$, i.e.~$V^*=\bigcup_{i=0}^Q K_i$ and let $X=\{x_v \mid v\in V^*\setminus\{p_0\}\}$ be a set of Boolean variables. We fix an ordering ``$<$'' on the vertices of $V^*$ 
with two properties: (1) In $K_i$, $p_{i}$ is the smallest vertex and $p_{i+1}$ is the largest; 
(2) The order of the $K_i$'s is respected in the
sense that, for any pair of distinct vertices $u,v\in V^*$, if there is an $i<j$ such that   $u\in K_i$ and $v\in K_j$, then $u<v$. 
We define $U=\{\Imp(x_u, x_v) \mid u>v\}$.

Consider $\Iv^* = (X,U)$ and $\Ie^*=(X,U)$.
The graph $\Hve{\Iv^*,\Ie^*}$  is simply a reflexive path on $\abs{V^*}$ vertices. 
We will construct $\Iv$ from $\Iv^*$ by choosing a subset $\Cv$ of $U$ --- this allows the creation of bristles.
Similarly we will construct $\Ie$ from $\Ie^*$ by choosing a subset $\Ce$ of $U$  --- this  creates reflexive cliques
amongst the vertices of the reflexive path in $\Hve{\Iv^*,\Ie^*}$.
In order to define $\Cv$ and $\Ce$, for $i\in \{0, \dots, Q\}$, we define the sets of constraints $\De{i}$ (the constraints that we will delete from $U$ to define $\Ce$) as follows.
\begin{equation}\label{equ:De}
\De{i}= \{\Imp(x_u,x_v) \in U \mid u,v \in K_i\setminus\{p_i\}\}.
\end{equation}
For $i\ge 1$ (i.e.~for $i\in [Q]$), we define the sets of constraints $\Dv{i}$ (the constraints we will delete from $U$ to define $\Cv$). The definition of $\Dv{i}$ is a bit more involved and uses the following sets:
\begin{equation}\label{equ:Ai}
A(i)=\{\Imp(x_u,x_v) \in U \mid 
u\in K_i\setminus\{p_i\},\
v\in K_{i-1}\setminus\{p_{i-1}\}  \}.
\end{equation}
In order to model the set of bristles $B_i$ we will ``delete'' exactly $\abs{B_i}$ constraints that belong to $A(i)$ from $\Cv$. As a means to specify which constraints will be deleted we define an order on the constraints in $U$ (which uses the order $<$ on the vertices in $V^*$ which we fixed previously). There are several orders which would work.

\begin{defn}\label{def:OrderOnU}
	We define an order ``$\preceq$'' on $U$. Let $\Imp(x_u,x_v), \Imp(x_{u'}, x_{v'}) \in U$. Then $\Imp(x_u,x_v) \preceq \Imp(x_{u'}, x_{v'})$ if one of the following holds:
	\begin{itemize}
		\item $u < u'$.
		\item $u=u'$ and $v\ge v'$.
	\end{itemize}
	If $\Imp(x_u,x_v) \preceq \Imp(x_{u'}, x_{v'})$
	and the  ordered pair $(u,v)$ is distinct from the ordered pair $(u',v')$ then  $\Imp(x_u,x_v) \prec \Imp(x_{u'}, x_{v'})$.
\end{defn}
Now let $\Dv{i}$ be the $\abs{B_i}$ smallest elements of $A(i)$ with respect to $\preceq$ as given in Definition~\ref{def:OrderOnU}. Note that by
the definition of~$A(i)$ (Equation~\eqref{equ:Ai}) and the bound on $\abs{B_i}$   (Definition~\ref{def:bisgraphs}) this is well-defined since 
\[
\abs{B_i}\le (\abs{K_{i-1}}-1)\cdot (\abs{K_i}-1) = \abs{A(i)}.
\]
Finally, we   define $\Cv$ and $\Ce$ as follows.
\begin{equation}\label{equ:CvCe}
\Cv=U \setminus \left(\bigcup_{i=1}^Q \Dv{i} \right) \quad \text{ and } \quad \Ce=U \setminus \left(\bigcup_{i=0}^Q \De{i} \right).
\end{equation}

\examplebox{
	In our running example, order the variables in~$V^*$ from left 
	to right. 
	We have $X=\{x_{r_1}, x_{p_1}, x_{r_2}, x_{p_2}, x_{p_3}, x_{p_4}\}$ as the set of variables of $\Iv$ and $\Ie$. Since $\abs{B_1}=4$ the set $\Dv{1}$ contains the $4$ smallest elements of $A(1)=\{\Imp(x_{r_2},x_{p_1}),\allowbreak \Imp(x_{r_2},x_{r_1}),\allowbreak \Imp(x_{p_2},x_{p_1}), \Imp(x_{p_2},x_{r_1})\}$, which means $\Dv{1}=A(1)$. Similarly, since $\abs{B_2}=2$, the set $\Dv{2}$ contains the $2$ smallest elements of $A(2)=\{\Imp(x_{p_3},x_{p_2}),\allowbreak \Imp(x_{p_3},x_{r_2})\}$, which means $\Dv{2}=A(2)$. Finally, since $\abs{B_3}=1$, we have $\Dv{3}=A(3)=\{\Imp(x_{p_4},x_{p_3})\}$. Thus,
	\begin{align}
	\Cv=\{ &\Imp(x_{p_4},x_{p_2}), \Imp(x_{p_4},x_{r_2}), \Imp(x_{p_4},x_{p_1}), \Imp(x_{p_4},x_{r_1}),\nonumber\\ &\Imp(x_{p_3},x_{p_1}), \Imp(x_{p_3},x_{r_1}), \Imp(x_{p_2},x_{r_2}), \Imp(x_{p_1},x_{r_1})\}.\label{equ:CvExample}
	\end{align} 
	Regarding the edge constraints we have $\De{0}=\{\Imp(x_{p_1}, x_{r_1})\}$, $\De{1}=\{\Imp(x_{p_2}, x_{r_2})\}$, $\De{2}=\emptyset$ and $\De{3}=\emptyset$ and hence
	\begin{equation}\label{equ:CeExample}
	\Ce=U\setminus\{\Imp(x_{p_1}, x_{r_1}), \Imp(x_{p_2}, x_{r_2})\}.
	\end{equation}
}

Recall that the satisfying assignments of $\Iv$ correspond to the vertices of  $\Hve{\Iv,\Ie}$.
For $v\in V^*$ let $\sigma_v\from X \to \{0,1\}$ be the assignment with 
\[
\sigma_v(x_u)=
\begin{cases}
1, &\text{if } u\le v\\
0, &\text{otherwise,}
\end{cases}
\]
where $u\in V^*\setminus\{p_0\}$ (i.e.~$x_u\in X$). Note that $\sigma_{p_0}$ is the all-zero assignment since $p_0$ is the minimum vertex in $V^*$. 
The reason that we did not introduce a variable for~$p_0$ in the definition of~$X$ is 
that its role is captured by the all-zero assignment to~$X$. 
We will call assignments of the form $\sigma_v$ \emph{path assignments}. 
The path assignments inherit an order from the order on the set $V^*$ that we fixed, i.e.~$\sigma_u < \sigma_v$ if and only if $u<v$.

\begin{lem}\label{lem:canonicalPath}
	All path assignments satisfy $\Iv=(X, \Cv)$. If $\sigma_{p_0}, \dots, \sigma_{p_{Q+1}}$ are the path assignments ordered by $<$, then they form a reflexive path in $\Hve{\Iv, \Cv}$. The reflexive path is not necessarily induced by its vertices. 
\end{lem}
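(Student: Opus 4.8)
The plan is to verify the three assertions in turn; all of them reduce to one elementary feature of path assignments together with a careful use of the edge condition of Definition~\ref{def:Hve}. Every constraint of $U$ has the form $\Imp(x_u,x_v)$ with $u>v$ in the fixed order on $V^*$, and $\sigma_w$ assigns $x_t\mapsto 1$ precisely when $t\le w$; hence $\sigma_w(x_u)=1$ forces $v<u\le w$ and so $\sigma_w(x_v)=1$. Thus every path assignment satisfies every constraint of $U$. Since $\Cv=U\setminus\bigcup_{i=1}^Q\Dv{i}$ and $\Ce=U\setminus\bigcup_{i=0}^Q\De{i}$ are subsets of $U$, this at once gives the first assertion: each $\sigma_w$ satisfies $\Iv=(X,\Cv)$, hence is a vertex of $\Hve{\Iv,\Ie}$.

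Next I would check that these vertices, listed in increasing $<$-order, form a reflexive path. For the loops, by Definition~\ref{def:Hve} we have $\{\sigma_w,\sigma_w\}\in E(\Hve{\Iv,\Ie})$ iff $\sigma_w(x)\implies\sigma_w(y)$ for every $\Imp(x,y)\in\Ce$, which holds because $\Ce\subseteq U$. For the path edges, take path assignments $\sigma_a,\sigma_b$ with $a<b$ consecutive in the $<$-order on $V^*$, and fix $\Imp(x_u,x_v)\in\Ce\subseteq U$, so $u>v$. Definition~\ref{def:Hve} requires $\sigma_a(x_u)\implies\sigma_b(x_v)$ and $\sigma_b(x_u)\implies\sigma_a(x_v)$. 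The first holds because $\sigma_a(x_u)=1$ gives $v<u\le a<b$, hence $\sigma_b(x_v)=1$. For the second, if $\sigma_b(x_u)=1$ then $u\le b$; were $\sigma_a(x_v)=0$ we would have $v>a$, and since $v\in V^*$ and $a,b$ are consecutive this forces $v\ge b$, contradicting $v<u\le b$. Hence $\{\sigma_a,\sigma_b\}\in E(\Hve{\Iv,\Ie})$, and so the path assignments in increasing order form a reflexive path. They are pairwise distinct, since $a<b$ implies $\sigma_b(x_b)=1$ while $\sigma_a(x_b)=0$, so this path has exactly $\abs{V^*}$ vertices.

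For the final clause I would just note that the path need not be induced: once some $K_i$ has at least three vertices, $\De{i}$ is non-empty, so $\Ce$ is a proper subset of $U$; deleting a constraint only weakens the edge condition of Definition~\ref{def:Hve}, so non-consecutive path assignments may become adjacent, and this already happens in the running example.

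I do not expect a real obstacle here; the argument is essentially bookkeeping around Definition~\ref{def:Hve}. The one point that needs care is the ``backward'' implication $\sigma_b(x_u)\implies\sigma_a(x_v)$ in the edge check, which genuinely uses that $a$ and $b$ are consecutive in the order on $V^*$: dropping that hypothesis makes the implication fail, which is exactly why consecutive path assignments are adjacent while the resulting path is merely a path and need not be induced.
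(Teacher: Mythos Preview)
Your proof is correct and follows the same approach as the paper: both use only that $\Cv\subseteq U$ and $\Ce\subseteq U$, and verify the edge condition of Definition~\ref{def:Hve} directly for path assignments. The paper's proof is terser---it simply asserts that the check is easy---whereas you actually carry out the verification of the two implications for consecutive $\sigma_a,\sigma_b$; your treatment of the backward implication (using that $a,b$ are consecutive in $V^*$) is exactly the detail the paper leaves implicit.
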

\begin{proof} 
	This proof merely requires that $\Cv \subseteq U$ and $\Ce \subseteq U$  --- it does not use the detailed definitions of~$\Cv$ and~$\Ce$.
	
	First, note that   the assignments $\sigma_v$ satisfy the $\csp(\{\Imp\})$-instance $(X,U)$.
	Thus,    they will still be satisfying assignments if we delete constraints from $U$.
	
	We now investigate the edges between the vertices $\sigma_{p_0}, \dots, \sigma_{p_{Q+1}}$ of $\Hve{\Iv,\Ie}$.
	From Definition~\ref{def:Hve} recall that,
	given any satisfying assignments $\sigma$ and $\sigma'$ of $\Iv$, there is an edge $\{\sigma, \sigma'\}$ in $\Hve{\Iv,\Ie}$ if and only if the following holds:
	\begin{equation}\label{equ:adjacentAssignments}
	\text{For every constraint $\Imp(x,y)$ in $\Ce$,
		we have  $\sigma(x) \implies \sigma'(y)$  and $\sigma'(x) \implies \sigma(y)$.}
	\end{equation}
	Using~\eqref{equ:adjacentAssignments} it can easily be checked that for $\Ce=U$ (and therefore for all $\Ce \subseteq U$), the vertices $\sigma_{p_0}, \dots, \sigma_{p_{Q+1}}$ form a reflexive path, i.e.~these vertices are looped and for $v\in V^*\setminus\{p_0\}$ and $v'= \max \{u \in V^* \mid u < v\}$ we have that $\{\sigma_{v'}, \sigma_{v}\}$ is an edge in $\Hve{\Iv,\Ie}$.
\end{proof}

Lemma~\ref{lem:canonicalPath} shows that even if we were to enforce all constraints in $U$ both in the modelling of vertices ($\Iv^*=(X,U)$) and of edges ($\Ie^*=(X,U)$), the graph $\Hve{\Iv^*, \Ie^*}$ always contains a reflexive path on $\abs{V^*}$ vertices. 
As noted earlier,   $\Hve{\Iv^*, \Ie^*}$ is precisely this reflexive path. 
Our definition of $\Ce$, given in~\eqref{equ:CvCe}, 
ensures that $\Ce \subseteq U$ so it uses a subset of the constraints,
which leads to the possibilitiy of more edges in $\Hve{\Iv, \Ie}$. The idea behind the construction is that a vertex $v\in V^*$ will be modelled by the satisfying assignment $\sigma_{v}$, which is a vertex in $\Hve{\Iv, \Ie}$. The previous lemma shows that these assignments form a  reflexive path. We now show how the clique structure is modelled. Intuitively,  the deleted constraints allow shortcuts that form cliques along the  underlying path of path assignments. 
The following observation follows immediately from the definition of~$\De{i}$ in~\eqref{equ:De}.

\begin{obs}\label{obs:De}
	For all $i\in [Q]$, $v\in V^*$ with $p_i<v$ and all $k\in \{0, \dots, Q\}$,  $\Imp(x_{v}, x_{p_i})\notin \De{k}$ and consequently $\Imp(x_{v}, x_{p_i}) \in \Ce$ (since $\Imp(x_{v}, x_{p_i}) \in U$).
\end{obs}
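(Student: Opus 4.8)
The plan is to unfold the definition of $\De{k}$ from~\eqref{equ:De} and show that no $\De{k}$ can contain the constraint $\Imp(x_v,x_{p_i})$; the claimed membership in $\Ce$ then follows from the definition $\Ce=U\setminus\bigl(\bigcup_{k=0}^Q\De{k}\bigr)$ in~\eqref{equ:CvCe}. First I would check the harmless bookkeeping: since $i\in[Q]$ we have $p_i\neq p_0$, and since $p_0$ is the $<$-minimum of $V^*$ and $p_i<v$ we also have $v\neq p_0$; hence $x_v,x_{p_i}\in X$, and as $v>p_i$ this pair is a genuine constraint $\Imp(x_v,x_{p_i})\in U$. I would also record that property~(2) of the order ``$<$'' makes the map $j\mapsto p_j$ strictly order-preserving (if $l<m$ then $p_l\in K_l$, $p_m\in K_m$, hence $p_l<p_m$), so $p_0<p_1<\cdots<p_{Q+1}$ and there is no path vertex lying strictly $<$-between two consecutive path vertices $p_k$ and $p_{k+1}$.

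For the main step, fix $k\in\{0,\dots,Q\}$ and suppose for contradiction that $\Imp(x_v,x_{p_i})\in\De{k}$. By~\eqref{equ:De} this forces $v\in K_k\setminus\{p_k\}$ and $p_i\in K_k\setminus\{p_k\}$. By property~(1) of ``$<$'', $p_k$ is the $<$-smallest vertex of $K_k$ and $p_{k+1}$ is the $<$-largest, so every vertex of $K_k$ lies $<$-between $p_k$ and $p_{k+1}$. In particular $p_k\le p_i$, and since $p_i\neq p_k$ in fact $p_k<p_i$; also $v\le p_{k+1}$, so together with the hypothesis $p_i<v$ we get $p_k<p_i<p_{k+1}$. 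But then $p_i$ is a path vertex strictly $<$-between the consecutive path vertices $p_k$ and $p_{k+1}$, contradicting the observation in the previous paragraph. Hence $\Imp(x_v,x_{p_i})\notin\De{k}$ for every $k\in\{0,\dots,Q\}$, and therefore $\Imp(x_v,x_{p_i})\in U\setminus\bigl(\bigcup_{k=0}^Q\De{k}\bigr)=\Ce$.

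I do not expect any real obstacle: the statement is essentially immediate from~\eqref{equ:De}. The only points needing a little care are the treatment of the boundary vertex $p_0$ (to be sure that ``$\Imp(x_v,x_{p_i})\in U$'' is legitimate, since $X$ has no variable for $p_0$) and the fact that the argument uses \emph{both} defining properties of the order ``$<$'': property~(1) to confine $v$ and $p_i$ to the interval between $p_k$ and $p_{k+1}$, and property~(2) to know that the path vertices $p_0,\dots,p_{Q+1}$ appear in increasing order so that no index can sit strictly between $k$ and $k+1$.
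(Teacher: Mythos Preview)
Your proof is correct and follows the same route the paper intends: the paper does not actually give a proof of this observation, merely stating that it ``follows immediately from the definition of~$\De{i}$ in~\eqref{equ:De}.'' Your argument is a careful unpacking of exactly that definition, so there is nothing to compare---you have simply written out what the paper leaves to the reader. One minor remark: a slightly quicker route through the contradiction is to note that the only path vertices in $K_k$ are $p_k$ and $p_{k+1}$ (by the intersection conditions in Definition~\ref{def:bisgraphs}), so $p_i\in K_k\setminus\{p_k\}$ forces $p_i=p_{k+1}$, and then $v\in K_k$ gives $v\le p_{k+1}=p_i$, contradicting $p_i<v$; but your version via ``no path vertex strictly between consecutive path vertices'' is equally valid.
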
 

\begin{lem}\label{lem:cliqueStructure}
	Two vertices $u,v \in V^*$ are adjacent in~$H$ if and only if $\sigma_{u}$~is adjacent to~$\sigma_{v}$ in~$\Hve{\Iv,\Ie}$.
\end{lem}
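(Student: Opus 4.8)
The plan is to reduce both adjacency relations to a single combinatorial condition on $\Ce$. Fix $u,v\in V^*$; since adjacency is symmetric in both graphs, assume without loss of generality that $u\le v$ in the order ``$<$''. The case $u=v$ is immediate: there is a loop at $u$ in $H$ because each $K_i$ is reflexive, and $\sigma_u$ is looped in $\Hve{\Iv,\Ie}$ by Lemma~\ref{lem:canonicalPath}, so assume $u<v$. The first step is a purely structural observation about $H\in\bisgraphs$: the two properties of the order ``$<$'' (in each $K_i$ the vertex $p_i$ is least and $p_{i+1}$ greatest, and the blocks $K_0,\dots,K_Q$ occur in this order), together with the clique-intersection conditions of Definition~\ref{def:bisgraphs}, force $p_0<p_1<\dots<p_{Q+1}$ and $K_i=\{w\in V^*:p_i\le w\le p_{i+1}\}$ for each $i$. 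Consequently, $u$ and $v$ lie in a common clique of $H$ — i.e.\ they are adjacent in $H$ — if and only if there is no $k\in[Q]$ with $u<p_k<v$.

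The second step unfolds Definition~\ref{def:Hve}. Both $\sigma_u$ and $\sigma_v$ are vertices of $\Hve{\Iv,\Ie}$ by Lemma~\ref{lem:canonicalPath}, so $\{\sigma_u,\sigma_v\}$ is an edge iff for every $\Imp(x_a,x_b)\in\Ce$ we have $\sigma_u(x_a)\Rightarrow\sigma_v(x_b)$ and $\sigma_v(x_a)\Rightarrow\sigma_u(x_b)$. Since $\Ce\subseteq U$, every such constraint has $a>b$; plugging in $\sigma_w(x_t)=1\iff t\le w$ and using $u\le v$, the first implication always holds, and the second fails precisely when $a\le v$ and $b>u$. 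Hence $\sigma_u$ and $\sigma_v$ are adjacent in $\Hve{\Iv,\Ie}$ if and only if there is no constraint $\Imp(x_a,x_b)\in\Ce$ with $u<b<a\le v$.

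The final step matches the two conditions. If some $k\in[Q]$ satisfies $u<p_k<v$, then in particular $p_k<v$, so Observation~\ref{obs:De} gives $\Imp(x_v,x_{p_k})\in\Ce$, and $(a,b)=(v,p_k)$ is a constraint of the forbidden shape. Conversely, suppose $\Imp(x_a,x_b)\in\Ce$ with $u<b<a\le v$ but no $p_k$ lies strictly between $u$ and $v$; by the structural step $u,v\in K_i$ for some $i$, so the order-interval from $u$ to $v$ is contained in $K_i$, and in particular $a,b\in K_i$; since $a>b>u\ge p_i$ we get $a,b\in K_i\setminus\{p_i\}$, whence $\Imp(x_a,x_b)\in\De{i}$ by~\eqref{equ:De}, contradicting $\Imp(x_a,x_b)\in\Ce=U\setminus\bigcup_k\De{k}$. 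Thus the two ``no $\dots$'' conditions are equivalent, and combining with the first two steps gives the lemma. I expect the main obstacle to be the structural step: extracting from the two order properties and the clique-intersection conditions that each $K_i$ is exactly the contiguous block $[p_i,p_{i+1}]$, with the $p_i$ in order and no foreign vertex lying between consecutive $p_i$'s. Once that is in place the remaining work is routine, with Observation~\ref{obs:De} supplying the single nontrivial ingredient.
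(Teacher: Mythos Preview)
Your proof is correct and uses the same ingredients as the paper's --- Observation~\ref{obs:De} for one direction and the definition of $\De{i}$ in~\eqref{equ:De} for the other --- so the two arguments are essentially the same. The only difference is organizational: you first extract the explicit characterizations ``$u,v$ adjacent in $H$ iff no $p_k$ lies strictly between them'' and ``$\sigma_u,\sigma_v$ adjacent iff no constraint $\Imp(x_a,x_b)\in\Ce$ with $u<b<a\le v$'' and then match them, whereas the paper verifies the two directions directly, leaving these intermediate equivalences implicit.
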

\begin{proof}
	First we consider the case where $u,v \in V^*$ are adjacent. Then there exists an index $i$ such that   $u,v \in K_i$. Consequently $p_i \le u\le p_{i+1}$ and $p_i \le v\le p_{i+1}$. Therefore, for $w\in V^*$, if $w\le p_{i}$ then $\sigma_{u}(x_w)=1$ and $\sigma_{v}(x_w)=1$, and if $w> p_{i+1}$ then $\sigma_{u}(x_w)=0$ and $\sigma_{v}(x_w)=0$. 
	Let $\Imp(x_t,x_s)$ be some constraint in $\Ce$. Then $\Imp(x_t,x_s)\notin \De{i}$ and, by
	the Definition of~$\De{i}$ in~\eqref{equ:De}, we have at least one of $s\le p_i$ or $t> p_{i+1}$. Using these facts, we can verify~\eqref{equ:adjacentAssignments} for $\sigma=\sigma_{u}$ and $\sigma'=\sigma_{v}$ which shows that $\sigma_{u}$ is adjacent to $\sigma_{v}$ in $\Hve{\Iv,\Ie}$. 
	
	Now assume that $u$ and $v$ are not adjacent. Then there exist indices $i\neq j$ from $\{0,\ldots,Q\}$ such that $u\in K_i\setminus\{p_{i+1}\}$, 
	$v\notin K_i$, $v \in K_j\setminus\{p_{j+1}\}$ and $u\notin K_j$. 	 
	Without loss of generality assume $i<j$. Then $u<p_{i+1}< v$ so $\sigma_{v}(x_{v})=1$ and $\sigma_{u}(x_{p_{i+1}})=0$. However,  by Observation~\ref{obs:De}, $\Imp(x_{v}, x_{p_{i+1}}) \in \Ce$ and consequently $\sigma_{u}$ and $\sigma_{v}$ are not adjacent in $\Hve{\Iv,\Ie}$.
\end{proof}

\examplebox{
	In our example, the vertices $\sigma_{p_0}, \sigma_{r_1}, \sigma_{p_1}, \sigma_{r_2}, \sigma_{p_2}, \sigma_{p_3}, \sigma_{p_4}$ form a reflexive path
	(as guaranteed by Lemma~\ref{lem:canonicalPath}). Furthermore, $\{\sigma_{p_0}, \sigma_{p_1}\}$ is an edge since $\Imp(x_{p_1}, x_{r_1})\notin \Ce$ by~\eqref{equ:CeExample}, and $\{\sigma_{p_1}, \sigma_{p_2}\}$ is an edge since $\Imp(x_{p_2}, x_{r_2})\notin \Ce$ by~\eqref{equ:CeExample}. The subgraph of $\Hve{\Iv, \Ie}$ induced by the path assignments is as depicted in Figure~\ref{fig:Hsigma} and is isomorphic to $H[V^*]=H[\{p_0, r_1,p_1,r_2,p_2,p_3, p_4\}]$ (as is guaranteed by Lemma~\ref{lem:cliqueStructure}).

	\bigskip
	\begin{center}
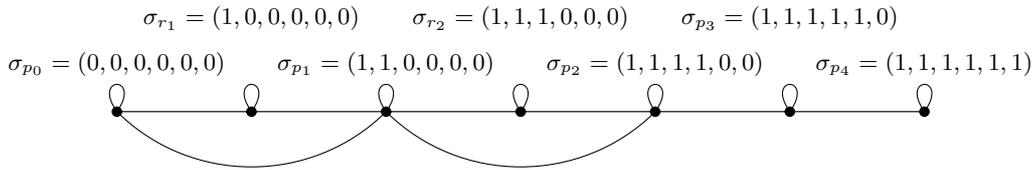

		\footnotesize%
		\begin{tikzpicture}[scale=0.59, every loop/.style={min distance=10mm,looseness=10}]
		
		\filldraw (0,0) node(p0){} circle[radius=3pt] --++ (0:3cm) node(p1){} circle[radius=3pt] --++ (0:3cm) node(p2){} circle[radius=3pt] --++ (0:3cm) node(p3){} circle[radius=3pt] --++ (0:3cm) node(p4){} circle[radius=3pt] --++ (0:3cm) node(p5){} circle[radius=3pt] --++ (0:3cm) node(p6){} circle[radius=3pt];
		
		\path[-] (p0.center) edge  [in=-135,out=-45] (p2.center);
		\path[-] (p2.center) edge  [in=-135,out=-45] (p4.center);
		
		\path[-] (p0.center) edge  [in=125,out=55,loop] node {} ();
		\path[-] (p1.center) edge  [in=125,out=55,loop] node {} ();
		\path[-] (p2.center) edge  [in=125,out=55,loop] node {} ();
		\path[-] (p3.center) edge  [in=125,out=55,loop] node {} ();		
		\path[-] (p4.center) edge  [in=125,out=55,loop] node {} ();
		\path[-] (p5.center) edge  [in=125,out=55,loop] node {} ();
		\path[-] (p6.center) edge  [in=125,out=55,loop] node {} ();
		
		\node[above=.25cm of p0]{$\sigma_{p_0}=(0,0,0,0,0,0)$};
		\node[above=.85cm of p1]{$\sigma_{r_1}=(1,0,0,0,0,0)$};
		\node[above=.25cm of p2]{$\sigma_{p_1}=(1,1,0,0,0,0)$};
		\node[above=.85cm of p3]{$\sigma_{r_2}=(1,1,1,0,0,0)$};
		\node[above=.25cm of p4]{$\sigma_{p_2}=(1,1,1,1,0,0)$};
		\node[above=.85cm of p5]{$\sigma_{p_3}=(1,1,1,1,1,0)$};
		\node[above=.25cm of p6]{$\sigma_{p_4}=(1,1,1,1,1,1)$};
		\end{tikzpicture}
		\captionof{figure}{The subgraph of $\Hve{\Iv,\Ie}$ induced by the path assignments. If a vertex corresponds to a path assignment $\sigma_v$ of $\Iv$ then its label is of the form $\sigma_v=(\sigma_v(x_{r_1}), \sigma_v(x_{p_1}), \sigma_v(x_{r_2}), \sigma_v(x_{p_2}), \sigma_v(x_{p_3}), \sigma_v(x_{p_4}))$.}
		\label{fig:Hsigma}
	\end{center}
}

With Lemma~\ref{lem:cliqueStructure} we have established the clique structure of the looped vertices. Next we will model the bristles that are attached to the vertices $p_1, \dots, p_Q$. To do this, we consider satisfying assignments of $\Iv$ that are not path assignments, i.e.~that are not of the form $\sigma_v$.  These assignments are called \emph{bristle assignments}.

\begin{defn}
	An assignment $\beta \from X \to \{0, 1\}$ is a bristle assignment if and only if there exist $u,v\in V^*\setminus\{p_0\}$ with $u<v$ such that $\beta(x_u)=0$ and $\beta(x_v)=1$.
\end{defn}

\begin{defn}\label{def:good}
	For $i\in [Q]$, $a \in K_{i-1}\setminus\{p_{i-1}\}$ and $b \in K_{i}\setminus\{p_{i}\}$ let $\beta_i[a, b]\from X \to \{0,1\}$ be the assignment with 
	\begin{equation}\label{equ:goodbristle}
	\beta_i[a, b](x_u)=
	\begin{cases}
	1, &\text{if } u<a\\
	0, &\text{if } a\le u \le p_i\\
	1, &\text{if } p_i < u\le b\\
	0, &\text{otherwise (if $b < u$),}
	\end{cases}
	\end{equation}
	where $u\in V^*\setminus\{p_0\}$. 
	Hence $\beta_i[a, b]$ is of the following form.
	\begin{center}
		\begin{tabular}{l|*{13}{c}}
			$x_u$
			& &$\dots$ &$ $
			&$x_{a}$ &$\dots$ &$x_{p_i}$
			& &$\dots$ &$x_{b}$
			& &$\dots$ &$x_{p_{Q+1}}$
			\\
			$\beta_{i}[a, b](x_u)$
			&$1$ & $\dots$ & $1$
			&$0$ &$\dots$ &$0$
			&$1$ &$\dots$ &$1$ 
			&$0$ &$\dots$ &$0$
		\end{tabular}
	\end{center}
	Note that $a$ is the minimum index for which $\beta_i[a, b]$ takes value $0$, and $b$ is the maximum index for which $\beta_i[a, b]$ takes value $1$.
	We say that a bristle assignment is \emph{good} if it is of the form $\beta_i[a, b]$.
\end{defn}

\begin{obs}\label{obs:satBristleEquivalence}
	Since $\Cv\subseteq U$ the following are equivalent:
	\begin{enumerate}[(I)]
		\item $\beta_i[a,b]$ is a satisfying assignment of $\Iv$.
		\item For all $s,t \in V^*\setminus\{p_0\}$ with $a\le s \le p_i$ and $p_i < t \le b$ it holds that $\Imp(x_t,x_s) \notin \Cv$.
	\end{enumerate}
\end{obs}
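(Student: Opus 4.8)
The plan is to simply unfold what it means for $\beta_i[a,b]$ to satisfy $\Iv=(X,\Cv)$ and then to pin down exactly which constraints of $\Cv$ the assignment $\beta_i[a,b]$ can possibly violate. Since $\Cv\subseteq U$, every constraint of $\Cv$ has the form $\Imp(x_t,x_s)$ with $t>s$ (and $s,t\in V^*\setminus\{p_0\}$), and $\beta_i[a,b]$ violates such a constraint precisely when $\beta_i[a,b](x_t)=1$ and $\beta_i[a,b](x_s)=0$. So (I) is the statement that no constraint of $\Cv$ is violated in this way, and it suffices to prove that the constraints of $U$ violated by $\beta_i[a,b]$ are exactly those $\Imp(x_t,x_s)$ with $a\le s\le p_i$ and $p_i<t\le b$. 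Once this is established, (I) holds if and only if $\Cv$ contains none of these constraints, which is exactly~(II).

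To carry this out I would first record the shape of $\beta_i[a,b]$ from~\eqref{equ:goodbristle}: $\beta_i[a,b](x_u)=1$ iff $u<a$ or $p_i<u\le b$, and $\beta_i[a,b](x_u)=0$ iff $a\le u\le p_i$ or $b<u$. I would also note the order facts $p_{i-1}<a\le p_i<b\le p_{i+1}$, which follow from $a\in K_{i-1}\setminus\{p_{i-1}\}$, $b\in K_i\setminus\{p_i\}$ together with property~(1) of the ordering ``$<$'' on $V^*$ (so $K_{i-1}$ runs from $p_{i-1}$ up to $p_i$ and $K_i$ from $p_i$ up to $p_{i+1}$). Then I would do the short case analysis on a candidate violated constraint $\Imp(x_t,x_s)\in U$, so $t>s$, with $\beta_i[a,b](x_t)=1$ and $\beta_i[a,b](x_s)=0$: the value-$1$ condition forces $t<a$ or $p_i<t\le b$, and the value-$0$ condition forces $a\le s\le p_i$ or $b<s$. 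If $t<a$, then (using $a\le p_i<b$) one gets $s\ge a>t$ in either subcase, contradicting $t>s$; so $p_i<t\le b$. Then $b<s$ would give $t\le b<s$, again contradicting $t>s$; so $a\le s\le p_i$. Conversely, any pair $s,t$ with $a\le s\le p_i<t\le b$ has $s\le p_i<t$, hence $t>s$, so $\Imp(x_t,x_s)\in U$ and it is indeed violated by $\beta_i[a,b]$. This identifies the violated constraints of $U$ exactly as the set appearing in~(II), which yields the equivalence. I would add the remark that all such $s,t$ automatically lie in $V^*\setminus\{p_0\}$, since $s\ge a>p_{i-1}\ge p_0$ and $t>p_i\ge p_1>p_0$, so (II) is well-posed.

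There is no genuine obstacle here — this is why it is stated as an observation rather than a lemma; it is pure unfolding of~\eqref{equ:goodbristle} and the definition of $\Iv$. The only point that needs a little care is getting the order inequalities $p_{i-1}<a\le p_i<b\le p_{i+1}$ right, since the whole case analysis rests on them; in particular one must use that $a\ne p_{i-1}$ and $b\ne p_i$ to obtain the strict inequalities $p_{i-1}<a$ and $p_i<b$.
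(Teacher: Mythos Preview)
Your proof is correct. The paper states this as an observation without any proof, relying on the reader to see it directly from the shape of $\beta_i[a,b]$ in~\eqref{equ:goodbristle} and the fact that $\Cv\subseteq U$; your write-up makes explicit exactly the unfolding and short case analysis the paper leaves implicit, and there is nothing to compare beyond that.
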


\begin{lem}\label{lem:goodBristle}
	Every bristle assignment that satisfies $\Iv=(X,\Cv) $ is good.
\end{lem}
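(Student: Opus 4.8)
The plan is to determine the exact $0/1$-pattern of an arbitrary bristle assignment $\beta$ that satisfies $\Iv=(X,\Cv)$, and to recognise it as one of the assignments $\beta_i[a,b]$ of Definition~\ref{def:good}. The only feature of $\Cv$ I will use is that $\Cv=U\setminus\bigcup_{i=1}^Q\Dv{i}$ with each $\Dv{i}\subseteq A(i)$; the precise choice of which constraints are deleted plays no role here.

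The crucial observation concerns what I will call the \emph{inversions} of $\beta$: pairs $(v,u)$ with $v>u$, $\beta(x_v)=1$ and $\beta(x_u)=0$. Such a pair violates the constraint $\Imp(x_v,x_u)\in U$, so since $\beta$ satisfies every constraint of $\Cv$ we must have $\Imp(x_v,x_u)\in U\setminus\Cv\subseteq\bigcup_{i=1}^Q A(i)$. By~\eqref{equ:Ai} this forces $v\in K_i\setminus\{p_i\}$ and $u\in K_{i-1}\setminus\{p_{i-1}\}$ for some $i\in[Q]$, and the intersection pattern of Definition~\ref{def:bisgraphs} makes this $i$ unique. I will also use the elementary consequence of the ordering properties~(1) and~(2) of ``$<$'' together with that intersection pattern: for every $i\in\{0,\dots,Q\}$ we have $K_i\setminus\{p_i\}=\{\,w\in V^*: p_i<w\le p_{i+1}\,\}$, and these half-open sets are pairwise disjoint. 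So every inversion of $\beta$ links a vertex of $K_i\setminus\{p_i\}$ to a vertex of $K_{i-1}\setminus\{p_{i-1}\}$ for a single determined index $i$, which is what will ``localise'' $\beta$ to one pair of consecutive cliques.

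With this in hand I would set $b:=\max\{v: \beta(x_v)=1\}$ (well defined because a bristle assignment has a $1$) and $a:=\min\{w: w<b,\ \beta(x_w)=0\}$ (well defined because the defining property of a bristle assignment yields some $u<v$ with $\beta(x_u)=0$, $\beta(x_v)=1$, and then $u<v\le b$). Minimality of $a$ gives $\beta(x_w)=1$ for all $w<a$; maximality of $b$ gives $\beta(x_w)=0$ for all $w>b$; and $\beta(x_a)=0\ne1=\beta(x_b)$. Applying the observation to the inversion $(b,a)$ produces an index $i\in[Q]$ with $a\in K_{i-1}\setminus\{p_{i-1}\}$ and $b\in K_i\setminus\{p_i\}$, hence $p_{i-1}<a\le p_i<b\le p_{i+1}$. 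It then remains to pin down $\beta$ on $[a,p_i]$ and on $(p_i,b]$: if $\beta(x_w)=1$ for some $a<w\le p_i$ then $(w,a)$ is an inversion, whose associated index must be $i$ (since $a\in K_{i-1}\setminus\{p_{i-1}\}$ fixes it), so $w\in K_i\setminus\{p_i\}$ and $w>p_i$, a contradiction; symmetrically, if $\beta(x_w)=0$ for some $p_i<w\le b$ then $(b,w)$ is an inversion with associated index $i$ (since $b\in K_i\setminus\{p_i\}$ fixes it), so $w\in K_{i-1}\setminus\{p_{i-1}\}$ and $w\le p_i$, again impossible. Combining the four regions $w<a$, $a\le w\le p_i$, $p_i<w\le b$, $w>b$ shows $\beta=\beta_i[a,b]$ with $a\in K_{i-1}\setminus\{p_{i-1}\}$, $b\in K_i\setminus\{p_i\}$, $i\in[Q]$; that is, $\beta$ is good.

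I expect the only genuinely delicate point to be the clique-membership bookkeeping in the last step: translating ``$v\in K_i\setminus\{p_i\}$ and $u\in K_{i-1}\setminus\{p_{i-1}\}$'' into statements about the positions of $v$ and $u$ relative to the $p_j$'s, and verifying that each inversion determines its index $i$ uniquely. The two extremality observations and the reduction ``violated constraint $\Rightarrow$ constraint lies in some $A(i)$'' should be routine.
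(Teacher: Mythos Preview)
Your proof is correct and takes essentially the same approach as the paper: both arguments identify the extremal positions $a$ (first zero) and $b$ (last one), use the key fact that any constraint of $U$ not in $\Cv$ must lie in some $A(i)$ to locate $a\in K_{i-1}\setminus\{p_{i-1}\}$ and $b\in K_i\setminus\{p_i\}$, and then verify the values of $\beta$ on the two middle intervals $(a,p_i]$ and $(p_i,b)$ via the constraints $\Imp(x_w,x_a)$ and $\Imp(x_b,x_w)$. The only difference is presentational---you phrase the middle step contrapositively through ``inversions'' while the paper argues directly that the relevant constraints remain in $\Cv$---but the logic is identical.
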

\begin{proof}
	The property of $\Cv$ that is used in this proof is that $U\setminus \bigcup_{i\in [Q]} A(i) \subseteq \Cv$ and therefore if a constraint from $U$ is not in $\Cv$ it has to be in one of the sets $A(i)$.
	
	Let $\beta$ be a bristle assignment that satisfies $\Iv$. Let $a$ be the minimum index with $\beta(x_a)=0$ and let $b$ be the maximum index with $\beta(x_b)=1$. 
	Since $\beta$ is a bristle assignment, $b>a$.	
	Then, since $\beta$ is a satisfying assignment, $\Imp(x_b, x_a) \notin \Cv$. Therefore $\Imp(x_b, x_a) \in A(i)$ for some $i\in [Q]$, and therefore 
	(by the definition of~$A(i)$), $b\in K_i\setminus\{p_i\}$
	and
	$a \in K_{i-1} \setminus\{p_{i-1}\}$. 
	This is consistent with Definition~\ref{def:good}.
	We will show that $\beta=\beta_i[a, b]$.
	Let $u\in V^*\setminus\{p_0\}$. We investigate the value of $\beta(x_u)$ depending on $u$ to show that $\beta$ takes values as given in~\eqref{equ:goodbristle}:
	\begin{itemize}
		\item If $u<a$, by the minimality of $a$ we have $\beta(x_u)=1$.
		\item If $u=a$, then $\beta(x_u)=\beta(x_a)=0$ by the choice of $a$.
		\item If $a< u\le p_i$, then $u\notin K_{i}\setminus\{p_i\}$ which implies $\Imp(x_u,x_a)\notin A(i)$ and consequently, since $\Imp(x_u,x_a)$ is in $U$, we have $\Imp(x_u,x_a) \in \Cv$ (also using the fact that $\Imp(x_u,x_a)$ cannot be in any of the other sets $A(k)$ since $a \in K_{i-1}\setminus\{p_{i-1}\}$). Therefore it holds that $\beta(x_u)=0$.
		\item If $p_i <u < b$, then $u\notin K_{i-1}$ which implies $\Imp(x_b,x_u)\notin A(i)$ and consequently, since $\Imp(x_b,x_u)$ is in $U$, we have $\Imp(x_b,x_u) \in \Cv$ (also using the fact that $\Imp(x_b,x_u)$ cannot be in any of the other sets $A(k)$ since $b \in K_i\setminus\{p_i\}$). Therefore it holds that $\beta(x_u)=1$.
		\item If $u=b$, then $\beta(x_u)=\beta(x_b)=1$ by the choice of $b$.
		\item If $b<u$, by the maximality of $b$ we have $\beta(x_u)=0$.
	\end{itemize}
\end{proof}

\examplebox{
	We can now check which  of the (good) bristle assignments satisfy $\Iv$ in the running example. First, consider the set $\Dv{1}$. 
	We already showed, in the text box containing~\eqref{equ:CvExample}, that $\Dv{1}= \{\Imp(x_{r_2},x_{p_1}),\allowbreak \Imp(x_{r_2},x_{r_1}),\allowbreak \Imp(x_{p_2},x_{p_1}), \Imp(x_{p_2},x_{r_1})\}$. From the definition of $\Cv$
	(Equation~\eqref{equ:CvCe}),
	\begin{enumerate}[(i)]
		\setlength{\itemsep}{0pt}
		\setlength{\parskip}{0pt}
		\setlength{\parsep}{0pt} 
		\item $\Imp(x_{r_2},x_{p_1}) \notin \Cv$. 	\label{item:A11}
		\item $\Imp(x_{r_2},x_{r_1}) \notin \Cv$.	\label{item:A12}
		\item $\Imp(x_{p_2},x_{p_1}) \notin \Cv$.	\label{item:A13}
		\item $\Imp(x_{p_2},x_{r_1}) \notin \Cv.$	\label{item:A14}
	\end{enumerate} 
	The good bristle assignments of the form $\beta_1[a,b]$ 
	have $a \in K_{0}\setminus \{p_{0}\} = \{r_1,p_1\}$ and $b \in K_{1}\setminus\{p_{1}\}=\{r_2,p_2\}$	
	so they	
	are $\beta_1[p_1,r_2]$, $\beta_1[r_1,r_2]$, $\beta_1[p_1,p_2]$ and $\beta_1[r_1,p_2]$, which are as follows:
	\begin{center}
		\begin{tabular}{l|*{7}{c}}
			$x_u$ &$\begingroup \color{red}\bm{x_{r_1}}\endgroup$ &$\begingroup \color{red}\bm{x_{p_1}}\endgroup$ &$\begingroup \color{red}\bm{x_{r_2}}\endgroup$ & $\begingroup \color{red}\bm{x_{p_2}}\endgroup$ & $x_{p_3}$& $x_{p_4}$\\
			$\beta_1[p_1,r_2](x_u)$ &$\begingroup \color{red}\bm{1}\endgroup$ &$\begingroup \color{red}\bm{0}\endgroup$ &$\begingroup \color{red}\bm{1}\endgroup$ &$\begingroup \color{red}\bm{0}\endgroup$ &$0$ &$0$\\
			$\beta_1[r_1,r_2](x_u)$ &$\begingroup \color{red}\bm{0}\endgroup$ &$\begingroup \color{red}\bm{0}\endgroup$ &$\begingroup \color{red}\bm{1}\endgroup$ &$\begingroup \color{red}\bm{0}\endgroup$ &$0$ &$0$\\
			$\beta_1[p_1,p_2](x_u)$ &$\begingroup \color{red}\bm{1}\endgroup$ &$\begingroup \color{red}\bm{0}\endgroup$ &$\begingroup \color{red}\bm{1}\endgroup$ &$\begingroup \color{red}\bm{1}\endgroup$ &$0$ &$0$\\
			$\beta_1[r_1,p_2](x_u)$ &$\begingroup \color{red}\bm{0}\endgroup$ &$\begingroup \color{red}\bm{0}\endgroup$ &$\begingroup \color{red}\bm{1}\endgroup$ &$\begingroup \color{red}\bm{1}\endgroup$ &$0$ &$0$
		\end{tabular}
	\end{center}

	We now apply Observation~\ref{obs:satBristleEquivalence}.	
	From~\eqref{item:A11} it follows that $\beta_1[p_1,r_2]$ satisfies $\Iv$. Similarly, from~\eqref{item:A11} and~\eqref{item:A12} it follows that $\beta_1[r_1,r_2]$ satisfies $\Iv$. From~\eqref{item:A11} and~\eqref{item:A13} it follows that $\beta_1[p_1,p_2]$ satisfies $\Iv$. Finally, from~\eqref{item:A11},~\eqref{item:A12},~\eqref{item:A13} and~\eqref{item:A14} it follows that $\beta_1[r_1,p_2]$ satisfies $\Iv$.
	The four bristle assignments that we	have checked correspond to the four bristles  in~$B_1$ in Figure~\ref{fig:HBISexample}.	
	Similarly, one can check, from the definitions of $\Dv{2}$ and $\Dv{3}$, that $\beta_2[p_2,p_3]$, $\beta_2[r_2,p_3]$ and $\beta_3[p_3,p_4]$ are the only other satisfying bristle assignments.
}

\begin{lem}\label{lem:onlyNeighbourPi}
	For every bristle assignment $\beta$ that satisfies $\Iv=(X,\Cv)$ there exists $i\in [Q]$ such that $\sigma_{p_i}$ is the only neighbour of $\beta$ in $\Hve{\Iv,\Ie}$.
\end{lem}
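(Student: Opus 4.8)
The plan is to first invoke Lemma~\ref{lem:goodBristle} to put $\beta$ in the explicit form $\beta=\beta_i[a,b]$, with $i\in[Q]$, $a\in K_{i-1}\setminus\{p_{i-1}\}$ and $b\in K_i\setminus\{p_i\}$, and then to show that for this very~$i$ the assignment $\sigma_{p_i}$ is a neighbour of $\beta$ in $\Hve{\Iv,\Ie}$ and is the only one. Throughout I would use that properties~(1) and~(2) of the order ``$<$'', together with the intersection pattern of the $K_j$'s, force $K_j=\{w\in V^*\mid p_j\le w\le p_{j+1}\}$ for every~$j$; hence $p_j=\min K_j$, $p_{j+1}=\max K_j$, $p_{i-1}<a\le p_i<b\le p_{i+1}$, and, reading off~\eqref{equ:goodbristle}, $\beta(x_{p_i})=0$ while $\beta(x_b)=1$.

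For the first part I would verify~\eqref{equ:adjacentAssignments} directly for $\sigma=\sigma_{p_i}$ and $\sigma'=\beta$ (both are vertices of $\Hve{\Iv,\Ie}$, by Lemma~\ref{lem:canonicalPath} and by hypothesis). Given a constraint $\Imp(x_t,x_s)\in\Ce$ we know $t>s$ and, by~\eqref{equ:CvCe} and~\eqref{equ:De}, that $s,t$ are not both in $K_i\setminus\{p_i\}$ and not both in $K_{i-1}\setminus\{p_{i-1}\}$. A short case analysis on the position of $t$ relative to $a,p_i,b$ then shows that $\beta(x_t)=1$ implies $s\le p_i$ (i.e.\ $\sigma_{p_i}(x_s)=1$), and that $\sigma_{p_i}(x_t)=1$ (i.e.\ $t\le p_i$) implies $s<a$ (i.e.\ $\beta(x_s)=1$); the two ``not both in the same clique'' facts are exactly what rules out the dangerous sub-cases $p_i<s<t\le p_{i+1}$ and $p_{i-1}<s<t\le p_i$. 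This gives the edge $\{\sigma_{p_i},\beta\}$, and $\sigma_{p_i}\neq\beta$ since one is a path assignment and the other a bristle assignment.

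The uniqueness part is where I expect the real work to be, and the key is to exploit the ``long-range'' constraints that Observation~\ref{obs:De} keeps in $\Ce$. Let $\gamma$ be any neighbour of $\beta$. For every $j\in[i]$ we have $\Imp(x_b,x_{p_j})\in\Ce$ (Observation~\ref{obs:De}, using $p_j\le p_i<b$), and for every $t\in V^*$ with $t>p_i$ we have $\Imp(x_t,x_{p_i})\in\Ce$; applying~\eqref{equ:adjacentAssignments} to $\{\beta,\gamma\}$ together with $\beta(x_b)=1$ and $\beta(x_{p_i})=0$ then forces $\gamma(x_{p_j})=1$ for all $j\le i$ and $\gamma(x_t)=0$ for all $t>p_i$. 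Now, by Lemma~\ref{lem:goodBristle}, $\gamma$ is either a path assignment or a good bristle assignment $\beta_j[a',b']$; the latter is impossible, because $\beta_j[a',b']$ has $\gamma(x_{p_j})=0$ (contradicting the above when $j\le i$) and $\gamma(x_{b'})=1$ with $b'>p_j$ (contradicting the above when $j>i$). So $\gamma=\sigma_w$ for some $w\in V^*$; from $\gamma(x_{p_i})=1$ we get $w\ge p_i$, and from $\gamma(x_{t^*})=0$ for the $<$-successor $t^*$ of $p_i$ in $V^*$ (which exists since $i\le Q$) we get $w<t^*$, hence $w=p_i$. Thus $\gamma=\sigma_{p_i}$, and with the previous paragraph this shows $\sigma_{p_i}$ is the unique neighbour of $\beta$.

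The one genuinely delicate point, in my view, is keeping straight which deletions touch $\Ce$ and which touch $\Cv$: the edge-side deletions $\De{i}$ are what create the room for $\sigma_{p_i}$ to be adjacent to $\beta$ at all, whereas Observation~\ref{obs:De} --- the fact that no constraint $\Imp(x_v,x_{p_i})$ with $p_i<v$ is ever removed from $\Ce$ --- is precisely what makes the uniqueness argument go through knowing nothing about $\gamma$ beyond that it satisfies $\Iv$.
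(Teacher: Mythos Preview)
Your proof is correct and follows essentially the same strategy as the paper: invoke Lemma~\ref{lem:goodBristle} to write $\beta=\beta_i[a,b]$, verify the edge $\{\sigma_{p_i},\beta\}$ by checking~\eqref{equ:adjacentAssignments} against the deletions $\De{i-1}$ and $\De{i}$, and use Observation~\ref{obs:De} to pin down any neighbour~$\gamma$.

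There is one small but noteworthy difference in how uniqueness is finished. The paper argues more directly: taking the $<$-successor $v$ of $p_i$ (so $\beta(x_v)=1$) and observing that $\Imp(x_v,x_u)\in\Ce$ for every $u\le p_i$, it forces $\gamma(x_u)=1$ for \emph{all} $u\le p_i$, not just for the $p_j$'s; together with $\gamma(x_t)=0$ for $t>p_i$ this immediately gives $\gamma=\sigma_{p_i}$ without any appeal to the classification of satisfying assignments. Your route instead establishes the values of $\gamma$ only at the $p_j$'s and above $p_i$, and then re-invokes Lemma~\ref{lem:goodBristle} to rule out the bristle case. Both are valid; the paper's version avoids the second use of Lemma~\ref{lem:goodBristle} and the case split on the type of~$\gamma$, at the cost of checking one more family of constraints in~$\Ce$.
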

\begin{proof}
	Here we will use the fact that $U\setminus \bigcup_{i\in [Q]} A(i) \subseteq \Cv$ and the fact that $\Ce=U\setminus \left(\bigcup_{k=0}^Q D_e(k)\right)$. 
	
	Let $\beta$ be a bristle assignment that satisfies $\Iv$. By Lemma~\ref{lem:goodBristle}, $\beta$ is good, i.e.~there exist $i\in [Q]$, $a \in K_{i-1}\setminus\{p_{i-1}\}$ and $b \in K_{i}\setminus\{p_{i}\}$ such that $\beta=\beta_{i}[a, b]$. We will show that $\sigma_{p_i}$ is the only neighbour of $\beta=\beta_{i}[a, b]$ in $\Hve{\Iv,\Ie}$. 
	
	Let $\psi$ be some satisfying assignment of $\Iv$ which is adjacent to $\beta_{i}[a, b]$. Note that $\beta_{i}[a, b](x_{p_i})=0$. This fact together with Observation~\ref{obs:De} (which states that for all $u\in V^*$ with $p_i<u$ we have $\Imp(x_u, x_{p_i}) \in \Ce$) implies that,
	for all $u>p_i$, $\psi(x_u)=0$. 
	
	Let $v$ be the minimum vertex with $p_i<v$. Then $p_i < v \le b$ and therefore $\beta_{i}[a, b](x_{v})=1$. 
	The  choice of~$v$ ensures that for all vertices $u\in V^*$,
	$u\leq p_i$ iff $u<v$.	  Therefore, if $u\le p_i$ we have $\Imp(x_{v}, x_u) \in \Ce$ (since $v\in K_i\setminus\{p_i\}$ but $u\notin K_i\setminus\{p_i\}$ and hence $\Imp(x_{v}, x_u) \notin \De{i}$) and consequently $\psi(x_u)=1$. 
	Summarising, we obtain
	\begin{equation*}
	\text{for all $u\leq p$, } 
	\psi(x_u)=1 
	\text{ and for all $u>p_i$, }
	\psi(x_u)=0.
	\end{equation*}
	Thus, $\psi=\sigma_{p_i}$. 
	
	It remains to check that $\beta_{i}[a, b]$ and $\sigma_{p_i}$ are in fact adjacent. To this end we verify~\eqref{equ:adjacentAssignments}:
	
	\bigskip
	\noindent{\bf\boldmath Claim:
		If $\Imp(x_t,x_s) \in \Ce$ then $\beta_{i}[a, b](x_t) \Rightarrow \sigma_{p_i}(x_s)$.
	}
	
	\smallskip
	\noindent{\it Proof of the claim:}
	We check for possible violations.
	The only relevant $s$ and $t$ are those for which $s<t$, $\beta_{i}[a, b](x_t)=1$ and $\sigma_{p_i}(x_s)=0$, 
	i.e., all~$s$ and $t$ satisfying $p_i<s <t \leq b \leq p_{i+1}$.
	However, constraints of this form are in $\De{i}$ and hence are not in $\Ce$.
	{\it (End of the proof of the claim.)}
	
	\bigskip
	\noindent{\bf\boldmath Claim: If $\Imp(x_t,x_s) \in \Ce$ then $\sigma_{p_i}(x_t) \Rightarrow \beta_{i}[a, b](x_s)$.
	}
	
	\smallskip
	\noindent{\it Proof of the claim:}
	Again we check for violations. The only relevant $s$ and $t$ are those for which $s<t$, $\sigma_{p_i}(x_t)=1$ and $\beta_{i}[a, b](x_s)=0$,  
	i.e., all $s$ and $t$ satisfying $p_{i-1} < a \leq s < t \leq p_i$.
	However, constraints of this form are in $\De{i-1}$ and hence are not in $\Ce$.
	{\it (End of the proof of the claim.)}
\end{proof}

\begin{lem}\label{lem:numberOfBristles}
	For each $i\in[Q]$ there are exactly $\abs{B_i}$ good bristle assignments that satisfy $\Iv$ and are adjacent to $\sigma_{p_i}$ in $\Hve{\Iv, \Ie}$.
\end{lem}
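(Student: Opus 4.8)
The plan is to exhibit an explicit bijection between the good bristle assignments of ``type $i$'' that satisfy $\Iv$ and the set of deleted constraints $\Dv{i}$, and then to invoke Lemma~\ref{lem:onlyNeighbourPi} to see that all (and only) these assignments are adjacent to $\sigma_{p_i}$. The map I have in mind sends a good bristle assignment $\beta_i[a,b]$, where $a\in K_{i-1}\setminus\{p_{i-1}\}$ and $b\in K_i\setminus\{p_i\}$, to the constraint $\Imp(x_b,x_a)$. By Definition~\ref{def:good} the pair $(a,b)$ can be read off $\beta_i[a,b]$ (as the minimum $0$-index and the maximum $1$-index), and by the definition of $A(i)$ in~\eqref{equ:Ai} the map $(a,b)\mapsto\Imp(x_b,x_a)$ is a bijection from $\{(a,b)\}$ onto $A(i)$. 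So the whole statement reduces to the equivalence: $\beta_i[a,b]$ satisfies $\Iv$ if and only if $\Imp(x_b,x_a)\in\Dv{i}$. Granting this, the number of good bristle assignments $\beta_i[a,b]$ satisfying $\Iv$ is $\abs{\Dv{i}}=\abs{B_i}$ by the construction of $\Dv{i}$ as the $\abs{B_i}$ smallest elements of $A(i)$.

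To prove that equivalence I would use Observation~\ref{obs:satBristleEquivalence}, which states that $\beta_i[a,b]$ satisfies $\Iv$ precisely when $\Imp(x_t,x_s)\notin\Cv$ for all $s,t\in V^*\setminus\{p_0\}$ with $a\le s\le p_i<t\le b$. For the forward direction, specialise to $s=a$, $t=b$: then $\Imp(x_b,x_a)\notin\Cv$, and since $b>a$ this constraint lies in $U$, so $\Imp(x_b,x_a)\in\bigcup_j\Dv{j}\subseteq\bigcup_j A(j)$; as $b\in K_i\setminus\{p_i\}$ and the cliques overlap only in the vertices $p_0,\dots,p_{Q+1}$ (and if $b=p_{i+1}$ then $b\notin K_{i+1}\setminus\{p_{i+1}\}$ anyway), the constraint $\Imp(x_b,x_a)$ can belong to $A(j)$ only for $j=i$, hence $\Imp(x_b,x_a)\in\Dv{i}$. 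For the reverse direction, suppose $\Imp(x_b,x_a)\in\Dv{i}$ and take any $s,t$ with $a\le s\le p_i<t\le b$. Using $p_{i-1}<a\le s\le p_i$ and $p_i<t\le b\le p_{i+1}$, together with the interval-like layout of $V^*$ along the cliques, one checks $s\in K_{i-1}\setminus\{p_{i-1}\}$ and $t\in K_i\setminus\{p_i\}$, so $\Imp(x_t,x_s)\in A(i)$; moreover $t\le b$, and if $t=b$ then $s\ge a$, so $\Imp(x_t,x_s)\preceq\Imp(x_b,x_a)$ in the order of Definition~\ref{def:OrderOnU}. Since $\Dv{i}$ is the $\preceq$-smallest $\abs{B_i}$ elements of $A(i)$ and $\preceq$ is a total order, $\Dv{i}$ is downward closed in $(A(i),\preceq)$, so $\Imp(x_t,x_s)\in\Dv{i}$, hence $\Imp(x_t,x_s)\notin\Cv$; Observation~\ref{obs:satBristleEquivalence} then gives that $\beta_i[a,b]$ satisfies $\Iv$.

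Finally I would put the pieces together. The equivalence and the bijection give exactly $\abs{\Dv{i}}=\abs{B_i}$ good bristle assignments $\beta_i[a,b]$ that satisfy $\Iv$. By Lemma~\ref{lem:onlyNeighbourPi} (specifically, the adjacency verification in its proof) each of these is adjacent to $\sigma_{p_i}$ in $\Hve{\Iv,\Ie}$, and a good bristle assignment $\beta_j[a',b']$ with $j\neq i$ has as its unique neighbour $\sigma_{p_j}\neq\sigma_{p_i}$, so it is not counted for index $i$; also the index $j$ of a good bristle assignment is intrinsic to it. Hence there are exactly $\abs{B_i}$ good bristle assignments that satisfy $\Iv$ and are adjacent to $\sigma_{p_i}$. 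The main obstacle is the reverse direction of the equivalence: it requires matching the combinatorial description of $\Dv{i}$ against the whole ``rectangle'' of constraints $\Imp(x_t,x_s)$ forced by $\beta_i[a,b]$, including the boundary cases $s=p_i$ and $b=p_{i+1}$, and relying on $\preceq$ being total so that ``$\abs{B_i}$ smallest'' really is a down-set. Everything else is routine bookkeeping about which clique a given vertex of $V^*$ lies in.
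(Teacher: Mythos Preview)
Your proof is correct and follows essentially the same approach as the paper: both establish the equivalence that $\beta_i[a,b]$ satisfies $\Iv$ if and only if $\Imp(x_b,x_a)\in\Dv{i}$ (equivalently, is among the $\abs{B_i}$ smallest elements of $A(i)$), via Observation~\ref{obs:satBristleEquivalence} and the downward-closedness of $\Dv{i}$ under $\preceq$, and then invoke Lemma~\ref{lem:onlyNeighbourPi} for the adjacency to $\sigma_{p_i}$. Your bijection framing is a minor repackaging, but the substance is the same.
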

\begin{proof}
	Every good bristle assignment is of the form $\beta_{i}[a, b]$ for some $i\in[Q]$, $a \in K_{i-1}\setminus\{p_{i-1}\}$ and $b \in K_{i}\setminus\{p_{i}\}$. In the proof of Lemma~\ref{lem:onlyNeighbourPi} we have shown that the only neighbour of $\beta_{i}[a, b]$ is $\sigma_{p_i}$. Then the following claim completes the proof of the lemma:
	
	\bigskip
	\noindent{\bf\boldmath Claim:
		$\beta_{i}[a, b]$ satisfies $\Iv$ if and only if $\Imp(x_b, x_{a})$ is among the $\abs{B_i}$ smallest elements of $A(i)$.
	}
	
	\smallskip
	\noindent{\it Proof of the claim:}
	Since $a \in K_{i-1}\setminus\{p_{i-1}\}$ and $b \in K_{i}\setminus\{p_{i}\}$ we have $\Imp(x_b, x_a)\in A(i)$.
	
	First consider the case where $\Imp(x_b, x_a)$ is one of the $\abs{B_i}$ smallest elements of $A(i)$. From Observation~\ref{obs:satBristleEquivalence} we know that $\beta_{i}[b, a]$ satisfies $\Iv$ if and only if for all $s,t \in V^*\setminus\{p_0\}$ with $a\le s \le p_i$ and $p_i < t \le b$ it holds that $\Imp(x_t,x_s) \notin \Cv$. Note that each such $\Imp(x_t,x_s)$ is in $A(i)$ and 
	by Definition~\ref{def:OrderOnU}, 
	$\Imp(x_t,x_s) \preceq \Imp(x_b,x_a)$. Thus, $\Imp(x_t,x_s)\in D_v(i)$ and we obtain $\Imp(x_t,x_s) \notin \Cv$ as required.
	
	Now consider the remaining case where $\Imp(x_b, x_a)$ is not among the the $\abs{B_i}$ smallest elements of $A(i)$. Then, by our choice of $D_v(i)$, $\Imp(x_b, x_a)\notin D_v(i)$. Moreover, for all $k\neq i$, $\Imp(x_b, x_a)\notin D_v(k)$ since $\Imp(x_b, x_a)\in A(i)$, $D_v(k)\subseteq A(k)$ and $A(k)\cap A(i) = \emptyset$. Hence $\Imp(x_b, x_a)\in \Cv$ and consequently $\beta_{i}[b, a]$ does not satisfy $\Iv$ by Observation~\ref{obs:satBristleEquivalence}.
	{\it (End of the proof of the claim.)}
\end{proof}

\begin{lem}\label{lem:HisoHIvIe}
	Let $\Iv= (X,\Cv)$ and $\Ie=(X,\Ce)$. Then $\Hve{\Iv,\Ie}$ is isomorphic to~$H$.
\end{lem}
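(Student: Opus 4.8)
The plan is to exhibit an explicit bijection $\phi\colon V(H)\to V(\Hve{\Iv,\Ie})$ and to check that it preserves adjacency and non-adjacency (loops included), gluing together the facts established in Lemmas~\ref{lem:canonicalPath}--\ref{lem:numberOfBristles}. On the looped vertices I would set $\phi(v)=\sigma_v$ for every $v\in V^*$. For the bristles, Lemma~\ref{lem:numberOfBristles} guarantees that for each $i\in[Q]$ there are exactly $\abs{B_i}$ good bristle assignments of the form $\beta_i[a,b]$ that satisfy $\Iv$; I would fix, for each $i$, an arbitrary bijection $\phi_i$ from $B_i$ onto this set of assignments and let $\phi$ agree with $\phi_i$ on $B_i$. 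No canonical choice is needed.

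To see that $\phi$ is a bijection, note that a satisfying assignment of $\Iv$ is either a bristle assignment or not. If it is not a bristle assignment then, by the very definition of a bristle assignment, it is a ``threshold'' assignment and hence equals $\sigma_v$ for a unique $v\in V^*$; conversely every $\sigma_v$ satisfies $\Iv$ by Lemma~\ref{lem:canonicalPath}. Thus $v\mapsto\sigma_v$ is a bijection from $V^*$ onto the satisfying non-bristle assignments. By Lemma~\ref{lem:goodBristle} every satisfying bristle assignment is good, and by Lemma~\ref{lem:onlyNeighbourPi} it has a unique neighbour in $\Hve{\Iv,\Ie}$, which is some $\sigma_{p_i}$ with $i\in[Q]$; since the vertices $\sigma_{p_1},\dots,\sigma_{p_Q}$ are pairwise distinct, the images of the maps $\phi_i$ are pairwise disjoint and, together with $\{\sigma_v : v\in V^*\}$, exhaust $V(\Hve{\Iv,\Ie})$. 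Hence $\phi$ is a bijection (and in particular $\abs{V(H)}$ equals the number of satisfying assignments of $\Iv$).

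It remains to verify that $\phi$ preserves edges and loops, by a short case analysis on the types of the two endpoints. For $u,v\in V^*$ (including $u=v$), Lemma~\ref{lem:cliqueStructure} gives $\{u,v\}\in E(H)$ if and only if $\{\sigma_u,\sigma_v\}$ is an edge of $\Hve{\Iv,\Ie}$; taking $u=v$ shows in particular that each $\sigma_v$ is looped, matching the loop at $v$ in $H$. For a bristle $w\in B_i$ and an arbitrary vertex $x$ of $H$: in $H$ the unique neighbour of $w$ is $p_i$, while by Lemma~\ref{lem:onlyNeighbourPi} the unique neighbour of $\phi(w)=\beta_i[a,b]$ in $\Hve{\Iv,\Ie}$ is $\sigma_{p_i}=\phi(p_i)$. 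Hence $\phi(w)$ is unlooped, is not adjacent to any path assignment other than $\sigma_{p_i}$, and is not adjacent to any other bristle assignment; this exactly mirrors the fact that $w$ is an unlooped degree-$1$ vertex adjacent only to $p_i$, and it disposes of the remaining cases (bristle--bristle, bristle--loop, and all non-edges). Therefore $\phi$ is a graph isomorphism.

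I expect essentially all of the work to be bookkeeping, since the substance is already contained in the earlier lemmas. The one point deserving a moment's care is the clean split of the satisfying assignments of $\Iv$ into the $\abs{V^*}$ threshold assignments $\sigma_v$ and the good bristle assignments, so that the vertex counts line up; after that, the isomorphism is obtained simply by matching neighbourhoods.
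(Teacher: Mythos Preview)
Your proposal is correct and follows essentially the same approach as the paper's proof, which is a terse paragraph that simply invokes Lemmas~\ref{lem:canonicalPath}--\ref{lem:numberOfBristles} in the same order you do. You have spelled out explicitly the bijection and the case analysis on edge types that the paper leaves implicit, but the underlying argument is identical.
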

\begin{proof}
	Here we collect the previous results. By Lemma~\ref{lem:canonicalPath} the path assignments satisfy $\Iv$ (and hence are vertices of $\Hve{\Iv,\Ie}$). Lemma~\ref{lem:cliqueStructure} shows that the subgraph of $\Hve{\Iv,\Ie}$ induced by the path assignments is isomorphic to $H[V^*]$. By Lemma~\ref{lem:goodBristle} all other satisfying assignments are good bristle assignments and by Lemma~\ref{lem:onlyNeighbourPi} each of these good bristle assignments has a unique neighbour, which is among $\sigma_{p_1}, \dots, \sigma_{p_Q}$. Then Lemma~\ref{lem:numberOfBristles} shows that there are exactly $\abs{B_i}$ good bristle assignments adjacent to $\sigma_{p_i}$. 
\end{proof}

We can now prove Theorem~\ref{thm:bisEasyRet}, which we re-state at this point for the convenience of the reader.
{\renewcommand{\thethm}{\getrefnumber{thm:bisEasyRet}}
	\begin{thm}
		\ThmbisEasyRet
	\end{thm}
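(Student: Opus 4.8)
The plan is to establish the $\bis$-equivalence in two directions. The direction $\Ret{H}\leap\bis$ needs almost no further work: take the instances $\Iv=(X,\Cv)$ and $\Ie=(X,\Ce)$ of $\csp(\{\Imp\})$ built above, apply Lemma~\ref{lem:HisoHIvIe} to conclude that $\Hve{\Iv,\Ie}$ is isomorphic to $H$ (so that $\Ret{H}$ and $\Ret{\Hve{\Iv,\Ie}}$ are the same problem up to a trivial relabelling of the target graph), and then invoke Lemma~\ref{lem:OALHomToCSPImplies} to get $\Ret{\Hve{\Iv,\Ie}}\leap\bis$.

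For the converse $\bis\leap\Ret{H}$, I would route through homomorphism counting. By Observation~\ref{obs:HomToRetToLHom} we have $\Hom{H}\leap\Ret{H}$, so it is enough to prove $\bis\leap\Hom{H}$, and for that I would cite the theorem of Galanis, Goldberg and Jerrum~\cite{GGJBIS}: approximately counting homomorphisms to a \emph{connected} graph is $\bis$-hard unless the graph is an irreflexive complete bipartite graph or a reflexive clique. It therefore remains to check that every $H\in\bisgraphs$ is connected and falls outside these two exceptional families. Connectivity is immediate from Definition~\ref{def:bisgraphs}: the reflexive cliques $K_0,\dots,K_Q$ are threaded together through the common vertices $p_1,\dots,p_Q$, and every bristle of $B_i$ is joined to $p_i$, so all vertices lie in the component of $p_0$. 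Since $K_{i-1}\cap K_i=\{p_i\}$ and $p_0,\dots,p_{Q+1}$ are distinct, each $K_i$ has at least two vertices, so $H$ has looped vertices and hence is not irreflexive; in particular it is not an irreflexive complete bipartite graph. Finally $H$ is not a complete graph, hence not a reflexive clique: with $Q\ge 1$ the vertices $p_0$ and $p_2$ are distinct, $p_0\in K_0\setminus K_1$ and $p_2\in K_1\setminus K_0$, and by the description of $E(H)$ in Definition~\ref{def:bisgraphs} the pair $\{p_0,p_2\}$ lies in no $\ucp{K_i}{K_i}$ and in no $\{p_i\}\times B_i$, so $\{p_0,p_2\}\notin E(H)$. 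Thus \cite{GGJBIS} yields $\bis\leap\Hom{H}\leap\Ret{H}$.

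Putting the two directions together gives $\Ret{H}\eqap\bis$. I do not expect a real obstacle at this point: the substantive work was completed in Lemmas~\ref{lem:canonicalPath}--\ref{lem:HisoHIvIe}, where the clique-and-bristle geometry of $H$ was encoded exactly by $\Hve{\Iv,\Ie}$; here the only mild care needed is the bookkeeping that guarantees members of $\bisgraphs$ avoid the two trivial templates in the Galanis--Goldberg--Jerrum classification, and noting that the easiness direction already rules out $\Ret{H}$ being $\sat$-hard.
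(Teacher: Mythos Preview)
Your proposal is correct and follows essentially the same approach as the paper: $\bis$-easiness via Lemmas~\ref{lem:HisoHIvIe} and~\ref{lem:OALHomToCSPImplies}, and $\bis$-hardness via Observation~\ref{obs:HomToRetToLHom} together with the Galanis--Goldberg--Jerrum result~\cite{GGJBIS}. The paper's proof is terser and does not spell out the verification that $H\in\bisgraphs$ is connected and avoids the trivial templates, but your added bookkeeping is accurate and harmless.
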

	\addtocounter{thm}{-1}
}
\begin{proof}
	Let $H\in \bisgraphs$. The $\bis$-easiness part of Theorem~\ref{thm:bisEasyRet} follows directly from Lemmas~\ref{lem:HisoHIvIe} and~\ref{lem:OALHomToCSPImplies}. The $\bis$-hardness part follows from $\Hom{H}\leap \Ret{H}$ (Observation~\ref{obs:HomToRetToLHom}) together with the fact that $\bis \leap \Hom{H}$ for all connected graphs $H$ other than reflexive cliques and irreflexive stars~\cite[Theorem 1]{GGJBIS}.
\end{proof}

\examplebox{
	In our running example we conclude by showing in Figure~\ref{fig:HveExample} how $H$ is encoded as $\Hve{\Iv, \Ie}$. We have already demonstrated which assignments are satisfying and thus are vertices of $\Hve{\Iv, \Ie}$. 
	Using~\eqref{equ:CeExample} and~\eqref{equ:adjacentAssignments} it is straightforward to verify (with some work) that each satisfying bristle assignment of the form $\beta_i[a,b]$ is in fact adjacent only to $\sigma_{p_i}$.
	\begin{center}
		\hspace*{-.2cm} \footnotesize%
		\begin{tikzpicture}[scale=0.6, every loop/.style={min distance=10mm,looseness=10}]
		
		\filldraw (0,0) node(p0){} circle[radius=3pt] --++ (0:3cm) node(p1){} circle[radius=3pt] --++ (0:3cm) node(p2){} circle[radius=3pt] --++ (0:3cm) node(p3){} circle[radius=3pt] --++ (0:3cm) node(p4){} circle[radius=3pt] --++ (0:3cm) node(p5){} circle[radius=3pt] --++ (0:3cm) node(p6){} circle[radius=3pt];
		
		\path[-] (p0.center) edge  [in=-135,out=-45] (p2.center);
		\path[-] (p2.center) edge  [in=-135,out=-45] (p4.center);
		
		\path[-] (p0.center) edge  [in=125,out=55,loop] node {} ();
		\path[-] (p1.center) edge  [in=125,out=55,loop] node {} ();
		\path[-] (p2.center) edge  [in=125,out=55,loop] node {} ();
		\path[-] (p3.center) edge  [in=125,out=55,loop] node {} ();		
		\path[-] (p4.center) edge  [in=125,out=55,loop] node {} ();
		\path[-] (p5.center) edge  [in=125,out=55,loop] node {} ();
		\path[-] (p6.center) edge  [in=125,out=55,loop] node {} ();
		
		\filldraw (p2.center) -- (5.25,-2.25cm) node(b21){} circle[radius=3pt];
		\filldraw (p2.center) -- (5.75,-2.25cm) node(b22){} circle[radius=3pt];
		\filldraw (p2.center) -- (6.25,-2.25cm) node(b23){} circle[radius=3pt];
		\filldraw (p2.center) -- (6.75,-2.25cm) node(b24){} circle[radius=3pt];
		\filldraw (p4.center) -- (11.75,-2.25cm) node(b41){} circle[radius=3pt];
		\filldraw (p4.center) -- (12.25,-2.25cm) node(b42){} circle[radius=3pt];
		\filldraw (p5.center) -- (15,-2.25cm) node(b51){} circle[radius=3pt];
		
		\node[above=.3cm of p0]{$\sigma_{p_0}=(0,0,0,0,0,0)$};
		\node[above=.7cm of p1]{$\sigma_{r_1}=(1,0,0,0,0,0)$};
		\node[above=.3cm of p2]{$\sigma_{p_1}=(1,1,0,0,0,0)$};
		\node[above=.7cm of p3]{$\sigma_{r_2}=(1,1,1,0,0,0)$};
		\node[above=.3cm of p4]{$\sigma_{p_2}=(1,1,1,1,0,0)$};
		\node[above=.7cm of p5]{$\sigma_{p_3}=(1,1,1,1,1,0)$};
		\node[above=.3cm of p6]{$\sigma_{p_4}=(1,1,1,1,1,1)$};

		\node at (4.5,-3) {$\beta_1[p_1,r_2]=(\begingroup \color{red}\bm{1,0,1,0}\endgroup,0,0)$};
		\node at (4.5,-3.7) {$\beta_1[r_1,r_2]=(\begingroup \color{red}\bm{0,0,1,0}\endgroup,0,0)$};
		\node at (4.5,-4.4) {$\beta_1[p_1,p_2]=(\begingroup \color{red}\bm{1,0,1,1}\endgroup,0,0)$};
		\node at (4.5,-5.1) {$\beta_1[r_1,p_2]=(\begingroup \color{red}\bm{0,0,1,1}\endgroup,0,0)$};
		\node at (11,-3) {$\beta_2[p_2,p_3]=(1,1,\begingroup \color{red}\bm{1,0,1,0}\endgroup)$};
		\node at (11,-3.7) {$\beta_2[r_2,p_3]=(1,1,\begingroup \color{red}\bm{0,0,1,0}\endgroup)$};
		\node at (17.5,-3) {$\beta_3[p_3,p_4]=(1,1,1,\begingroup \color{red}\bm{1,0,1}\endgroup)$};
		\end{tikzpicture}
		
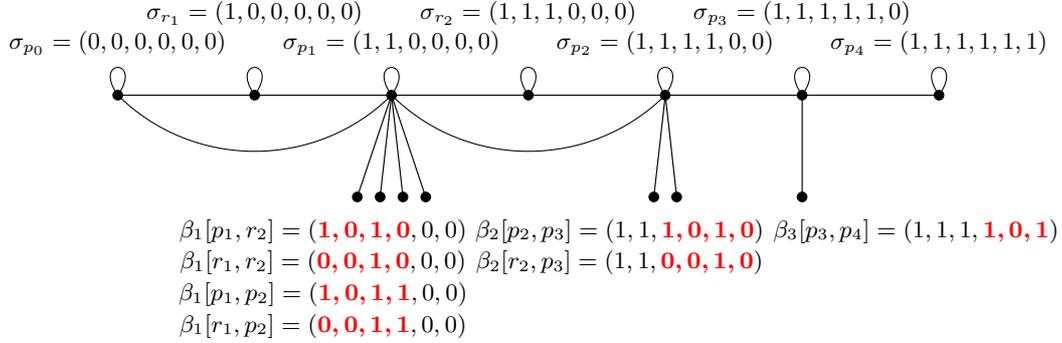
\captionof{figure}{The graph $\Hve{\Iv,\Ie}$. If a vertex corresponds to a satisfying assignment $\rho$ of $\Iv$ then its label is of the form $\rho=(\rho(x_{r_1}),\rho(x_{p_1}),\rho(x_{r_2}),\rho(x_{p_2}),\rho(x_{p_3}),\rho(x_{p_4}))$.}
		\label{fig:HveExample}
	\end{center}
}

\section{$\sat$-Hardness Results}\label{sec:sat-hardness}

From previous results we already know the following:
\begin{itemize}
	\item Theorem~\ref{thm:RetGirth5} classifies the complexity of approximately counting retractions to $H$ for all graphs $H$ that are both square-free and triangle-free (i.e.~have girth at least $5$).
	\item For irreflexive $H$, the proof of Theorem~\ref{thm:RetGirth5} does not use triangle-freeness --- \cite[Theorem~2.3]{FGZRet} gives a trichotomy for approximately counting retractions to the class of irreflexive square-free graphs.
\end{itemize}
Therefore, we investigate square-free graphs that contain at least one triangle and at least one looped vertex. It turns out that we have to work through a number of technical cases to cover all $\sat$-hard graphs with these properties (and hence all $\sat$-hard square-free graphs).

For a positive integer $q$ the graph $\WR{q}$ is a looped star on $q+1$ vertices (the underlying star has $q$ degree-$1$ vertices). The (reflexive) \emph{net} is a looped triangle where each vertex of the triangle has an additional looped neighbour, as shown in the following illustration:
\begin{center}
	\begin{tikzpicture}[scale=1, baseline=0.36cm, every loop/.style={min distance=10mm,looseness=10}]
		
		\filldraw (0,0) node(d){} circle[radius=3pt] --++ (30:1cm) node(a){} circle[radius=3pt] --++ (60:1cm) node(b){} circle[radius=3pt] --++ (-60:1cm) node(c){} circle[radius=3pt]--++ (-30:1cm) node(f){} circle[radius=3pt];
		\filldraw (b.center) --++ (90:1cm) node(e){} circle[radius=3pt];
		\draw (a.center) -- (c.center);	
		
		
		\path[-] (a.center) edge  [in=155,out=85,loop] node {} ();
		\path[-] (d.center) edge  [in=155,out=85,loop] node {} ();
		\path[-] (b.center) edge  [in=30,out=-30,loop] node {} ();
		\path[-] (e.center) edge  [in=30,out=-30,loop] node {} ();
		\path[-] (c.center) edge  [in=-155,out=-85,loop] node {} ();
		\path[-] (f.center) edge  [in=-155,out=-85,loop] node {} ();

	\end{tikzpicture}
\end{center}
Here is an overview of the cases that we consider:
\begin{itemize}
	\item In Section~\ref{sec:mixedTriangles} we show that mixed triangles induce $\sat$-hardness.
	\item In Section~\ref{sec:NeighbourhoodOfLoopedVertex} we show in which cases the neighbourhood of a looped vertex induces $\sat$-hardness.
	\item In Sections~\ref{sec:inducedWR3},~\ref{sec:inducedNet} and~\ref{sec:inducedCycle} we show $\sat$-hardness for square-free graphs with an induced $\WR{3}$, with an induced net and with an induced reflexive cycle of length at least $5$, respectively. Essentially, these three sections deal with the graphs from the excluded subgraph characterisation of reflexive proper interval graphs (see~\cite[Section 1 and Appendix A]{GGJList} for the details about this characterisation).
\end{itemize}  

\subsection{Retractions and Neighbourhoods}
\begin{defn}\label{def:neighbourhoods}
	For a graph $H$ and a vertex $v\in V(H)$ we define the \emph{(distance-$1$) neighbourhood} of $v$ as $\NH(v)=\{u\in V(H) \mid \{u,v\} \in E(H)\}$. (In particular, this might include $v$ itself.) Then $\deg_H(v)=\abs{\NH(v)}$ is the \emph{degree} of $v$.
	More generally, the \emph{distance-$k$ neighbourhood} of $v$ is defined as $\NH^k(v)=\{u\in V(H) \mid \text{There is a walk } W=u,w_1,\dots, w_{k-1}, v\text{ (on $k$ edges) in }H\}$. Let $U$ be a subset of $V(H)$. Then $\NH(U)= \bigcap_{v\in U} \NH(v)$ is the \emph{set of common neighbours} of the vertices in $U$.
\end{defn}

The following well-known and simple observation shows that, for approximately counting retractions, hardness carries over from subgraphs that are induced by the neighbourhood of a vertex.
\begin{obs}
	\label{obs:PinNeighbourhood}
	Let $H$ be a graph and let $u$ be a vertex of $H$. Then $\Ret{H[\NH(u)]} \leap \Ret{H}$.
\end{obs}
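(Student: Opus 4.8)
The plan is to exhibit a parsimonious polynomial-time reduction, which is in particular an approximation-preserving reduction. Given an instance $(G,\boldS)$ of $\Ret{H[\NH(u)]}$ — so that $G$ is irreflexive and each list $S_v$ is either a single vertex of $H[\NH(u)]$ or all of $V(H[\NH(u)])=\NH(u)$ — I would build an instance $(G',\boldS')$ of $\Ret{H}$ as follows. Add one fresh unlooped vertex $w$ to $G$ and join it by an edge to every vertex of $G$; this keeps $G'$ irreflexive and free of multi-edges. Set $S'_w=\{u\}$. For each $v\in V(G)$, keep $S'_v=S_v$ if $S_v$ is a singleton, and set $S'_v=V(H)$ if $S_v=\NH(u)$. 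In either case $\abs{S'_v}\in\{1,\abs{V(H)}\}$, so $(G',\boldS')$ is a legitimate instance of $\Ret{H}$, and it is plainly computable in polynomial time.

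The key claim is that homomorphisms from $(G',\boldS')$ to $H$ are in bijection with homomorphisms from $(G,\boldS)$ to $H[\NH(u)]$. For the forward direction, any list-respecting homomorphism $h'$ must send $w$ to $u$ because $S'_w=\{u\}$; then for each $v\in V(G)$ the edge $\{w,v\}$ forces $\{u,h'(v)\}\in E(H)$, i.e.\ $h'(v)\in\NH(u)$. Hence the restriction of $h'$ to $V(G)$ maps into $\NH(u)$, preserves every edge of $G$ (such an edge becomes an edge of $H[\NH(u)]$ since both endpoints lie in $\NH(u)$), and respects $\boldS$ (singleton lists are unchanged, and an ``all'' list $\NH(u)$ is satisfied because $h'(v)\in\NH(u)$). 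Conversely, a homomorphism $h$ from $(G,\boldS)$ to $H[\NH(u)]$ extends to $G'$ by setting $h(w)=u$: edges of $G$ remain satisfied in $H$, each edge $\{w,v\}$ is satisfied because $h(v)\in\NH(u)$ gives $\{u,h(v)\}\in E(H)$, and all lists are respected. These two maps are mutually inverse, so $\hom{(G',\boldS')}{H}=\hom{(G,\boldS)}{H[\NH(u)]}$.

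Since the reduction is parsimonious, it is approximation-preserving, which yields $\Ret{H[\NH(u)]}\leap\Ret{H}$. There is essentially no serious obstacle here; the only point needing a little care is the list bookkeeping — the ``all'' list $\NH(u)$ of the subgraph $H[\NH(u)]$ is not a permissible list for $\Ret{H}$, so it must be replaced by the full list $V(H)$, with the gadget vertex $w$ (pinned to $u$ and adjacent to everything) doing the work of confining the images of those vertices to $\NH(u)$. Degenerate situations, such as $\NH(u)=\emptyset$ or $u$ being looped so that $u\in\NH(u)$, are handled automatically by the same argument.
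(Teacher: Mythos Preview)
Your proof is correct and follows essentially the same approach as the paper: add a single fresh vertex $w$ pinned to $u$ and adjacent to every vertex of $G$, replace full lists $\NH(u)$ by $V(H)$, and observe that the resulting counts are equal. You actually spell out the bijection more carefully than the paper does, which simply asserts the equality $\hom{(G,\boldS)}{H[\NH(u)]} = \hom{(G',\boldS')}{H}$.
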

\begin{proof}
	Let $(G,\boldS)$ be an input to $\Ret{H[\NH(u)]}$, let $v_1, \dots, v_n$ be the vertices of $G$ and $\boldS=\{S_v \mid v\in V(G)\}$. Let $w$ be a vertex distinct from the vertices in $G$. Then we construct the graph $G'$ with vertices $V(G')=V(G)\cup \{w\}$ and edges $E(G')=E(G) \cup \{\{w,v_i\} \mid i\in [n]\}$.
	We set $\boldS'=\{S'_v \mid v\in V(G')\}$, where
	\[
	S'_v=
	\begin{cases}
	\{u\}, &\text{if }v=w\\
	S_v, &\text{if }v\in V(G)\text{ and }\abs{S_v}=1\\
	V(H), & \text{otherwise.}
	\end{cases}
	\]
	Then $\hom{(G,\boldS)}{H[\NH(u)]} = \hom{(G',\boldS')}{H}$.
\end{proof}
\subsection{Square-Free Graphs with Mixed Triangles}\label{sec:mixedTriangles}
\begin{lem}\label{lem:hardtriangles1}
	Let $H$ be a square-free graph which contains a triangle with exactly two looped and one unlooped vertex. Then $\sat\leap \Ret{H}$.
\end{lem}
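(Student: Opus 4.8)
The plan is to reduce from $\#\mathrm{IS}$, the problem of counting independent sets in a graph, which is known to be $\sat$-equivalent under approximation-preserving reductions~\cite{DGGJApprox}. Write $a,b$ for the two looped vertices and $c$ for the unlooped vertex of the triangle, and set $T=H[\{a,b,c\}]$; since $c$ is the only unlooped vertex among the three, $T$ is exactly the ``mixed triangle'', i.e.\ $E(T)=\{\{a,a\},\{b,b\},\{a,b\},\{a,c\},\{b,c\}\}$. The one structural fact I need is
\[
\NH(a)\cap\NH(b)=\{a,b,c\}.
\]
The inclusion ``$\supseteq$'' is immediate from the loops on $a,b$ and the triangle edges. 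For ``$\subseteq$'', suppose $d\in\NH(a)\cap\NH(b)$ with $d\notin\{a,b,c\}$; then $a,c,b,d$ are four distinct vertices with $\{a,c\},\{c,b\},\{b,d\},\{d,a\}\in E(H)$, a $4$-cycle, contradicting that $H$ is square-free. Consequently, in an instance of $\Ret{H}$ I can \emph{confine} a vertex $x$ of the input graph to $\{a,b,c\}$: give $x$ the full list $V(H)$ and attach to it two fresh degree-$1$ vertices carrying the singleton lists $\{a\}$ and $\{b\}$. Any homomorphism must send these pendants to $a$ and $b$, forcing the image of $x$ into $\NH(a)\cap\NH(b)=\{a,b,c\}$; conversely, once the image of $x$ lies in $\{a,b,c\}\subseteq\NH(a)\cap\NH(b)$ the pendants are (uniquely) mappable to $a,b$. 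So each such gadget multiplies the homomorphism count by exactly $1$.

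Next I relate $\Ret{H}$ to $\#\mathrm{IS}$. A homomorphism from a graph to $T$ is precisely a map into $\{a,b,c\}$ whose preimage of $c$ is an independent set (the only non-edge inside $\{a,b,c\}$ being $\{c,c\}$). Given an instance $G$ of $\#\mathrm{IS}$ with $n=\abs{V(G)}$, form the ``$K_2$-blow-up'' $G'$: replace each vertex $v$ of $G$ by an edge $\{v_1,v_2\}$, and for each edge $\{u,v\}$ of $G$ put all four edges between $\{u_1,u_2\}$ and $\{v_1,v_2\}$. For a homomorphism $G'\to T$ let $S=\{v\in V(G)\colon c\text{ is used on }v_1\text{ or }v_2\}$. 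Since $v_1\sim v_2$, not both copies of $v$ get colour $c$; and if $u,v\in S$ with $\{u,v\}\in E(G)$ then a $c$-coloured copy of $u$ is adjacent to a $c$-coloured copy of $v$, which is impossible, so $S$ is independent in $G$. Conversely any independent $S$ is admissible, and for \emph{each} $v$ (whether or not $v\in S$) there are exactly $4$ legal colourings of $\{v_1,v_2\}$. Hence $\hom{G'}{T}=4^{n}\,\#\mathrm{IS}(G)$.

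Finally, build the $\Ret{H}$-instance $(G'',\boldS'')$ by taking $G'$, assigning every vertex of $G'$ the full list $V(H)$, and attaching to every vertex of $G'$ the two-pendant confinement gadget above (the new pendants carrying the singleton lists $\{a\}$, $\{b\}$); note $G''$ is irreflexive and all lists have size $1$ or $\abs{V(H)}$. By the gadget property every list-homomorphism $(G'',\boldS'')\to H$ sends all vertices of $G'$ into $\{a,b,c\}$, hence restricts to a homomorphism $G'\to T$, and every homomorphism $G'\to T$ extends in exactly one way; so $\hom{(G'',\boldS'')}{H}=\hom{G'}{T}=4^{n}\,\#\mathrm{IS}(G)$. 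Since $G\mapsto(G'',\boldS'')$ is polynomial-time computable and $4^{n}$ is known, this is an approximation-preserving reduction $\#\mathrm{IS}\leap\Ret{H}$, and therefore $\sat\leap\Ret{H}$.

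I expect the only real content to be (i) spotting that square-freeness pins $\NH(a)\cap\NH(b)$ down to exactly the triangle — this is what makes the confinement gadget available and parsimonious — and (ii) noticing that the ``identity'' reduction ($G''=G$, all lists $V(H)$) would only produce the weighted quantity $\sum_{I\ \mathrm{indep}}2^{\,n-\abs{I}}$ (an evaluation of the hard-core partition function at activity $1/2$), so one needs the $K_2$-blow-up to cancel the spurious $2^{\abs{I}}$ factors and land exactly on $\#\mathrm{IS}$. The remaining steps are routine verifications of the homomorphism counts.
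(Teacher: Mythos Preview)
Your proof is correct and shares the paper's core structural step --- namely, that square-freeness forces $\NH(a)\cap\NH(b)=\{a,b,c\}$, so two pendant pins to $a$ and $b$ confine any vertex to the mixed triangle $T$ --- but diverges in how the hardness of homomorphisms to $T$ is obtained. The paper observes that $\Hom{T}$ is the partition function of the hard-core model at activity $1/2$ (each non-IS vertex carries weight $2$) and simply cites~\cite{GJPTwoSpin} for the $\sat$-hardness of that weighted problem; it then chains $\sat\leap\Hom{T}\leap\Ret{T}\leap\Ret{H}$. You instead eliminate the weight imbalance explicitly via the $K_2$-blow-up, giving a parsimonious (up to the known factor $4^n$) reduction from unweighted $\#\mathrm{IS}$ directly to $\Ret{H}$. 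Your route is more self-contained --- it needs only the $\sat$-equivalence of $\#\mathrm{IS}$ rather than the weighted two-spin machinery --- at the cost of the extra blow-up verification; the paper's route is shorter on the page because it outsources that step to the literature.
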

\begin{proof}
	Let  $b_1, b_2, r$ be a triangle in $H$, where $b_1$ and $b_2$ are looped and $r$ is unlooped. Consider the neighbourhood $\NH(b_1) \cap \NH(b_2)=\NH(\{b_1,b_2\})$. Since $H$ is square-free, $H[\NH(b_1) \cap \NH(b_2)]$ is precisely the triangle $b_1, b_2, r$. The problem $\Hom{H[\{b_1, b_2, r\}]}$ corresponds to counting independent sets where vertices not in the independent set have a weight of $2$ and vertices in the independent set have weight $1$. It is well-known that approximately counting weighted independent sets is $\sat$-hard, see for instance~\cite[Lemma 2]{GJPTwoSpin}. This gives $\sat \leap \Hom{H[\NH(b_1) \cap \NH(b_2)]}$. From Observation~\ref{obs:HomToRetToLHom} it follows immediately that $\sat\leap \Ret{H[\NH(b_1) \cap \NH(b_2)]}$.
	
	Finally, one can easily observe that $\Ret{H[\NH(b_1) \cap \NH(b_2)]}\leap \Ret{H}$: Let $(G,\boldS)$ be an input to $\Ret{H[\NH(b_1) \cap \NH(b_2)]}$ and let $\boldS=\{S_v \mid v\in V(G)\}$. Let $w_1$ and $w_2$ be vertices distinct from the vertices in $G$. Then we construct the graph $G'$ with vertices $V(G')=V(G)\cup \{w_1, w_2\}$ and edges $E(G')=E(G) \cup \bigl(\ucp{V(G)}{\{w_1, w_2\}}\bigr)$.
	We set $\boldS'=\{S'_v \mid v\in V(G')\}$, where
	\[
	S'_v=
	\begin{cases}
	\{b_1\}, &\text{if }v=w_1\\
	\{b_2\}, &\text{if }v=w_2\\
	S_v, &\text{if }v\in V(G)\text{ and }\abs{S_v}=1\\
	V(H), & \text{otherwise.}
	\end{cases}
	\]
	Then $\hom{(G,\boldS)}{H[\NH(b_1) \cap \NH(b_2)]} = \hom{(G',\boldS')}{H}$.
\end{proof}

\begin{lem}\label{lem:hardtriangles2}
	Let $H$ be a square-free graph which contains a triangle with exactly two unlooped and one looped vertex. Then $\sat\leap \Ret{H}$.
\end{lem}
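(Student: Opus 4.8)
The plan is to mirror the proof of Lemma~\ref{lem:hardtriangles1}: locate inside $H$ a small $\sat$-hard target that appears as a common neighbourhood, and transfer its hardness to $\Ret{H}$ by pinning. Write $r_1,r_2,b$ for the given triangle, with $r_1,r_2$ unlooped and $b$ looped. The right object to look at here is the common neighbourhood $\NH(\{r_1,b\}) = \NH(r_1)\cap\NH(b)$. (Note that $\NH(\{r_1,r_2\})$ would collapse all the way to $\{b\}$, which carries no hardness, so the choice of the pair $\{r_1,b\}$ matters.)

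First I would show, using square-freeness, that $\NH(\{r_1,b\}) = \{r_2,b\}$. Both $r_2$ and $b$ lie in this set: $r_2$ is adjacent to $r_1$ and to $b$ (the triangle), and $b$ is looped and adjacent to $r_1$. Conversely, $r_1\notin\NH(r_1)$ because $r_1$ is unlooped, and no vertex $c\notin\{r_2,b\}$ can be a common neighbour of $r_1$ and $b$: such a $c$ satisfies $c\ne r_1$ (again because $r_1$ is unlooped), so $r_1,r_2,b,c$ would be four distinct vertices, and $\{r_1,r_2\},\{r_2,b\},\{b,c\},\{c,r_1\}\in E(H)$ would form a $4$-cycle, contradicting square-freeness. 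Hence $H[\NH(\{r_1,b\})]$ is the graph on $\{r_2,b\}$ with the single edge $\{r_2,b\}$ and the loop at $b$; that is, $H[\NH(\{r_1,b\})]\cong\WR{1}$.

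Next I would observe that $\sat\leap\Ret{\WR{1}}$. Indeed, $\Hom{\WR{1}}$ is precisely the problem of approximately counting independent sets in the (irreflexive) input graph: a homomorphism to $\WR{1}$ is a choice of independent set, namely the preimage of the unlooped vertex, with all remaining vertices sent to the looped vertex. Approximately counting independent sets in general graphs is $\sat$-equivalent~\cite{DGGJApprox} (this particular target is also covered by the $\sat$-hard case of Theorem~\ref{thm:RetGirth5}, since $\WR{1}$ has girth at least $5$), so by Observation~\ref{obs:HomToRetToLHom} we get $\sat\leap\Hom{\WR{1}}\leap\Ret{\WR{1}}$. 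Finally I would transfer this to $H$ exactly as in the last paragraph of the proof of Lemma~\ref{lem:hardtriangles1}: given an instance $(G,\boldS)$ of $\Ret{\WR{1}}$, with $\WR{1}$ identified with $H[\{r_2,b\}]$, add two fresh vertices $w_1,w_2$, make each of them adjacent to every vertex of $G$, and set $S'_{w_1}=\{r_1\}$, $S'_{w_2}=\{b\}$, keeping singleton lists of $\boldS$ and replacing non-singleton lists by $V(H)$. In any homomorphism of the new instance to $H$, every vertex of $G$ is adjacent to both $w_1$ and $w_2$ and is therefore forced into $\NH(\{r_1,b\})=\{r_2,b\}$, so homomorphisms of the new instance correspond bijectively to homomorphisms of $(G,\boldS)$ to $H[\{r_2,b\}]$; this gives a (parsimonious, hence approximation-preserving) reduction $\Ret{\WR{1}}\leap\Ret{H}$. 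Composing the three reductions yields $\sat\leap\Ret{H}$.

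I do not expect a real obstacle here. The only step requiring genuine care is the square-free argument identifying $H[\NH(\{r_1,b\})]$ with $\WR{1}$, and being explicit that the invoked $\sat$-hardness of counting independent sets is for general (not bipartite) graphs; the rest is a routine adaptation of Lemma~\ref{lem:hardtriangles1}.
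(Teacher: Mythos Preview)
Your argument is correct and is in fact more elementary than the paper's. One naming slip: in this paper $\WR{q}$ denotes a \emph{reflexive} star (all $q+1$ vertices looped), so $\WR{1}$ is a reflexive edge, not the graph you describe. What you actually obtain --- one looped vertex, one unlooped vertex, joined by an edge --- is the independent-sets graph $I$ of Figure~\ref{fig:introExampleH}; your hardness claim and the remainder of the argument go through verbatim for it.

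The paper takes a more indirect route. It first pins to $b$ alone, passing to $H'=H[\NH(b)]$ via Observation~\ref{obs:PinNeighbourhood}, and then proves $\Ret{H'[\{b,r_1\}]}\leap\Ret{H'}$ by a weight-boosting gadget: for $u\in V(H')$ the number of common neighbours of $r_1$ and $u$ in $H'$ equals $2$ when $u\in\{b,r_1\}$ and $1$ otherwise (this is where the paper spends its square-freeness argument), and attaching to each input vertex a large independent set joined to a pin at $r_1$ makes homomorphisms using only $\{b,r_1\}$ dominate. Your two-pin reduction to $r_1$ and $b$ simultaneously sidesteps the boosting entirely, landing directly on the isomorphic target $H[\{r_2,b\}]$; this is a genuine simplification. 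The paper's detour does showcase a boosting technique that recurs later (e.g.\ in Lemma~\ref{lem:degree2bristle}), but for the present lemma your direct approach suffices and is cleaner.
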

\begin{proof}
	Let  $H_T$ be a triangle in $H$ with vertices $b$, $r_1$ and $r_2$, where $b$ is looped and both $r_1$ and $r_2$ are unlooped. Let $H'=H[\NH(b)]$. By Observation~\ref{obs:PinNeighbourhood} it holds that $\Ret{H'} \leap \Ret{H}$. Suppose we can show that $\Ret{H'[\{b,r_1\}]}\leap \Ret{H'}$. Then $H'[\{b,r_1\}]$ is a single edge with one looped ($b$) and one unlooped vertex ($r_1$) and it is well-known that counting homomorphisms to this graph corresponds to counting independent sets, which in turn is known to be $\sat$-hard (\cite[Theorem 3]{DGGJApprox}). Summarising we have
	\[
	\sat \leap \Hom{H'[\{b,r_1\}]} \leap \Ret{H'[\{b,r_1\}]}\leap \Ret{H'} \leap \Ret{H}, 
	\]
	where the second reduction is from Observation~\ref{obs:HomToRetToLHom}. This proves the lemma. It remains to prove the following claim.
	
	\bigskip
	\noindent{\bf\boldmath Claim: $\Ret{H'[\{b,r_1\}]}\leap \Ret{H'}$.}
	\smallskip
	
	\noindent {\it Proof of the claim:}\quad
	For $u\in V(H')$ let $w(u)$ be the number of common neighbours of $r_1$ and $u$ in $H'$. Then $w(b)=2$ since $r_1$ and $b$ have two common neighbours: $r_2$ and $b$, and these are their only common neighbours as $H'$ is square-free. Similarly, $w(r_1)=2$ as the ``common'' neighbours in this case are simply the neighbours of $r_1$, which are only $b$ and $r_2$ (since $H'$ is square-free). Now let $u\in V(H')\setminus\{b,r_1\}$. The vertex $b$ is a common neighbour of $u$ and $r_1$ since every vertex in $H'$ is a neighbour of $b$. It turns out that $b$ is the only common neighbour of $u$ and $r_1$: Suppose there exists a vertex $u'\neq b$ in $H'$ which is a common neighbour of $u$ and $r_1$. If $u'=u$ (see Figure~\ref{fig:mixedTriangle} on the left) then $u$ is adjacent to $r_1$. 
	\begin{figure}[ht]
		\centering
		\begin{minipage}{.45 \textwidth}
			\centering
			\begin{tikzpicture}[scale=1, baseline=0.36cm, every loop/.style={min distance=10mm,looseness=10}]
			
			\filldraw (0,0) node(r1){} circle[radius=3pt] --++ (60:1.5cm) node(b){} circle[radius=3pt] --++ (-60:1.5cm) node(r2){} circle[radius=3pt] -- (r1.center);
			
			\node[circle,fill=black,inner sep=0pt,minimum size=6.5pt] (u) at ($(r1)+(120:1.5cm)$){};
			\path[-] (b.center) edge (u.center);
			
			\node at ($(r1)+(0cm,-.35cm)$) {$r_1$};	
			\node at ($(b)+(0cm,.35cm)$) {$b$};	
			\node at ($(r2)+(0cm,-.35cm)$) {$r_2$};
			\node at ($(u)+(.41cm,.35cm)$) {$u=u'$};
			
			\path[-] (b.center) edge  [in=125,out=55,loop] node {} ();
			\path[-] (u.center) edge  [in=125,out=55,loop] node {} ();
			
			\path[dashed] (r1.center) edge (u.center);

			\end{tikzpicture}
		\end{minipage}
		\begin{minipage}{.45 \textwidth}
			\centering
			\begin{tikzpicture}[scale=1, baseline=0.36cm, every loop/.style={min distance=10mm,looseness=10}]
			
			\filldraw (0,0) node(r1){} circle[radius=3pt] --++ (60:1.5cm) node(b){} circle[radius=3pt] --++ (-60:1.5cm) node(r2){} circle[radius=3pt] -- (r1.center);
			
			\node[circle,fill=black,inner sep=0pt,minimum size=6.5pt] (u) at ($(r1)+(120:1.5cm)$){};
			\node[circle,fill=black,inner sep=0pt,minimum size=6.5pt] (uprime) at ($(r1)+(180:1.5cm)$){};
			\path[-] (b.center) edge (u.center);
			\path[-] (b.center) edge (uprime.center);
			
			\node at ($(r1)+(0cm,-.35cm)$) {$r_1$};	
			\node at ($(b)+(0cm,.35cm)$) {$b$};	
			\node at ($(r2)+(0cm,-.35cm)$) {$r_2$};
			\node at ($(u)+(0cm,.35cm)$) {$u$};	
			\node at ($(uprime)+(0cm,-.35cm)$) {$u'$};
			
			\path[-] (b.center) edge  [in=125,out=55,loop] node {} ();
			
			\path[dashed] (r1.center) edge (uprime.center);
			\path[dashed] (uprime.center) edge (u.center);

			\end{tikzpicture}
		\end{minipage}
		\caption{Contradictions to the square-freeness of the graph $H'$.}
		\label{fig:mixedTriangle}
	\end{figure}
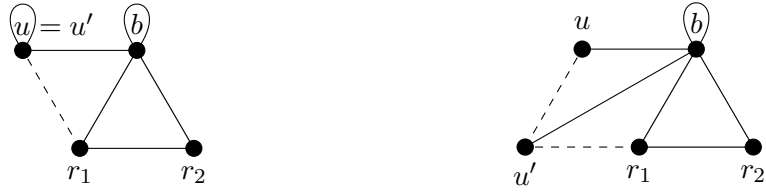
	Additionally, $u$ is then looped and hence $u\neq r_2$. Then $u, r_1, r_2, b$ is a square. If otherwise $u'\neq u$ then $u, u', r_1, b$ is a square (see Figure~\ref{fig:mixedTriangle} on the right), both cases give a contradiction. So we have shown that, for $u\in V(H')$,
	\begin{equation}\label{equ:}
	w(u)=2 \text{ if } u\in \{b,r_1\},\text{ and } w(u)=1\text{ otherwise.}
	\end{equation}
	Intuitively, we will use this fact to ``boost'' the vertices $b$ and $r_1$ and make them exponentially more likely to be used by a homomorphism to $H'$.

	Let $q$ be the number of vertices of $H'$. Now let $(G,\boldS)$ be an $n$-vertex input to $\Ret{H'[\{b,r_1\}]}$ and let $\eps$ be the desired precision. As usual, from $(G,\boldS)$ we define an input $(G',\boldS')$ to $\Ret{H'}$. We introduce a vertex $p$ distinct from the vertices of $G$ that will serve as a pin to the vertex $r_1$ in $H'$. Then, for each $v\in V(G)$ we introduce an independent set on $s$ vertices all of which are connected only to $p$ and $v$. The parameter $s$ will depend on the input size, specifically we set $s=n^2$. Intuitively it is clear that this gadget introduces a weight equal to $w(u)^s$ for each vertex $u\in V(H')$. For sufficiently large $s$, the image of $v$ is likely to be $b$ or $r_1$. This implies the statement of the lemma. The reader that trusts this intuition can skip reading the following calculations.
	
	We give the full details for the sake of completeness: For each $v\in V(G)$, let $I_v$ be an independent set of size $s$ with vertices distinct from the remaining vertices of $G'$. Then $G'$ is the graph with vertices $V(G')=V(G)\cup \{p\} \cup \bigcup_{v\in V(G)} I_v$
	and edges $E(G')=E(G) \cup \bigcup_{v\in V(G)} \bigl(\ucp{\{v,p\}}{I_v}\bigr)$.
	We set $\boldS'=\{S'_v \mid v\in V(G')\}$, where
	\[
	S'_v=
	\begin{cases}
	\{r_1\}, &\text{if }v=p\\
	S_v, &\text{if }v\in V(G)\text{ and }\abs{S_v}=1\\
	V(H'), & \text{otherwise.}
	\end{cases}
	\]
	
	We say that a homomorphism $h\in \calH((G', \boldS'), H')$ is \emph{full} if $h(V(G)) \subseteq \{b,r_1\}$.
	Let $Z^*$ be the number of full homomorphisms from $(G', \boldS')$ to $H'$. Let $Z_0$ be the number of non-full homomorphisms from $(G', \boldS')$ to $H'$. Then
	\begin{equation}\label{equ:PinNeighbourhood2-1}
	\hom{(G',\boldS')}{H'}= Z^* + Z_0.
	\end{equation}
	
	For $h \in \calH((G, \boldS), H')$, let $Z(h)$ be the number of homomorphisms $h'\in \calH((G', \boldS'), H')$ for which $h=h'\vert_{V(G)}$. By the construction of $G'$, every vertex $v\in V(G)$ with $h(v)\in \{b,r_1\}$ contributes a factor of $2^s$ to $Z(h)$, whereas a vertex $v\in V(G)$ with $h(v)\notin \{b,r_1\}$ contributes a factor of $1$ to $Z(h)$. It follows that 
	\begin{equation}\label{equ:PinNeighbourhood2-2}
	Z^*= \sum_{h\in \calH((G, \boldS), H'),\ h \text{ full}} Z(h) = 2^{sn} \cdot \hom{(G,\boldS)}{H'[\{b,r_1\}]},
	\end{equation}
	and
	\begin{equation*}
	Z_0 = \sum_{h\in \calH((G, \boldS), H'),\ h \text{ non-full}} Z(h) \le 2^{s(n-1)} \cdot \hom{(G,\boldS)}{H'}
	\le 2^{s(n-1)} \cdot q^n.
	\end{equation*}
	Therefore,
	\begin{equation}\label{equ:PinNeighbourhood2-3}
	Z_0/2^{sn} \le 2^{-s}\cdot q^n\le 1/4,
	\end{equation}
	where the last inequality holds for sufficiently large $n$ by the choice $s=n^2$.
	Summarising, by~\eqref{equ:PinNeighbourhood2-1} and~\eqref{equ:PinNeighbourhood2-2}, we have 
	\[
	\hom{(G,\boldS)}{H'[\{b,r_1\}]}=\frac{Z^*}{2^{sn}}\le\frac{\hom{(G',\boldS')}{H'}}{2^{sn}}
	\]
	and, using~\eqref{equ:PinNeighbourhood2-1},~\eqref{equ:PinNeighbourhood2-2} as well as~\eqref{equ:PinNeighbourhood2-3}, we obtain 
	\[
	\frac{\hom{(G',\boldS')}{H'}}{2^{sn}} = \frac{Z^*}{2^{sn}} + \frac{Z_0}{2^{sn}} \le \hom{(G,\boldS)}{H'[\{b,r_1\}]} + 1/4.
	\]
	Hence $\hom{(G',\boldS')}{H'}/2^{sn}\in \bigl[\hom{(G,\boldS)}{H'[\{b,r_1\}]},\hom{(G,\boldS)}{H'[\{b,r_1\}]} + 1/4\bigr]$. Let $Q$ be the solution returned by an oracle call to $\Ret{H'}$ with input $((G',\boldS'),\eps/21)$, i.e.~an approximation of $\hom{(G',\boldS')}{H'}$. Then the output $\floor{Q/2^{sn}}$ approximates $\hom{(G,\boldS)}{H'[\{b,r_1\}]}$ with the desired precision as was shown in~\cite[Proof of Theorem 3]{DGGJApprox}.
	{\it (End of the proof of the claim.)}
\end{proof}

\subsection{Square-Free Neighbourhoods of a Looped Vertex}\label{sec:NeighbourhoodOfLoopedVertex}

Now we consider graphs of the form $\X{k_1}{k_2}{k_3}$ (see Figure~\ref{fig:NeighbourhoodOfLoopedVertex}). Why are we interested in these graphs? Let $H$ be a square-free graph with a looped vertex $b$ and let $H$ not contain any mixed triangle as a subgraph. Then consider $H[\NH(b)]$, the graph induced by the neighbourhood of $b$. Since $H$ does not contain mixed triangles, the unlooped neighbours of $b$ do not have any neighbours in $H[\NH(b)]$ apart from $b$. Since $H$ is square-free, $H[\NH(b)]$ is square-free as well and therefore a looped neighbour $u\neq b$ of $b$ can have at most one additional neighbour apart from $b$ and $u$ itself (within $H[\NH(b)]$). It follows that $H[\NH(b)]$ is of the form  $\X{k_1}{k_2}{k_3}$. Note that $\X{k_1}{k_2}{0}$ does not contain any cycles. Therefore the hardness results for graphs of this form come from the classification for graphs of girth at least $5$ (Theorem~\ref{thm:RetGirth5}). The remaining cases ($k_3\ge 1$) are covered in this work. As an overview in advance, we will obtain $\sat \leap \Ret{\X{k_1}{k_2}{k_3}}$ in the following cases (where in most cases we actually show the stronger result $\sat \leap \Hom{\X{k_1}{k_2}{k_3}}$ --- $\sat$-hardness for $\Ret{\X{k_1}{k_2}{k_3}}$ then follows from Observation~\ref{obs:HomToRetToLHom}.):
\begin{myitemize}
	\item $k_3=0$ and
	\begin{myitemize}
		\item $k_2=0$ and $k_1\ge 1$ (Theorem~\ref{thm:RetGirth5})
		\item $k_2=1$ and $k_1\ge 1$ (Theorem~\ref{thm:RetGirth5})
		\item $k_2=2$ and $k_1\ge 2$ (Theorem~\ref{thm:RetGirth5})
	\end{myitemize}
	\item $k_3=1$ and
	\begin{myitemize}
		\item $k_2=0$ and $k_1\ge 1$ (Lemma~\ref{lem:hardNeighbourhood2})
		\item $k_2=1$ and $k_1\ge 3$ (Lemma~\ref{lem:hardNeighbourhood3})
	\end{myitemize}
	\item $k_3=2$, $k_2=0$ and $k_1\ge 5$ (Lemma~\ref{lem:hardNeighbourhood4})
	\item $k_2 + k_3 \ge 3$ (Lemma~\ref{lem:inducedWR3General})
\end{myitemize}
Following the classification for graphs of the form $\X{k_1}{k_2}{k_3}$ we give a hardness result (Lemma~\ref{lem:degree2bristle}) which uses properties of the distance-$2$ neighbourhood of a looped vertex $b$ in $H$.

\begin{figure}[h!]
	\centering
	\begin{tikzpicture}[scale=1, baseline=0.36cm, every loop/.style={min distance=10mm,looseness=10}]
	
	\filldraw (0,0) node(b){} circle[radius=3pt] -- (-1,1.5) node(u2){} circle[radius=3pt];
	\filldraw (b.center) -- (-3,1.5) node(u1){} circle[radius=3pt];
	\filldraw (b.center) -- (1,1.5) node(l1){} circle[radius=3pt];
	\filldraw (b.center) -- (3,1.5) node(l2){} circle[radius=3pt];
	\filldraw (b.center) -- (-1.5,-1.5) node(t2){} circle[radius=3pt] -- (-2.5,-1.5) node(t1){} circle[radius=3pt] -- (b.center);
	\filldraw (b.center) -- (1.5,-1.5) node(t3){} circle[radius=3pt] -- (2.5,-1.5) node(t4){} circle[radius=3pt] --  (b.center);
	
	
	\path[-] (b.center) edge  [in=-240,out=-300,loop] node {} ();
	\path[-] (l1.center) edge  [in=-240,out=-300,loop] node {} ();
	\path[-] (l2.center) edge  [in=-240,out=-300,loop] node {} ();
	\path[-] (t1.center) edge  [in=240,out=300,loop] node {} ();
	\path[-] (t2.center) edge  [in=240,out=300,loop] node {} ();
	\path[-] (t3.center) edge  [in=240,out=300,loop] node {} ();
	\path[-] (t4.center) edge  [in=240,out=300,loop] node {} ();
	
	\node at ($(u1)+(0cm,1cm)$) {$1$};
	\node at ($(u2)+(0cm,1cm)$) {$k_1$};
	\node at ($(l1)+(0cm,1cm)$) {$1$};
	\node at ($(l2)+(0cm,1cm)$) {$k_2$};
	\node at ($(b)+(0cm,-.5cm)$) {$b$};
	\node at ($(t1)+(0.5cm,-1cm)$) {$1$};
	\node at ($(t3)+(0.5cm,-1cm)$) {$k_3$};
	
	\coordinate (pmid) at (-2,1.5);
	\coordinate (cdot1) at ($(pmid)+(-.3cm,0cm)$);
	\coordinate (cdot2) at ($(pmid)+(0cm,0cm)$);
	\coordinate (cdot3) at ($(pmid)+(.3cm,0cm)$);
	\fill (cdot1) circle[radius=1.5pt];
	\fill (cdot2) circle[radius=1.5pt];
	\fill (cdot3) circle[radius=1.5pt];
	
	\coordinate (pmid2) at (2,1.5);
	\coordinate (cdot1) at ($(pmid2)+(-.3cm,0cm)$);
	\coordinate (cdot2) at ($(pmid2)+(0cm,0cm)$);
	\coordinate (cdot3) at ($(pmid2)+(.3cm,0cm)$);
	\fill (cdot1) circle[radius=1.5pt];
	\fill (cdot2) circle[radius=1.5pt];
	\fill (cdot3) circle[radius=1.5pt];
	
	\coordinate (pmid3) at (-90:1.6cm);
	\coordinate (cdot1) at ($(pmid3)+(-.3cm,0cm)$);
	\coordinate (cdot2) at ($(pmid3)+(0cm,0cm)$);
	\coordinate (cdot3) at ($(pmid3)+(.3cm,0cm)$);
	\fill (cdot1) circle[radius=1.5pt];
	\fill (cdot2) circle[radius=1.5pt];
	\fill (cdot3) circle[radius=1.5pt];
	
	\end{tikzpicture}
	\caption{The graph $\X{k_1}{k_2}{k_3}$.}
	\label{fig:NeighbourhoodOfLoopedVertex}
\end{figure}
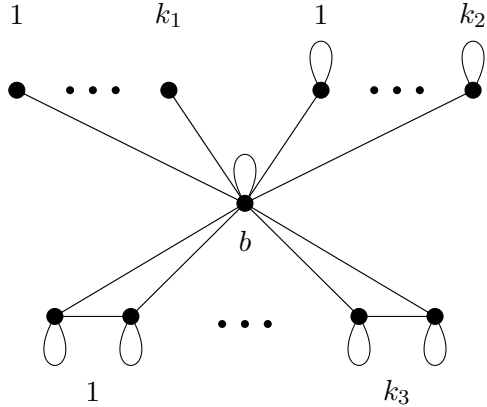

A useful and well-known tool for proving hardness results for approximate counting problems are gadgets based on complete bipartite graphs where two states dominate (see, e.g.,~\cite[Lemma 25]{DGGJApprox},~\cite[Section 5]{GKP2004},~\cite[Lemma 5.1]{KelkThesis} and~\cite[Lemma 2.30]{FGZRet}). Let $F(H)=\{u \in V(H) \mid \NH(u) = V(H)\}$. One can use the described tool to show that, under certain conditions, a homomorphism from a complete bipartite graph to $H$ will typically map one side to $F(H)$ and the other to $V(H)$. In this case it is then easy to reduce from counting independent sets to obtain $\sat$-hardness. Formally, we use the version stated by Kelk~\cite{KelkThesis}: 

\begin{lem}[{\cite[Lemma 5.1]{KelkThesis}}]\label{lem:Kelk5.1}
	Let $H$ be a graph with $\emptyset \subsetneq F(H) \subsetneq V(H)$. Suppose that, for every pair $(S,T)$ with $\emptyset \subseteq S,T \subseteq V(H)$ satisfying $S\subseteq \NH(T)$ and $T\subseteq \NH(S)$, at least one of the following holds:
	\begin{enumerate}[(1)]
		\item $S=F(H)$.\label{eq:Kelk1}
		\item $T=F(H)$.\label{eq:Kelk2}
		\item $\abs{S}\cdot \abs{T} < \abs{F(H)}\cdot \abs{V(H)}$.\label{eq:Kelk3}
	\end{enumerate}
	Then $\sat \leap \Hom{H}$.
\end{lem}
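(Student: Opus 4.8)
\emph{Proof idea.} The plan is to reduce approximately counting independent sets, which is $\sat$-hard by \cite[Theorem 3]{DGGJApprox}, to $\Hom{H}$ via a standard complete-bipartite ``blow-up'' gadget, using the three-way hypothesis to pin down the dominant homomorphisms. Fix a large integer $m$ (a polynomial in the instance size, chosen at the end). Given an $n$-vertex graph $G$, build $G'$ as follows: replace each vertex $v$ of $G$ by an independent set $\hat v$ (a \emph{blob}) of $m$ fresh vertices; for each edge $\{u,v\}$ of $G$ add all $m^2$ edges between $\hat u$ and $\hat v$; and attach to each $\hat v$ a pendant blob $\hat v'$ of $m$ fresh vertices, joined to $\hat v$ by all $m^2$ edges. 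The pendant blobs are the device that will equalise the weights of all independent sets; one could instead omit them and reduce from a weighted independent-set / large-cut variant, at the cost of messier bookkeeping.

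Group the homomorphisms $h\colon G'\to H$ by the \emph{pattern} $\psi$ they induce, namely the map sending each blob $W$ to its image $h(W)\subseteq V(H)$. The number of $h$ realising $\psi$ is $\prod_W \mathrm{Surj}(m,\abs{\psi(W)})$, where for fixed $k$ one has $\mathrm{Surj}(m,k)=k^m(1-o(1))$ as $m\to\infty$; and $\psi$ is realisable precisely when $\psi(W)\times\psi(W')\subseteq E(H)$ for every blob-pair $(W,W')$ joined by edges in $G'$, i.e.\ $\psi(W)\subseteq\NH(\psi(W'))$ and $\psi(W')\subseteq\NH(\psi(W))$. The crux is the following consequence of the hypothesis: for every realisable pair $(S,T)$ one has $\abs{S}\cdot\abs{T}\le\abs{F(H)}\cdot\abs{V(H)}$, with equality only when $\{S,T\}=\{F(H),V(H)\}$. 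Indeed, if $\abs{S}\abs{T}\ge\abs{F(H)}\abs{V(H)}$ then case (3) of the hypothesis fails, so say $S=F(H)$; then $T\subseteq\NH(F(H))=V(H)$, giving $\abs{S}\abs{T}\le\abs{F(H)}\abs{V(H)}$, and equality forces $T=V(H)$. Here I use $\NH(F(H))=V(H)$ and $\NH(V(H))=F(H)$, both immediate from the definition of $F(H)$, together with $\emptyset\subsetneq F(H)\subsetneq V(H)$.

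Reading off the pendant edges alone, every realisable pattern satisfies $\prod_v \abs{\psi(\hat v)}\cdot\abs{\psi(\hat v')}\le(\abs{F(H)}\abs{V(H)})^{n}$, hence contributes at most $(\abs{F(H)}\abs{V(H)})^{mn}$ homomorphisms; and this maximum is attained exactly by the \emph{canonical} patterns, in which $\{\psi(\hat v),\psi(\hat v')\}=\{F(H),V(H)\}$ for every $v$ and the blobs $\hat v$ mapped to $V(H)$ form an independent set of $G$ (two adjacent such blobs would need $V(H)\subseteq\NH(V(H))=F(H)$, which is impossible). The canonical patterns are in bijection with the independent sets of $G$, and each contributes $(1-o(1))(\abs{F(H)}\abs{V(H)})^{mn}$ homomorphisms, since all set sizes involved lie in the fixed set $\{\abs{F(H)},\abs{V(H)}\}$. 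Every non-canonical realisable pattern has some pendant-pair of size-product at most $\abs{F(H)}\abs{V(H)}-1$, so contributes at most $(\abs{F(H)}\abs{V(H)})^{mn}\bigl(1-\tfrac{1}{\abs{F(H)}\abs{V(H)}}\bigr)^{m}$ homomorphisms; as there are at most $(2^{\abs{V(H)}})^{2n}$ patterns in total, taking $m$ a sufficiently large polynomial in $n$ makes the aggregate non-canonical contribution negligible. Hence $\hom{G'}{H}$ equals, up to a factor $1\pm o(1)$, the number of independent sets of $G$ times $(\abs{F(H)}\abs{V(H)})^{mn}$, so an approximate oracle answer for $\Hom{H}$ on $G'$, divided by the known quantity $(\abs{F(H)}\abs{V(H)})^{mn}$, recovers the number of independent sets of $G$ to the required precision; this gives $\sat\leap\Hom{H}$.

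The step I expect to be the main obstacle is packaging the error analysis cleanly: one must check that the non-realisable and non-canonical patterns are suppressed geometrically in $m$ (this is exactly where hypothesis (3) and the strict inclusion $F(H)\subsetneq V(H)$ are used), verify that the discrepancy between $\mathrm{Surj}(m,k)$ and $k^m$ only costs a $1\pm o(1)$ factor uniformly over the canonical patterns, and confirm that the polynomial choice of $m$ keeps $G'$ polynomially sized and the reduction approximation-preserving. The geometry of the gadget (especially the pendant blobs, which make every independent set carry the same weight $(\abs{F(H)}\abs{V(H)})^{mn}$) and the inequality ``$\abs{S}\abs{T}\le\abs{F(H)}\abs{V(H)}$ on realisable pairs, with equality only at $\{F(H),V(H)\}$'' are what make all of this go through.
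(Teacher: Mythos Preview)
The paper does not prove this lemma; it is quoted as \cite[Lemma 5.1]{KelkThesis} and used as a black box. The surrounding text does, however, sketch the intended mechanism (complete-bipartite gadgets in which the two dominant states are $F(H)$ and $V(H)$, followed by a reduction from counting independent sets), and your proposal is a faithful and correct instantiation of that mechanism. Your pendant-blob trick is a clean way to force every canonical pattern to carry the same weight $(\abs{F(H)}\abs{V(H)})^{mn}$; your derivation that realisable pairs satisfy $\abs{S}\abs{T}\le\abs{F(H)}\abs{V(H)}$ with equality only at $\{F(H),V(H)\}$ is exactly the intended use of the hypothesis; and your observation that a realisable pattern with all pendant pairs equal to $\{F(H),V(H)\}$ is automatically canonical (since two adjacent $V(H)$-blobs would force $V(H)\subseteq F(H)$) closes the case analysis. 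So there is nothing in the paper to compare against beyond the one-sentence description, and your argument matches it.
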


In order to prove Lemmas~\ref{lem:hardNeighbourhood2} and~\ref{lem:hardNeighbourhood3} we will use Lemma~\ref{lem:Kelk5.1} and, in addition, a reduction from the problem of counting large cuts, which is formally defined as follows: A \emph{cut} of a graph $G$ is a partition of $V(G)$ into two subsets (the order of this pair is ignored) and the size of a cut is the number of edges that have exactly one endpoint in each of these two subsets. 

\prob
{
	$\largecut$.
}
{
	An integer $K\ge 1$ and a connected graph $G$ in which every cut has size at most $K$.
}
{
	The number of size-$K$ cuts in $G$.
}

The full details of the proof involve analysing different types of homomorphisms. The most important part of the results leading up to the proof of Lemmas~\ref{lem:hardNeighbourhood2} and~\ref{lem:hardNeighbourhood3} are the Tables~\ref{tab:configs1} and~\ref{tab:configs2}, respectively. These tables show which types of homomorphisms represent a significant share of the overall number of homomorphisms that we are interested in. The crucial question is whether we can ensure that the right types of homomorphisms dominate this number. We desire two properties. First, the number of homomorphisms should be dominated by homomorphisms of two distinct types. Second, these two types should interact in an ``anti-ferromagnetic'' way.

The following definitions and preliminary technical results resemble the ones introduced in~\cite[Section 2.2.2]{FGZRet}.
We are going to use the graph $\Jpqt$ (see Figure~\ref{fig:Jpqt}) as a vertex gadget. This gadget was originally introduced in~\cite{DGGJApprox}. In general it is a good candidate when looking for gadgets to prove reductions from $\largecut$. Here is the formal definition.

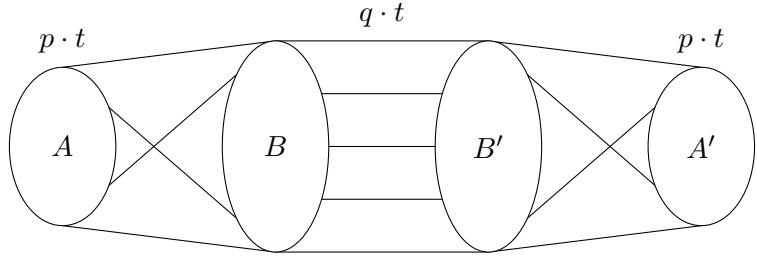
\begin{figure}[h!]\centering
	{\def\scaleFactor{.7}
		\begin{tikzpicture}[scale=\scaleFactor, every loop/.style={min distance=10mm,looseness=10}]
		
		\foreach \i in {-2,...,2}
		\draw (0,\i)--(4,\i);
		
		\node at ($(2,2.5)$) {$q\cdot t$};
		\node at ($(-4,2)$) {$p\cdot t$};
		\node at ($(8,2)$) {$p\cdot t$};
		
		\draw (-4,1.5) -- (0,2);
		\draw (-4,1.5) -- (0,-2);
		\draw (-4,-1.5) -- (0,-2);
		\draw (-4,-1.5) -- (0,2);
		
		\draw (8,1.5) -- (4,2);
		\draw (8,1.5) -- (4,-2);
		\draw (8,-1.5) -- (4,-2);
		\draw (8,-1.5) -- (4,2);

		\draw (-4,0) ellipse (1cm and 1.5cm) [fill=white];
		\draw (0,0) ellipse (1cm and 2cm) [fill=white];
		\draw (4,0) ellipse (1cm and 2cm) [fill=white];
		\draw (8,0) ellipse (1cm and 1.5cm) [fill=white];

		\node at (-4,0) {$A$};
		\node at (0,0) {$B$};
		\node at (4,0) {$B'$};
		\node at (8,0) {$A'$};
		
		\end{tikzpicture}
	}
	\caption{The graph $\Jpqt$.}
	\label{fig:Jpqt}
\end{figure}

Let $p$, $q$ and $t$ be positive integers. Let $A$ and $A'$ be independent sets of size $p\cdot t$ and let $B$ and $B'$ be independent sets of size $q\cdot t$. The set of edges $M$ between $B$ and $B'$ forms a perfect matching. Then $\Jpqt$ is the graph for which the vertex set is the union of $A$, $B$, $B'$ and $A'$. The edges are $(\ucp{A}{B})\cup M \cup (\ucp{B'}{A'})$.

Let $H$ be a graph and let $h$ be a homomorphism from $\Jpqt$ to $H$. Let $h(B,B')=\{(h(u),h(v)) \mid u\in B, v\in B', \{u,v\} \in E(H)\}$. We say that $h$ has \emph{type} $\left(h(A), h(B,B'), h(A')\right)$. In general, a tuple $T=(T_1, T_2, T_3)$ is an \emph{$H$-type} if $T_1,T_3\subseteq V(H)$ and $T_2 \subseteq \{(x,y) \mid \{x,y\}\in E(H)\}$. Let $A(T)=T_1$, $B(T)=\{x \mid \exists y\,(x,y)\in T_2\}$, $B'(T)=\{y \mid \exists x\,(x,y)\in T_2\}$ and $A'(T)=T_3$. 

An  $H$-type $T$ is \emph{non-empty} (with respect to $\Jpqt$) if there exists a homomorphism from $\Jpqt$ to $H$ that has type $T$. Otherwise, $T$ is called an \emph{empty $H$-type}. From the definition of $\Jpqt$ we observe the following.

\begin{obs}\label{obs:nonemptyType}
	Let $H$ be a graph.
	An $H$-type $T$ is non-empty if and only if
	\begin{enumerate}[(1)]
		\item $T_1$, $T_2$ and $T_3$ are non-empty,
		\item $\ucp{T_1}{B(T)} \subseteq E(H)$,
		\item $\ucp{B'(T)}{T_3} \subseteq E(H)$.
	\end{enumerate}
\end{obs}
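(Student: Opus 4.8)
The plan is to prove both implications directly from the structure of $\Jpqt$, recalling that its only edges form three ``blocks'': the complete bipartite graph $\ucp{A}{B}$, the perfect matching $M$ between $B$ and $B'$, and the complete bipartite graph $\ucp{B'}{A'}$. Note also that, since $M$ is a perfect matching, every vertex of $B$ (resp.\ $B'$) lies in exactly one edge of $M$, so for any homomorphism $h\colon \Jpqt\to H$ of type $T$ we have $B(T)=h(B)$ and $B'(T)=h(B')$.

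For the forward direction, let $h$ be a homomorphism of type $T=(h(A),h(B,B'),h(A'))$. Condition (1) holds because $A$, $M$ and $A'$ are non-empty (as $p,q,t\ge 1$), so their images $T_1=h(A)$, $T_2=h(B,B')$, $T_3=h(A')$ are non-empty. For condition (2), take $x\in T_1$ and $b\in B(T)=h(B)$, say $x=h(a)$ and $b=h(u)$ with $a\in A$, $u\in B$; since $\{a,u\}\in E(\Jpqt)$ and $h$ is a homomorphism, $\{x,b\}\in E(H)$. Condition (3) is symmetric, using $B'(T)=h(B')$ and the block $\ucp{B'}{A'}$.

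For the converse, assume $T$ satisfies (1)--(3); I would exhibit a homomorphism $h$ of type exactly $T$. Enumerate $T_2=\{(x_1,y_1),\dots,(x_m,y_m)\}$ with $m\ge 1$, and let $\{u_1,v_1\},\dots,\{u_\ell,v_\ell\}$ be the edges of $M$ with $u_j\in B$, $v_j\in B'$ and $\ell=qt$. Define $h(u_j)=x_j$, $h(v_j)=y_j$ for $j\le m$ and $h(u_j)=x_1$, $h(v_j)=y_1$ for $m<j\le\ell$; this is defined on all of $B\cup B'$ provided $\ell\ge m$, respects every edge of $M$ because each $(x_j,y_j)$ is an edge of $H$, and achieves $h(B)=B(T)$, $h(B')=B'(T)$, $h(B,B')=T_2$. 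Now map the vertices of $A$ onto $T_1$ (possible once $pt\ge\abs{T_1}$) and the vertices of $A'$ onto $T_3$ (possible once $pt\ge\abs{T_3}$). Condition (2) says $T_1\subseteq\NH(B(T))=\bigcap_{u\in B}\NH(h(u))$, so every image of an $A$-vertex is adjacent to every image of a $B$-vertex, i.e.\ $h$ respects the block $\ucp{A}{B}$; condition (3) handles $\ucp{B'}{A'}$ the same way. Hence $h$ is a homomorphism and has type $(T_1,T_2,T_3)=T$.

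The only delicate point is exactness: to realise $h(A)=T_1$, $h(A')=T_3$ and $h(B,B')=T_2$ (rather than proper subsets) one needs $pt\ge\max(\abs{T_1},\abs{T_3})$ and $qt\ge\abs{T_2}$. Since $H$ is fixed and $T_1,T_3\subseteq V(H)$ while $T_2\subseteq\{(x,y)\mid\{x,y\}\in E(H)\}$, these inequalities hold whenever $pt,qt\ge\abs{V(H)}^2$, which is the regime in which the gadget is always used (with $t$ taken large). I expect this size bookkeeping to be the only thing requiring attention; the rest is just reading adjacencies off the three edge-blocks of $\Jpqt$, so there is no real obstacle.
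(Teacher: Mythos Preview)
The paper does not give a proof of this observation at all; it simply prefaces the statement with ``From the definition of $\Jpqt$ we observe the following.'' Your argument is a correct and careful elaboration of exactly that sentence: the forward direction reads the three edge-blocks $\ucp{A}{B}$, $M$, $\ucp{B'}{A'}$ directly, and the converse builds a homomorphism block by block.

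You even flag a point the paper leaves implicit: the converse, as you note, needs $pt\ge\abs{T_1},\abs{T_3}$ and $qt\ge\abs{T_2}$ so that the maps onto $T_1$, $T_3$ and the enumeration of $T_2$ along $M$ are actually surjective. Strictly speaking the equivalence fails for tiny $p,q,t$ (e.g.\ $p=q=t=1$ forces $\abs{T_1}=\abs{T_2}=\abs{T_3}=1$), so the observation is really an equivalence for $t$ large enough relative to $H$. Since every use of $\Jpqt$ in the paper takes $t$ arbitrarily large (cf.\ Lemma~\ref{lem:hatN} and the choice $t=n^4$ in the reductions), this is harmless in context, and your caveat is the right way to handle it rather than a gap.
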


Let $T$ and $T'$ be $H$-types. We write $T\subseteq T'$ if, for $i\in [3]$, we have $T_i\subseteq T_i'$.
An $H$-type $T$ is \emph{maximal} if it is non-empty and every $H$-type $T'$ with $T'\neq T$, $T\subseteq T'$ is empty. The following analysis (Lemma~\ref{lem:maximalType}) is contained in the proof of~\cite[Lemma 2.25]{FGZRet}.

\begin{lem}\label{lem:maximalType}
	Let $H$ be a graph and let $T=(T_1, T_2, T_3)$ 
	be a maximal $H$-type. Then $T$ is completely defined by $B(T)$ and $B'(T)$ since 
	\begin{enumerate}[(1)]
		\item\label{item:maxType1} $T_1=\NH(B(T))$, $T_2=E(B(T), B'(T))$ and $T_3=\NH(B'(T))$.
		\item\label{item:maxType2} $B(T)=\NH(\NH(B(T)))$ and $B'(T)=\NH(\NH(B'(T)))$.
	\end{enumerate}
\end{lem}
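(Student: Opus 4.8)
The plan is to argue by maximality throughout: whenever one of the claimed identities fails, I would exhibit a strictly larger non-empty $H$-type, contradicting the maximality of $T$. The only external inputs are Observation~\ref{obs:nonemptyType}, which tells us exactly when an $H$-type is non-empty, together with the standard order-theoretic facts about the operator $S\mapsto\NH(S)$: it is inclusion-reversing, it satisfies $S\subseteq\NH(\NH(S))$ for every $S\subseteq V(H)$, and consequently $\NH(\NH(\NH(S)))=\NH(S)$, so that every set of the form $\NH(S)$ equals its own $\NH\!\circ\!\NH$-closure.

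I would first prove part~\ref{item:maxType1}. Since $T$ is non-empty, condition~(2) of Observation~\ref{obs:nonemptyType} gives $\ucp{T_1}{B(T)}\subseteq E(H)$, i.e.\ $T_1\subseteq\NH(B(T))$, and likewise $T_3\subseteq\NH(B'(T))$; also $T_2\subseteq E(B(T),B'(T))$ directly from the definitions of $B(T)$ and $B'(T)$. If $T_1\subsetneq\NH(B(T))$, consider $T'=(\NH(B(T)),T_2,T_3)$. It has the same second and third components as $T$, hence the same $B$- and $B'$-parts, so conditions~(1) and~(3) of Observation~\ref{obs:nonemptyType} are inherited from $T$; condition~(2) now reads $\ucp{\NH(B(T))}{B(T)}\subseteq E(H)$, which holds because $\NH(B(T))$ is by definition the set of common neighbours of $B(T)$. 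Thus $T'$ is non-empty with $T\subsetneq T'$, a contradiction; hence $T_1=\NH(B(T))$, and symmetrically $T_3=\NH(B'(T))$. If instead $T_2\subsetneq E(B(T),B'(T))$, set $T'=(T_1,E(B(T),B'(T)),T_3)$; enlarging $T_2$ to all edges between $B(T)$ and $B'(T)$ changes neither $B(T)$ nor $B'(T)$ (each vertex of $B(T)$ already has a neighbour in $B'(T)$, and conversely), so $T'$ inherits all three non-emptiness conditions and again strictly contains $T$ — a contradiction. This establishes part~\ref{item:maxType1}.

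For part~\ref{item:maxType2}, the inclusion $B(T)\subseteq\NH(\NH(B(T)))$ is automatic. For the reverse inclusion I would argue by contradiction: suppose $x\in\NH(\NH(B(T)))\setminus B(T)$. By part~\ref{item:maxType1} we have $\NH(\NH(B(T)))=\NH(T_1)$, so $x$ is adjacent to every vertex of $T_1$, and hence $B(T)\cup\{x\}\subseteq\NH(T_1)$. The idea is then to add $x$ to the $B$-part of $T$ while keeping $T_1$, $T_3$ and $B'(T)$ fixed: because $B(T)\cup\{x\}\subseteq\NH(T_1)$, condition~(2) of Observation~\ref{obs:nonemptyType} still holds for the enlarged $B$-part, while conditions~(1) and~(3) are untouched, producing a strictly larger non-empty type and contradicting maximality. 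The symmetric argument yields $B'(T)=\NH(\NH(B'(T)))$, and combined with part~\ref{item:maxType1} this shows that $T$ is recovered from the pair $(B(T),B'(T))$. The step I expect to be the main obstacle is exactly the construction of this enlarged type: to put $x$ into the $B$-part one must actually produce a pair $(x,y)\in T'_2$, i.e.\ a neighbour $y$ of $x$ lying in $B'(T)$ (or in a legitimate enlargement of $B'(T)$ inside $\NH(\NH(B'(T)))$), and one must check the resulting type still meets all three conditions of Observation~\ref{obs:nonemptyType}. This is where the closure identities for $\NH$ on the $B$-side and the $B'$-side have to be used together, and it is the only part of the argument that is not purely formal.
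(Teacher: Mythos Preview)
Your argument for part~\ref{item:maxType1} is correct and is the natural maximality argument.

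For part~\ref{item:maxType2}, the obstacle you single out is genuine and in fact cannot be overcome for arbitrary $H$: the identity $B(T)=\NH(\NH(B(T)))$ fails in general. Take $H$ with vertex set $\{t,x,b_1,b_2,b'_1,b'_2,s\}$ and edges
\[
\{t,x\},\ \{t,b_1\},\ \{t,b_2\},\ \{b_1,b'_1\},\ \{b_2,b'_2\},\ \{b'_1,s\},\ \{b'_2,s\},
\]
and let $T=(\{t\},\,\{(b_1,b'_1),(b_2,b'_2)\},\,\{s\})$. Then $B(T)=\{b_1,b_2\}$, $B'(T)=\{b'_1,b'_2\}$, and all three identities of part~\ref{item:maxType1} hold. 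The type $T$ is maximal: any $T'\supseteq T$ must have $T_1'\subseteq\NH(\{b_1,b_2\})=\{t\}$ and $T_3'\subseteq\NH(\{b'_1,b'_2\})=\{s\}$, forcing $T_1'=\{t\}$, $T_3'=\{s\}$; then any new pair $(u,v)\in T_2'$ needs $u\in\NH(t)=\{b_1,b_2,x\}$ and $v\in\NH(s)=\{b'_1,b'_2\}$, but the only $H$-edges between these sets are already in $T_2$. Yet $\NH(\NH(B(T)))=\NH(\{t\})=\{b_1,b_2,x\}\neq B(T)$. The vertex $x$ is precisely the obstruction you anticipated: it lies in $\NH(T_1)$ but has no neighbour in $\NH(T_3)$, so there is no way to place it in the $B$-part.

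The paper does not prove this lemma; it cites \cite[Lemma~2.25]{FGZRet}. In the paper's own applications (Lemmas~\ref{lem:tableOfMaxTypes1} and~\ref{lem:tableOfMaxTypes2}) the graph $H$ always has a looped universal vertex $b$ with $\NH(b)=V(H)$, and under that hypothesis your plan works: for any $x\in\NH(T_1)$ one has $b\in T_1$ (since $b\in\NH(S)$ for every nonempty $S$), so $x$ is adjacent to $b$; and $b\in\NH(T_3)$ for the same reason, so the pair $(x,b)$ can be added to $T_2$ and maximality forces $x\in B(T)$. Thus your proof becomes complete once you add the standing assumption that $H$ has a universal vertex --- which is exactly what the paper actually needs --- but as stated for arbitrary $H$ the lemma's part~\ref{item:maxType2} is too strong.
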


Given an $H$-type $T=(T_1, T_2, T_3)$ we define $N(T)$ to be the number of homomorphisms in $\calH(\Jpqt,H)$ that have type $T$. We also set $\displaystyle\hatN(T)=\abs{T_1}^{pt} \abs{T_2}^{qt} \abs{T_3}^{pt}$. For non-empty~$T$,
$\hatN(T)$  is a close approximation to $N(T)$:

\begin{lem}[{\cite[Lemma 2.20]{FGZRet}}]\label{lem:hatN}
	Let $H$ be a graph.
	Let $p$ and $q$ be positive integers. There exists a positive integer $t_0$ such that for all $t\ge t_0$ and all $H$-types $T$ that are non-empty with respect to $\Jpqt$, it holds that
	\[
	\frac{\hatN(T)}{2} \le N(T)\le \hatN(T).
	\]
\end{lem}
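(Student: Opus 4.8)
The plan is to compute $N(T)$ exactly, up to surjectivity corrections, by exploiting the product structure of $\Jpqt$. Write $A,B,B',A'$ and the perfect matching $M$ between $B$ and $B'$ as in the definition of $\Jpqt$, so $\abs{A}=\abs{A'}=pt$ and $\abs{B}=\abs{B'}=\abs{M}=qt$. Fix a non-empty $H$-type $T=(T_1,T_2,T_3)$; by Observation~\ref{obs:nonemptyType} the sets $T_1,T_2,T_3$ are all non-empty, $\ucp{T_1}{B(T)}\subseteq E(H)$, and $\ucp{B'(T)}{T_3}\subseteq E(H)$; the first of these says precisely that $T_1\subseteq\NH(B(T))$, and the second that $T_3\subseteq\NH(B'(T))$. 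Writing $\mathrm{Surj}(n,c)$ for the number of surjections from an $n$-element set onto a $c$-element set, the first step is to prove the factorisation
\[
N(T)=\mathrm{Surj}(pt,\abs{T_1})\cdot\mathrm{Surj}(qt,\abs{T_2})\cdot\mathrm{Surj}(pt,\abs{T_3}).
\]

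To see this, observe that a homomorphism $h\colon\Jpqt\to H$ is determined by its restrictions to $B\cup B'$, to $A$, and to $A'$. Since $M$ is a perfect matching, specifying $h$ on $B\cup B'$ is the same as assigning an ordered pair to each of the $qt$ matching edges; for the type of $h$ to be $T$ this assignment must use exactly the pairs in $T_2$, which gives $\mathrm{Surj}(qt,\abs{T_2})$ possibilities (each assigned pair automatically lies in $E(H)$ because $T_2$ consists of edges, and surjectivity onto $T_2$ forces $h(B)=B(T)$ and $h(B')=B'(T)$). Given such a restriction, a vertex $a\in A$ must satisfy $h(a)\in\NH(h(B))=\NH(B(T))$; since $T_1\subseteq\NH(B(T))$, the requirement $h(A)=T_1$ is therefore equivalent to ``$h|_A$ is a surjection onto $T_1$'', giving $\mathrm{Surj}(pt,\abs{T_1})$ choices, and symmetrically $h|_{A'}$ ranges over the $\mathrm{Surj}(pt,\abs{T_3})$ surjections onto $T_3$. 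As $\Jpqt$ has no edges between $A$ and $A'$, between $A$ and $B'$, or between $A'$ and $B$, these three choices are independent, every combination yields a homomorphism of type exactly $T$, and the counts multiply.

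The upper bound is then immediate: a surjection onto a $c$-element set is in particular a function into it, so $\mathrm{Surj}(n,c)\le c^n$, and multiplying the three factors gives $N(T)\le\abs{T_1}^{pt}\abs{T_2}^{qt}\abs{T_3}^{pt}=\hatN(T)$. For the lower bound I would bound non-surjective functions by a union bound over the missing element: the number of functions from an $n$-set to a $c$-set missing at least one value is at most $c(c-1)^n$, so $\mathrm{Surj}(n,c)\ge c^n-c(c-1)^n=c^n\bigl(1-c(1-1/c)^n\bigr)$ for every $c\ge 1$. Since $H$ is fixed we have $\abs{T_1},\abs{T_3}\le\abs{V(H)}$ and $\abs{T_2}\le\abs{V(H)}^2$, all bounded by a constant depending only on $H$, and $p,q\ge 1$; hence as $t\to\infty$ the quantities $\abs{T_i}(1-1/\abs{T_i})^{pt}$ and $\abs{T_2}(1-1/\abs{T_2})^{qt}$ tend to $0$ uniformly over the finitely many types $T$. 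So there is a $t_0=t_0(H,p,q)$ such that for all $t\ge t_0$ each of the three ratios $\mathrm{Surj}(\cdot,\cdot)/(\cdot)^{(\cdot)}$ exceeds $2^{-1/3}$, whence $N(T)/\hatN(T)>1/2$, which is the claimed inequality.

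I expect the main work to be the bookkeeping behind the factorisation — pinning down that ``$h$ has type exactly $T$'' decouples into three independent surjectivity requirements, and checking that the cross-constraints coming from the complete bipartite parts $\ucp{A}{B}$ and $\ucp{B'}{A'}$ are automatically satisfied once the type is $T$ (this is exactly where non-emptiness, via Observation~\ref{obs:nonemptyType}, enters). Once the product formula is in place, the two estimates are routine: the upper bound is trivial, and the lower bound is the standard ``almost every function is surjective when the domain is much larger than the codomain'' argument.
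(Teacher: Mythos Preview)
The paper does not give its own proof of this lemma; it is simply quoted from \cite[Lemma~2.20]{FGZRet}. Your argument is correct and is essentially the standard one: the product structure of $\Jpqt$ gives the exact factorisation $N(T)=\mathrm{Surj}(pt,\abs{T_1})\cdot\mathrm{Surj}(qt,\abs{T_2})\cdot\mathrm{Surj}(pt,\abs{T_3})$, after which both inequalities follow from the elementary bounds $c^n\bigl(1-c(1-1/c)^n\bigr)\le\mathrm{Surj}(n,c)\le c^n$ together with the fact that only finitely many values of $\abs{T_1},\abs{T_2},\abs{T_3}$ occur.
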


\begin{lem}[{\cite[Lemma 2.23]{FGZRet}}]\label{lem:maximalTypeDominates}
	Let $H$ be a connected graph with at least $2$ vertices. Let $T$ be a non-empty $H$-type that is not maximal. Then there exists a non-empty $H$-type $T^*$ such that $\hatN(T) \le \left(\frac{2\abs{E(H)}-1}{2\abs{E(H)}}\right)^t \hatN(T^*)$.
\end{lem}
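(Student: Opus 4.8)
The plan is to take $T^*$ to be \emph{any} non-empty $H$-type that strictly contains $T$, and to show that passing from $T$ to such a $T^*$ already multiplies $\hatN$ by a factor of at least $\bigl(\tfrac{2\abs{E(H)}}{2\abs{E(H)}-1}\bigr)^{t}$; the claimed inequality is then just a rearrangement of this. Such a $T^*$ exists by definition: since $T$ is non-empty and not maximal, there is a non-empty $H$-type $T'$ with $T\subseteq T'$ and $T'\neq T$ (the set of $H$-types is finite, so this is well defined). I would set $T^*=T'$.

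The first step is a monotonicity observation. Because $T_1\subseteq T_1'$ and $T_3\subseteq T_3'$, we have $\hatN(T')=\abs{T_1'}^{pt}\abs{T_2'}^{qt}\abs{T_3'}^{pt}\ge\abs{T_1}^{pt}\abs{T_2'}^{qt}\abs{T_3}^{pt}$, so it suffices to control the growth of the coordinate in which $T'$ properly extends $T$. Also, $T$ being non-empty forces $\abs{T_1},\abs{T_2},\abs{T_3}\ge 1$ (Observation~\ref{obs:nonemptyType}), so the ratios appearing below make sense.

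Then I would split into two cases. If $\abs{T_2'}>\abs{T_2}$: since $T_2$ is drawn from the set of at most $2\abs{E(H)}$ ordered pairs $(x,y)$ with $\{x,y\}\in E(H)$, we get $\abs{T_2}\le 2\abs{E(H)}-1$; using that $x\mapsto\tfrac{x+1}{x}$ is decreasing and that $q\ge 1$, this gives $\hatN(T')/\hatN(T)\ge\bigl(\tfrac{\abs{T_2'}}{\abs{T_2}}\bigr)^{qt}\ge\bigl(\tfrac{2\abs{E(H)}}{2\abs{E(H)}-1}\bigr)^{t}$. Otherwise $\abs{T_2'}=\abs{T_2}$, hence $T_2=T_2'$, so $T\neq T'$ forces $T_1\subsetneq T_1'$ or $T_3\subsetneq T_3'$; by symmetry say $T_1\subsetneq T_1'$, whence $\abs{T_1}\le\abs{V(H)}-1$. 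Here the key input is $\abs{V(H)}\le 2\abs{E(H)}$, which holds for every connected graph on at least two vertices because a spanning tree already has $\abs{V(H)}-1$ edges; hence $\abs{T_1}\le 2\abs{E(H)}-1$ and, since $p\ge 1$, $\hatN(T')/\hatN(T)\ge\bigl(\tfrac{\abs{T_1}+1}{\abs{T_1}}\bigr)^{pt}\ge\bigl(\tfrac{2\abs{E(H)}}{2\abs{E(H)}-1}\bigr)^{t}$. In both cases $\hatN(T)\le\bigl(\tfrac{2\abs{E(H)}-1}{2\abs{E(H)}}\bigr)^{t}\hatN(T^*)$ with $T^*=T'$.

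I do not expect a genuine obstacle; the argument is short. The only points requiring care are the two counting bounds that pin down the exact factor $\tfrac{2\abs{E(H)}}{2\abs{E(H)}-1}$ --- namely that $T_2$ is drawn from at most $2\abs{E(H)}$ ordered (possibly loop) pairs, and that $\abs{V(H)}\le 2\abs{E(H)}$ for connected $H$ with at least two vertices --- together with the trivial use of $p,q\ge1$ to replace the exponents $pt$ and $qt$ by $t$. Note that Lemma~\ref{lem:maximalType} is not needed here, only the definition of non-maximality.
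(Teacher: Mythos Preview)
Your argument is correct. The paper itself does not prove this lemma --- it is quoted verbatim from \cite[Lemma~2.23]{FGZRet} --- but your proof is precisely the standard one: pick any non-empty $T'\supsetneq T$, note that at least one coordinate strictly grows, and bound the resulting ratio using $|T_2|\le 2|E(H)|$ for the middle coordinate and $|V(H)|\le 2|E(H)|$ (valid for connected $H$ on $\ge 2$ vertices) for the outer coordinates, together with $p,q\ge 1$. There is nothing to add.
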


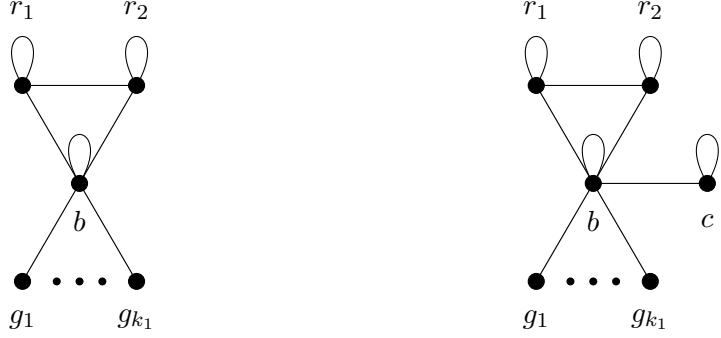
\begin{figure}[ht]
	\centering
	\begin{minipage}{.45 \textwidth}
		\centering
		\begin{tikzpicture}[scale=1, baseline=0.36cm, every loop/.style={min distance=10mm,looseness=10}]
		
		\filldraw (0,0) node(b){} circle[radius=3pt] -- (120:1.5cm) node(r1){} circle[radius=3pt] -- (60:1.5cm) node(r2){} circle[radius=3pt] -- (b.center);
		\filldraw (0,0) node(b){}  -- (-120:1.5cm) node(g1){} circle[radius=3pt];
		\filldraw (0,0) node(b){}  -- (-60:1.5cm) node(gk){} circle[radius=3pt];
		
		
		\path[-] (b.center) edge  [in=-240,out=-300,loop] node {} ();
		\path[-] (r1.center) edge  [in=-240,out=-300,loop] node {} ();
		\path[-] (r2.center) edge  [in=-240,out=-300,loop] node {} ();

		\node at ($(r1)+(0cm,1cm)$) {$r_1$};
		\node at ($(r2)+(0cm,1cm)$) {$r_2$};
		\node at ($(b)+(0cm,-.5cm)$) {$b$};
		\node at ($(g1)+(0cm,-.5cm)$) {$g_1$};
		\node at ($(gk)+(0cm,-.5cm)$) {$\displaystyle g_{k_1}$};
		
		\coordinate (pmid3) at (-90:1.3cm);
		\coordinate (cdot1) at ($(pmid3)+(-.3cm,0cm)$);
		\coordinate (cdot2) at ($(pmid3)+(0cm,0cm)$);
		\coordinate (cdot3) at ($(pmid3)+(.3cm,0cm)$);
		\fill (cdot1) circle[radius=1.5pt];
		\fill (cdot2) circle[radius=1.5pt];
		\fill (cdot3) circle[radius=1.5pt];
		
		\end{tikzpicture}
	\end{minipage}
	\begin{minipage}{.45 \textwidth}
		\centering
		\begin{tikzpicture}[scale=1, baseline=0.36cm, every loop/.style={min distance=10mm,looseness=10}]
		
		\filldraw (0,0) node(b){} circle[radius=3pt] -- (120:1.5cm) node(r1){} circle[radius=3pt] -- (60:1.5cm) node(r2){} circle[radius=3pt] -- (b.center);
		\filldraw (0,0) node(b){}  -- (-120:1.5cm) node(g1){} circle[radius=3pt];
		\filldraw (0,0) node(b){}  -- (-60:1.5cm) node(gk){} circle[radius=3pt];
		\filldraw (0,0) node(b){}  -- (0:1.5cm) node(c){} circle[radius=3pt];
		
		
		\path[-] (b.center) edge  [in=-240,out=-300,loop] node {} ();
		\path[-] (r1.center) edge  [in=-240,out=-300,loop] node {} ();
		\path[-] (r2.center) edge  [in=-240,out=-300,loop] node {} ();
		\path[-] (c.center) edge  [in=-240,out=-300,loop] node {} ();

		\node at ($(r1)+(0cm,1cm)$) {$r_1$};
		\node at ($(r2)+(0cm,1cm)$) {$r_2$};
		\node at ($(b)+(0cm,-.5cm)$) {$b$};
		\node at ($(g1)+(0cm,-.5cm)$) {$g_1$};
		\node at ($(gk)+(0cm,-.5cm)$) {$\displaystyle g_{k_1}$};
		\node at ($(c)+(0cm,-.5cm)$) {$c$};
		
		\coordinate (pmid3) at (-90:1.3cm);
		\coordinate (cdot1) at ($(pmid3)+(-.3cm,0cm)$);
		\coordinate (cdot2) at ($(pmid3)+(0cm,0cm)$);
		\coordinate (cdot3) at ($(pmid3)+(.3cm,0cm)$);
		\fill (cdot1) circle[radius=1.5pt];
		\fill (cdot2) circle[radius=1.5pt];
		\fill (cdot3) circle[radius=1.5pt];
		
		\end{tikzpicture}
	\end{minipage}
	\caption{The graphs $\X{k_1}{0}{1}$ (on the left) and $\X{k_1}{1}{1}$ (on the right).}
	\label{fig:2specialNeighbourhoods}
\end{figure}

\begin{table}[h!]
	\centering
	\caption{Maximal types of the homomorphisms from $\Jpqt$ to $\X{k_1}{0}{1}$, where the vertices of $\X{k_1}{0}{1}$ are labelled as in Figure~\ref{fig:2specialNeighbourhoods} (on the left). Each line $i$ corresponds to a type 
		$T_i=\Bigl(A(T_i), E\bigl(B(T_i),B'(T_i)\bigr), A'(T_i)\Bigr)$. To shorten the notation we set $\calG=\{g_j \mid j\in [k_1]\}$.}
	\label{tab:configs1}
	\setlength{\tabcolsep}{3pt}
	\newcommand\Tstrut{\rule{0pt}{2.6ex}} 
	\setcounter{rowno}{0}
	\begin{tabular}{@{}>{$T_{\stepcounter{rowno}\therowno}$}lcccc<{\ }|>{\ }r@{}}
		\multicolumn{1}{l}{} & $A(T)$ & $B(T)$ & $B'(T)$ & $A'(T)$ & $\hatN(T)$\\
		\hline
		\Tstrut
		
		& $\{r_1, r_2,b\}\cup \calG$ 
		& $\{b\}$ 
		& $\{b\}$ 
		& $\{r_1, r_2,b\}\cup \calG$  
		& $(3+k_1)^{pt}\cdot 1^{qt}\cdot (3+k_1)^{pt}$\\
		& $\{r_1, r_2,b\}\cup \calG$  
		& $\{b\}$ 
		& $\{r_1, r_2,b\}$ 
		& $\{r_1, r_2,b\}$ 
		& $(3+k_1)^{pt}\cdot 3^{qt}\cdot 3^{pt}$\\
		& $\{r_1, r_2,b\}\cup \calG$ 
		& $\{b\}$
		& $\{r_1, r_2,b\}\cup \calG$ 
		& $\{b\}$  
		& $(3+k_1)^{pt}\cdot (3+k_1)^{qt}\cdot 1^{pt}$\\
		& $\{r_1, r_2,b\}$ 
		& $\{r_1, r_2,b\}$ 
		& $\{r_1, r_2,b\}$  
		& $\{r_1, r_2,b\}$   
		& $3^{pt}\cdot 9^{qt}\cdot 3^{pt}$\\
		& $\{r_1, r_2,b\}$  
		& $\{r_1, r_2,b\}$  
		& $\{r_1, r_2,b\}\cup \calG$ 
		& $\{b\}$
		& $3^{pt}\cdot (9+k_1)^{qt}\cdot 1^{pt}$\\
		& $\{b\}$
		& $\{r_1, r_2,b\}\cup \calG$ 
		& $\{r_1, r_2,b\}\cup \calG$ 
		& $\{b\}$  
		& $1^{pt}\cdot (9+2k_1)^{qt}\cdot 1^{pt}$
	\end{tabular}
\end{table}

Let $T=\bigl(A(T), E(B(T),B'(T)), A'(T)\bigr)$ be an $H$-type. Then we call $T$ \emph{symmetric} to the $H$-type $T'=\bigl(A'(T), E(B'(T),B(T)), A(T)\bigr)$. Clearly, $\hatN(T)=\hatN(T')$.

\begin{lem}\label{lem:tableOfMaxTypes1}
	Let $H= \X{k_1}{0}{1}$. Then all maximal $H$-types are listed in Table~\ref{tab:configs1} (apart from  those that are symmetric to a listed $H$-type). For each listed $H$-type $T$ the last column of the table gives the corresponding value $\hatN(T)$.
\end{lem}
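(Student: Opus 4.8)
The plan is to invoke Lemma~\ref{lem:maximalType}, which says that a maximal $H$-type $T$ is determined by the pair $(B(T),B'(T))$ and that both $B(T)$ and $B'(T)$ are fixed points of the closure operator $S\mapsto\NH(\NH(S))$. So the first step is to describe this operator for $H=\X{k_1}{0}{1}$. Labelling $V(H)=\{b,r_1,r_2\}\cup\{g_1,\dots,g_{k_1}\}$ as in Figure~\ref{fig:2specialNeighbourhoods} (on the left), the neighbourhoods are $\NH(b)=V(H)$, $\NH(r_1)=\NH(r_2)=\{b,r_1,r_2\}$ and $\NH(g_j)=\{b\}$. A short computation then shows that for every non-empty $S\subseteq V(H)$ the common neighbourhood $\NH(S)$ equals $V(H)$ when $S=\{b\}$, equals $R:=\{b,r_1,r_2\}$ when $\emptyset\subsetneq S\subseteq R$ and $S\neq\{b\}$, and equals $\{b\}$ when $S$ contains some $g_j$; applying $\NH$ once more sends $V(H)\mapsto\{b\}$, $R\mapsto R$ and $\{b\}\mapsto V(H)$. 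Hence the only non-empty closed sets are $\{b\}$, $R$ and $W:=V(H)$, each of which contains $b$.

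By Lemma~\ref{lem:maximalType} every maximal $H$-type thus has $(B(T),B'(T))=(S,S')$ for some $S,S'\in\{\{b\},R,W\}$ and is then forced to be $(\NH(S),E(S,S'),\NH(S'))$. Conversely this really is a non-empty type with $B(\cdot)=S$ and $B'(\cdot)=S'$: since $b\in S\cap S'$, every vertex of $S$ has $b$ as a neighbour in $S'$ (so reconstructing the first coordinate from $E(S,S')$ returns exactly $S$, and symmetrically for $S'$), and non-emptiness follows from Observation~\ref{obs:nonemptyType} because all three coordinates are non-empty and $\ucp{\NH(S)}{S}\subseteq E(H)$, $\ucp{S'}{\NH(S')}\subseteq E(H)$ by definition of $\NH$. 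This leaves at most $3\times3=9$ candidates, one per pair, and distinct pairs give distinct types. I would then check each is maximal: no strictly larger non-empty type $T'\supseteq T$ can exist, since enlarging $A(T')$ or $A'(T')$ is impossible ($\NH(S)$ is already the full common neighbourhood of $S=B(T)$), and enlarging the middle coordinate forces some $v\notin S$ into $B(T')$ (or some $v\notin S'$ into $B'(T')$), which by Observation~\ref{obs:nonemptyType} would require $A(T')\subseteq\NH(v)$, contradicting the fact that $A(T')\supseteq\NH(S)$ while $\NH(S)\not\subseteq\NH(v)$ for every $v\notin S$ — this last non-inclusion holding because each $g_j$ is adjacent only to $b$ and $r_1,r_2$ are adjacent to no $g_j$. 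Hence all nine pairs are maximal types, and one reads off that the six rows of Table~\ref{tab:configs1} are the pairs $(\{b\},\{b\})$, $(\{b\},R)$, $(\{b\},W)$, $(R,R)$, $(R,W)$, $(W,W)$, whereas $(R,\{b\})$, $(W,\{b\})$, $(W,R)$ are symmetric to rows $T_2$, $T_3$, $T_5$; this proves the first assertion.

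For the last column it only remains to compute $\hatN(T)=\abs{A(T)}^{pt}\,\abs{E(B(T),B'(T))}^{qt}\,\abs{A'(T)}^{pt}$ on these six rows, using $\abs{\{b\}}=1$, $\abs{R}=3$, $\abs{W}=3+k_1$ together with the ordered edge-pair counts $\abs{E(\{b\},\{b\})}=1$, $\abs{E(\{b\},R)}=3$, $\abs{E(\{b\},W)}=3+k_1$, $\abs{E(R,R)}=9$, $\abs{E(R,W)}=9+k_1$ (the $x=b$ row contributing $3+k_1$ pairs, and each of the rows $x=r_1$, $x=r_2$ contributing $3$) and $\abs{E(W,W)}=9+2k_1$ (three loops, plus two ordered pairs for each of the edges $r_1r_2$, $br_1$, $br_2$, $bg_1,\dots,bg_{k_1}$). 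Substituting reproduces exactly the entries of Table~\ref{tab:configs1}.

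The one step that requires genuine care is the maximality check in the second paragraph, but it is a finite case analysis on a graph with very simple neighbourhood structure, so it goes through routinely; (if $k_1=0$ then $W=R$, so some rows coincide and are no longer distinct maximal types, but since the claim only asserts that every maximal type appears in the table, this degenerate case causes no problem).
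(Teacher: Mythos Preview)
Your proof is correct and follows essentially the same approach as the paper: both identify the three closed sets $\{b\}$, $R=\{b,r_1,r_2\}$ and $W=V(H)$ under the operator $S\mapsto\NH(\NH(S))$, invoke Lemma~\ref{lem:maximalType} to conclude that $(B(T),B'(T))$ must be one of the nine pairs, and then read off $A(T)$, $A'(T)$ and $\hatN(T)$. Your argument is in fact somewhat more thorough than the paper's, since you also verify that each of the nine candidates really is a maximal type; the paper only shows that every maximal type lands in the list (which is all the lemma asserts), so your maximality check and the $k_1=0$ remark, while correct, are not strictly needed.
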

\begin{proof}
	Let $H= \X{k_1}{0}{1}$ and let $T$ be a maximal $H$-type. We claim that 
	\[
	B(T), B'(T) \in \{\{b\}, \{r_1, r_2, b\}, \{r_1, r_2, b, g_1, \dots, g_{k_1}\}\}
	\]
	for the following reasons (we give the arguments for $B(T)$, they are identical for $B'(T)$):
	\begin{itemize}
		\item Since $b$ is a neighbour of every vertex in $H$, from item~\eqref{item:maxType2} of Lemma~\ref{lem:maximalType} we obtain $b\in B(T)$. 
		\item If, for some $i\in [k_1]$, we have $g_i\in B(T)$ then $\NH(B(T))=\{b\}$ as $b$ is the only neighbour of $g_i$. By item~\eqref{item:maxType2} of Lemma~\ref{lem:maximalType} it follows that $B(T)=\{r_1, r_2, b, g_1, \dots, g_{k_1}\}$.
		\item If $B(T)=\{r_1,b\}$, then $\NH(B(T))=\{r_1, r_2, b\}$. By item~\eqref{item:maxType2} of Lemma~\ref{lem:maximalType} this gives $B(T)=\{r_1, r_2,b\}$, a contradiction.
		\item $B(T)=\{r_2,b\}$ gives a contradiction in the same way.
	\end{itemize}
	
	Table~\ref{tab:configs1} then lists all possible combinations of sets $B(T)$ and $B'(T)$. From Lemma~\ref{lem:maximalType} it follows that these sets determine $T$ completely (and $A(T), A'(T)$ are given accordingly). By definition, $T_1=A(T)$ and $T_3=A'(T)$. From item~\eqref{item:maxType1} of Lemma~\ref{lem:maximalType} it has to hold that $T_2=E(B(T),B'(T))$. Then $\displaystyle\hatN(T)=\abs{T_1}^{pt} \abs{T_2}^{qt} \abs{T_3}^{pt}$ can be computed from the given sets in each row.
\end{proof}

\begin{table}[h!]
	\centering
	\caption{Maximal types of the homomorphisms from $\Jpqt$ to $\X{k_1}{1}{1}$, where the vertices of $\X{k_1}{1}{1}$ are labelled as in Figure~\ref{fig:2specialNeighbourhoods} (on the right). Each line $i$ corresponds to a type 
		$T_i=\Bigl(A(T_i), E\bigl(B(T_i),B'(T_i)\bigr), A'(T_i)\Bigr)$. To shorten the notation we set $\calG=\{g_j \mid j\in [k_1]\}$.}
	\label{tab:configs2}
	\setlength{\tabcolsep}{3pt}
	\newcommand\Tstrut{\rule{0pt}{2.6ex}} 
	\setcounter{rowno}{0}
	\begin{tabular}{@{}>{$T_{\stepcounter{rowno}\therowno}$}lcccc<{\ }|>{\ }r@{}}
		\multicolumn{1}{l}{} & $A(T)$ & $B(T)$ & $B'(T)$ & $A'(T)$ & $\hatN(T)$\\
		\hline
		\Tstrut
		
		& $\{r_1, r_2,b,c\}\cup \calG$ 
		& $\{b\}$ 
		& $\{b\}$ 
		& $\{r_1, r_2,b,c\}\cup \calG$  
		& $(4+k_1)^{pt}\cdot 1^{qt}\cdot (4+k_1)^{pt}$\\
		& $\{r_1, r_2,b,c\}\cup \calG$  
		& $\{b\}$ 
		& $\{b,c\}$ 
		& $\{b,c\}$ 
		& $(4+k_1)^{pt}\cdot 2^{qt}\cdot 2^{pt}$\\
		& $\{r_1, r_2,b,c\}\cup \calG$  
		& $\{b\}$ 
		& $\{r_1, r_2,b\}$ 
		& $\{r_1, r_2,b\}$ 
		& $(4+k_1)^{pt}\cdot 3^{qt}\cdot 3^{pt}$\\
		& $\{r_1, r_2,b,c\}\cup \calG$ 
		& $\{b\}$
		& $\{r_1, r_2,b,c\}\cup \calG$ 
		& $\{b\}$  
		& $(4+k_1)^{pt}\cdot (4+k_1)^{qt}\cdot 1^{pt}$\\
		& $\{b,c\}$ 
		& $\{b,c\}$  
		& $\{b,c\}$ 
		& $\{b,c\}$   
		& $2^{pt}\cdot 4^{qt}\cdot 2^{pt}$\\
		& $\{b,c\}$  
		& $\{b,c\}$  
		& $\{r_1, r_2,b\}$ 
		& $\{r_1, r_2,b\}$ 
		& $2^{pt}\cdot 4^{qt}\cdot 3^{pt}$\\
		& $\{b,c\}$  
		& $\{b,c\}$  
		& $\{r_1, r_2,b,c\}\cup \calG$ 
		& $\{b\}$
		& $2^{pt}\cdot (6+k_1)^{qt}\cdot 1^{pt}$\\
		& $\{r_1, r_2,b\}$ 
		& $\{r_1, r_2,b\}$ 
		& $\{r_1, r_2,b\}$  
		& $\{r_1, r_2,b\}$   
		& $3^{pt}\cdot 9^{qt}\cdot 3^{pt}$\\
		& $\{r_1, r_2,b\}$  
		& $\{r_1, r_2,b\}$  
		& $\{r_1, r_2,b,c\}\cup \calG$ 
		& $\{b\}$
		& $3^{pt}\cdot (10+k_1)^{qt}\cdot 1^{pt}$\\
		& $\{b\}$
		& $\{r_1, r_2,b,c\}\cup \calG$ 
		& $\{r_1, r_2,b,c\}\cup \calG$ 
		& $\{b\}$  
		& $1^{pt}\cdot (12+2k_1)^{qt}\cdot 1^{pt}$
	\end{tabular}
\end{table}

\begin{lem}\label{lem:tableOfMaxTypes2}
	Let $H= \X{k_1}{1}{1}$. Then all maximal $H$-types are listed in Table~\ref{tab:configs2} (apart from  those that are symmetric to a listed $H$-type). For each listed $H$-type $T$ the last column of the table gives the corresponding value $\hatN(T)$.
\end{lem}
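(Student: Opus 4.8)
The plan is to mimic the proof of Lemma~\ref{lem:tableOfMaxTypes1}. By Lemma~\ref{lem:maximalType}, a maximal $H$-type $T$ is determined by the pair $\bigl(B(T),B'(T)\bigr)$, and both $B(T)$ and $B'(T)$ must be \emph{closed}, meaning $B(T)=\NH(\NH(B(T)))$ (and likewise for $B'(T)$). So the first step is to list all closed subsets of $V(H)=\{r_1,r_2,b,c,g_1,\dots,g_{k_1}\}$. The relevant neighbourhoods are $\NH(b)=V(H)$, $\NH(r_1)=\NH(r_2)=\{b,r_1,r_2\}$, $\NH(c)=\{b,c\}$ and $\NH(g_i)=\{b\}$ for each $i\in[k_1]$. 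Since $b$ is adjacent to every vertex, $b\in\NH(S)$ for every $S$, so item~\eqref{item:maxType2} of Lemma~\ref{lem:maximalType} forces $b\in B(T)$ and $b\in B'(T)$.

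Next I would run a short case analysis on a closed set $S$ with $b\in S$. If $S$ contains some $g_i$ then $\NH(S)=\{b\}$, whence $S=\NH(\{b\})=V(H)$. If $S$ contains $c$ together with $r_1$ or $r_2$ then $\NH(S)=\{b,c\}\cap\{b,r_1,r_2\}=\{b\}$, again forcing $S=V(H)$; hence a closed $S$ containing $c$ but no $g_i$ must be $\{b,c\}$, which is indeed closed. Finally, if $S$ avoids $c$ and every $g_i$ then $S\subseteq\{b,r_1,r_2\}$, and one checks directly that $\{b\}$ and $\{b,r_1,r_2\}$ are closed whereas $\{b,r_1\}$ and $\{b,r_2\}$ are not (their double neighbourhood is $\{b,r_1,r_2\}$). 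Thus the closed sets are exactly
\[
\{b\},\qquad \{b,c\},\qquad \{b,r_1,r_2\},\qquad \{r_1,r_2,b,c\}\cup\calG ,
\]
and by Lemma~\ref{lem:maximalType} every maximal $H$-type is obtained by choosing $B(T)$ and $B'(T)$ from this four-element list, with $A(T)=\NH(B(T))$, $A'(T)=\NH(B'(T))$ and $T_2=E\bigl(B(T),B'(T)\bigr)$ by item~\eqref{item:maxType1}, and $\hatN(T)=\abs{\NH(B(T))}^{pt}\abs{E(B(T),B'(T))}^{qt}\abs{\NH(B'(T))}^{pt}$.

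It then remains to carry out the bookkeeping: there are $4$ diagonal and $\binom{4}{2}=6$ off-diagonal unordered pairs (the omitted ordered variants being precisely the symmetric types), matching the ten rows of Table~\ref{tab:configs2}. For each pair I would record $\NH(B(T))$ and $\NH(B'(T))$ and count the ordered edges in $E(B(T),B'(T))$, using that $c$ is adjacent only to $b$ and itself and each $g_i$ only to $b$; for instance $\abs{E(\{b,c\},\{b,r_1,r_2\})}=4$ (the vertex $b$ contributes three ordered edges and $c$ only $(c,b)$), $\abs{E(\{b,c\},V(H))}=(4+k_1)+2=6+k_1$, and $\abs{E(\{r_1,r_2,b\},V(H))}=(4+k_1)+3+3=10+k_1$. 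Comparing the resulting values with the last column of Table~\ref{tab:configs2} finishes the proof. The only real ``obstacle'' is keeping the ordered-edge counts involving the pendant vertices $c$ and $g_i$ straight; conceptually nothing new happens relative to Lemma~\ref{lem:tableOfMaxTypes1}, the table is merely longer because of the extra looped pendant $c$.
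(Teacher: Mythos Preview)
Your proof is correct and follows essentially the same approach as the paper: you determine the four closed sets via the same case analysis (the paper singles out only the new case where $c$ and one of $r_1,r_2$ both lie in $B(T)$, deferring the rest to the proof of Lemma~\ref{lem:tableOfMaxTypes1}), and then read off the ten unordered pairs. Your write-up is more detailed---in particular you verify some of the ordered-edge counts explicitly---but there is no substantive difference.
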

\begin{proof}
	Let $H= \X{k_1}{1}{1}$ and let $T$ be a maximal $H$-type. We claim that 
	\[
	B(T), B'(T) \in \{\{b\}, \{b,c\}, \{r_1, r_2, b\}, \{r_1, r_2, b, c, g_1, \dots, g_{k_1}\}\}.
	\]
	The remainder of the proof is analogous to that of Lemma~\ref{lem:tableOfMaxTypes1} with only one additional argument:
	\begin{itemize}
		\item If $r_1\in B(T)$ and $c\in B(T)$, then $\NH(B(T))=\{b\}$. By item~\eqref{item:maxType2} of Lemma~\ref{lem:maximalType} it follows that $B(T)=\{r_1, r_2, b, c, g_1, \dots, g_{k_1}\}$. The same is true if both $r_2\in B(T)$ and $c\in B(T)$.
	\end{itemize}
\end{proof}

\begin{lem}\label{lem:T5dominance}
	Let $k_1\in [7]$. Consider the types $T_i$, $i\in[6]$ given by Table~\ref{tab:configs1}. Then there is a $\gamma\in (0,1)$ and positive integers $p$ and $q$ such that, for all $i\in[6],\ i\neq 5$ and all positive integers $t$, we have $\hatN(T_i)\le \gamma^t \hatN(T_5)$.
\end{lem}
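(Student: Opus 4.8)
The plan is to reduce the statement to a handful of elementary inequalities between positive real numbers and then to choose $p$ and $q$ by a finite case analysis on $k_1$. First observe that, since none of the set sizes recorded in Table~\ref{tab:configs1} depends on $t$, the definition of $\hatN$ gives
\[
\hatN(T_i) = \Bigl(\,|A(T_i)|^{p}\cdot|E(B(T_i),B'(T_i))|^{q}\cdot|A'(T_i)|^{p}\,\Bigr)^{t}
\]
for every $i\in[6]$. Consequently, for each $i\in[6]\setminus\{5\}$ the ratio $\hatN(T_i)/\hatN(T_5)$ equals $c_i(p,q)^{t}$, where
\[
c_i(p,q)=\frac{|A(T_i)|^{p}\cdot|E(B(T_i),B'(T_i))|^{q}\cdot|A'(T_i)|^{p}}{3^{p}\cdot(9+k_1)^{q}}
\]
is a positive constant that does not involve $t$. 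Hence it suffices to exhibit positive integers $p,q$ for which $c_i(p,q)<1$ for all $i\in[6]\setminus\{5\}$; the lemma then holds with $\gamma:=\max_{i\in[6]\setminus\{5\}}c_i(p,q)$, which lies in $(0,1)$ because the maximum is over finitely many indices.

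Substituting the sizes from Table~\ref{tab:configs1} and taking logarithms turns each inequality $c_i(p,q)<1$ into a strict homogeneous linear inequality in $(p,q)$. Writing $r=p/q>0$, the inequality $c_6(p,q)<1$ becomes the lower bound $r>\tfrac{\ln((9+2k_1)/(9+k_1))}{\ln 3}$, while the inequalities coming from $c_1,c_2,c_3,c_4$ each become an upper bound of the form $r<U_i(k_1)$ with $U_i(k_1)$ an explicit positive number depending only on $k_1$ (for instance $c_4(p,q)<1$ is equivalent to $r<\tfrac{\ln((9+k_1)/9)}{\ln 3}$). A valid pair $(p,q)$ therefore exists precisely when the open interval $\bigl(\tfrac{\ln((9+2k_1)/(9+k_1))}{\ln 3},\;\min_{i}U_i(k_1)\bigr)$ is nonempty, in which case we may pick any rational $r$ in it and clear denominators to obtain $p$ and $q$.

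It remains to check nonemptiness of this interval for each $k_1\in\{1,\dots,7\}$. The comparison of the lower bound with the upper bound arising from $c_4$ already illustrates the mechanism: $\tfrac{\ln((9+2k_1)/(9+k_1))}{\ln 3}<\tfrac{\ln((9+k_1)/9)}{\ln 3}$ is equivalent to $9(9+2k_1)<(9+k_1)^{2}$, i.e.\ to $0<k_1^{2}$, which always holds; this is exactly the balancing phenomenon described in the introduction, with the value $9+k_1$ in the middle coordinate of $T_5$ doing the work. The remaining upper bounds (from $c_1,c_2,c_3$) are verified directly for the seven admissible values of $k_1$: each check is a numerical comparison of two logarithms, and one can simply record one valid choice of $(p,q)$ per value of $k_1$. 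I expect this bookkeeping --- confirming that for every $k_1\le 7$ the lower bound forced by $T_6$ stays below the smallest of the four upper bounds --- to be the only genuine work in the proof; the reduction above is entirely formal, and once a valid $(p,q)$ is fixed the constant $\gamma=\max_{i\neq 5}c_i(p,q)<1$ completes the argument.
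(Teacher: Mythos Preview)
Your proposal is correct and follows essentially the same approach as the paper: both reduce the claim to showing that, after taking $t$-th roots, the five ratios $\hatN(T_i)/\hatN(T_5)$ for $i\neq 5$ become constants in $p,q$ that can be made strictly less than $1$, then translate this into a system of homogeneous linear inequalities in $\log p$ and $\log q$ (equivalently, bounds on the ratio $p/q$) and verify feasibility by a finite case check over $k_1\in\{1,\dots,7\}$. The only cosmetic difference is that the paper works with $q/p$ rather than your $r=p/q$, and it packages the four upper bounds into a set $\calL$ and the lower bound into a number $R$ before declaring the numerical check; your explicit algebraic verification that the $T_4$/$T_6$ comparison reduces to $0<k_1^2$ is a nice addition not spelled out in the paper.
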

\begin{proof}
	Let
	\[
	\calL=	\left\{ 
	\frac{\log\left(\frac{(3+k_1)^2}{3}\right)}{\log\left(9+k_1\right)},
	\frac{\log \left(3+k_1\right)}{\log\left(\frac{9+k_1}{3}\right)},
	\frac{\log\left(\frac{3+k_1}{3}\right)}{\log\left(\frac{9+k_1}{3+k_1}\right)},
	\frac{\log 3}{\log\left(\frac{9+k_1}{9}\right)}
	\right\}
	\]
	and 
	\[
	R=\frac{\log 3}{\log\left(\frac{9+2k_1}{9+k_1}\right)}.
	\]
	For each of the seven possible values of $k_1$ we can check (for example by computer) that every member of $\calL$ is less than $R$. Thus, we can choose $p$ and $q$ so that
	\begin{equation}\label{equ:T5fracqp1}
	\forall L\in \calL,\ L< \frac{q}{p} < R.
	\end{equation}
	
	We check the sought-for bound for each $i\in[6]$, $i\neq 5$:
	\begin{list}{}%
		{\renewcommand\makelabel[1]{#1:\hfill}%
			\settowidth\labelwidth{\makelabel{$T_2$ \quad}}%
			\setlength\leftmargin{\labelwidth}
			\addtolength\leftmargin{\labelsep}}
		\item[$T_1$]  $\frac{\hatN(T_1)}{\hatN(T_5)}=((3+k_1)^2/3)^{pt}(1/(9+k_1))^{qt} < \gamma^t$ is fulfilled for some sufficiently large $\gamma<1$ if and only if $((3+k_1)^2/3)^{p}<(9+k_1)^{q}$ which is equivalent to $\displaystyle\log\left((3+k_1)^2/3\right)/\log\left(9+k_1\right) <q/p$. This is true by \eqref{equ:T5fracqp1}.
		\item[$T_2$]  $\frac{\hatN(T_2)}{\hatN(T_5)}=(3+k_1)^{pt}(3/(9+k_1))^{qt} < \gamma^t$ is fulfilled for some sufficiently large $\gamma<1$ if and only if $(3+k_1)^{p}<((9+k_1)/3)^{q}$. This is true by \eqref{equ:T5fracqp1}.
		\item[$T_3$]  $\frac{\hatN(T_3)}{\hatN(T_5)}=((3+k_1)/3)^{pt}((3+k_1)/(9+k_1))^{qt} < \gamma^t$ is fulfilled for some sufficiently large $\gamma<1$ if and only if $((3+k_1)/3)^{p}<((9+k_1)/(3+k_1))^{q}$. This is true by \eqref{equ:T5fracqp1}.
		\item[$T_4$]  $\frac{\hatN(T_4)}{\hatN(T_5)}=3^{pt}(9/(9+k_1))^{qt} < \gamma^t$ is fulfilled for some sufficiently large $\gamma<1$ if and only if $3^{p}<((9+k_1)/9)^{q}$. This is true by \eqref{equ:T5fracqp1}.
		\item[$T_6$]  $\frac{\hatN(T_6)}{\hatN(T_5)}=(1/3)^{pt}((9+2k_1)/(9+k_1))^{qt} < \gamma^t$ is fulfilled for some sufficiently large $\gamma<1$ if and only if $((9+2k_1)/(9+k_1))^{q}<3^{p}$. This is true by \eqref{equ:T5fracqp1}.
	\end{list}
\end{proof}

\begin{lem}\label{lem:T9dominance}
	Let $k_1\in \{3,4,5,6\}$. Consider the types $T_i$, $i\in[10]$ given by Table~\ref{tab:configs2}. Then there is a $\gamma\in (0,1)$ and positive integers $p$ and $q$ such that, for all $i\in[10],\ i\neq 9$ and all positive integers $t$, we have $\hatN(T_i)\le \gamma^t \hatN(T_9)$.
\end{lem}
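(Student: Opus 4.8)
The plan is to mirror the proof of Lemma~\ref{lem:T5dominance}, now using Table~\ref{tab:configs2} and taking $T_9$, whose value is $\hatN(T_9)=3^{pt}\,(10+k_1)^{qt}\,1^{pt}$, as the dominating type. For each $i\in[10]$ with $i\ne 9$ I would write the ratio $\hatN(T_i)/\hatN(T_9)$ in the form $\alpha_i^{pt}\beta_i^{qt}$ with $\alpha_i,\beta_i$ explicit rationals in $k_1$, read off from the last column of Table~\ref{tab:configs2}. The target bound $\hatN(T_i)\le\gamma^t\hatN(T_9)$ for some $\gamma\in(0,1)$ and all $t\ge 1$ holds exactly when $\alpha_i^{p}\beta_i^{q}<1$, since then one can take $\gamma$ slightly below $\max_i \alpha_i^{p}\beta_i^{q}<1$ over the finitely many relevant $i$.

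First I would dispose of $T_7$: here $\alpha_7=2/3<1$ and $\beta_7=(6+k_1)/(10+k_1)<1$, so the ratio is at most $(2/3)^{t}$ (say) for every choice of positive $p,q$, and no constraint on $q/p$ arises. For the types $T_1,\dots,T_6,T_8$ one has $\alpha_i>1$ but $\beta_i<1$, so $\alpha_i^{p}\beta_i^{q}<1$ is equivalent to the lower bound $\log\alpha_i/\log(1/\beta_i)<q/p$; collecting these gives
\[
\calL=\left\{
\frac{\log\!\left(\frac{(4+k_1)^2}{3}\right)}{\log(10+k_1)},\
\frac{\log\!\left(\frac{2(4+k_1)}{3}\right)}{\log\!\left(\frac{10+k_1}{2}\right)},\
\frac{\log(4+k_1)}{\log\!\left(\frac{10+k_1}{3}\right)},\
\frac{\log\!\left(\frac{4+k_1}{3}\right)}{\log\!\left(\frac{10+k_1}{4+k_1}\right)},\
\frac{\log\!\left(\frac{4}{3}\right)}{\log\!\left(\frac{10+k_1}{4}\right)},\
\frac{\log 2}{\log\!\left(\frac{10+k_1}{4}\right)},\
\frac{\log 3}{\log\!\left(\frac{10+k_1}{9}\right)}
\right\}
\]
for $T_1,\dots,T_6,T_8$ respectively. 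For $T_{10}$ the situation is reversed, $\alpha_{10}=1/3<1$ and $\beta_{10}=(12+2k_1)/(10+k_1)>1$, so the condition becomes the upper bound $q/p<R$ with $R=\log 3/\log\!\left(\frac{12+2k_1}{10+k_1}\right)$.

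The second step is the finite numerical check: for each of the four values $k_1\in\{3,4,5,6\}$, verify (by hand or by computer, as in Lemma~\ref{lem:T5dominance}) that every element of $\calL$ is strictly less than $R$. Granted this, choose positive integers $p,q$ with $L<q/p<R$ for all $L\in\calL$ (possible because $\calL$ is finite and $\max\calL<R$), and then pick $\gamma\in(0,1)$ large enough that each of the finitely many inequalities $\alpha_i^{p}\beta_i^{q}\le\gamma$ ($i\ne 9$) holds; the same $p,q,\gamma$ then work for all $t\ge 1$, which is exactly the statement.

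I expect the only genuine obstacle to be confirming the numerical check simultaneously for all four values of $k_1$: as $k_1$ grows, $R$ decreases while several entries of $\calL$ (notably the $T_1$, $T_3$ and $T_4$ terms) increase, so the margin is tightest at $k_1=6$, and one should verify there that the largest element of $\calL$ — which is the $T_8$ term for small $k_1$ but the $T_4$ term at $k_1=6$ — still lies below $R$. A secondary point of care is simply transcribing the $\hatN(T_i)$ column of Table~\ref{tab:configs2} correctly, in particular the edge count $\abs{E(B(T_9),B'(T_9))}=10+k_1$, since an off-by-one there would shift every logarithm.
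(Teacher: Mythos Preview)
Your proposal is correct and follows essentially the same argument as the paper's proof. The only difference is that the paper trims $\calL$ down to four entries by first observing the elementary dominations $\hatN(T_2)\le\hatN(T_3)$, $\hatN(T_5)\le\hatN(T_8)$ and $\hatN(T_6)\le\hatN(T_8)$, whereas you keep all seven lower-bound constraints and check them directly; since the three extra constraints are implied by the ones the paper keeps, both approaches succeed with the same numerical margin.
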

\begin{proof}
	
	Let
	\[
	\calL=	\left\{ 
	\frac{\log\left(\frac{(4+k_1)^2}{3}\right)}{\log\left(10+k_1\right)},
	\frac{\log \left(4+k_1\right)}{\log\left(\frac{10+k_1}{3}\right)},
	\frac{\log\left(\frac{4+k_1}{3}\right)}{\log\left(\frac{10+k_1}{4+k_1}\right)},
	\frac{\log 3}{\log\left(\frac{10+k_1}{9}\right)}
	\right\}
	\]
	and 
	\[
	R=\frac{\log 3}{\log\left(\frac{12+2k_1}{10+k_1}\right)}.
	\]
	For each of the four possible values of $k_1$ we can check (for example by computer) that every member of $\calL$ is less than $R$. Thus, we can choose $p$ and $q$ so that
	\begin{equation}\label{equ:T9fracqp1}
	\forall L\in \calL,\ L< \frac{q}{p} < R.
	\end{equation}
	
	Suppose that $T$ and $T'$ are types listed in Table~\ref{tab:configs2} which are distinct from $T_9$ and have the property that $\hatN(T')\le\hatN(T)$ for all $k_1\in \{3,4,5,6\}$. Then the sought-for bound automatically holds for $T'$ if it holds for $T$. 
	
	We check the sought-for bound for each $i\in[10]$, $i\neq 9$:

	\begin{list}{}%
		{\renewcommand\makelabel[1]{#1:\hfill}%
			\settowidth\labelwidth{\makelabel{$T_2$ \quad}}%
			\setlength\leftmargin{\labelwidth}
			\addtolength\leftmargin{\labelsep}}
		\item[$T_1$]  $\frac{\hatN(T_1)}{\hatN(T_9)}=((4+k_1)^2/3)^{pt}(1/(10+k_1))^{qt} < \gamma^t$ is fulfilled for some sufficiently large $\gamma<1$ if and only if $((4+k_1)^2/3)^{p}<(10+k_1)^{q}$. This is true by \eqref{equ:T9fracqp1}.
		\item[$T_2$]  $\hatN(T_2)\le \hatN(T_3)$.
		\item[$T_3$]  $\frac{\hatN(T_3)}{\hatN(T_9)}=(4+k_1)^{pt}(3/(10+k_1))^{qt} < \gamma^t$ is fulfilled for some sufficiently large $\gamma<1$ if and only if $(4+k_1)^{p}<((10+k_1)/3)^{q}$. This is true by \eqref{equ:T9fracqp1}.
		\item[$T_4$]  $\frac{\hatN(T_4)}{\hatN(T_9)}=((4+k_1)/3)^{pt}((4+k_1)/(10+k_1))^{qt} < \gamma^t$ is fulfilled for some sufficiently large $\gamma<1$ if and only if $((4+k_1)/3)^{p}<((10+k_1)/(4+k_1))^{q}$. This is true by \eqref{equ:T9fracqp1}.
		\item[$T_5$]  $\hatN(T_5)\le \hatN(T_8)$.
		\item[$T_6$]  $\hatN(T_6)\le \hatN(T_8)$, for all $k\in \{3,4,5,6\}$.
		\item[$T_7$]  $\frac{\hatN(T_7)}{\hatN(T_9)}=(2/3)^{pt}((6+k_1)/(10+k_1))^{qt} < \gamma^t$ is fulfilled for $2/3 \cdot (6+k_1)/(10+k_1)<\gamma <1$.
		\item[$T_8$]  $\frac{\hatN(T_8)}{\hatN(T_9)}=3^{pt}(9/(10+k_1))^{qt} < \gamma^t$ is fulfilled for some sufficiently large $\gamma<1$ if and only if $3^{p}<((10+k_1)/9)^{q}$. This is true by \eqref{equ:T9fracqp1}.
		\item[$T_{10}$]  $\frac{\hatN(T_{10})}{\hatN(T_9)}=(1/3)^{pt}((12+2k_1)/(10+k_1))^{qt} < \gamma^t$ is fulfilled for some sufficiently large $\gamma<1$ if and only if $((12+2k_1)/(10+k_1))^{q}<3^{p}$. This is true by \eqref{equ:T9fracqp1}.
	\end{list}
\end{proof}

\begin{rem}
	We point out that the proofs of Lemmas~\ref{lem:T5dominance} and~\ref{lem:T9dominance} break for larger $k_1$. (For larger $k_1$ there exists some lower bound on $p/q$ which exceeds some upper bound on that ratio.)
	Lemma~\ref{lem:T9dominance} also breaks for $k_1=1$ and $k_1=2$. This matches the results from Section~\ref{sec:bis-easiness} which show that approximately counting retractions to $\X{k_1}{1}{1}$ is actually $\bis$-easy for these values of $k_1$.
\end{rem}

\begin{lem}\label{lem:hardNeighbourhood2}
	If $k_1\ge 1$, then $\sat \leap \Hom{\X{k_1}{0}{1}}$.
\end{lem}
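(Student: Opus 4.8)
The plan is to handle the ranges $k_1\ge 7$ and $1\le k_1\le 6$ separately, using Lemma~\ref{lem:Kelk5.1} for the former and the $\Jpqt$‑gadget/$\largecut$ machinery for the latter. For $k_1\ge 7$, first note that in $H=\X{k_1}{0}{1}$ the vertex $b$ is adjacent to every vertex of $H$ (including itself), no other vertex is, and $r_1$ is not adjacent to $g_1$ (as $k_1\ge 1$), so $F(H)=\{b\}$ with $\emptyset\subsetneq F(H)\subsetneq V(H)$. Now let $(S,T)$ satisfy $S\subseteq\NH(T)$ and $T\subseteq\NH(S)$, i.e.\ $\{s,t\}\in E(H)$ for all $s\in S$, $t\in T$, and suppose $S,T$ are non‑empty and neither equals $\{b\}$. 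If some $g_i$ lay in $S$, every vertex of $T$ would be a neighbour of $g_i$, forcing $T=\{b\}$; hence $S\subseteq\{b,r_1,r_2\}$, and by the symmetry of the condition in $S$ and $T$ also $T\subseteq\{b,r_1,r_2\}$, giving $\abs{S}\cdot\abs{T}\le 9 < 3+k_1=\abs{F(H)}\cdot\abs{V(H)}$. So the hypothesis of Lemma~\ref{lem:Kelk5.1} is met and $\sat\leap\Hom{\X{k_1}{0}{1}}$.

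For $1\le k_1\le 6$ I would use $\Jpqt$ as a vertex gadget. Fix $p$, $q$ and $\gamma\in(0,1)$ as provided by Lemma~\ref{lem:T5dominance} (applicable since $k_1\le 6\le 7$). By Lemma~\ref{lem:tableOfMaxTypes1} the maximal $H$‑types are $T_1,\dots,T_6$ together with their symmetric partners; Lemma~\ref{lem:T5dominance} gives $\hatN(T_i)\le\gamma^t\hatN(T_5)=\gamma^t\hatN(\overline{T_5})$ for $i\ne 5$, where $\overline{T_5}$ denotes the type symmetric to $T_5$; Lemma~\ref{lem:maximalTypeDominates} dominates every non‑maximal non‑empty type by a maximal one, hence for large $t$ by $T_5$ or $\overline{T_5}$; and Lemma~\ref{lem:hatN} gives $N(T)\in[\hatN(T)/2,\hatN(T)]$ for non‑empty $T$ and large $t$. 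Together these show $\hom{\Jpqt}{H}=(1+o(1))\bigl(N(T_5)+N(\overline{T_5})\bigr)$, with every other homomorphism type contributing a share that is exponentially small in $t$. The structural heart of the argument is that $T_5$ and $\overline{T_5}$ are two distinct types interacting \emph{anti‑ferromagnetically}: $T_5$ sends the $A$‑side of $\Jpqt$ into the ``large'' set $\{r_1,r_2,b\}$ and the $A'$‑side into the ``small'' set $\{b\}$, while $\overline{T_5}$ does the opposite, so a copy of $\Jpqt$ is in a dominant state precisely when its two ends are assigned opposite states.

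Given this, I would reduce from $\largecut$ in the now‑standard way (\cite{DGGJApprox}; cf.~the closely analogous reduction in~\cite[Section~2.2.2]{FGZRet}): from an instance $(K,G)$ of $\largecut$, build $G'$ by replacing each edge of $G$ with a fresh copy of $\Jpqt$ whose $A$‑ and $A'$‑sides are identified with (blow‑ups of) the two endpoints of that edge. In $G'$ the homomorphisms to $H$ are dominated by those in which every gadget copy has type $T_5$ or $\overline{T_5}$; these correspond to $2$‑colourings of $V(G)$, a bichromatic edge contributing weight $\hatN(T_5)$ and a monochromatic edge a weight smaller by a factor exponentially small in $t$. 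Hence $\hom{G'}{H}$, to leading order in this weight ratio, is proportional to the number of size‑$K$ cuts of $G$, and an approximation to $\hom{G'}{H}$ obtained from an $\Hom{\X{k_1}{0}{1}}$ oracle (with the error parameter scaled down polynomially in the instance size) pins that number down exactly; since $\sat\leap\largecut$, this yields $\sat\leap\Hom{\X{k_1}{0}{1}}$ for $1\le k_1\le 6$ as well, completing the proof. I expect the main obstacle to be this small‑$k_1$ case: one must be sure that \emph{exactly} the two symmetric types $T_5,\overline{T_5}$ dominate and that no other type — maximal or not — overtakes them as $t\to\infty$ (the content of Lemmas~\ref{lem:hatN} and~\ref{lem:maximalTypeDominates} together with the $\calL<R$ computation inside Lemma~\ref{lem:T5dominance}), and then arrange the $\largecut$ reduction and its error analysis so that the unique exponentially‑dominant term is isolated robustly; the case $k_1\ge 7$ is by comparison a one‑line application of Lemma~\ref{lem:Kelk5.1}.
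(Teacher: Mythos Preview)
Your treatment of $k_1\ge 7$ is correct and matches the paper's Case~2.

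For $1\le k_1\le 6$, however, your reduction differs from the paper's in a way that breaks the argument. You propose to use $\Jpqt$ as an \emph{edge} gadget, identifying its $A$- and $A'$-sides with the (shared) vertex blow-ups. The dominance statement you invoke (Lemmas~\ref{lem:tableOfMaxTypes1}, \ref{lem:T5dominance}, \ref{lem:maximalTypeDominates}) concerns a \emph{single, independent} copy of $\Jpqt$; once the $A$- and $A'$-parts are shared across all edges incident to a vertex, the edge gadgets are no longer independent and $\hatN(T_5)$ is not the weight of a bichromatic edge (it double-counts the $|A|^{pt}$ and $|A'|^{pt}$ vertex factors). More seriously, your claim that the dominant homomorphisms are those in which \emph{every} edge gadget has type $T_5$ or $\overline{T_5}$ forces every edge of $G$ to be bichromatic, i.e.\ $G$ bipartite---but $\largecut$ inputs need not be bipartite. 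What actually dominates in your construction is a global optimisation over vertex states that depends on vertex degrees, not simply on the cut size.

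The paper's construction sidesteps all of this by using $\Jpqt$ as a \emph{vertex} gadget: each $v\in V(G)$ gets its own disjoint copy $G'_v$ of $\Jpqt$, so that Lemma~\ref{lem:T5dominance} applies independently to every vertex and each $G'_v$ is (up to an exponentially small error) in type $T_5$ or $T'_5$. The interaction is then handled by a \emph{separate} edge gadget: for $e=\{u,v\}$, two size-$s$ independent sets $S_e,S'_e$ are joined by complete bipartite graphs to the $B$-sides $B_u,B'_u,B_v,B'_v$ (not the $A$-sides). A short calculation shows that $S_e\cup S'_e$ contributes a factor $3^s$ when the two vertex gadgets have opposite types and a factor $1$ when they have the same type; this is the anti-ferromagnetic interaction you want, and it yields the $\largecut$ reduction cleanly with $s=n+2$, $t=n^4$.
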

\begin{proof}
	We make a case distinction depending on $k_1$. The first case is the main work of the proof and we use the dominance of the type $T_5$ from Table~\ref{tab:configs1} for $k_1 \le 6$ as shown in Lemma~\ref{lem:T5dominance}. The second case ($k_1\ge 7$) then follows from Lemma~\ref{lem:Kelk5.1}.
	
	\bigskip
	\noindent {\bf\boldmath Case 1: $k_1 \in [6]$.}
	We use a reduction from $\largecut$, which is known to be $\sat$-hard (see~\cite{DGGJApprox}). Let $G$ and $K$ be an input to $\largecut$, $n$ be the number of vertices of $G$ and $\eps\in (0,1)$ be the parameter of the desired precision. To shorten notation let $H=\X{k_1}{0}{1}$. From $G$ we construct an input $G'$ to $\Hom{H}$ by introducing vertex and edge gadgets. We assume that the vertices of $H$ are labelled as in Figure~\ref{fig:2specialNeighbourhoods} (on the left).
	
	Let $p$, $q$ be positive integers that fulfil \eqref{equ:T5fracqp1}. Note that $p$ and $q$ only depend on $k_1$ which is a parameter of $H$ and therefore does not depend on the input $G$. 
	We define the parameter~$t$ of the gadget graph~$\Jpqt$ to be $t=n^4$. We also define a new parameter $s=n+2$. 
	
	For each vertex $v\in V(G)$ we introduce a vertex gadget $G'_v$ which is a graph $\Jpqt$ as given in Figure~\ref{fig:Jpqt}. We denote the corresponding sets $A,B,B',A'$ by $A_v,B_v,B'_v$ and $A'_v$, respectively. We say that two gadgets $G'_u$ and $G'_v$ are adjacent if $u$ and $v$ are adjacent in $G$.
	
	For every edge $e=\{u,v\}\in E(G)$ we introduce an edge gadget as follows. We introduce two 
	size-$s$ independent sets,
	denoted by $S_e$ and $S'_e$. As shown in Figure~\ref{fig:EdgeGadget1} we construct the set of edges
	\[
	E'_e = (\ucp{B_u}{S_e}) \cup (\ucp{B'_u}{S'_e})\cup (\ucp{B_v}{S'_e}) \cup (\ucp{B'_v}{S_e}).
	\]
	
	\begin{figure}[h!]\centering
		{\def\scaleFactor{1}
			\begin{tikzpicture}[scale=.7, every loop/.style={min distance=10mm,looseness=10}]
			
			\draw (-6,2) -- (0,1.5);
			\draw (-6,2) -- (0,-1.5);
			\draw (-6,-2) -- (0,-1.5);
			\draw (-6,-2) -- (0,1.5);
			
			\draw (6,2) -- (0,1.5);
			\draw (6,2) -- (0,-1.5);
			\draw (6,-2) -- (0,-1.5);
			\draw (6,-2) -- (0,1.5);
			
			\draw (-6,-4) -- (0,-4.5);
			\draw (-6,-4) -- (0,-7.5);
			\draw (-6,-8) -- (0,-7.5);
			\draw (-6,-8) -- (0,-4.5);
			
			\draw (6,-4) -- (0,-4.5);
			\draw (6,-4) -- (0,-7.5);
			\draw (6,-8) -- (0,-7.5);
			\draw (6,-8) -- (0,-4.5);
			
			\draw (-6,0) ellipse (1cm and 2cm) [fill=white];
			\draw (0,0) ellipse (1cm and 1.5cm) [fill=white];
			\draw (6,0) ellipse (1cm and 2cm) [fill=white];
			
			\node at (-6,0) {$B_u$};
			\node at (0,0) {$S_e$};
			\node at (6,0) {$B'_v$};
			
			\draw (-6,-6) ellipse (1cm and 2cm) [fill=white];
			\draw (0,-6) ellipse (1cm and 1.5cm) [fill=white];
			\draw (6,-6) ellipse (1cm and 2cm) [fill=white];
			
			\node at (-6,-6) {$B'_u$};
			\node at (0,-6) {$S'_e$};
			\node at (6,-6) {$B_v$};
			\end{tikzpicture}
		}
		\caption{The edge gadget for the edge $e=\{u,v\}$.}
		\label{fig:EdgeGadget1}
	\end{figure}
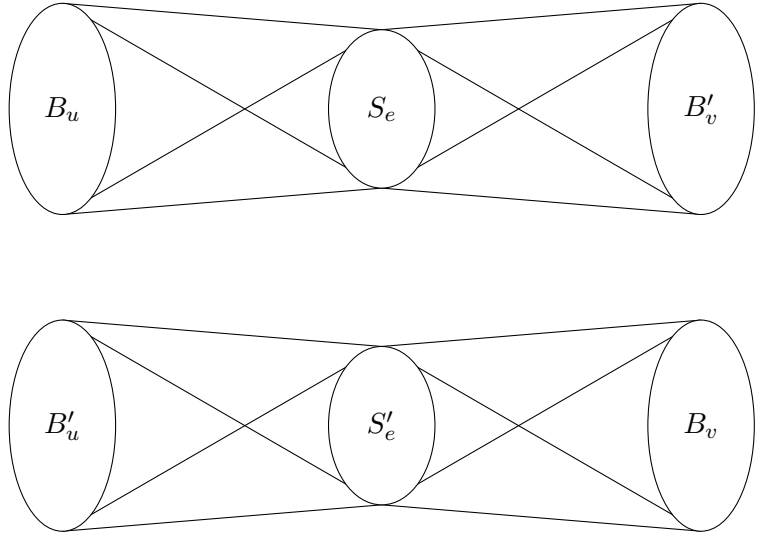
	
	Putting the pieces together, $G'$ is the graph with 
	\[
	V(G')= \bigcup_{v\in V(G)} V(G'_v) \cup \bigcup_{e\in E(G)} \left(S_e\cup S'_e\right)
	\qquad\text{ and }\qquad
	E(G')= \bigcup_{v\in V(G)} E(G'_v) \cup \bigcup_{e\in E(G)} E'_e.
	\]
	
	Let $h$ be a homomorphism from $G$ to $H$, $v$ be some vertex of $G$ and $G'_v$ be the corresponding vertex gadget. Then $h\vert_{V(G'_v)}$ corresponds to a homomorphism from $\Jpqt$ to $H$ and therefore has an $H$-type. 
	
	We say that a homomorphism from $G'$ to $H$ is \emph{full} if its restriction to each vertex gadget is either of type $T_5$ (from Table~\ref{tab:configs1}) or of its symmetric type (let us call it $T'_5$). The cut corresponding to a full homomorphism $h$ partitions $V(G)$ into those vertices $v$ for which $h\vert_{G'_v}$ has type $T_5$ and those for which $h\vert_{G'_v}$ has type $T'_5$. We say that a full homomorphism is \emph{$K$-large} if the size of the corresponding cut is equal to $K$, otherwise we say that the homomorphism is \emph{$K$-small}.
	Consider a full homomorphism $h$ from $G'$ to $H$.
	\begin{itemize}
		\item
		For an edge $e=\{u,v\}$ of $G$ suppose $h\vert_{G'_u}$ has type $T_5$ and $h\vert_{G'_v}$ has type $T'_5$. Note that by the definition of the edge gadget, we have $h(S_e)\subseteq \NH(h(B_u))\cap \NH(h(B'_v))$. Then the vertices in $S_e$ can be mapped to any of $\{r_1,r_2,b\}$, whereas all vertices in $S_e'$ have to be mapped to $b$ (the sole common neighbour of $r_1$, $r_2$, $b$ and the vertices in $\calG$).
		\item
		Suppose instead that $h\vert_{G'_u}$ and $h\vert_{G'_v}$ have the same type $T_5$ or $T_5'$. Then the homomorphism $h$ has to map the vertices in both $S_e$ and $S_e'$ to $b$. 
	\end{itemize}
	Thus, every pair of adjacent gadgets of different types contributes a factor of $3^s$ to the number of full homomorphisms, whereas every pair of adjacent gadgets of the same type only contributes a factor of $1$. Recall the definition of $N(T)$ as the number of homomorphisms from $\Jpqt$ to $H$ that have type $T$. Then for $\ell\ge 1$ every size-$\ell$ cut of $G$ arises in $2\cdot N(T_5)^n\cdot 3^{s\ell}$ ways as a full homomorphism from $G'$ to $H$.
	
	Let $L$ be the number of solutions to $\largecut$ with input $G$ and $K$ (our goal is to approximate this number). We partition the homomorphisms from  $G'$ to $H$ into three different sets. $Z^*$ is the number of $K$-large (full) homomorphisms, $Z_1$ is the number of homomorphisms that are full but $K$-small and $Z_2$ is the number of non-full homomorphisms.
	Then we have $L= Z^*/(2 N(T_5)^n 3^{sK})$ and $\hom{G'}{H} = Z^* +Z_1 +Z_2$. Thus it remains to show that $(Z_1 +Z_2)/(2 N(T_5)^n 3^{sK})\le 1/4$  for our choice of $p$, $q$, $t$ and $s$. Under this assumption we then have $\displaystyle\hom{G'}{H}/(2 N(T_5)^n 3^{sK}) \in [L,L+1/4]$ and a single oracle call to determine $\hom{G'}{H}$ with precision $\delta=\eps/21$ suffices to determine $L$ with the sought-for precision as demonstrated in~\cite{DGGJApprox}.
	
	Now we prove $(Z_1 +Z_2)/(2 N(T_5)^n 3^{sK})\le 1/4$.
	As there are at most $2^n$ ways to assign a type $T_5$ or $T'_5$ to the $n$ vertex gadgets in $G'$ we have $Z_1 \le 2^n \cdot N(T_5)^n\cdot 3^{s(K-1)}$.
	Then we obtain the following bound since $s=n+2$:
	\[
	\frac{Z_1}{2 N(T_5)^n 3^{sK}}\le \frac{2^n N(T_5)^n 3^{s(K-1)}}{2 N(T_5)^n 3^{sK}} = \frac{2^n}{2\cdot 3^s}\le \frac{1}{8}.
	\]
	We obtain a similar bound for $Z_2$: From Lemmas~\ref{lem:maximalTypeDominates},~\ref{lem:tableOfMaxTypes1} and~\ref{lem:T5dominance} we know that for our choice of $p$ and $q$ there exists $\gamma\in(0,1)$ such that for every $H$-type $T$ that is not $T_5$ or $T_5'$ we have $\hatN(T)\le \gamma^t\hatN(T_5)$. Using Lemma~\ref{lem:hatN} this gives $N(T) \le 2\gamma^t N(T_5)$ for sufficiently large $t$ with respect to $p$, $q$ and $k_1$ (which do not depend on the input $G$). Since $t=n^4$ we can assume that $t$ is sufficiently large with respect to $p$ and $q$ as otherwise the input size is bounded by a constant (in which case we can solve $\largecut$ in constant time).
	
	For each $H$-type $T=(T_1,T_2,T_3)$, the cardinality of each set~$T_i$
	is bounded above by $\max\{\abs{V(H)},2\abs{E(H)}\}=12+2k_1$ and hence there are at most $\left(2^{12+2k_1}\right)^3$ different types. Furthermore, as $H$ has $3+k_1$ vertices, there are at most $(3+k_1)^{2sn^2}$ possible functions from the at most $2sn^2$ vertices in $\bigcup_{e\in E(G)} (S_e\cup S'_e)$ to vertices in $H$. Since $t=n^4$ and $s=n+2$ we obtain
	\begin{align*}
	\frac{Z_2}{2 N(T_5)^n 3^{sK}}
	&\le \frac{\left(2^{12+2k_1}\right)^{3n}\cdot N(T_5)^{n-1}\cdot 2\gamma^tN(T_5)\cdot(3+k_1)^{2sn^2}}{2 N(T_5)^n 3^{sK}} \\
	&= \gamma^t\cdot\frac{\left(2^{12+2k_1}\right)^{3n}(3+k_1)^{2sn^2}}{3^{sK}}\le \frac{1}{8}.
	\end{align*}
	The last inequality holds 
	for sufficiently large~$n$
	as \[\frac{\left(2^{12+2k_1}\right)^{3n}(3+k_1)^{2sn^2}}{3^{sK}}\le C^{n^3}\] for some constant $C$ that only depends on $H$,
	but not on the input $G$, whereas $t=n^4$.
	{\bf (End of Case 1)}
	
	\bigskip
	\noindent {\bf\boldmath Case 2: $k_1 \ge 7$.}
	We will show that in this case we can apply Lemma~\ref{lem:Kelk5.1} to obtain $\sat \leap \Hom{H}$.
	We have $F(H)=\{b\}$. Therefore, $\abs{F(H)}\cdot \abs{V(H)}=\abs{V(H)} = 3+k_1 \ge 10$.
	Let $(S,T)$ be a pair with $\emptyset \subseteq S,T \subseteq V(H)$ satisfying $S\subseteq \Nb{H}(T)$, $T\subseteq \Nb{H}(S)$ and both $S\neq\{b\}$ and $T\neq \{b\}$ to meet the requirements of Lemma~\ref{lem:Kelk5.1}. We have to show that
	$\abs{S}\cdot \abs{T} < \abs{F(H)}\cdot \abs{V(H)}=3+k_1$. Note that for every vertex $u\neq b$ of $H=\X{k_1}{0}{1}$ it holds that $\abs{\NH(u)}\le 3$. Therefore, $\abs{S} \le\abs{\NH(T)} \le 3$ and analogously $\abs{T} \le\abs{\NH(S)} \le 3$. Hence
	\[
	\abs{S}\cdot \abs{T} \le 9 < 10\le 3+k_1.
	\] 
	{\bf (End of Case 2)}
\end{proof}

\begin{lem}\label{lem:hardNeighbourhood3}
	If $k_1\ge 3$, then $\sat \leap \Hom{\X{k_1}{1}{1}}$.
\end{lem}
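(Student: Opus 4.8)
The plan is to mirror, almost line for line, the proof of Lemma~\ref{lem:hardNeighbourhood2}, splitting on the size of $k_1$ and using the dominance of type $T_9$ from Lemma~\ref{lem:T9dominance} in the interesting range.

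\textbf{Case $k_1\in\{3,4,5,6\}$.} I would reduce from $\largecut$ using exactly the gadget construction of Case~1 of Lemma~\ref{lem:hardNeighbourhood2}: for each vertex of the input graph~$G$ on $n$ vertices take a copy of $\Jpqt$ as a vertex gadget, with $p,q$ chosen to satisfy~\eqref{equ:T9fracqp1} and $t=n^4$, and for each edge of~$G$ take the edge gadget of Figure~\ref{fig:EdgeGadget1} with independent sets $S_e,S'_e$ of size $s=n+2$. Call a homomorphism \emph{full} if its restriction to every vertex gadget has type $T_9$ of Table~\ref{tab:configs2} or the symmetric type $T_9'$. The one genuinely new computation is that the pair $(T_9,T_9')$ is anti-ferromagnetic across an edge gadget. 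Since $B(T_9)=\{r_1,r_2,b\}$ and $B'(T_9)=\{r_1,r_2,b,c\}\cup\calG$, one has $\NH(\{r_1,r_2,b\})=\{r_1,r_2,b\}$ while $\NH(\{r_1,r_2,b,c\}\cup\calG)=\{b\}$ (because $\NH(c)=\{b,c\}$ and $\NH(g_j)=\{b\}$). Tracing the four edge sets $\ucp{B_u}{S_e}$, $\ucp{B'_u}{S'_e}$, $\ucp{B_v}{S'_e}$, $\ucp{B'_v}{S_e}$ then shows: if the two incident vertex gadgets carry the same type, both $S_e$ and $S'_e$ are forced onto $\{b\}$, contributing a factor $1$; if they carry opposite types, exactly one of $S_e,S'_e$ may map into $\{r_1,r_2,b\}$ while the other is forced onto $\{b\}$, contributing a factor $3^s$. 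Hence a size-$\ell$ cut of~$G$ arises in $2\,N(T_9)^n 3^{s\ell}$ ways as a full homomorphism, exactly as in Lemma~\ref{lem:hardNeighbourhood2}.

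From here the argument is a transcription of Case~1 of Lemma~\ref{lem:hardNeighbourhood2}. Partition $\hom{G'}{H}$ into $Z^*$ (full and $K$-large), $Z_1$ (full and $K$-small) and $Z_2$ (non-full), so that $L=Z^*/(2N(T_9)^n3^{sK})$ where $L$ is the number of maximum cuts. Bound $Z_1/(2N(T_9)^n3^{sK})\le 1/8$ using $s=n+2$, and bound $Z_2/(2N(T_9)^n3^{sK})\le 1/8$ using Lemmas~\ref{lem:maximalTypeDominates},~\ref{lem:tableOfMaxTypes2} and~\ref{lem:T9dominance} together with Lemma~\ref{lem:hatN}: every maximal type $T\neq T_9,T_9'$ satisfies $N(T)\le 2\gamma^tN(T_9)$ for some fixed $\gamma\in(0,1)$ and large enough~$t$, and $t=n^4$ dominates the at most $C^{n^3}$ further choices (for a constant $C$ depending only on~$H$). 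A single oracle call to $\Hom{H}$ with precision $\eps/21$ then recovers~$L$ with precision~$\eps$, as in~\cite{DGGJApprox}.

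\textbf{Case $k_1\ge 7$.} Here I would apply Lemma~\ref{lem:Kelk5.1} to $H=\X{k_1}{1}{1}$. We have $V(H)=\{b,r_1,r_2,c,g_1,\dots,g_{k_1}\}$, and $b$ is the unique vertex adjacent to all of $V(H)$, so $F(H)=\{b\}$ and $\abs{F(H)}\cdot\abs{V(H)}=4+k_1\ge 11$. Moreover $\abs{\NH(u)}\le 3$ for every $u\neq b$ (indeed $\NH(r_1)=\NH(r_2)=\{r_1,r_2,b\}$, $\NH(c)=\{b,c\}$, $\NH(g_j)=\{b\}$). So for any pair $(S,T)$ with $S\subseteq\NH(T)$, $T\subseteq\NH(S)$, $S\neq\{b\}$ and $T\neq\{b\}$: either one of $S,T$ is empty, giving $\abs{S}\cdot\abs{T}=0$; or $T$ contains a vertex other than $b$, forcing $\abs{S}\le 3$, and symmetrically $\abs{T}\le 3$. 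In every case $\abs{S}\cdot\abs{T}\le 9<11\le\abs{F(H)}\cdot\abs{V(H)}$, so condition~\eqref{eq:Kelk3} of Lemma~\ref{lem:Kelk5.1} holds and $\sat\leap\Hom{H}$.

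\textbf{Main obstacle.} The numerical dominance of $T_9$ is already handled by Lemma~\ref{lem:T9dominance}, so the only new work is the edge-gadget bookkeeping for the pair $(T_9,T_9')$ in Case~1: checking that the interaction is anti-ferromagnetic with the correct weight $3^s$ and that no assignment of dominant types to adjacent gadgets slips in an unintended extra factor. Everything else follows the blueprint of Lemma~\ref{lem:hardNeighbourhood2} verbatim.
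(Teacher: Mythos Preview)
Your proposal is correct and follows essentially the same approach as the paper's own proof: Case~1 for $k_1\in\{3,4,5,6\}$ via the $\largecut$ reduction with the $\Jpqt$ vertex gadget, the edge gadget of Figure~\ref{fig:EdgeGadget1}, and the $T_9$-dominance of Lemma~\ref{lem:T9dominance}; Case~2 via Lemma~\ref{lem:Kelk5.1}. The only cosmetic difference is that the paper takes Case~2 at $k_1\ge 6$ (so $4+k_1\ge 10>9$) whereas you start it at $k_1\ge 7$; since $k_1=6$ is already covered by Case~1, both versions are complete.
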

\begin{proof}
	This proof is very similar to the proof of Lemma~\ref{lem:hardNeighbourhood2}. We give the details for the sake of completeness.
	As before, we make a case distinction depending on $k_1$. The first case is about the dominance of the type $T_9$ from Table~\ref{tab:configs2} for $k_1 \in \{3,4,5,6\}$ as shown in Lemma~\ref{lem:T9dominance}. Otherwise, we use Lemma~\ref{lem:Kelk5.1}. 
	
	\bigskip
	\noindent {\bf\boldmath Case 1: $k_1 \in \{3,4,5,6\}$.}
	We use a reduction from $\largecut$ ($\sat$-hard by~\cite{DGGJApprox}). Let $G$ and $K$ be an input to $\largecut$, $n$ be the number of vertices of $G$ and $\eps\in (0,1)$ be the parameter of the desired precision. To shorten notation let $H=\X{k_1}{1}{1}$. We assume that the vertices of $H$ are labelled as in Figure~\ref{fig:2specialNeighbourhoods} (on the right).
	
	Let $p$, $q$ be positive integers that fulfil \eqref{equ:T9fracqp1}. Note that $p$ and $q$ only depend on $k_1$ which is a parameter of $H$ and therefore does not depend on the input $G$. 
	We will define the parameter~$t$ of the gadget graph~$J$ to be $t=n^4$. We also define a new parameter $s=n+2$. 
	
	For each vertex $v\in V(G)$ we introduce a vertex gadget $G'_v$ which is a graph $\Jpqt$ as given in Figure~\ref{fig:Jpqt}. We denote the corresponding sets $A,B,B',A'$ by $A_v,B_v,B'_v$ and $A'_v$, respectively. We say that two gadgets $G'_u$ and $G'_v$ are adjacent if $u$ and $v$ are adjacent in $G$. For every edge $e=\{u,v\}\in E(G)$ we use exactly the same edge gadget as in the proof of Lemma~\ref{lem:hardNeighbourhood2} (see Figure~\ref{fig:EdgeGadget1}). 
	$G'$ is the graph with 
	\[
	V(G')= \bigcup_{v\in V(G)} V(G'_v) \cup \bigcup_{e\in E(G)} \left(S_e\cup S'_e\right)
	\qquad\text{ and }\qquad
	E(G')= \bigcup_{v\in V(G)} E(G'_v) \cup \bigcup_{e\in E(G)} E'_e.
	\]
	
	Let $h$ be a homomorphism from $G$ to $H$, $v$ be some vertex of $G$ and $G'_v$ be the corresponding vertex gadget. Then $h\vert_{V(G'_v)}$ corresponds to a homomorphism from $\Jpqt$ to $H$ and therefore has an $H$-type. 
	
	We say that a homomorphism from $G'$ to $H$ is \emph{full} if its restriction to each vertex gadget is either of type $T_9$ (from Table~\ref{tab:configs2}) or of its symmetric type (let us call it $T'_9$). The cut corresponding to a full homomorphism $h$ partitions $V(G)$ into those vertices $v$ for which $h\vert_{G'_v}$ has type $T_9$ and those for which $h\vert_{G'_v}$ has type $T'_9$. We say that a full homomorphism is \emph{$K$-large} if the size of the corresponding cut is equal to $K$, otherwise we say that the homomorphism is \emph{$K$-small}.
	Consider a full homomorphism $h$ from $G'$ to $H$.
	\begin{itemize}
		\item
		For an edge $e=\{u,v\}$ of $G$ suppose $h\vert_{G'_u}$ has type $T_9$ and $h\vert_{G'_v}$ has type $T'_9$. Note that by the definition of the edge gadget, we have $h(S_e)\subseteq \NH(h(B_u))\cap \NH(h(B'_v))$. Then the vertices in $S_e$ can be mapped to any of $\{r_1,r_2,b\}$, whereas all vertices in $S_e'$ have to be mapped to $b$ (the sole common neighbour of $r_1$, $r_2$, $b$, $c$ and the vertices in $\calG$).
		\item
		Suppose instead that $h\vert_{G'_u}$ and $h\vert_{G'_v}$ have the same type $T_9$ or $T_9'$. Then the homomorphism $h$ has to map the vertices in both $S_e$ and $S_e'$ to $b$. 
	\end{itemize}
	Thus, every pair of adjacent gadgets of different types contributes a factor of $3^s$ to the number of full homomorphisms, whereas every pair of adjacent gadgets of the same type only contributes a factor of $1$. Recall the definition of $N(T)$ as the number of homomorphisms from $\Jpqt$ to $H$ that have type $T$. Then for $\ell\ge 1$ every size-$\ell$ cut of $G$ arises in $2\cdot N(T_9)^n\cdot 3^{s\ell}$ ways as a full homomorphism from $G'$ to $H$.
	
	Let $L$ be the number of solutions to $\largecut$ with input $G$ and $K$ (our goal is to approximate this number). We partition the homomorphisms from  $G'$ to $H$ into three different sets. $Z^*$ is the number of $K$-large (full) homomorphisms, $Z_1$ is the number of homomorphisms that are full but $K$-small and $Z_2$ is the number of non-full homomorphisms.
	Then we have $L= Z^*/(2 N(T_9)^n 3^{sK})$ and $\hom{G'}{H} = Z^* +Z_1 +Z_2$. Thus it remains to show that $(Z_1 +Z_2)/(2 N(T_9)^n 3^{sK})\le 1/4$  for our choice of $p$, $q$, $t$ and $s$. Under this assumption we then have $\displaystyle\hom{G'}{H}/(2 N(T_9)^n 3^{sK}) \in [L,L+1/4]$ and a single oracle call to determine $\hom{G'}{H}$ with precision $\delta=\eps/21$ suffices to determine $L$ with the sought-for precision as demonstrated in~\cite{DGGJApprox}.
	
	Now we prove $(Z_1 +Z_2)/(2 N(T_9)^n 3^{sK})\le 1/4$.
	As there are at most $2^n$ ways to assign a type $T_9$ or $T'_9$ to the $n$ vertex gadgets in $G'$ we have $Z_1 \le 2^n \cdot N(T_9)^n\cdot 3^{s(K-1)}$.
	Then we obtain the following bound since $s=n+2$:
	\[
	\frac{Z_1}{2 N(T_9)^n 3^{sK}}\le \frac{2^n N(T_9)^n 3^{s(K-1)}}{2 N(T_9)^n 3^{sK}} = \frac{2^n}{2\cdot 3^s}\le \frac{1}{8}.
	\]
	We obtain a similar bound for $Z_2$: From Lemmas~\ref{lem:maximalTypeDominates},~\ref{lem:tableOfMaxTypes2} and~\ref{lem:T9dominance} we know that for our choice of $p$ and $q$ there exists $\gamma\in(0,1)$ such that for every $H$-type $T$ that is not $T_9$ or $T_9'$ we have $\hatN(T)\le \gamma^t\hatN(T_9)$. Using Lemma~\ref{lem:hatN} this gives $N(T) \le 2\gamma^t N(T_9)$ for sufficiently large $t$ with respect to $p$, $q$ and $k_1$. Since $t=n^4$ we can assume that $t$ is sufficiently large with respect to $p$ and $q$ as otherwise the input size is bounded by a constant (in which case we can solve $\largecut$ in constant time).
	
	For each $H$-type $T=(T_1,T_2,T_3)$, the cardinality of each set~$T_i$
	is bounded above by $\max\{\abs{V(H)},2\abs{E(H)}\}=16+2k_1$ and hence there are at most $\left(2^{16+2k_1}\right)^3$ different types. Furthermore, as $H$ has $4+k_1$ vertices, there are at most $(4+k_1)^{2sn^2}$ possible functions from the at most $2sn^2$ vertices in $\bigcup_{e\in E(G)} (S_e\cup S'_e)$ to vertices in $H$. Since $t=n^4$ and $s=n+2$ we obtain
	\begin{align*}
	\frac{Z_2}{2 N(T_9)^n 3^{sK}}
	&\le \frac{\left(2^{16+2k_1}\right)^{3n}\cdot N(T_9)^{n-1}\cdot 2\gamma^tN(T_9)\cdot(4+k_1)^{2sn^2}}{2 N(T_9)^n 3^{sK}} \\
	&= \gamma^t\cdot\frac{\left(2^{16+2k_1}\right)^{3n}(4+k_1)^{2sn^2}}{3^{sK}}\le \frac{1}{8}.
	\end{align*}
	The last inequality holds 
	for sufficiently large~$n$
	as \[\frac{\left(2^{16+2k_1}\right)^{3n}(4+k_1)^{2sn^2}}{3^{sK}}\le C^{n^3}\] for some constant $C$ that only depends on $H$,
	but not on the input $G$, whereas $t=n^4$.
	{\bf (End of Case 1)}
	
	\bigskip
	\noindent {\bf\boldmath Case 2: $k_1 \ge 6$.}
	We will show that in this case we can apply Lemma~\ref{lem:Kelk5.1} to obtain $\sat \leap \Hom{H}$.
	We have $F(H)=\{b\}$. Therefore, $\abs{F(H)}\cdot \abs{V(H)}=\abs{V(H)} = 4+k_1 \ge 10$.
	Let $(S,T)$ be a pair with $\emptyset \subseteq S,T \subseteq V(H)$ satisfying $S\subseteq \NH(T)$, $T\subseteq \NH(S)$ and both $S\neq\{b\}$ and $T\neq \{b\}$ to meet the requirements of Lemma~\ref{lem:Kelk5.1}. We have to show that
	$\abs{S}\cdot \abs{T} < \abs{F(H)}\cdot \abs{V(H)}=4+k_1$. Note that for every vertex $u\neq b$ of $H=\X{k_1}{1}{1}$ it holds that $\abs{\NH(u)}\le 3$. Therefore, $\abs{S} \le\abs{\NH(T)} \le 3$ and analogously $\abs{T} \le\abs{\NH(S)} \le 3$. Hence
	\[
	\abs{S}\cdot \abs{T} \le 9 < 10\le 4+k_1.
	\] 
	{\bf (End of Case 2)}
\end{proof}

\begin{lem}\label{lem:hardNeighbourhood4}
	If $k_1\ge 5$, then $\sat \leap \Hom{\X{k_1}{0}{2}}$.
\end{lem}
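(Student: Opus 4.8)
The plan is to deduce this directly from Lemma~\ref{lem:Kelk5.1}; in contrast to the cases $\X{k_1}{0}{1}$ and $\X{k_1}{1}{1}$, the graph $\X{k_1}{0}{2}$ has enough vertices that Kelk's lemma covers the entire range $k_1\ge 5$, so no reduction from $\largecut$ is needed.

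Write $H=\X{k_1}{0}{2}$. Its structure is a looped vertex $b$ adjacent to every vertex, together with $k_1$ unlooped degree-$1$ vertices and two reflexive triangles sharing only the vertex $b$; thus $\abs{V(H)}=k_1+5$. First I would check that $F(H)=\{b\}$: each unlooped pendant has $b$ as its unique neighbour and each triangle vertex other than $b$ has exactly three neighbours, so for $\abs{V(H)}\ge 4$ no vertex other than $b$ is adjacent to all of $V(H)$. Hence $\emptyset\subsetneq F(H)\subsetneq V(H)$, and moreover $\abs{\NH(u)}\le 3$ for every $u\in V(H)\setminus\{b\}$ --- this is the one numerical fact the argument rests on.

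Next I would verify the hypothesis of Lemma~\ref{lem:Kelk5.1}. Let $(S,T)$ satisfy $S\subseteq\NH(T)$ and $T\subseteq\NH(S)$ with $S\ne\{b\}$ and $T\ne\{b\}$; I must show $\abs{S}\cdot\abs{T}<\abs{F(H)}\cdot\abs{V(H)}=k_1+5$. If $S$ or $T$ is empty this is immediate, so assume both nonempty. If $b\notin S$, then for any $u_0\in S$ we get $T\subseteq\NH(S)\subseteq\NH(u_0)$, so $\abs{T}\le 3$; symmetrically $b\notin T$ gives $\abs{S}\le 3$. In the remaining subcase $b\in S$ and $b\in T$: since $S\ne\{b\}$ there is $u_0\in S\setminus\{b\}$ with $T\subseteq\NH(u_0)$, so $\abs{T}\le 3$, and since $T\ne\{b\}$ there is $w_0\in T\setminus\{b\}$ with $S\subseteq\NH(w_0)$, so $\abs{S}\le 3$. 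In every case $\abs{S}\cdot\abs{T}\le 9<k_1+5$ because $k_1\ge 5$, so condition~\eqref{eq:Kelk3} holds whenever neither \eqref{eq:Kelk1} nor \eqref{eq:Kelk2} does. Lemma~\ref{lem:Kelk5.1} then yields $\sat\leap\Hom{H}$.

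There is no real obstacle; the only point requiring care is the case split on membership of $b$ in $S$ and in $T$, making sure the assumptions $S\ne\{b\}$ and $T\ne\{b\}$ genuinely force a non-$b$ vertex into the set whose neighbourhood bounds the other, so that the bound $\abs{\NH(u)}\le 3$ can be applied. (Should some value $k_1\ge 5$ unexpectedly fail to be covered this way --- it does not --- the fallback is the $\largecut$ reduction of Lemmas~\ref{lem:hardNeighbourhood2} and~\ref{lem:hardNeighbourhood3}: build the table of maximal $H$-types for homomorphisms from $\Jpqt$ to $H$, isolate a dominant, anti-ferromagnetically interacting type, and apply Lemmas~\ref{lem:maximalTypeDominates} and~\ref{lem:hatN}.)
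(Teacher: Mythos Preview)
Your proposal is correct and takes essentially the same approach as the paper: both apply Lemma~\ref{lem:Kelk5.1} directly, observing $F(H)=\{b\}$, $\abs{V(H)}=k_1+5\ge 10$, and $\abs{\NH(u)}\le 3$ for all $u\neq b$ to conclude $\abs{S}\cdot\abs{T}\le 9<k_1+5$. Your case split on whether $b\in S$ and $b\in T$ is more explicit than the paper's argument (which simply notes that $T\neq\{b\}$ forces $\NH(T)\subseteq\NH(u)$ for some $u\neq b$), but the substance is identical.
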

\begin{proof}
	Let $k_1\ge 5$. To shorten notation let $H=\X{k_1}{0}{2}$. Again we use Lemma~\ref{lem:Kelk5.1} to obtain $\sat \leap \Hom{H}$. We have $F(H)=\{b\}$ and $\abs{F(H)}\cdot \abs{V(H)}=\abs{V(H)} = 5+k_1 \ge 10$.
	The remainder of the proof is identical to Case 2 in the proof of Lemma~\ref{lem:hardNeighbourhood3}.
\end{proof}

\begin{lem} \label{lem:degree2bristle}
	Let $H$ be a graph and $b\in V(H)$ be a looped vertex with an unlooped neighbour $g\in V(H)$. If $\abs{\NH(g)}\ge 2$ ($g$ has at least $2$ neighbours in $H$) and for all $u\in \NH(b)\setminus\{g\}$ we have $\abs{\NH(u)\cap \NH(g)} = 1$ ($b$ is the only common neighbour of $g$ and $u$), then
	$\sat \leap \Ret{H}$.
\end{lem}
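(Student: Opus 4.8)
The plan is to reduce from approximately counting (weighted) independent sets, which is $\sat$-hard, via a vertex-gadget construction forcing homomorphisms to behave like hard-core configurations on the edge $\{b,g\}$, which carries a loop only at $b$.

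First I would unpack the hypothesis. Since $b$ is looped we have $b\in\NH(b)\setminus\{g\}$, so taking $u=b$ yields $\NH(b)\cap\NH(g)=\{b\}$; writing $H'=H[\NH(b)]$, this means $g$ has degree $1$ in $H'$, with unique neighbour $b$. As $\deg_H(g)\ge2$, the vertex $g$ additionally has a neighbour $g'\notin\NH(b)$, i.e.\ a vertex at distance exactly $2$ from $b$, and $H[\{b,g,g'\}]$ is a path of length two carrying a loop at $b$ (and possibly one at $g'$), which has $\sat$-hard retraction counting by Theorem~\ref{thm:RetGirth5} --- but this is only motivation, since $\{b,g,g'\}$ need not be a neighbourhood, so the real content is to transport that hardness up to $\Ret{H}$. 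The two facts I actually use are: for $z\in\NH(b)$, the quantity $\abs{\NH(z)\cap\NH(g)}$ is $\deg_H(g)\,(\ge2)$ if $z=g$ and is $1$ otherwise (the latter is the hypothesis, using $b\in\NH(z)\cap\NH(g)$); and no $u\in\NH(b)\setminus\{b,g\}$ is universal in $H'$, since such a $u$ would be looped and adjacent to $g$, hence a common neighbour of $g$ and $u$ distinct from $b$. Therefore $\deg_{H'}(u)\le\deg_H(b)-1$ for all such $u$, whereas $\deg_{H'}(b)=\deg_H(b)$ and $\deg_{H'}(g)=1$.

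For the reduction, given an $n$-vertex graph $G$ (an instance of counting independent sets weighted by an activity $\lambda$), I would construct $(G',\boldS)$ for $\Ret{H}$: add pinned vertices $p_b,p_g$ with lists $\{b\},\{g\}$; for each $v\in V(G)$ (given list $V(H)$) add the edge $\{v,p_b\}$, which forces $h(v)\in\NH(b)$ and discards every image outside $\NH(b)$; attach to $v$ an independent set $I_v$ of size $s_1$ whose vertices are joined to both $v$ and $p_g$, and an independent set $K_v$ of size $s_2$ whose vertices are joined to both $v$ and $p_b$; and keep the edges of $G$. A homomorphism from $G'$ to $H$ restricts on $V(G)$ to a homomorphism $h\colon G\to H'$, and conversely each such $h$ has, at each $v$, exactly $\abs{\NH(h(v))\cap\NH(g)}^{s_1}\cdot\deg_{H'}(h(v))^{s_2}$ extensions to the attached gadget, so
\[
\hom{(G',\boldS)}{H}=\sum_{h\colon G\to H'}\ \prod_{v\in V(G)}a\bigl(h(v)\bigr),\qquad a(z):=\abs{\NH(z)\cap\NH(g)}^{s_1}\deg_{H'}(z)^{s_2}.
\]
By the structural facts, $a(b)=\deg_H(b)^{s_2}$, $a(g)=\deg_H(g)^{s_1}$, and $a(u)\le(\deg_H(b)-1)^{s_2}<a(b)$ for every other $u\in\NH(b)$.

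It remains to choose $s_1$ and $s_2$, polynomially bounded in $n$. Taking $s_2$ large makes homomorphisms that use any $u\in\NH(b)\setminus\{b,g\}$ negligible: recolouring all such images to $b$ is legitimate since $b$ is universal in $H'$, and increases the product by at least a factor $(\deg_H(b)/(\deg_H(b)-1))^{s_2}$; so by the usual bookkeeping $\hom{(G',\boldS)}{H}$ equals, up to a factor $1+o(1)$ that can be pushed below any prescribed precision, $\deg_H(b)^{s_2 n}\sum_{S}\lambda^{\abs{S}}$ over independent sets $S$ of $G$, with $\lambda=\deg_H(g)^{s_1}/\deg_H(b)^{s_2}$. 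Since $\deg_H(g),\deg_H(b)\ge2$, one can tune $s_1,s_2$ so that $\lambda$ lands in a regime where approximately counting weighted independent sets is $\sat$-hard --- if $\log\deg_H(g)/\log\deg_H(b)$ is rational one may force $\lambda=1$ exactly and reduce from unweighted $\#\mathrm{IS}$~\cite{DGGJApprox}, and otherwise a Diophantine-approximation argument in the spirit of~\cite{Schmidt1991} places $\lambda$ in the hard range of~\cite{GJPTwoSpin} --- and recovering the partition function from an oracle for $\Ret{H}$ by the standard precision-tracking argument~\cite{DGGJApprox} yields $\sat\leap\Ret{H}$. I expect the crux to be exactly this tuning: $s_1,s_2$ must be large enough to suppress the $\NH(b)\setminus\{b,g\}$ images yet simultaneously keep $\lambda$ away from the degenerate value and inside the known hard regime --- which is where both hypotheses are used, $\deg_H(g)\ge2$ making $g$ boostable so that the $b$- and $g$-weights are comparable rather than $b$ swamping everything, and the unique-common-neighbour condition giving $a(z)$ the three clean forms that let the $g$-pin gadget isolate $g$ within $\NH(b)$.
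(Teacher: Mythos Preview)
Your structural observations are right: $g$ is a degree-$1$ bristle in $H'=H[\NH(b)]$, and $b$ is the unique universal vertex there. The $I_v$-gadget (joined to $v$ and the $g$-pin) correctly produces the weight $\abs{\NH(h(v))\cap\NH(g)}^{s_1}$, which singles out $g$. The difficulty is your second gadget $K_v$ and the resulting tuning of $\lambda$. To suppress images in $\NH(b)\setminus\{b,g\}$ you need $s_2$ to grow at least linearly in $n$; but then $\lambda=\deg_H(g)^{s_1}/\deg_H(b)^{s_2}$ depends on $n$, and you must reduce from a \emph{fixed} $\sat$-hard problem. In the rational case you can take $(s_1,s_2)=k(q,p)$ and keep $\lambda=1$ for all $k$, fine. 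In the irrational case your Dirichlet appeal does not do what you need: Dirichlet (or continued fractions) yields good approximations with \emph{small} denominator, not large, and scaling a near-solution by $k$ sends $\lambda$ to $\lambda^k\to 0$ or $\infty$. Choosing $s_2\approx cn$ and $s_1=\lfloor s_2\alpha\rfloor$ pins $\lambda$ only to the moving interval $[\deg_H(g)^{-1},\deg_H(g)]$, and you have not explained how to reduce from a single $\sat$-hard problem to weighted \#IS at an input-dependent activity; the references you cite treat $\lambda$ as a fixed constant.

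The paper sidesteps this entirely by dropping your $K_v$-gadget. It keeps only the $g$-pin gadget with a \emph{constant} size $s=2\lceil\log_{\abs{\NH(g)}}\abs{\NH(b)}\rceil$, so each vertex of $\NH(b)$ gets weight $1$ except $g$, which gets weight $\abs{\NH(g)}^{s}$. This weighted $H[\NH(b)]$ is realised as an \emph{unweighted fixed} graph $H'$ (replace $g$ by $\abs{\NH(g)}^{s}$ bristles), and Lemma~\ref{lem:Kelk5.1} applies directly: $F(H')=\{b\}$, any $(S,T)$ with $S,T\neq\{b\}$ avoids the bristles, so $\abs{S}\cdot\abs{T}\le\abs{\NH(b)}^{2}\le\abs{\NH(g)}^{s}<\abs{V(H')}$. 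Thus $\sat\leap\Hom{H'}\leap\Ret{H'}$, and your (exact) $I_v$-gadget identity gives $\Ret{H'}\leap\Ret{H}$ with no approximation or tuning at all. In short, the key missing idea is that you do not need to collapse to a two-state model; weighting $g$ alone by a constant already makes the whole neighbourhood graph satisfy a known $\sat$-hardness criterion.
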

\begin{proof}
	Let $H'$ be the graph obtained by replacing the vertex $g$ in $H[\NH(b)]$ by an independent set $I$ of size $\abs{\NH(g)}^{s}$, where $s=2\ceil{\log_{\abs{\NH(g)}}(\abs{\NH(b)})}$. This is well-defined as $\abs{\NH(g)}>1$. The choice of $s$ will become clear in a moment. Within the graph $H[\NH(b)]$, $g$ is adjacent only to $b$ by the assumption that $\abs{\NH(b)\cap \NH(g)}=1$. Therefore, each vertex in $I$ shares an edge only with $b$. The transformation is depicted in Figure~\ref{fig:TrafoWeightedToUnweighted}. 
	
	\begin{figure}[ht]
		\centering
		\begin{minipage}{.45 \textwidth}
			\centering
			\begin{tikzpicture}[scale=1, baseline=0.36cm, every loop/.style={min distance=10mm,looseness=10}]
			
			\draw (0,0) circle[x radius=2,y radius=1];
			
			\filldraw (0,-1) node(b){} circle[radius=3pt]  -- (-90:2.6cm) node(g){} circle[radius=3pt];
			
			
			\path[-] (b.center) edge  [in=-240,out=-300,loop] node {} ();

			\node at ($(0,0)$) {$\NH(b)\setminus \{g\}$};
			\node at ($(b)+(.25cm,-.25cm)$) {$b$};
			\node at ($(g)+(0cm,-.5cm)$) {$g$};
			
			\end{tikzpicture}
		\end{minipage}
		\begin{minipage}{.45 \textwidth}
			\centering
			\begin{tikzpicture}[scale=1, baseline=0.36cm, every loop/.style={min distance=10mm,looseness=10}]
			
			\draw (0,0) circle[x radius=2,y radius=1];
			
			\filldraw (0,-1) node(b){} circle[radius=3pt]  -- (-120:3cm) node(g1){} circle[radius=3pt];
			\filldraw (b.center)  -- (-60:3cm) node(gk){} circle[radius=3pt];
			
			
			\path[-] (b.center) edge  [in=-240,out=-300,loop] node {} ();

			\node at ($(0,0)$) {$\NH(b)\setminus \{g\}$};
			\node at ($(b)+(0cm,-.5cm)$) {$b$};
			\node at ($(g1)+(-1cm,-.5cm)$) {$I:$};
			\node at ($(g1)+(0cm,-.5cm)$) {$1$};
			\node at ($(gk)+(0cm,-.5cm)$) {$\displaystyle \abs{\NH(g)}^{s}$};
			
			\coordinate (pmid) at ($(b)+(-90:1.6cm)$);
			\coordinate (cdot1) at ($(pmid)+(-.3cm,0cm)$);
			\coordinate (cdot2) at ($(pmid)+(0cm,0cm)$);
			\coordinate (cdot3) at ($(pmid)+(.3cm,0cm)$);
			\fill (cdot1) circle[radius=1.5pt];
			\fill (cdot2) circle[radius=1.5pt];
			\fill (cdot3) circle[radius=1.5pt];
			
			\end{tikzpicture}
		\end{minipage}
		\caption{$H[\NH(b)]$ on the left and $H'$ on the right.}
		\label{fig:TrafoWeightedToUnweighted}
	\end{figure}
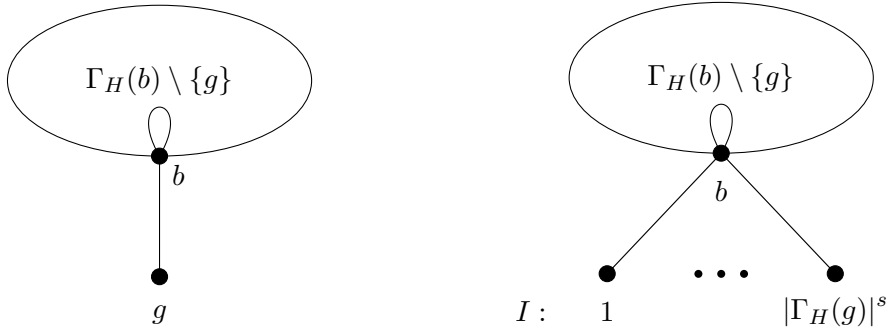
	
	First, we will show $\sat$-hardness for $\Ret{H'}$ and we will apply Lemma~\ref{lem:Kelk5.1} to achieve this. Let us check that the requirements are met. First note that $F(H')=\{b\}$. Now we have to show that, for every pair $(S,T)$ with $\emptyset \subseteq S,T \subseteq V(H')$ satisfying $S\subseteq \Nb{H'}(T)$, $T\subseteq \Nb{H'}(S)$ and both $S\neq\{b\}$ and $T\neq \{b\}$ it holds that
	$\abs{S}\cdot \abs{T} < \abs{F(H')}\cdot \abs{V(H')}=\abs{V(H')}$. First note that if there exists a vertex $u\in I\cap S$ then $T=\{b\}$ as $b$ is the only neighbour of $u$ in $H'$. Hence $I\cap S=\emptyset$. By the same reasoning it holds that $I\cap T=\emptyset$. Then $\abs{S},\abs{T} \le \abs{\Nb{H'}(b)\setminus I}\le \abs{\NH(b)}$ and, by our choice $s=2\ceil{\log_{\abs{\NH(g)}}(\abs{\NH(b)})}$, we can conclude that
	\[
	\abs{S}\cdot \abs{T} \le \abs{\NH(b)}^2 \le \abs{\NH(g)}^{s} < \abs{V(H')} = \abs{F(H')}\cdot  \abs{V(H')}.
	\]
	This proves $\sat \leap \Ret{H'}$.
	
	To complete the proof of the lemma we show the following claim.
	
	\bigskip
	\noindent{\bf\boldmath Claim: $\Ret{H'}\leap \Ret{H}$.}
	\smallskip
	
	\noindent {\it Proof of the claim:}\quad
	
	Let $(G,\boldS)$ be an input to $\Ret{H'}$.
	Let $w$ be a weight function on the vertices in $\NH(b)$ with $w(g)=\abs{\NH(g)}^{s}$ and $w(u)=1$ for all $u\in \NH(b)\setminus \{g\}$. By the construction of $H'$ it is standard that
	\begin{equation}\label{equ:degree2bristle1}
	\hom{(G,\boldS)}{H'}=\sum_{h\in \calH((G,\boldS),H[\NH(b)])}\prod_{v\in V(G)} w(h(v)).
	\end{equation}
	
	Now consider the vertices in $\NH(b)$. For $u\in \NH(b)$, let $w'(u)$ be the number of common neighbours of $u$ and $g$ in $H$. (It is essential that we regard all neighbours in $H$, not just the neighbours in $\NH(b)$.)
	By definition $w'(u)=\abs{\NH(g)}$ if $u=g$ and, by the assumptions of this lemma, $w(u)=1$ if $u\in \NH(b)\setminus\{g\}$. Hence, for all $u\in \NH(b)$ we have
	\begin{equation}\label{equ:degree2bristle2}
	w(u)=w'(u)^s.
	\end{equation}
	
	Intuitively, we will use the fact that $g$ has larger ``weight'' $w'$ compared to the other vertices in $\NH(b)$ to ``boost'' the vertex $g$ and make it more likely (by a factor of $\abs{\NH(g)}^s$) to be used in a homomorphism to $H[\NH(b)]$. 
	
	Here are the details: Let $(G,\boldS)$ be an input to $\Ret{H'}$. We construct a graph $G'$ from $G$ in the following way. 
	Let $\beta$ and $\gamma$ be vertices that are distinct from the vertices in $G$. Intuitively, $\beta$ and $\gamma$ will serve as ``pins'' to $b$ and $g$, respectively. In addition, for each $v\in V(G)$, we introduce a independent set $I_v$ of size $s$, see Figure~\ref{fig:degree2bristleGadget}. Then $G'$ is the graph with vertices $V(G')=V(G) \cup \{\beta, \gamma\} \cup \bigcup_{v \in V(G)} I_v$ and edges $E(G')=E(G) \cup \bigl(\ucp{V(G)}{\{\beta\}}\bigr)\cup \bigcup_{v\in V(G)} \bigl(\ucp{I_v}{\{v,\gamma\}}\bigr)$.
	
	\begin{figure}[ht]
		\centering
		\begin{tikzpicture}[scale=1, baseline=0.36cm, every loop/.style={min distance=10mm,looseness=10}]

		\draw[fill=white] (0,0) circle[x radius=2,y radius=1];
		
		\coordinate (mid) at (2,1.5);
		\draw (.75,0) node(v){} -- (1,1.5) node(u2){};
		\draw (v.center) -- (2.83,1.22) node(u1){};
		
		\draw ($(mid)+(-.9,.22)$) -- (2,2.5) node(gamma){} -- ($(mid)+(.9,.22)$);
		\draw[fill=white] (mid) circle[x radius=1,y radius=.5];
		\fill (v) circle[radius=3pt];
		\fill (gamma) circle[radius=3pt];

		\filldraw (v.center) -- (0,1.5) node(beta){} circle[radius=3pt];
		
		\node at (-.5,0) {$G$};
		\node at (mid) {$I_v$};
		\node at ($(mid)+(1.25cm,0)$) {$s$};
		\node at ($(v)+(0,-.3cm)$) {$v$};
		\node at ($(beta)+(0,.5cm)$) {$\beta \rightarrow b$};
		\node at ($(gamma)+(0,.5cm)$) {$\gamma \rightarrow g$};
		
		\end{tikzpicture}
		\caption{The vertex gadget used in the construction of $G'$.}
		\label{fig:degree2bristleGadget}
	\end{figure}
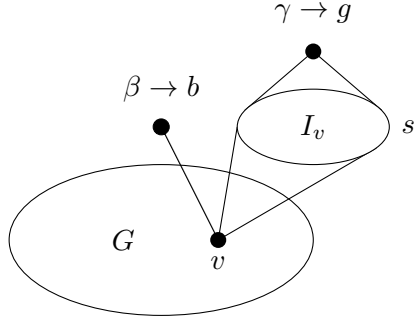

	Consider a homomorphism $h$ from $(G',\boldS')$ to $H$. Then, since every vertex in $V(G)$ is a neighbour of $\beta$ and $h(\beta)=b$, $h\vert_{V(G)}$ is a homomorphism from $(G, \boldS)$ to $H[\NH(b)]$. 
	Furthermore, by the construction of $G'$, for each $v\in V(G)$ and each $u\in I_v$, it holds that $h(u)$ is a common neighbour of $h(v)$ and $g$ in $H$. Recall that there are $w'(h(v))$ such common neighbours. Thus, using~\eqref{equ:degree2bristle1} and~\eqref{equ:degree2bristle2}, we conclude
	\begin{align*}
	\hom{(G',\boldS')}{H} 
	&= \sum_{h\in \calH((G,\boldS),H[\NH(b)])}\prod_{v\in V(G)} w'(h(v))^s\\
	&= \sum_{h\in \calH(G,H[\NH(b)])}\prod_{v\in V(G)} w(h(v))\\
	&=\hom{(G,\boldS)}{H'}.
	\end{align*}
\end{proof}

\subsection{Square-Free Graphs with an Induced {\boldmath$\WR{3}$}}\label{sec:inducedWR3} 
This section can be seen as an extension of Section~\ref{sec:NeighbourhoodOfLoopedVertex} as it essentially shows $\sat$-hardness for graphs of the form $\X{k_1}{k_2}{k_3}$ where $k_2+k_3\ge 3$. Consider a square-free graph $H$ with an induced $\WR{3}$. Suppose that there is an induced $\WR{3}$ such that the neighbourhood of its center $b$ does not contain any triangles. Then $H[\NH(b)]$ is subject to Theorem~\ref{thm:RetGirth5} which shows $\sat \leap \Ret{H[\NH(b)]}$. Then, $\sat \leap \Ret{H}$ by Observation~\ref{obs:PinNeighbourhood}.

However, when considering square-free graphs as opposed to graphs of girth at least $5$, $H[\NH(b)]$ might contain triangles. The smallest open case is displayed in Figure~\ref{fig:weightedWR}.

\begin{figure}[ht]
	\centering
	\begin{tikzpicture}[scale=1, baseline=0.36cm, every loop/.style={min distance=10mm,looseness=10}]

			\filldraw (0,0) node(a){} circle[radius=3pt] --++ (60:1cm) node(b){} circle[radius=3pt] --++ (-60:1cm) node(c){} circle[radius=3pt];
			\filldraw (b.center) --++ (130:1cm) node(d){} circle[radius=3pt];
			\filldraw (b.center) --++ (50:1cm) node(f){} circle[radius=3pt];
			\draw (a.center) -- (c.center);	
			
			\path[-] (a.center) edge  [in=240,out=300,loop] node {} ();
			\path[-] (b.center) edge  [in=30,out=-30,loop] node {} ();
			\path[-] (c.center) edge  [in=240,out=300,loop] node {} ();
			\path[-] (d.center) edge  [in=-240,out=-300,loop] node {} ();
			\path[-] (f.center) edge  [in=-240,out=-300,loop] node {} ();
			
			\node at ($(b)+(-.35cm,0cm)$) {$b$};

	\end{tikzpicture}
	\caption{Smallest square-free graph with induced $\WR{3}$ for which it remains to prove hardness.}
\label{fig:weightedWR}
\end{figure}
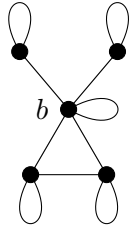

The goal of this section is to prove Lemma~\ref{lem:inducedWR3General}.
\newcommand{\stateleminducedWRGeneral}{Let $H$ be a square-free graph. If $H$ contains a $\WR{3}$ as an induced subgraph then $\sat \leap\Ret{H}$.}
\begin{lem}\label{lem:inducedWR3General}
	\stateleminducedWRGeneral
\end{lem}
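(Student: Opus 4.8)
The plan is to localise the hardness inside the neighbourhood of the centre of an induced $\WR{3}$. Fix an induced copy of $\WR{3}$ in $H$ with centre $b$ and consider $H'=H[\NH(b)]$. Since $b$ is looped we have $b\in\NH(b)$, so $H'$ is connected, and $H'$ is square-free because $H$ is. By Observation~\ref{obs:PinNeighbourhood} we have $\Ret{H'}\leap\Ret{H}$, so throughout it suffices to establish $\sat\leap\Ret{H'}$ (or $\sat\leap\Hom{H'}$, using Observation~\ref{obs:HomToRetToLHom}). The argument then splits according to the structure of $H'$.

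First, if $H$ contains a triangle that is neither reflexive nor irreflexive, Lemma~\ref{lem:hardtriangles1} or Lemma~\ref{lem:hardtriangles2} already gives $\sat\leap\Ret{H}$, so assume $H$ has no mixed triangle. Second, if $H'$ is triangle-free it has girth at least $5$; it is connected, contains an induced $\WR{3}$ (a reflexive claw), and has a looped vertex of degree at least $3$ all of whose neighbours are looped, so $H'$ is neither a disjoint union of the graphs in Theorem~\ref{thm:RetGirth5}(i), nor an irreflexive caterpillar, nor a partially bristled reflexive path; hence Theorem~\ref{thm:RetGirth5}(iii) gives $\sat\leap\Ret{H'}$. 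Third, if $H'$ contains a triangle, then (using square-freeness and the absence of mixed triangles) every triangle of $H'$ passes through $b$ and is reflexive, and no vertex of $\NH(b)\setminus\{b\}$ has more than one neighbour in $\NH(b)\setminus\{b\}$; thus $H'\cong\X{k_1}{k_2}{k_3}$ with $k_3\ge 1$, and $k_2+k_3\ge 3$ because realising the claw needs three pairwise non-adjacent looped leaves of $b$. So the task reduces to proving $\sat\leap\Ret{\X{k_1}{k_2}{k_3}}$ whenever $k_3\ge 1$ and $k_2+k_3\ge 3$.

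For $\X{k_1}{k_2}{k_3}$ I plan to use two regimes. When $n:=|V(\X{k_1}{k_2}{k_3})|=1+k_1+k_2+2k_3\ge 10$, apply Lemma~\ref{lem:Kelk5.1}: here $F(\X{k_1}{k_2}{k_3})=\{b\}$, every vertex other than $b$ has degree at most $3$, so for any admissible pair $(S,T)$ with $S\neq\{b\}\neq T$ one gets $|S|\cdot|T|\le 9<n$, yielding $\sat\leap\Hom{\X{k_1}{k_2}{k_3}}$. When $n\le 9$ the triple $(k_1,k_2,k_3)$ lies in an explicit finite list, and each such graph is treated directly: fix an induced $\WR{3}$ on $\{b,q_1,q_2,q_3\}$ inside $\X{k_1}{k_2}{k_3}$ and reduce $\Ret{\WR{3}}$ — which is $\sat$-equivalent by Theorem~\ref{thm:RetGirth5} — by attaching to each unpinned vertex an independent-set gadget joined also to vertices pinned to $q_1,q_2,q_3$, so that for large gadget size the surviving homomorphisms land in $\{b,q_1,q_2,q_3\}$, the resulting vertex weights being handled exactly as in the boosting arguments of Lemma~\ref{lem:hardtriangles2} and Lemma~\ref{lem:degree2bristle}; for those small graphs admitting an anti-ferromagnetic pair of dominant $\Jpqt$-types one follows instead the $\largecut$ template of Lemma~\ref{lem:hardNeighbourhood2}.

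The hard part is precisely this finite family of small $\X{k_1}{k_2}{k_3}$. Lemma~\ref{lem:Kelk5.1} fails for them, since a reflexive triangle $\{b,t_i,t_i'\}$ already forces $|S|\cdot|T|=9\ge n$; and the $\Jpqt$/$\largecut$ machinery is delicate, because the extra triangle tends to make the ``all-vertices'' type $(V(\X{k_1}{k_2}{k_3}),V(\X{k_1}{k_2}{k_3}))$ strictly dominate, which would collapse the $\largecut$-style count. Moreover the boosting gadget cannot be made weight-neutral: $b$ is universal, so its gadget-weight always dominates that of the leaves, and the reduction genuinely outputs a \emph{weighted} retraction count to $\WR{3}$. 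The technical core is therefore to show that this weighted count is still $\sat$-hard — equivalently, that retraction to the graph obtained from $\WR{3}$ by blowing the centre up into a reflexive clique is $\sat$-hard — which is what forces the case analysis for these exceptional graphs.
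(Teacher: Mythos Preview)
Your reduction to $H'=H[\NH(b)]\cong\X{k_1}{k_2}{k_3}$ with $k_2+k_3\ge 3$ is correct, and the large-$n$ case via Lemma~\ref{lem:Kelk5.1} is fine. But the small-$n$ regime ($1+k_1+k_2+2k_3\le 9$, $k_3\ge 1$, $k_2+k_3\ge 3$) is a genuine gap: you identify it as the ``technical core'' and then do not resolve it. Your boosting gadget (pinning to $q_1,q_2,q_3$ and attaching independent sets) necessarily gives $b$ a strictly larger weight than the leaves --- as you yourself note --- so it does not reduce from $\Ret{\WR{3}}$ but from a weighted version, and you give no argument that this weighted version is $\sat$-hard. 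The ``blow-up'' you describe introduces squares and lies outside any classification available in the paper. The $\largecut$ alternative via dominant $\Jpqt$-types is not verified for any of the roughly thirty small triples, and the Remark after Lemma~\ref{lem:T9dominance} shows explicitly that this machinery can fail for specific parameter values, so one cannot simply assert it works case by case without doing the analysis.

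The paper avoids this split entirely. It gives a single reduction from $\TCut{q}$ (Lemma~\ref{lem:TCutSAT}, $q\ge 3$) that works uniformly for every $\X{k_1}{k_2}{k_3}$ with $q=k_2+k_3\ge 3$, regardless of size. The vertex gadget is a large clique $C_v$ whose image under a homomorphism tends to be one of the $q$ reflexive cliques $\NHb(x_i)$ --- these are the $q$ ``states''. The crucial device is a set of $k$ length-$2$ paths from each clique vertex to pins $p_1,\dots,p_k$ (one per degree-$2$ looped leaf), which multiplies the number of ``pinned configurations'' so that the two-vertex states $\{b,x_i\}$ and the three-vertex states $\{b,x_j,y_j\}$ end up with the \emph{same} total count $2^k+2$. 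This balancing is exactly what makes the edge gadget behave like a genuine $q$-state Potts interaction, and it is the idea your proposal is missing for the small cases.
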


First we will introduce some preliminary results.

\begin{defn}
Let $\stirling{a}{b}$ be the \emph{Stirling number of the second kind}, i.e.\ the number of surjective functions from a set of $a$ elements to a set of $b$ elements.
\end{defn}

\begin{lem}[{\cite[Lemma 18]{DGGJApprox}}]\label{lem:DGGJ18}
If $a$ and $b$ are positive integers and $a \ge 2b \ln b$, then
\[
b^a \left( 1- \exp{\left(-\frac{a}{2b}\right)}\right) \le \stirling{a}{b}\le b^a.
\]
\end{lem}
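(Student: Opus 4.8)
The plan is to prove both inequalities by an elementary counting argument, with the lower bound coming from a union bound on the non-surjective functions. The upper bound is immediate: a surjection from an $a$-element set to a $b$-element set is in particular a function between those sets, and there are exactly $b^a$ such functions, so $\stirling{a}{b}\le b^a$.

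For the lower bound I would count the functions that \emph{fail} to be surjective. Fix the codomain to be $[b]$. For each $i\in[b]$ there are exactly $(b-1)^a$ functions $f\colon[a]\to[b]$ whose image avoids $i$, so by a union bound (the first Bonferroni inequality) the number of non-surjective functions is at most $\sum_{i=1}^b (b-1)^a = b(b-1)^a$. Since the number of surjections is the number of functions minus the number of non-surjections, this gives the unconditional bound
\[
\stirling{a}{b} \ge b^a - b(b-1)^a = b^a\left(1 - b\left(1-\tfrac{1}{b}\right)^a\right).
\]

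It then remains to show that the hypothesis $a\ge 2b\ln b$ forces $b(1-1/b)^a \le \exp(-a/(2b))$. Applying the standard inequality $1-x\le e^{-x}$ with $x=1/b$ gives $(1-1/b)^a \le e^{-a/b}$, hence
\[
b\left(1-\tfrac{1}{b}\right)^a \le b\,e^{-a/b} = \exp\!\left(\ln b - \tfrac{a}{b}\right).
\]
From $a\ge 2b\ln b$ we get $\ln b \le a/(2b)$, so $\ln b - a/b \le a/(2b) - a/b = -a/(2b)$, and the right-hand side above is at most $\exp(-a/(2b))$. Substituting this back into the previous display yields
\[
\stirling{a}{b} \ge b^a\left(1 - \exp\!\left(-\tfrac{a}{2b}\right)\right),
\]
which is the claimed lower bound.

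There is no real obstacle here — the argument is routine. The only points needing a moment of care are the direction of the inequality chain (bound the complement, then subtract, then refine under the hypothesis) and the degenerate case $b=1$: there $\ln b = 0$, so the hypothesis is vacuous, $(1-1/b)^a = 0$ for $a\ge 1$, and both displayed bounds collapse to the trivial identity $\stirling{a}{1}=1=1^a$.
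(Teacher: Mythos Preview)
Your argument is correct. The upper bound is immediate, and the lower bound via the union bound $b^a - b(b-1)^a$ followed by $(1-1/b)^a\le e^{-a/b}$ and $\ln b\le a/(2b)$ is exactly the standard route; the edge case $b=1$ is handled as you note.

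There is nothing to compare against in this paper: the lemma is quoted without proof from \cite[Lemma~18]{DGGJApprox}, so the paper does not supply its own argument. For what it is worth, the proof in \cite{DGGJApprox} proceeds along the same lines as yours (inclusion--exclusion truncated after one term, then the exponential bound), so your write-up matches the original source as well.
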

\begin{cor}\label{cor:DGGJ18}
If $a$ and $b$ are positive integers with $a \ge 2b \ln 2b$ and $b\ge 1$, then
\[
\frac{b^a}{2} \le \stirling{a}{b}\le b^a.
\]
\end{cor}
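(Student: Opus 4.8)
The statement is a direct consequence of Lemma~\ref{lem:DGGJ18}, so the plan is simply to verify that the hypotheses of that lemma are met and then to estimate the factor $1-\exp(-a/(2b))$.

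First I would check applicability of Lemma~\ref{lem:DGGJ18}. Since $b\ge 1$ we have $\ln 2b=\ln 2+\ln b\ge \ln b$, hence the assumption $a\ge 2b\ln 2b$ implies $a\ge 2b\ln b$, which is exactly the hypothesis of Lemma~\ref{lem:DGGJ18}. (When $b=1$ this reads $a\ge 0$, trivially true, and in that case $\stirling{a}{1}=1=1^a$, so the claimed bounds $\tfrac12\le 1\le 1$ hold; one can keep this degenerate case in mind but it is subsumed by the general argument.) Therefore Lemma~\ref{lem:DGGJ18} gives
\[
b^a\left(1-\exp\!\left(-\frac{a}{2b}\right)\right)\le \stirling{a}{b}\le b^a,
\]
which immediately yields the upper bound $\stirling{a}{b}\le b^a$.

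For the lower bound I would use the hypothesis once more: since $\ln 2b\ge \ln 2$ for $b\ge 1$, we get $a\ge 2b\ln 2b\ge 2b\ln 2$, i.e.\ $\dfrac{a}{2b}\ge \ln 2$, and hence $\exp\!\left(-\dfrac{a}{2b}\right)\le \exp(-\ln 2)=\dfrac12$. Substituting this into the left inequality above gives
\[
\stirling{a}{b}\ge b^a\left(1-\exp\!\left(-\frac{a}{2b}\right)\right)\ge b^a\left(1-\frac12\right)=\frac{b^a}{2},
\]
completing the proof.

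There is essentially no obstacle here: the only point requiring (minimal) care is confirming the two monotonicity facts $\ln 2b\ge \ln b$ and $\ln 2b\ge \ln 2$ for integers $b\ge 1$, which handle respectively the applicability of Lemma~\ref{lem:DGGJ18} and the bound on the exponential term.
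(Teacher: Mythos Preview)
Your proof is correct and is exactly the intended derivation: the paper states the corollary without proof, and the implicit argument is precisely the two-step check you give (first $\ln 2b\ge \ln b$ to invoke Lemma~\ref{lem:DGGJ18}, then $\ln 2b\ge \ln 2$ to bound the exponential factor by $1/2$).
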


The proof of Lemma~\ref{lem:inducedWR3General} uses the same general idea as the proof of~\cite[Theorem 2]{GJIsing} --- namely that approximating the partition function of the $q$-state ferromagnetic Potts model is $\sat$-equivalent if $q\ge3$ and, in addition, we are allowed to specify that certain vertices have to have a specific spin. Crucially, $\sat$-hardness is known only if this single-vertex ``pinning'' is allowed. In general, the complexity of approximating the partition function of the Potts model is still unresolved and an important open problem. The approach of simulating ferromagnetic Potts with ``pinning'' to obtain hardness results has been used before, for instance in the proofs of~\cite[Lemma 3.6]{GJTreeHoms} and~\cite[Lemma 2.2]{FGZRet}. The gadgets we use here to accomplish the reduction are tailored to the specific problem and different from the gadgets used in similar reductions.

As in the proof of~\cite[Theorem 2]{GJIsing} we use a reduction from the problem of counting so-called multiterminal cuts. We introduce the corresponding definitions from~\cite{GJIsing}. A \emph{multiterminal cut} of a graph $G$ with distinguished vertices $\tau_1, \dots, \tau_q$ (called \emph{terminals}) is a set of edges $E'\subseteq E(G)$ that disconnects the terminals (i.e. ensures that there is no path in $(V(G),E(G)\setminus E')$ that connects any two distinct terminals). The \emph{size} of a multiterminal cut is its cardinality. 
We consider the following computational problem.

\prob
{
$\TCut{q}$.
}
{
A connected irreflexive graph $G$ with $q$ distinct terminals $\tau_1, \dots, \tau_q \in V(G)$ and a positive integer $K$. The input has the property that every multiterminal cut has size at least $K$.
}
{
The number of size-$K$ multiterminal cuts of $G$ with terminals $\tau_1, \dots, \tau_q$.
}
\begin{lem}[{\cite[Section 4]{GJIsing}}]\label{lem:TCutSAT}
Let $q\ge 3$. Then $\TCut{q}\eqap\sat$.
\end{lem}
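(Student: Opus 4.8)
Here is the proof plan I would follow.

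\medskip
\noindent\textbf{Overall strategy.} The plan is to give an AP-reduction from $\TCut{3}$, which is $\sat$-equivalent by Lemma~\ref{lem:TCutSAT}, to $\Ret{H}$; this yields $\sat\leap\Ret{H}$. The high-level idea mirrors the proof of \cite[Theorem~2]{GJIsing}: we simulate the $3$-state ferromagnetic Potts model \emph{with single-vertex pinning}, using the three leaves of an induced $\WR{3}$ as the three spins. It is precisely the ability of $\Ret{\cdot}$ to pin individual vertices (via singleton lists) that makes this possible, and that distinguishes the retraction problem from plain homomorphism counting here.

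\medskip
\noindent\textbf{Localisation.} Let $b\in V(H)$ be the centre of an induced $\WR{3}$ with looped leaves $c_1,c_2,c_3$. By Observation~\ref{obs:PinNeighbourhood} it suffices to prove $\sat\leap\Ret{H_b}$ where $H_b:=H[\NH(b)]$. Since $H$ is square-free and $b$ is adjacent to every vertex of $H_b$, the set $\NH(b)\setminus\{b\}$ induces in $H_b$ a disjoint union of edges and isolated vertices: three mutually adjacent vertices there together with $b$ would form a $K_4$ (which contains a $4$-cycle), and a path of length two among them would close a $4$-cycle through $b$; likewise any two of $c_1,c_2,c_3$ have $b$ as their unique common neighbour. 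If $\NH(b)\setminus\{b\}$ is independent then $H_b$ has girth at least $5$ and the claim follows from Theorem~\ref{thm:RetGirth5}; if $H_b$ contains a mixed triangle then Lemma~\ref{lem:hardtriangles1} or Lemma~\ref{lem:hardtriangles2} applies. So we may assume that $H_b$ contains a reflexive (all-looped) triangle $\{b,a_1,a_2\}$, arising from a matching edge, in addition to the three looped pairwise-non-adjacent neighbours $c_1,c_2,c_3$ of $b$.

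\medskip
\noindent\textbf{The reduction and accounting.} Given a $\TCut{3}$-instance $(G,\tau_1,\tau_2,\tau_3,K)$ with $n=\abs{V(G)}$, I would build an input $(G',\boldS')$ to $\Ret{H_b}$ out of three kinds of gadget: a pinning gadget fixing $\tau_i$ to $c_i$; for each edge $e=\{u,v\}$ of $G$ an \emph{interaction gadget} realising a strong ferromagnetic coupling between the spins of $u$ and $v$ --- in its simplest form, $t$ internally-disjoint length-two paths $u-z-v$ through fresh full-list vertices $z$, so a spin pair $(c_i,c_i)$ is extended in $\abs{\NH(c_i)}^t$ ways while a pair $(c_i,c_j)$ with $i\neq j$ is extended in only $\abs{\NH(c_i)\cap\NH(c_j)}^t=1$ ways; and, attached to every spin-carrying vertex, auxiliary gadgets assembled from the reflexive triangle $\{b,a_1,a_2\}$ and the three leaves, whose homomorphism counts are controlled by Stirling-number estimates (Corollary~\ref{cor:DGGJ18}). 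The auxiliary gadgets have a double role: to renormalise the three spin weights so that each monochromatic interaction gadget contributes a common factor $M^t$, and to make the weight of homomorphisms in which some original vertex is mapped to $b$ (or to a matched/isolated vertex outside $\{c_1,c_2,c_3\}$) negligible in the total count. All parameters ($t$ and the auxiliary-gadget sizes) are taken to be fixed polynomials in $n$, as in Lemmas~\ref{lem:hardNeighbourhood2} and~\ref{lem:hardNeighbourhood3}. Grouping the homomorphisms $G'\to H_b$ by the spin pattern $\sigma\colon V(G)\to\{1,2,3\}$ they induce (with $\sigma(\tau_i)=i$), the dominant ones contribute, up to an explicit computable constant $C$, a factor $M^{t(\abs{E(G)}-\abs{\mathrm{cut}(\sigma)})}$, where $\mathrm{cut}(\sigma)$ is the set of bichromatic edges of $\sigma$ --- a multiterminal cut of $G$. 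Hence $\hom{(G',\boldS')}{H_b}=C\sum_{j\ge K}a_j M^{-tj}\,(1+o(1))$, where $a_j$ is a computable multiple of the number of size-$j$ multiterminal cuts of $G$ (the passage from $a_j$ to that number via $3$-colourings of the components of $G$ with the cut removed is exactly the calculation of \cite[Section~4]{GJIsing}), and the relative error is made exponentially small by the choice of $t$. The $j=K$ term dominates, so one oracle call to $\Ret{H_b}$ with precision $\eps/21$ recovers the number of size-$K$ multiterminal cuts, as in \cite{DGGJApprox,GJIsing}. With Observation~\ref{obs:PinNeighbourhood} and Lemma~\ref{lem:TCutSAT} this gives $\sat\leap\Ret{H}$.

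\medskip
\noindent\textbf{Main obstacle.} The crux is the construction and analysis of the auxiliary gadgets. Since $b$ is universal in $H_b$, no gadget attached to a single vertex can ever make the value $b$ less attractive than a leaf $c_i$ (replacing a vertex's image by $b$ never destroys a homomorphism), so the "$b$-leak" cannot be suppressed locally; it has to be killed through a global construction that genuinely uses the extra triangle/matching structure of $H_b$, together with the accompanying error analysis bounding every non-dominant homomorphism type. Everything downstream of the gadget design is routine bookkeeping along the lines of \cite{DGGJApprox,GJIsing} and of the reductions in Section~\ref{sec:NeighbourhoodOfLoopedVertex}.
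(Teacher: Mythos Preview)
You have misidentified the statement. Lemma~\ref{lem:TCutSAT} asserts that $\TCut{q}\eqap\sat$ for $q\ge 3$; it is quoted from \cite[Section~4]{GJIsing} and is not proved in this paper at all. Your proposal instead \emph{uses} Lemma~\ref{lem:TCutSAT} as a black box in order to prove that a square-free graph with an induced $\WR{3}$ has $\sat\leap\Ret{H}$ --- that is Lemma~\ref{lem:inducedWR3General}, not Lemma~\ref{lem:TCutSAT}.

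Since you have effectively sketched an approach to Lemma~\ref{lem:inducedWR3General}, a brief comparison with the paper's actual proof of that lemma may still be useful. The paper also localises to $H_b=H[\NH(b)]$ and reduces from $\TCut{q}$ (with $q\ge 3$ the number of ``branches'' at $b$, not fixed at $3$), but it handles the ``$b$-leak'' you flag by a different mechanism: each vertex of $G$ is replaced by a large clique $C_v$ of size $s=n^5$, whose image under a homomorphism to $H_b$ must be a reflexive clique and hence (for full homomorphisms) one of the sets $\NHb(x_i)$; the per-vertex attachments to pinned vertices $p_1,\dots,p_k$ equalise the $2^k+2$ ``pinned configurations'' across all $q$ states, and Stirling-number bounds (Corollary~\ref{cor:DGGJ18}) show that non-full homomorphisms --- those where some $C_v$ misses a configuration or lands in $\{b,u\}$ with $u$ unlooped --- are exponentially suppressed. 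Your length-two-path interaction gadget and your observation that $b$ cannot be made locally unattractive are both correct, but your proposal stops short of the key device (the clique vertex-gadget forcing the state to be a maximal reflexive clique of $H_b$) that the paper uses to resolve the obstacle you identify.
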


\begin{defn}\label{def:Phi}
Let $I=(G, \tau_1, \dots, \tau_q, K)$ be an instance of $\TCut{q}$.
$\Phi(I)=\{\phi\from V(G) \to [q] \mid \phi(\tau_i)=i,\ i\in [q]\}$ is the set of \emph{separating functions} from $V(G)$ to $[q]$. For $\phi \in \Phi(I)$ let $\Ecut(\phi)=\{\{u,v\}\in E(G) \mid \phi(u)\neq \phi(v)\}$ and, for $i\in [q]$, let $\mathrm{Mon}_i(\phi)=\{\{u,v\}\in E(G) \mid \phi(u)=\phi(v)=i\}$. Finally, let $\Phi^*(I)=\{\phi \in \Phi(I) \mid \abs{\Ecut(\phi)}=K\}$.
\end{defn}

\begin{obs}\label{obs:Phi}
Let $I=(G, \tau_1, \dots, \tau_q, K)$ be an instance of $\TCut{q}$.
For each $\phi\in\Phi(I)$, $\Ecut(\phi)$ is a multiterminal cut of $I$. On the other hand, each size-$K$ multiterminal cut splits the graph $G$ into exactly $q$ connected components (as otherwise there would exist a multiterminal cut of size less than $K$). Hence each size-$K$ multiterminal cut corresponds exactly to the function $\phi\in \Phi(I)$ for which $\phi(v)=i$ if $v$ is in the same connected component as $\tau_i$. Thus, $\Phi^*(I)$ is the subset of functions in $\Phi(I)$ that correspond to size-$K$ multiterminal cuts. Let $T(I)$ be the number of size-$K$ multiterminal cuts of the instance $I$. Then $T(I)=\abs{\Phi^*(I)}$.
\end{obs}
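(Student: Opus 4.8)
The plan is to exhibit an explicit bijection between the size-$K$ multiterminal cuts of $I$ and the functions in $\Phi^*(I)$; since $T(I)$ is defined as the number of size-$K$ multiterminal cuts, this immediately yields $T(I)=\abs{\Phi^*(I)}$.

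First I would dispatch the easy direction: for any $\phi\in\Phi(I)$ the edge set $\Ecut(\phi)$ is a multiterminal cut. Indeed, any walk in $G$ from $\tau_i$ to $\tau_j$ with $i\ne j$ must, at its last change of value, traverse an edge $\{u,v\}$ with $\phi(u)\ne\phi(v)$, i.e.\ an edge of $\Ecut(\phi)$; hence in $(V(G),E(G)\setminus\Ecut(\phi))$ no two terminals are connected. Restricting to $\phi\in\Phi^*(I)$ then shows that $\phi\mapsto\Ecut(\phi)$ maps $\Phi^*(I)$ into the set of size-$K$ multiterminal cuts.

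The heart of the argument is the reverse map, and the key structural fact is that a multiterminal cut $E'$ of minimum size is \emph{minimal} in two senses. Note that every size-$K$ cut is of minimum size, because the input $I$ has the property that every multiterminal cut has size at least $K$. Then: (a) $(V(G),E(G)\setminus E')$ has exactly $q$ connected components, each containing exactly one terminal — there are at least $q$ components since the terminals are pairwise disconnected, and if some component $C$ contained no terminal then, as $G$ is connected, deleting from $E'$ all edges between $C$ and the rest would merge $C$ into a neighbouring component without reconnecting any two terminals, producing a strictly smaller multiterminal cut, a contradiction; (b) no edge of $E'$ has both endpoints in the same component $C_i$, since such an edge could be removed from $E'$ while keeping $C_i$ connected, again yielding a strictly smaller multiterminal cut. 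Granting (a), I would define $\phi_{E'}\in\Phi(I)$ by setting $\phi_{E'}(v)=i$ when $v$ lies in the component containing $\tau_i$; granting (b), every edge of $E'$ joins two distinct components, so $\Ecut(\phi_{E'})=E'$, whence $\abs{\Ecut(\phi_{E'})}=\abs{E'}=K$ and thus $\phi_{E'}\in\Phi^*(I)$.

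Finally I would check the two maps are mutually inverse. We have just shown $E'\mapsto\phi_{E'}\mapsto\Ecut(\phi_{E'})=E'$. Conversely, take $\phi\in\Phi^*(I)$: then $\phi$ is constant on each component of $(V(G),E(G)\setminus\Ecut(\phi))$, so each component is contained in a level set $\phi^{-1}(i)$; since $\Ecut(\phi)$ is a minimum cut, (a) gives exactly $q$ components, while there are exactly $q$ nonempty level sets $\phi^{-1}(1),\dots,\phi^{-1}(q)$ (each containing its terminal), forcing each level set to be a single component, so $\phi_{\Ecut(\phi)}=\phi$. This gives the bijection and hence $T(I)=\abs{\Phi^*(I)}$. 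The one point I expect to need real care is establishing (a) and (b) — i.e.\ that a minimum multiterminal cut is genuinely minimal, so that the component/level-set structure is as clean as claimed; once that is in hand, the rest is bookkeeping.
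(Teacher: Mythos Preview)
Your argument is correct and follows the same route as the paper, which presents this as a self-contained observation with the key structural fact (that a size-$K$ cut yields exactly $q$ components, ``as otherwise there would exist a multiterminal cut of size less than $K$'') stated rather than proved; you have simply spelled out the bijection and its inverse in full. One small slip in your justification of (a): deleting from $E'$ \emph{all} edges between the terminal-free component $C$ and the rest could merge $C$ with several terminal-containing components at once and thereby reconnect terminals --- it suffices (and is safe) to delete a single such edge.
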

Now we have all the tools at hand to prove the main lemma of this section.

\begin{leminducedWR3General}
	\stateleminducedWRGeneral
\end{leminducedWR3General}
\begin{proof}
Suppose that $H$ contains a mixed triangle as an induced subgraph, then the statement of this lemma follows from Lemmas~\ref{lem:hardtriangles1} and~\ref{lem:hardtriangles2}.
Hence, for the remainder of this proof let $H$ be a square-free graph with an induced $\WR{3}$ without any induced mixed triangle subgraphs. We choose a vertex $b$ such that $b$ is the center of an induced $\WR{3}$. We consider the graph $H[\NH(b)]$ which is the subgraph of $H$ that is induced by the neighbourhood of $b$. For ease of notation we set $H_b=H[\NH(b)]$. Let $U$ be the set of unlooped neighbours of $b$. Since $H$ does not contain any mixed triangles, for each $u\in U$, $b$ is the only neighbour of $u$ in $H_b$. Since $H$ is square-free, so is $H_b$. Therefore every looped neighbour $w\neq b$ of $b$ has degree $\deg_{H_b}(w)\in \{2,3\}$. By the choice of $b$, $b$ has at least $4$ neighbours including itself, i.e.~we have $\deg_{H_b}(b)=\deg_{H}(b)\ge 4$. 
Let $x_1, \dots, x_k$ be the looped neighbours of degree $2$ and $x_{k+1}, y_{k+1}, \dots, x_q, y_q$ be the looped neighbours of degree $3$, where for each $i\in \{k+1, \dots, q\}$ we have $\{x_i,y_i\}\in E(H_b)$ (where we use that looped vertices can only have looped neighbours since $b$ is the sole neighbour of vertices in $U$). The graph $H_b$ is depicted in Figure~\ref{fig:Hb}.

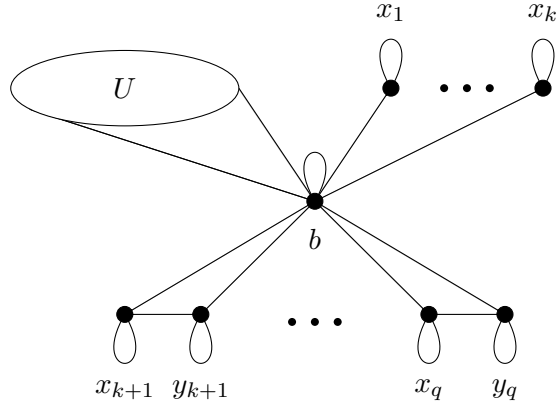
\begin{figure}[ht]
	\centering
	\begin{tikzpicture}[scale=1, baseline=0.36cm, every loop/.style={min distance=10mm,looseness=10}]

	\coordinate (mid) at (-2.5,1.5);
	\filldraw (0,0) node(b){} circle[radius=3pt] -- (-1,1.5) node(u2){};
	\filldraw (b.center) -- (-3.75,1.22) node(u1){};
	\draw[fill=white] (mid) circle[x radius=1.5,y radius=.5];

	\filldraw (b.center) -- (1,1.5) node(l1){} circle[radius=3pt];
	\filldraw (b.center) -- (3,1.5) node(l2){} circle[radius=3pt];
	\filldraw (b.center) -- (-1.5,-1.5) node(t2){} circle[radius=3pt] -- (-2.5,-1.5) node(t1){} circle[radius=3pt] -- (b.center);
	\filldraw (b.center) -- (1.5,-1.5) node(t3){} circle[radius=3pt] -- (2.5,-1.5) node(t4){} circle[radius=3pt] --  (b.center);
	
	
	\path[-] (b.center) edge  [in=-240,out=-300,loop] node {} ();
	\path[-] (l1.center) edge  [in=-240,out=-300,loop] node {} ();
	\path[-] (l2.center) edge  [in=-240,out=-300,loop] node {} ();
	\path[-] (t1.center) edge  [in=240,out=300,loop] node {} ();
	\path[-] (t2.center) edge  [in=240,out=300,loop] node {} ();
	\path[-] (t3.center) edge  [in=240,out=300,loop] node {} ();
	\path[-] (t4.center) edge  [in=240,out=300,loop] node {} ();
	
	\node at (mid.center) {$U$};
	\node at ($(l1)+(0cm,1cm)$) {$x_1$};
	\node at ($(l2)+(0cm,1cm)$) {$x_k$};
	\node at ($(b)+(0cm,-.5cm)$) {$b$};
	\node at ($(t1)+(0cm,-1cm)$) {$x_{k+1}$};
	\node at ($(t2)+(0cm,-1cm)$) {$y_{k+1}$};
	\node at ($(t3)+(0cm,-1cm)$) {$x_{q}$};
	\node at ($(t4)+(0cm,-1cm)$) {$y_{q}$};
	
	\coordinate (pmid2) at (2,1.5);
	\coordinate (cdot1) at ($(pmid2)+(-.3cm,0cm)$);
	\coordinate (cdot2) at ($(pmid2)+(0cm,0cm)$);
	\coordinate (cdot3) at ($(pmid2)+(.3cm,0cm)$);
	\fill (cdot1) circle[radius=1.5pt];
	\fill (cdot2) circle[radius=1.5pt];
	\fill (cdot3) circle[radius=1.5pt];
	
	\coordinate (pmid3) at (-90:1.6cm);
	\coordinate (cdot1) at ($(pmid3)+(-.3cm,0cm)$);
	\coordinate (cdot2) at ($(pmid3)+(0cm,0cm)$);
	\coordinate (cdot3) at ($(pmid3)+(.3cm,0cm)$);
	\fill (cdot1) circle[radius=1.5pt];
	\fill (cdot2) circle[radius=1.5pt];
	\fill (cdot3) circle[radius=1.5pt];
	
	\end{tikzpicture}
	\caption{The graph $H_b$.}
	\label{fig:Hb}
\end{figure}

We will give a reduction from $\TCut{q}$ to $\Ret{H_b}$. By the choice of $b$ we have $q\ge 3$. This gives the desired reduction since
\[
\sat \leap \TCut{q} \leap \Ret{H_b} \leap \Ret{H},
\]
where the first reduction is from Lemma~\ref{lem:TCutSAT} and the last reduction is from Observation~\ref{obs:PinNeighbourhood}.

Let $I=(G, \tau_1, \dots, \tau_q, K)$ be an instance of $\TCut{q}$ and let $\eps\in(0,1)$ be the desired precision bound. Let $n=\abs{V(G)}$ and $m=\abs{E(G)}$. From the instance $I$ we construct an instance $(J,\boldS)$ of $\Ret{H_b}$. We will need some parameters whose relevance will become clear later in the proof. Let $s=n^5$ and $t=n^2$.

The intuition behind the gadgets that will be used in this proof is the following. For every vertex $v$ in $G$ we introduce a huge clique $C_v$. The image of such a clique under a homomorphism to $H_b$ tends to be a reflexive clique, i.e.~tends to be of the form $\NHb(x_i)$. There are $q$ such neighbourhoods. These will correspond to the $q$ different states that a vertex $v\in V(G)$ can be in. We will have to add some attachments to the clique $C_v$ to balance out the fact that $\NHb(x_i)$ is a clique on $2$ vertices if $i\le k$, whereas it is a clique on $3$ vertices if $i>k$. For each edge $\{u,v\}\in E(G)$ we introduce a gadget that favours the case where $u$ and $v$ have identical states (i.e.~the corresponding cliques have the same image under homomorphisms to $H_b$). 

Here are the details.
First we define the graph $J$. We introduce $q$ distinct vertices $p_1, \dots, p_q$ which will serve as ``pins'' to the vertices $x_1, \dots, x_q$. In the first part of the construction we will only use the vertices $p_1, \dots, p_k$. For every vertex $v\in V(G)$ we introduce a graph $J_v$ (the ``vertex gadget'') as follows. Let $C_v$ be a clique on $s$ vertices. For each vertex $w$ in this clique we introduce $k$ distinct vertices $\{w_1, \dots, w_k\}$. Then $J_v$ is the graph with vertices 
\[
V(J_v)= \{p_1, \dots, p_k\} \cup V(C_v) \cup \bigcup_{w\in V(C_v)} \{w_1, \dots, w_k\} 
\]
and edges 
\[
E(J_v)=E(C_v) \cup \bigcup_{w\in V(C_v)}\bigcup_{i\in [k]} \{\{w,w_i\},\{w_i,p_i\}\}.
\]
The graph $J_v$ is depicted in Figure~\ref{fig:WR3VertexGadget}. Note that the vertices $p_1, \dots p_k$ are identical over all vertex gadgets whereas the remaining vertices are distinct for each $v$.

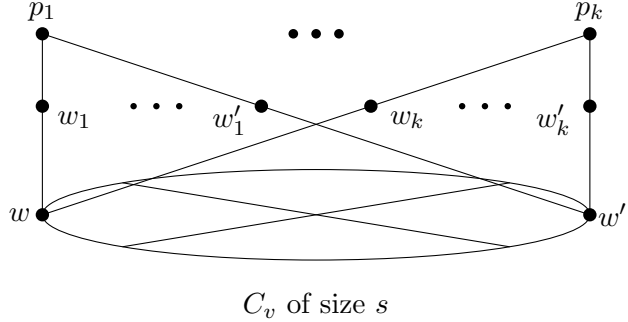
\begin{figure}[tbh]\centering
{\def\scaleFactor{.6}\centering
\begin{tikzpicture}[scale=\scaleFactor, every loop/.style={min distance=10mm,looseness=10}]

\def\xx{6} \def\yy{1} 

\coordinate (Cc) at (0,0);
\coordinate (Cleft) at ($(Cc) - (\xx, 0)$);
\coordinate (Cright) at ($(Cc) + (\xx, 0)$);
\coordinate (Cul) at ($(Cc) + (-\xx, \yy)$);
\coordinate (Cll) at ($(Cc) + (-\xx, -\yy)$);
\coordinate (Cur) at ($(Cc) + (\xx, \yy)$);
\coordinate (Clr) at ($(Cc) + (\xx, -\yy)$);

\path[name path=ellipse, fill=white](0,0)circle[x radius=\xx,y radius=\yy];
\path[name path=construct1] (Cul) -- (Clr);
\path[name path=construct2] (Cll) -- (Cur);
\path [name intersections={of = ellipse and construct1}];
\coordinate (X1) at (intersection-1);
\coordinate (Y1) at (intersection-2);
\path [name intersections={of = ellipse and construct2}];
\coordinate (X2) at (intersection-1);
\coordinate (Y2) at (intersection-2);

\coordinate (pleft) at ($(Cc) + (-\xx, 4)$);
\coordinate (pright) at ($(Cc) + (\xx, 4)$);
\coordinate (pmid) at ($(Cc) + (0, 4)$); 

\draw[fill=white](0,0)circle[x radius=\xx,y radius=\yy];
\draw (X1) -- (Y1);
\draw (X2) -- (Y2);

\draw (Cleft) -- (pleft) node(w1)[circle,fill=black,inner sep=0pt,minimum size=5pt,pos=.6]{} node[circle,fill=black,inner sep=0pt,minimum size=5pt]{} -- (Cright) node(z1)[circle,fill=black,inner sep=0pt,minimum size=5pt,pos=.4]{} ;

\draw (Cright)node[circle,fill=black,inner sep=0pt,minimum size=5pt]{} -- (pright) node(z2)[circle,fill=black,inner sep=0pt,minimum size=5pt,pos=.6]{} node[circle,fill=black,inner sep=0pt,minimum size=5pt]{} -- (Cleft) node(w2)[circle,fill=black,inner sep=0pt,minimum size=5pt,pos=.4]{} node[circle,fill=black,inner sep=0pt,minimum size=5pt]{};

\node at ($(Cc)+(0cm,-2cm)$) {$C_v$ of size $s$};
\node at ($(pleft)+(0cm,.5cm)$) {$p_1$};
\node at ($(pright)+(0cm,.5cm)$) {$p_k$};
\node at ($(w1)+(.7cm,-.3cm)$) {$w_1$};
\node at ($(w2)+(.8cm,-.3cm)$) {$w_k$};
\node at ($(z1)+(-.7cm,-.3cm)$) {$w'_1$};
\node at ($(z2)+(-.8cm,-.3cm)$) {$w'_k$};
\node at ($(Cleft)+(-.5cm,0cm)$) {$w$};
\node at ($(Cright)+(.5cm,0cm)$) {$w'$};

\coordinate (cdot1) at ($(pmid)+(-.5cm,0cm)$);
\coordinate (cdot2) at ($(pmid)+(0cm,0cm)$);
\coordinate (cdot3) at ($(pmid)+(.5cm,0cm)$);
\fill (cdot1) circle[radius=3pt];
\fill (cdot2) circle[radius=3pt];
\fill (cdot3) circle[radius=3pt];

\coordinate (cdot1) at ($(w1)+(2cm,0cm)$);
\coordinate (cdot2) at ($(w1)+(2.5cm,0cm)$);
\coordinate (cdot3) at ($(w1)+(3cm,0cm)$);
\fill (cdot1) circle[radius=2pt];
\fill (cdot2) circle[radius=2pt];
\fill (cdot3) circle[radius=2pt];

\coordinate (cdot1) at ($(w2)+(2cm,0cm)$);
\coordinate (cdot2) at ($(w2)+(2.5cm,0cm)$);
\coordinate (cdot3) at ($(w2)+(3cm,0cm)$);
\fill (cdot1) circle[radius=2pt];
\fill (cdot2) circle[radius=2pt];
\fill (cdot3) circle[radius=2pt];
\end{tikzpicture}
}
\caption{The graph $J_v$ for a vertex $v$.}
\label{fig:WR3VertexGadget}
\end{figure}

For every edge $e=\{u,v\}\in E(G)$ we introduce a graph $J_e$ together with a set of edges $E_e$ (the ``edge gadget''). The graph $J_e$ is defined in precisely the same way as $J_v$ but uses the parameter $t$ instead of $s$. We denote the corresponding clique by $C_e$. Further, we set $E_e=\ucp{\left(V(C_u)\cup V(C_v)\right)}{V(C_e)}$. The edge gadget is depicted in Figure~\ref{fig:WR3EdgeGadget}.

\begin{figure}[tbh]\centering
{\def\scaleFactor{.6}\centering
\begin{tikzpicture}[scale=\scaleFactor, every loop/.style={min distance=10mm,looseness=10}]

\def\xx{1} \def\yy{2} 

\coordinate (Cc) at (0,0);
\coordinate (Ctop) at ($(Cc) + (0, \yy)$);
\coordinate (Cbot) at ($(Cc) - (0, \yy)$);
\coordinate (Cul) at ($(Cc) + (-\xx, \yy)$);
\coordinate (Cll) at ($(Cc) + (-\xx, -\yy)$);
\coordinate (Cur) at ($(Cc) + (\xx, \yy)$);
\coordinate (Clr) at ($(Cc) + (\xx, -\yy)$);

\def\x{1.5} \def\y{3} 

\def\ax{-5.5} \def\ay{0}  

\coordinate (Ac) at (\ax,\ay);
\coordinate (Atop) at ($(Ac) + (0, \y)$);
\coordinate (Abot) at ($(Ac) + (0, -\y)$);
\coordinate (Aul) at ($(Ac) + (-\x, \y)$);
\coordinate (All) at ($(Ac) + (-\x, -\y)$);
\coordinate (Aur) at ($(Ac) + (\x, \y)$);
\coordinate (Alr) at ($(Ac) + (\x, -\y)$);

\draw (Atop) -- (Ctop);
\draw (Abot) -- (Cbot);
\draw (Atop) -- (Cbot);
\draw (Abot) -- (Ctop);

\draw[name path=ellipsel, fill=white](Ac)circle[x radius=\x,y radius=\y];
\path[name path=construct1] (Aul) -- (Alr);
\path[name path=construct2] (All) -- (Aur);
\path [name intersections={of = ellipsel and construct1}];
\coordinate (X) at (intersection-1);
\coordinate (Y) at (intersection-2);
\draw (X) -- (Y);
\path [name intersections={of = ellipsel and construct2}];
\coordinate (X) at (intersection-1);
\coordinate (Y) at (intersection-2);
\draw (X) -- (Y);

\node at ($(Ac)+(-2.5cm,0cm)$) {$C_u$}; 
\node at ($(Ac)+(0cm,3.5cm)$) {$s$};

\def\bx{5.5} \def\by{0}  

\coordinate (Bc) at (\bx,\by);
\coordinate (Btop) at ($(Bc) + (0, \y)$);
\coordinate (Bbot) at ($(Bc) + (0, -\y)$);
\coordinate (Bul) at ($(Bc) + (-\x, \y)$);
\coordinate (Bll) at ($(Bc) + (-\x, -\y)$);
\coordinate (Bur) at ($(Bc) + (\x, \y)$);
\coordinate (Blr) at ($(Bc) + (\x, -\y)$);

\draw (Btop) -- (Ctop);
\draw (Bbot) -- (Cbot);
\draw (Btop) -- (Cbot);
\draw (Bbot) -- (Ctop);

\draw[name path=ellipser, fill=white](Bc)circle[x radius=\x,y radius=\y];
\path[name path=construct1] (Bul) -- (Blr);
\path[name path=construct2] (Bll) -- (Bur);
\path [name intersections={of = ellipser and construct1}];
\coordinate (X) at (intersection-1);
\coordinate (Y) at (intersection-2);
\draw (X) -- (Y);
\path [name intersections={of = ellipser and construct2}];
\coordinate (X) at (intersection-1);
\coordinate (Y) at (intersection-2);
\draw (X) -- (Y);

\node at ($(Bc)+(2.5cm,0cm)$) {$C_v$};
\node at ($(Bc)+(0cm,3.5cm)$) {$s$};

\draw[name path=ellipse, fill=white](0,0)circle[x radius=\xx,y radius=\yy];
\path[name path=construct1] (Cul) -- (Clr);
\path[name path=construct2] (Cll) -- (Cur);
\path [name intersections={of = ellipse and construct1}];
\coordinate (X) at (intersection-1);
\coordinate (Y) at (intersection-2);
\draw (X) -- (Y);
\path [name intersections={of = ellipse and construct2}];
\coordinate (X) at (intersection-1);
\coordinate (Y) at (intersection-2);
\draw (X) -- (Y);

\node at ($(Cc)+(0cm,-3cm)$) {$C_e$};
\node at ($(Cc)+(0cm,2.5cm)$) {$t$};
\end{tikzpicture}
}
\caption{The edge gadget for an edge $e=\{u,v\}$. The edges to the vertices $\{w_1, \dots, w_k \mid w\in C_u \cup C_e \cup C_v\}$ and $\{p_1, \dots, p_k\}$ are omitted.}
\label{fig:WR3EdgeGadget}
\end{figure}
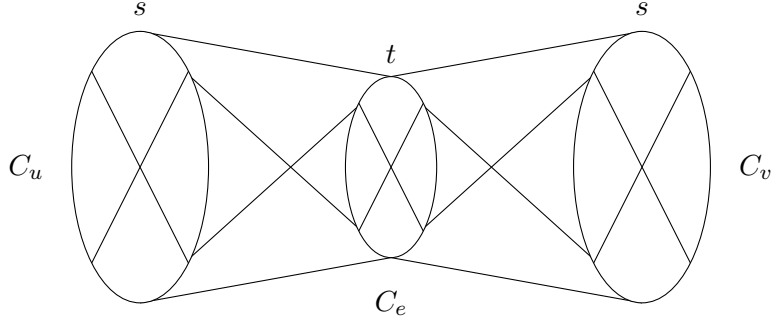

Finally, $J$ is the graph with vertices
\[
V(J)=\{p_{k+1}, \dots, p_q\} \cup \bigcup_{v\in V(G)} V(J_v) \cup \bigcup_{e\in E(G)} V(J_e) 
\]
and edges
\[
E(J)=\bigcup_{i\in [q]} \bigl(\ucp{V(C_{\tau_i})}{\{p_i\}}\bigr) \cup \bigcup_{v\in V(G)} E(J_v) \cup \bigcup_{e\in E(G)} \bigl(E(J_e)\cup E_e\bigr).
\]
Note that the first set in the union is a set of edges for each terminal $\tau_i$. The purpose of these edges will be to ensure that the corresponding graph $J_{\tau_i}$ is in the right ``state'' (the one corresponding to $i$). Here we now use all of the $p_i$, not just the first $k$ as we did in the construction of $J_v$ and $J_e$.

Next we define the lists $\boldS=\{S_v\subseteq V(H_b) \mid v\in V(J)\}$. We set
\[
S_v=
\begin{cases}
\{x_i\}, &\text{if } v=p_i,\ i\in[q]\\
V(H_b), &\text{otherwise.}
\end{cases}
\]

For a homomorphism $h\in \calH((J,\boldS),H_b)$ and a vertex $v\in V(G)$ we say that the image $h(V(C_v))$ is the \emph{state} of $v$ (under $h$). 

A \emph{pinned configuration} is a tuple $(z, z_1, \dots, z_k)$ of vertices of $H_b$ such that, for each $i\in [k]$, $\{z,z_i\}$ and $\{z_i,x_i\}$ are edges of $H_b$. Note that the vertices $(w, w_1, \dots, w_k)$ of $C_v$ (see Figure~\ref{fig:WR3VertexGadget}) have to map to a pinned configuration under a homomorphism from $(J,\boldS)$ to $H_b$. 
For $z\in V(H_b)$ let $f(z)$ be the number of pinned configurations $(z, z_1, \dots, z_k)$. We have
\begin{align}
f(b)&= 2^k &&\text{(All $z_j$ can be either $x_j$ or $b$.)}\label{equ:b-configs}\\
f(x_i)&= 2\ (\forall i\in [k]) &&\text{($z_i$ can be either $x_i$ or $b$, all other $z_j$ have to be $b$.)}\label{equ:x-configs}\\
f(x_i)=f(y_i)&= 1\ (\forall i\in\{k+1, \dots, q\}) &&\text{(All $z_j$ have to be $b$.)}\label{equ:y-configs}\\
f(u)&=1\ (\forall u\in U) &&\text{(All $z_j$ have to be $b$.)}\label{equ:u-configs}
\end{align}
We say that a vertex $v\in V(G)$ is \emph{full} (under $h$) if the following conditions are met:
\begin{itemize}
	\item There exists $i\in [q]$ such that $h(V(C_v))=\NHb(x_i)$.
	\item For every element $z\in h(V(C_v))$ and every pinned configuration $(z, z_1, \dots, z_k)$ there exists a vertex $w$ in the clique $C_v$ such that $h(w, w_1, \dots, w_k)= (z, z_1, \dots, z_k)$ (elementwise).
\end{itemize}
We call an edge $e=\{u,v\}\in E(G)$ \emph{monochromatic} (under $h$) if $u$ and $v$ have the same state. Otherwise, we say that $e$ is \emph{dichromatic}. 
We say that the homomorphism $h$ is \emph{full} if every vertex $v\in V(G)$ is full under $h$. We say that a full homomorphism $h$ is \emph{$K$-small} if there are at most $K$ dichromatic edges under $h$, otherwise we say that it is \emph{$K$-large}.

Let $Z^*$ be the number of full homomorphisms that are $K$-small. Further, let $Z_1$ be the number of full homomorphisms that are $K$-large and let $Z_2$ be the number of non-full homomorphisms. Then
\[
\hom{(J, \boldS)}{H_b} = Z^* + Z_1 + Z_2.
\]

Let $T$ be the sought-for number of size-$K$ multiterminal cuts of the instance $I=(G, \tau_1, \dots, \tau_q, K)$. We will now investigate the way in which the number $Z^*$ relates to $T$. Recall the definitions about separating functions from Definition~\ref{def:Phi}. In particular, we will use the sets $\Phi(I)$, $\Phi^*(I)$ and $\Ecut(\phi)$. To shorten notation within the scope of this proof, we write $\Phi$ when we mean $\Phi(I)$ and $\Phi^*$ when we mean $\Phi^*(I)$. From Observation~\ref{obs:Phi} we know that $T=\abs{\Phi^*}$.

For a function $\phi\in \Phi$ we say that a homomorphism $h\in  \calH((J,\boldS),H_b)$ \emph{agrees} with $\phi$ if, for each vertex $v$ of $G$, the state of $v$ under $h$ is $h(V(C_v))=\NHb(x_{\phi(v)})$. Note that, by the construction of $J$, under a full homomorphism a terminal $\tau_i$ has state $\NHb(x_{i})$. Therefore, each full homomorphism $h$ agrees with exactly one $\phi \in \Phi$.

If $h$ agrees with $\phi$, then it follows that $\Ecut(\phi)$ is exactly the set of dichromatic edges under $h$. Hence, each $K$-small full homomorphism agrees with exactly one function $\phi \in \Phi^*$ and each $K$-large full homomorphism agrees with exactly one function $\phi \in \Phi\setminus\Phi^*$. Let $Z_\phi$ be the number of full homomorphisms that agree with $\phi\in \Phi$. Then
\begin{equation}\label{equ:WR3_Z*Z1}
Z^*=\sum_{\phi\in\Phi^*} Z_\phi \quad \text{ and } \quad Z_1=\sum_{\phi\in\Phi\setminus\Phi^*} Z_\phi.
\end{equation}

Let $\phi \in \Phi$. We are interested in the number $Z_\phi$. What are the possible full homomorphisms $h$ that agree with $\phi$? 

\begin{description}
	\item[Observation A] Let $v\in V(G)$. We consider possible images of the vertices of $J_v$. For $h$ to agree with $\phi$, the state of $v$ is fixed to be $\NHb(x_{\phi(v)})$, where this set can be either of the form $\{b,x_i\}$ or of the form $\{b,x_i, y_i\}$. From Equations~\eqref{equ:b-configs} and~\eqref{equ:x-configs} it follows that there are a total of $2^k+2$ pinned configurations $(z, z_1, \dots, z_k)$ with $z\in\{b, x_i\}$. Similarly, from Equations~\eqref{equ:b-configs} and~\eqref{equ:y-configs} it follows that there are a total of $2^k+1+1$ pinned configurations with $z\in\{b, x_i, y_i\}$. As $h$ is full, each possible pinned configuration has to be used at least once by the $s$ vertices in $C_v$. As a consequence each vertex $v\in V(G)$ contributes a factor of $\stirling{s}{2^k+2}$ to $Z_\phi$. 
	\item[Observation B] Let $e=\{u,v\}$. What are the possible images of the vertices of $J_e$? We make a case distinction depending on $e$.
	\begin{itemize}
		\item Let $e=\{u,v\}\in \Ecut(\phi)$. Then, as $h$ is full, $h(V(C_u))$ and $h(V(C_v))$ are different states from the set $\{\NHb(x_i)\mid i\in [q]\}$. By the definition of $J$ we have that $h(V(C_e))\subseteq h(V(C_u))\cap h(V(C_v))$. It follows that $h(V(C_e))=\{b\}$. There are $2^k$ pinned configurations with $z=b$. Each of the $t$ vertices of $C_e$ can have any of these $2^k$ pinned configurations. Therefore, $e$ contributes a factor of $2^{kt}$ to $Z_\phi$.
		\item Let $e=\{u,v\}\notin \Ecut(\phi)$. Then, as $h$ is full, $h(V(C_u))=h(V(C_v))=\NHb(x_i)$ for some $i\in[q]$. Then $h(C_e)\subseteq \NHb(x_i)$, where $\NHb(x_i)$ is of the form $\{b,x_i\}$ or of the form $\{b,x_i,y_i\}$. As before there are $2^k + 2$ corresponding pinned configurations. Therefore, $e$ contributes a factor of $(2^k+2)^t$ to $Z_\phi$. 
	\end{itemize}
\end{description}
Summarising, we obtain
\begin{equation}\label{equ:WR3_Zphi}
	Z_\phi = \stirling{s}{2^k+2}^n (2^k)^{t\abs{\Ecut(\phi)}}(2^k+2)^{t(m- \abs{\Ecut(\phi)})}.
\end{equation}
For each $\phi\in\Phi^*$ we have $\abs{\Ecut(\phi)}=K$. Then, using the fact that $\abs{\Phi^*}=T$, we obtain
\begin{align*}
	Z^* &= \sum_{\phi \in \Phi^*} Z_\phi\\
	&= \sum_{\phi \in \Phi^*} \stirling{s}{2^k+2}^n (2^k)^{tK}(2^k+2)^{t(m- K)}\\
	&= \stirling{s}{2^k+2}^n (2^k)^{tK}(2^k+2)^{t(m- K)} \cdot T.
\end{align*}
To shorten the notation let 
\[
	L=\stirling{s}{2^k+2}^n (2^k)^{tK}(2^k+2)^{t(m- K)}.
\] 
We want to approximately compute the value $T$, where we have shown that $T=Z^*/L$. Assume for now that we have 
\[
	\hom{(J, \boldS)}{H_b}/L \in [Z^*/L, Z^*/L+1/4] = [T, T+1/4].
\]
Then consider the algorithm which makes a $\Ret{H_b}$ oracle call with input $((J, \boldS),\eps/21)$ to obtain a value $Q$ and returns $\floor{Q/L}$. This algorithm approximates $T$ with the desired error bound $\eps$ as is shown in~\cite[Proof of Theorem 3]{DGGJApprox}.
It remains to show the following claim.

\medskip
\noindent{\bf\boldmath Claim: $\hom{(J, \boldS)}{H_b}/L \in [Z^*/L, Z^*/L+1/4]$.}
\medskip

\noindent {\it Proof of the claim:}\quad
Recall that $\hom{(J, \boldS)}{H_b}=Z^* + Z_1+Z_2$. Clearly, we have $\hom{(J, \boldS)}{H_b}/L \ge Z^*/L$.
We will show $Z_1/L\le 1/8$ and $Z_2/L\le 1/8$ to prove $\hom{(J, \boldS)}{H_b}/L\le Z^*/L+1/4$.

Recall that $Z_1$ is the number of $K$-large full homomorphisms. Using~\eqref{equ:WR3_Z*Z1} and~\eqref{equ:WR3_Zphi} and the fact that for each $\phi\in \Phi\setminus\Phi^*$ we have $\abs{\Ecut(\phi)}\ge K+1$ we obtain
\begin{align*}
Z_1 &\le  \sum_{\phi\in \Phi\setminus\Phi^*}\stirling{s}{2^k+2}^n (2^k)^{t(K+1)}(2^k+2)^{t(m-K-1)}\\
&\le  q^n\stirling{s}{2^k+2}^n (2^k)^{t(K+1)}(2^k+2)^{t(m-K-1)}
\end{align*}
where the last inequality follows from the fact that there are $q^n$ functions in $\Phi$. Then

\begin{equation*}
\frac{Z_1}{L}
\le \left(\frac{2^{k}}{(2^k+2)}\right)^tq^n \le 1/8,
\end{equation*}
where the last inequality holds for sufficiently large $n$ by our choice of $t=n^2$ and the fact that $2^k/(2^k+2)<1$.

Recall that $Z_2$ is the number of homomorphisms that are not full. How many non-full homomorphisms $h$ are there? In general, there are at most $2^{\abs{V(H_b)}}$ possible states $h(V(C_v))$ for any vertex $v\in V(G)$. By the same arguments as given in Observation A, each full vertex under $h$ contributes a factor of $\stirling{s}{2^k+2}$ to $Z_2$. Since $s=n^5$ the requirements of Corollary~\ref{cor:DGGJ18} are met for sufficiently large $n$ and we obtain 
\begin{equation}\label{equ:WR3_fullv}
(2^k+2)^s/2\le \stirling{s}{2^k+2}.
\end{equation}
We will use this bound shortly. If $h$ is not full, there has to exist at least one vertex $v\in V(G)$ which is not full under $h$, and consequently there are at most $n-1$ full vertices under $h$.

Now assume that $v\in V(G)$ is not full under $h$. Then either $v$ has state $h(V(C_v))=\NHb(x_i)$ for some $i\in [q]$ but not all of the $2^k+2$ possible pinned configurations are used, or $h(V(C_v))\notin\{\NHb(x_i)\mid i\in [q]\}$ (which means that either $h(V(C_v))\subsetneq\NHb(x_i)$ or $h(V(C_v))=\{b,u\}$ for some $u\in U$, by the fact that $C_v$ is a large clique). In both cases the $s$ vertices in $C_v$ can each use at most $2^k+1$ different pinned configurations. Hence, each non-full vertex contributes a factor of at most $(2^k+1)^s$ to $Z_2$. In particular this factor is smaller (for sufficiently large $n$) than the factor contributed by full vertices (see~\eqref{equ:WR3_fullv}). Finally, for each edge $e$ there are at most $\abs{V(H_b)}^{(k+1)t}$ mappings from the $(k+1)\cdot t$  vertices in $V(J_e)\setminus \{p_1, \dots, p_k\}$ to $V(H_b)$. Therefore,
\[
Z_2 \le  2^{\abs{V(H_b)}n} \cdot \stirling{s}{2^k+2}^{n-1}\cdot (2^k+1)^s \cdot \abs{V(H_b)}^{(k+1)tm}.
\]
Recall that $L=\stirling{s}{2^k+2}^n (2^k)^{tK}(2^k+2)^{t(m- K)}\ge \stirling{s}{2^k+2}^n$. Then
\begin{equation*}
\frac{Z_2}{L}
\le \frac{2^{\abs{V(H_b)}n} \cdot (2^k+1)^s\cdot \abs{V(H_b)}^{(k+1)tm}}{\stirling{s}{2^k+2}}
\le \left(\frac{2^k+1}{2^k+2}\right)^s\cdot2\cdot 2^{\abs{V(H_b)}n}\cdot \abs{V(H_b)}^{(k+1)n^4}\le 1/8,
\end{equation*}
where the second inequality follows from~\eqref{equ:WR3_fullv} and the last inequality holds for sufficiently large $n$ by our choice of $s=n^5$. This proves the claim and completes the proof.
{\it (End of the proof of the claim.)}
\end{proof}

\subsection{Square-Free Graphs with an Induced Net} \label{sec:inducedNet}
The goal of this section is to prove $\sat$-hardness for square-free graphs with an induced net (see Figure~\ref{fig:net}). Note that the subgraphs of the net that are induced by a distance-$1$ neighbourhood of some vertex $u$ of the net are of two forms. Either the corresponding subgraph is a looped edge (if $u\notin\{w_i\mid i\in [3]\}$) or it is isomorphic to a looped triangle where one vertex in the triangle has a single additional looped neighbour (if $u\in\{w_i\mid i\in [3]\}$). Approximately counting retractions to either of these two graphs is $\bis$-easy (see Theorem~\ref{thm:bisEasyRet}). Therefore we cannot use these subgraphs in our hardness proof, so we need to work harder.

\begin{figure}[ht]
	\centering
	\begin{tikzpicture}[scale=1, baseline=0.36cm, every loop/.style={min distance=10mm,looseness=10}]

			\filldraw (0,0) node(d){} circle[radius=3pt] --++ (30:1cm) node(a){} circle[radius=3pt] --++ (60:1cm) node(b){} circle[radius=3pt] --++ (-60:1cm) node(c){} circle[radius=3pt]--++ (-30:1cm) node(f){} circle[radius=3pt];
			\filldraw (b.center) --++ (90:1cm) node(e){} circle[radius=3pt];
			\draw (a.center) -- (c.center);	
			
			\node at ($(a)+(.2cm,-.3cm)$) {$w_1$};	
			\node at ($(b)+(-.4cm,0cm)$) {$w_2$};	
			\node at ($(c)+(.2cm,.2cm)$) {$w_3$};
			
			\path[-] (a.center) edge  [in=155,out=85,loop] node {} ();
			\path[-] (d.center) edge  [in=155,out=85,loop] node {} ();
			\path[-] (b.center) edge  [in=30,out=-30,loop] node {} ();
			\path[-] (e.center) edge  [in=30,out=-30,loop] node {} ();
			\path[-] (c.center) edge  [in=-155,out=-85,loop] node {} ();
			\path[-] (f.center) edge  [in=-155,out=-85,loop] node {} ();

	\end{tikzpicture}
	\caption{The \emph{net}.}
\label{fig:net}
\end{figure}

In our proof (Lemma~\ref{lem:inducedNet}) we use the same general approach that we used to prove Lemma~\ref{lem:inducedWR3General}. To make the approach work we have to find new gadgets tailored to the net. In one part of the reduction we will need to approximate real values by integers. To achieve this we use Dirichlet's approximation lemma, which has been used frequently in this line of research (see for instance~\cite{GGJBIS}).

\begin{lem}[{see e.g.~\cite[p. 34]{Schmidt1991}}] \label{lem:Dirichlet}
Let $\lambda_1, \dots, \lambda_d > 0$ be real numbers and $N$ be a natural number. Then there exist positive integers $t_1, \dots, t_d, r$ with $r \le N$ such that $\abs{r\lambda_i - t_i}\le 1/N^{1/d}$ for every $i\in [d].$
\end{lem}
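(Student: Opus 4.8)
The plan is to prove this by Dirichlet's box principle in its ``overlapping translates'' form, which is the classical argument and, unlike the naive grid-of-cubes version, needs no integrality assumption on $N^{1/d}$. I would work in the torus $\mathbb{R}^d/\mathbb{Z}^d$, identified with the half-open cube $[0,1)^d$, and write $\{x\}$ for the fractional part of a real number $x$. Consider the $N+1$ points $P_k = (\{k\lambda_1\},\dots,\{k\lambda_d\})$ for $k=0,1,\dots,N$, and attach to each of them the half-open box $B_k = P_k + [0,N^{-1/d})^d$, reduced modulo $\mathbb{Z}^d$. Each $B_k$ has $d$-dimensional volume $(N^{-1/d})^d = 1/N$, so the $N+1$ boxes have total volume $(N+1)/N$, which exceeds the volume $1$ of the torus.

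First I would invoke pigeonhole via volume: since the combined volume of the $B_k$ exceeds that of the torus, two of the boxes overlap, say $B_a \cap B_b \neq \emptyset$ with $0 \le b < a \le N$. Unwinding the overlap, there are vectors $u,v \in [0,N^{-1/d})^d$ with $P_a + u \equiv P_b + v \pmod{\mathbb{Z}^d}$, so in each coordinate $\{a\lambda_i\} - \{b\lambda_i\}$ is congruent modulo $1$ to $v_i - u_i$, a number of absolute value less than $N^{-1/d}$. Setting $r = a - b$, so that $1 \le r \le N$, there is then a (unique) integer $t_i$ with $r\lambda_i - t_i = v_i - u_i$ (obtained by adjusting $\lfloor a\lambda_i\rfloor - \lfloor b\lambda_i\rfloor$ by at most $1$), and hence $\abs{r\lambda_i - t_i} < N^{-1/d} = 1/N^{1/d}$ for every $i \in [d]$, which is the required estimate.

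Finally I would settle the positivity of $r$ and the $t_i$. We have $r \ge 1$ by construction; and since $\lambda_i > 0$ and $r \ge 1$ we get $r\lambda_i > 0$, whence $t_i > r\lambda_i - N^{-1/d} > -1$, so each $t_i$ is a non-negative integer, and it is strictly positive as soon as $N^{-1/d} < \min_i \lambda_i$, a condition one may arrange by taking $N$ large enough, which is precisely the regime in which the lemma is applied downstream. I do not expect a genuine obstacle, since this is a textbook result; the only point repaying attention is the replacement of the grid of $N^{1/d}$-cubes (which would force $N^{1/d}\in\mathbb{N}$) by the overlapping-translates count, together with keeping track of signs so that $r$ and the $t_i$ come out positive.
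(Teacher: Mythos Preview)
The paper does not supply its own proof of this lemma; it simply cites it as a classical result from Schmidt's book. Your argument is the standard pigeonhole-via-volume proof of Dirichlet's simultaneous approximation theorem and is correct. Your observation about the positivity of the $t_i$ is also well taken: the lemma as stated is not literally true for arbitrary $N$ when some $\lambda_i$ is tiny (e.g.\ $d=1$, $\lambda_1=1/1000$, $N=100$ forces $t_1=0$), but in the paper's sole application the $\lambda_i$ are the quantities $s_i = n^2\log_{|\Gamma_H(w_i)|/3}2$, which are large, so $t_i \approx r s_i \ge 1$ automatically; your remark that positivity holds ``in the regime in which the lemma is applied downstream'' is exactly the right caveat.
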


\begin{lem}\label{lem:inducedNet}
Let $H$ be a square-free graph that has the net (as displayed in Figure~\ref{fig:net}) as an induced subgraph.
Then $\sat\leap\Ret{H}$.
\end{lem}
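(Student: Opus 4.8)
The plan is to reduce from $\TCut{3}$, which is $\sat$-equivalent by Lemma~\ref{lem:TCutSAT}, following the overall strategy of the proof of Lemma~\ref{lem:inducedWR3General} (simulating the $3$-state ferromagnetic Potts model with single-vertex pinning). First, if $H$ contains an induced mixed triangle we are done by Lemmas~\ref{lem:hardtriangles1} and~\ref{lem:hardtriangles2}, so we may assume $H$ has no induced mixed triangle; together with square-freeness this controls the local structure around the net's vertices (any triangle through a looped vertex is reflexive, and the common neighbourhood of two adjacent net vertices is forced to be small, e.g.\ $\NH(w_i)\cap\NH(w_j)=\{w_1,w_2,w_3\}$). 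Write $w_1,w_2,w_3$ for the net's triangle and $z_1,z_2,z_3$ for its pendants, with $z_i$ adjacent to $w_i$. The three vertices $w_1,w_2,w_3$ will serve as the three ``colours''.

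Given an instance $I=(G,\tau_1,\tau_2,\tau_3,K)$ of $\TCut{3}$ with $n=\abs{V(G)}$, $m=\abs{E(G)}$ and desired precision $\eps$, I would build an instance $(J,\boldS)$ of $\Ret{H}$ as follows. For each $v\in V(G)$ introduce a vertex gadget $J_v$ built around a large clique $C_v$ of size $s=\mathrm{poly}(n)$ (cliques in $H$ have at most $3$ vertices, since $K_4$ contains a $C_4$) whose ``core'' is forced, under a well-behaved homomorphism to $H$, to map onto one of $w_1,w_2,w_3$ — this is the colour, i.e.\ the \emph{state} of $v$. To exclude lower-degree images (pendants, single loops) and to make the three colours interchangeable, attach pendant-style structures mimicking the private pendants $z_i$ of the net, exploiting the fact that the $w_i$ have strictly larger neighbourhoods; since for arbitrary square-free $H$ the three $w_i$ need not be equivalent, I expect to have to equalise the per-colour vertex weights (and, for edges, the per-monochromatic-class weights) by taking suitably many parallel copies of these structures, with the multiplicities chosen via Dirichlet's approximation lemma (Lemma~\ref{lem:Dirichlet}). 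For each $e=\{u,v\}\in E(G)$ introduce an edge gadget (a further clique $C_e$ of size $t=\mathrm{poly}(n)$, joined to $C_u$ and $C_v$ exactly as in the proof of Lemma~\ref{lem:inducedWR3General}), forcing $h(C_e)$ into the intersection of the states of $u$ and $v$, so a monochromatic edge contributes a strictly larger factor than a dichromatic one. Finally, pin $C_{\tau_i}$ via the single-vertex list $\{w_i\}$.

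Then I would run the same bookkeeping as in Lemma~\ref{lem:inducedWR3General}: call a homomorphism \emph{full} if every vertex gadget realises a colour and uses all ``pinned configurations'', and \emph{$K$-small} if it is full and induces at most $K$ dichromatic edges; full homomorphisms correspond to separating functions $\phi\in\Phi(I)$ and $K$-small full ones to $\Phi^*(I)$ (Definition~\ref{def:Phi}, Observation~\ref{obs:Phi}). After the balancing step each $K$-small full homomorphism contributes the same weight, so their number $Z^*$ equals $L\cdot T(I)$ for an explicit factor $L$ (involving Stirling numbers, estimated via Corollary~\ref{cor:DGGJ18}). It remains to bound the number $Z_1$ of $K$-large full homomorphisms and the number $Z_2$ of non-full homomorphisms by $Z_1/L\le 1/8$ and $Z_2/L\le 1/8$: the first because the monochromatic-to-dichromatic weight ratio exceeds $1$, so with $t$ large it beats the $3^n$ colourings; the second because a non-full vertex gadget contributes a strictly smaller factor than a full one, so with $s$ large it beats all remaining combinatorial factors. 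A single oracle call to $\Ret{H}$ at precision $\eps/21$ then recovers $T(I)$, as in \cite[Proof of Theorem 3]{DGGJApprox}.

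The main obstacle I anticipate is the design and analysis of the vertex gadget: unlike in the $\WR{3}$ case, we cannot pass to the (here $\bis$-easy) neighbourhood of a single vertex, so the gadget must use the global net structure, pin its core onto $\{w_1,w_2,w_3\}$ while excluding all spurious images permitted by an arbitrary square-free $H$, and — via the Dirichlet balancing — make the three colours and the monochromatic edge weights equal up to a controllable error; verifying that all non-full homomorphism types are exponentially dominated will require a careful case analysis of how the gadget cliques and their pendant attachments can map into $H$.
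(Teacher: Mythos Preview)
Your high-level plan (reduce from $\TCut{3}$, use Dirichlet's lemma to equalise the three colours) matches the paper, but your proposed vertex gadget is far more elaborate than what is needed, and the ``main obstacle'' you anticipate is precisely what the paper's construction sidesteps.

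The paper does \emph{not} use large cliques, pendant attachments, the full/non-full distinction, or Stirling numbers here. Instead, the vertex gadget $J_v$ is simply a star: connect $v$ to the three terminals $\tau_1,\tau_2,\tau_3$, each pinned to $w_i$ via a singleton list. Because $H$ is square-free, $K_4\not\subseteq H$, so $\NH(\{w_1,w_2,w_3\})=\{w_1,w_2,w_3\}$; hence \emph{every} homomorphism from $(J,\boldS)$ to $H$ already satisfies $h(v)\in\{w_1,w_2,w_3\}$. There is nothing to ``force'' and no non-full case to bound --- every homomorphism agrees with a separating function $\phi\in\Phi(I)$, and the per-colour vertex weight is exactly $1$, so no vertex-weight balancing is required either. (The paper does not even need to first rule out mixed triangles.)

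The edge gadget is correspondingly simpler: for $e=\{u,v\}$ the paper takes three independent sets $I_e^1,I_e^2,I_e^3$ of sizes $t_1,t_2,t_3$, with all of $I_e$ joined to $u$ and $v$, and $I_e^i$ additionally joined to $\tau_i$. A dichromatic edge forces $h(I_e)\subseteq\{w_1,w_2,w_3\}$, contributing $3^{t_1+t_2+t_3}$; a monochromatic edge in colour $i$ lets $I_e^i$ range over $\NH(w_i)$ while $I_e^j$ ($j\ne i$) is still confined to the triangle, contributing $\abs{\NH(w_i)}^{t_i}\,3^{t-t_i}$. The only imbalance is the factor $\bigl(\abs{\NH(w_i)}/3\bigr)^{t_i}$, and Dirichlet's lemma is applied once to the three reals $s_i=s\log_{\abs{\NH(w_i)}/3}2$ to choose integer $t_i\approx r s_i$ so that all three monochromatic edge weights become $2^{rs}$ up to a multiplicative $e^{\pm\delta}$. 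The error propagates multiplicatively, so one halves the precision rather than using the additive $[T,T+1/4]$ argument.

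In short: your clique-based gadget imported from the $\WR{3}$ proof is unnecessary here; the common-neighbourhood-of-the-triangle trick gives a one-line vertex gadget and eliminates the analysis you were worried about.
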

\begin{proof}
Let $H$ be a square-free graph with an induced net that is labelled as in Figure~\ref{fig:net}. Note that each of the vertices $w_1$, $w_2$ and $w_3$ might have additional neighbours in $H$. However, they cannot have further common neighbours as $H$ is square-free.
We use a reduction from $\TCut{3}$ which is $\sat$-hard under AP-reductions by Lemma~\ref{lem:TCutSAT}. Let $I=(G, \tau_1, \tau_2, \tau_3, K)$ be an instance of $\TCut{3}$ and $\eps\in(0,1)$ the desired precision bound. Let $n=\abs{V(G)}$ and $m=\abs{E(G)}$. We will construct an instance $(J, \boldS)$ of $\Ret{H}$. 

To construct this instance we will use a number of parameters which we introduce at this point. Let $s=n^2$. For $i\in[3]$ let $s_i=s\cdot\log_{\abs{\NH(w_i)}/3}2$.
For $i\in [3]$, the value $s_i$ is chosen such that 
\begin{equation}\label{equ:srsbsg}
\left(\frac{\abs{\NH(w_i)}}{3}\right)^{s_i}=
2^s.
\end{equation}
It will become clear later in the proof why this is useful. (The important part is that the values $\left(\frac{\abs{\NH(w_i)}}{3}\right)^{s_i}$, for $i\in [3]$, are identical and that the base of the exponent on the right-hand side is greater than $1$.) We will now determine integers that approximate $s_1$, $s_2$ and $s_3$ using Dirichlet's approximation lemma. Let $\delta=\eps/2$ and $\delta'=\log_{\abs{V(H)}} e^\delta$ (the choices will become clear later in the proof). By Lemma~\ref{lem:Dirichlet} we obtain integers $r$, $t_1$, $t_2$ and $t_3$ of value at most $(m/\delta')^3\in \text{poly}(n,\eps^{-1})$ such that $\abs{rs_i - t_i}\le \delta'/m$ for $i\in[3]$.

We go on to define the graph $J$. Intuitively, for $i\in[3]$, the terminal $\tau_i$ will serve as a ``pin'' to the vertex $w_i$. For each vertex $v\in V(G)$ we introduce a graph $J_v$ which is simply a star with center $v$ and leaves $\{\tau_1, \tau_2, \tau_3\}$. Formally, the vertices of $J_v$ are $V(J_v)=\{v, \tau_1, \tau_2, \tau_3\}$. and the edges are $E(J_v)=\ucp{\{v\}}{\{\tau_1, \tau_2, \tau_3\}}$. For each edge $e=\{u,v\}\in E(G)$ we introduce a graph $J_e$ which is defined as follows. For $i\in[3]$, let $I_e^i$ be disjoint independent sets of size $t_i$. Let $t=\sum_{i\in [3]} t_i$ and let $I_e=\bigcup_{i\in [3]}I_e^i$ ($I_e$ is an independent set of size $t$). Then $J_e$ is depicted in Figure~\ref{fig:NetEdgeGadget} and formally defined as the graph with vertices
\[
V(J_e)=I_e \cup \{u,v,\tau_1,\tau_2,\tau_3\}
\]
and edges
\[
E(J_e)=\left(\ucp{\{u,v\}}{I_e}\right) \cup \bigcup_{i=1}^3\left(\ucp{I_e^i}{\{\tau_i\}}\right).
\]

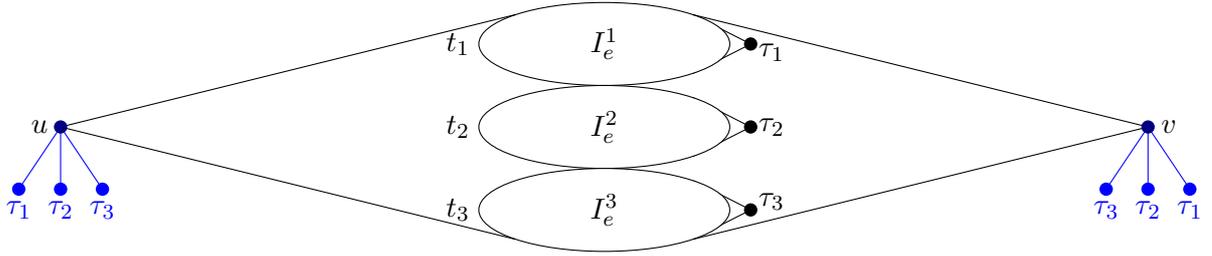
\begin{figure}[tbh]\centering
{\def\scaleFactor{.55}\centering
\begin{tikzpicture}[scale=\scaleFactor]
	
	\coordinate (u) at (-13,0);
	\coordinate (u1) at (-14,-1.5);
	\coordinate (u2) at (-13,-1.5);
	\coordinate (u3) at (-12,-1.5);

	\coordinate (v) at (13,0);
	\coordinate (v1) at (14,-1.5);
	\coordinate (v2) at (13,-1.5);
	\coordinate (v3) at (12,-1.5);

\coordinate (tau1) at (3.5,2);
\coordinate (tau2) at (3.5,0);
\coordinate (tau3) at (3.5,-2);

\def\xx{3} \def\yy{1} 

\coordinate (Cc1) at (0,2);
\coordinate (Ctop1) at ($(Cc1) + (0, \yy)$);
\coordinate (Cbot1) at ($(Cc1) - (0, \yy)$);
\coordinate (Cul1) at ($(Cc1) + (-\xx, \yy)$);
\coordinate (Cll1) at ($(Cc1) + (-\xx, -\yy)$);
\coordinate (Cur1) at ($(Cc1) + (\xx, \yy)$);
\coordinate (Clr1) at ($(Cc1) + (\xx, -\yy)$);

\path[name path=ellipse1](Cc1)circle[x radius=\xx,y radius=\yy];
\path[name path=construct1] (Cul1) -- (Clr1);
\path[name path=construct2] (Cll1) -- (Cur1);
\path [name intersections={of = ellipse1 and construct1}];
\coordinate (X11) at (intersection-1);
\coordinate (Y11) at (intersection-2);
\path [name intersections={of = ellipse1 and construct2}];
\coordinate (X12) at (intersection-1);
\coordinate (Y12) at (intersection-2);
\draw (Y11) -- (tau1) node[circle,fill=black,inner sep=0pt,minimum size=5pt]{} -- (X12);

\coordinate (Cc2) at (0,0);
\coordinate (Ctop2) at ($(Cc2) + (0, \yy)$);
\coordinate (Cbot2) at ($(Cc2) - (0, \yy)$);
\coordinate (Cul2) at ($(Cc2) + (-\xx, \yy)$);
\coordinate (Cll2) at ($(Cc2) + (-\xx, -\yy)$);
\coordinate (Cur2) at ($(Cc2) + (\xx, \yy)$);
\coordinate (Clr2) at ($(Cc2) + (\xx, -\yy)$);

\path[name path=ellipse2](Cc2)circle[x radius=\xx,y radius=\yy];
\path[name path=construct1] (Cul2) -- (Clr2);
\path[name path=construct2] (Cll2) -- (Cur2);
\path [name intersections={of = ellipse2 and construct1}];
\coordinate (X21) at (intersection-1);
\coordinate (Y21) at (intersection-2);
\path [name intersections={of = ellipse2 and construct2}];
\coordinate (X22) at (intersection-1);
\coordinate (Y22) at (intersection-2);
\draw (Y21) -- (tau2) node[circle,fill=black,inner sep=0pt,minimum size=5pt]{} -- (X22);

\coordinate (Cc3) at (0,-2);
\coordinate (Ctop3) at ($(Cc3) + (0, \yy)$);
\coordinate (Cbot3) at ($(Cc3) - (0, \yy)$);
\coordinate (Cul3) at ($(Cc3) + (-\xx, \yy)$);
\coordinate (Cll3) at ($(Cc3) + (-\xx, -\yy)$);
\coordinate (Cur3) at ($(Cc3) + (\xx, \yy)$);
\coordinate (Clr3) at ($(Cc3) + (\xx, -\yy)$);

\path[name path=ellipse3](Cc3)circle[x radius=\xx,y radius=\yy];
\path[name path=construct1] (Cul3) -- (Clr3);
\path[name path=construct2] (Cll3) -- (Cur3);
\path [name intersections={of = ellipse3 and construct1}];
\coordinate (X31) at (intersection-1);
\coordinate (Y31) at (intersection-2);
\path [name intersections={of = ellipse3 and construct2}];
\coordinate (X32) at (intersection-1);
\coordinate (Y32) at (intersection-2);
\draw (Y31) -- (tau3) node[circle,fill=black,inner sep=0pt,minimum size=5pt]{} -- (X32);

\draw[fill=white](Cc1)circle[x radius=\xx,y radius=\yy];
\draw[fill=white](Cc2)circle[x radius=\xx,y radius=\yy];
\draw[fill=white](Cc3)circle[x radius=\xx,y radius=\yy];

\draw (u) -- (X11) (X12) -- (v);
\draw (u)  -- (Y32) (Y31)  -- (v);

\draw[draw=blue] (u) -- (u3) node[circle,fill=blue,inner sep=0pt,minimum size=5pt]{};
\draw[draw=blue] (u1) node[circle,fill=blue,inner sep=0pt,minimum size=5pt]{} -- (u) node[circle,fill=blue!50!black,inner sep=0pt,minimum size=5pt]{} -- (u2) node[circle,fill=blue,inner sep=0pt,minimum size=5pt]{};

\draw[draw=blue] (v) -- (v3) node[circle,fill=blue,inner sep=0pt,minimum size=5pt]{};
\draw[draw=blue] (v1) node[circle,fill=blue,inner sep=0pt,minimum size=5pt]{} -- (v) node[circle,fill=blue!50!black,inner sep=0pt,minimum size=5pt]{} -- (v2) node[circle,fill=blue,inner sep=0pt,minimum size=5pt]{};


\node at ($(u)+(-.5cm,0cm)$) {$u$}; 
\node at ($(u1)+(0cm,-.5cm)$) {\color{blue}$\tau_1$};
\node at ($(u2)+(0cm,-.5cm)$) {\color{blue}$\tau_2$};
\node at ($(u3)+(0cm,-.5cm)$) {\color{blue}$\tau_3$};

\node at ($(v)+(0.5cm,0cm)$) {$v$};
\node at ($(v1)+(0cm,-.5cm)$) {\color{blue}$\tau_1$};
\node at ($(v2)+(0cm,-.5cm)$) {\color{blue}$\tau_2$};
\node at ($(v3)+(0cm,-.5cm)$) {\color{blue}$\tau_3$};

\node at ($(Cc1)+(0cm,0cm)$) {$I^1_e$};
\node at ($(Cc2)+(0cm,0cm)$) {$I^2_e$};
\node at ($(Cc3)+(0cm,0cm)$) {$I^3_e$};
\node at ($(Cc1)+(-3.5cm,0cm)$) {$t_1$};
\node at ($(Cc2)+(-3.5cm,0cm)$) {$t_2$};
\node at ($(Cc3)+(-3.5cm,0cm)$) {$t_3$};

\node at ($(tau1)+(0.5cm,-.2cm)$) {$\tau_1$}; 
\node at ($(tau2)+(0.5cm,0cm)$) {$\tau_2$};
\node at ($(tau3)+(0.5cm,.2cm)$) {$\tau_3$};

\end{tikzpicture}
}
\caption{The graph $J_e$ for an edge $e=\{u,v\}$ is depicted in black. $J_u$ and $J_v$ are depicted in blue.
Note that $\tau_1$, $\tau_2$ and $\tau_3$ are global vertices. That is, $\tau_i$ is the same vertex in all of the different gadgets. }
\label{fig:NetEdgeGadget}
\end{figure}

Then $J$ is the graph with vertices 
\[
V(J)=\bigcup_{v\in V(G)} V(J_v) \cup \bigcup_{e\in E(G)} V(J_e)
\]
and edges 
\[
E(J)=\bigcup_{v\in V(G)} E(J_v) \cup \bigcup_{e\in E(G)} E(J_e).
\]
The corresponding set of lists $\boldS=\{S_v\subseteq V(H) \mid v\in V(J)\}$ is defined by
\[
S_v=
\begin{cases}
\{w_i\}, &\text{if }v=\tau_i,\ i\in [3]\\
V(H), &\text{otherwise.}
\end{cases}
\]

Let $h$ be a homomorphism from $(J, \boldS)$ to $H$. By the definition of $J_v$ every vertex $v\in V(G)$ is also a vertex of $J$. An edge $e=\{u,v\}\in E(G)$ is called \emph{monochromatic} under $h$ if $h(u)=h(v)$. Otherwise, it is called \emph{dichromatic} under $h$. We say that $h$ is \emph{$K$-small} if the number of dichromatic edges under $h$ is at most $K$. Otherwise, $h$ is called \emph{$K$-large}. Let $Z^*$ be the number of $K$-small homomorphisms from $(J,\boldS)$ to $H$ and let $Z_1$ be the number of $K$-large homomorphisms. Clearly,
\[
\hom{(J, \boldS)}{H} = Z^* + Z_1.
\]

Recall the definitions of separating functions from Definition~\ref{def:Phi} and, to shorten notation, define the sets $\Phi=\Phi(I)$ and $\Phi^*=\Phi^*(I)$. $T$ is the number of size-$K$ multiterminal cuts of the instance $I=(G, \tau_1, \tau_2, \tau_3, K)$. Our goal is to approximate $T$. From Observation~\ref{obs:Phi} we know that $T=\abs{\Phi^*}$.

We say that a homomorphism $h\in\calH((J,\boldS),H)$ \emph{agrees} with $\phi\in\Phi$ if, for each $v\in V(G)$, we have $h(v)=w_{\phi(v)}$. By definition of the lists $\boldS$, for each $i\in [3]$, a homomorphism $h$ from $(J, \boldS)$ to $H$ has to map $\tau_i$ to $w_i$. Furthermore, as $v$ is adjacent to all three terminals and $H$ is square-free we have $h(v)\in\{w_1,w_2,w_3\}$. 

At this point we have introduced the gadget $J_e$ and the graph $J$ and have defined what it means for a homomorphism from $J$ to $H$ to agree with a function in $\Phi$ (which in turn corresponds to a multiterminal cut). All these definitions are heavily tailored to the graph $H$. The following steps, however, are very similar to those in the proof of Lemma~\ref{lem:inducedWR3General}. What complicates this proof in comparison to that of Lemma~\ref{lem:inducedWR3General} is the fact that we need to use Dirichlet's approximation lemma to balance out the edge interactions. Here are the details.

Every homomorphism $h\in\calH((J,\boldS),H)$ agrees with exactly one function $\phi \in \Phi$. In particular, if $h$ agrees with $\phi$, then the dichromatic edges of $h$ are exactly the multiterminal cut $\Ecut(\phi)$. It follows that every $K$-small homomorphism agrees with exactly one function $\phi \in \Phi^*$ and every $K$-large homomorphism agrees with exactly one function $\phi \in \Phi\setminus\Phi^*$. For $\phi\in \Phi$ let $Z_\phi$ be the number of homomorphisms from $(J,\boldS)$ to $H$ that agree with $\phi$. Then
\begin{equation}\label{equ:Net_Z*Z1}
Z^*=\sum_{\phi\in\Phi^*} Z_\phi \quad \text{ and } \quad Z_1=\sum_{\phi\in\Phi\setminus\Phi^*} Z_\phi.
\end{equation}

Let $\phi\in \Phi$. We are interested in the number $Z_\phi$ and investigate which homomorphisms $h$ agree with $\phi$. For each $v\in V(G)$ the image of $J_v$ under $h$ is fixed by the lists in $\boldS$ and the fact that $h(v)=\phi(v)$. Therefore, we only need to consider possible images of the graphs $J_e$. We make a case distinction depending on $e$.
\begin{itemize}
\item Let $e=\{u,v\}\in \Ecut(\phi)$. (This means that $e$ is dichromatic under $h$.) By the definition of $J_e$ it follows that the image $h(I_e)$ is a subset of $\NH(h(u)) \cap \NH(h(v))$. The vertices $h(u)$ and $h(v)$ are distinct and are from $\{w_1, w_2, w_3\}$. As $H$ is square-free it follows that $\NH(h(u)) \cap \NH(h(v))=\{w_1, w_2, w_3\}$. In addition, each vertex of $I_e$ is adjacent to one of the terminals. Since the images of these vertices are also in $\{w_1, w_2, w_3\}$ this does not bring any additional constraints. Summarising, since $I_e$ has size $t$ the edge $e$ contributes a factor of $3^t$ to $Z_\phi$.
\item Let $e=\{u,v\}\in \mathrm{Mon}_i(\phi)$ for some $i\in[3]$. (This means that $e$ is a monochromatic edge under $h$ with $h(u)=h(v)= w_i$.) Then, for $j\in [3]$ with $j\neq i$, by the same arguments as before we have $h(I_e^j)\subseteq \{w_1, w_2, w_3\}$. However the vertices in $I_e^i$ can be mapped to any neighbour of $w_i$. Therefore, each edge in $\mathrm{Mon}_i(\phi)$ contributes a factor of $\abs{\NH(w_i)}^{t_i}\cdot 3^{t-t_i}$ to $Z_\phi$. 
\end{itemize}

Using this knowledge, for $\phi \in \Phi$, we have
\begin{equation*}
Z_\phi= 3^{t\abs{\Ecut(\phi)}} \cdot \prod_{i\in[3]} \bigl(\abs{\NH(w_i)}^{t_i}\cdot 3^{t-t_i}\bigr)^{\abs{\mathrm{Mon}_i(\phi)}}.
\end{equation*}
Since $m=\abs{\Ecut(\phi)}+\sum_{i\in[3]}\abs{\mathrm{Mon}_i(\phi)}$ we can simplify this expression to
\begin{equation*}
Z_\phi=3^{tm}\prod_{i\in[3]} \left(\abs{\NH(w_i)}/3\right)^{t_i\abs{\mathrm{Mon}_i(\phi)}}.
\end{equation*}

For $i\in[3]$, recall the fact that $\abs{rs_i - t_i}\le \delta'/m$, where $s_i=s\cdot \log_{\abs{\NH(w_i)}/3}2$. For an upper bound on $Z_\phi$ we use the fact that $t_i\le r s_i +\delta'/m$. Also note that $\abs{\NH(w_i)}$ is bounded above by $\abs{V(H)}$ (which we use in the second inequality of the following expression). Furthermore, $\sum_{i\in [3]}\abs{\mathrm{Mon}_i(\phi)}\le m$ and $\delta'= \log_{\abs{V(H)}} e^\delta$ (which we use in the third inequality of the following expression). Then
\begin{align*}
Z_\phi &\le 3^{tm}\prod_{i\in[3]} \left(\abs{\NH(w_i)}/3\right)^{(r s_i + \delta'/m)\cdot\abs{\mathrm{Mon}_i(\phi)}}\\
&\le 3^{tm}\cdot 2^{(\sum_{i\in [3]}\abs{\mathrm{Mon}_i(\phi)})\cdot rs}\cdot {\abs{V(H)}}^{(\sum_{i\in [3]}\abs{\mathrm{Mon}_i(\phi)})\cdot \delta'/m}\\
&\le 3^{tm}\cdot 2^{(\sum_{i\in [3]}\abs{\mathrm{Mon}_i(\phi)})\cdot rs}\cdot e^\delta.
\end{align*}
Analogously, for a lower bound on $Z_\phi$ we use the fact that $t_i\ge r s_i -\delta'/m$. We obtain $3^{tm}\cdot 2^{(\sum_{i\in [3]}\abs{\mathrm{Mon}_i(\phi)})\cdot rs}\cdot e^{-\delta} \le Z_\phi$.
Summarising, we have
\begin{equation}\label{equ:Zphi} 
3^{tm}\cdot 2^{(\sum_{i\in [3]}\abs{\mathrm{Mon}_i(\phi)})\cdot rs}\cdot e^{-\delta} \le Z_\phi \le 3^{tm}\cdot 2^{(\sum_{i\in [3]}\abs{\mathrm{Mon}_i(\phi)})\cdot rs}\cdot e^\delta.
\end{equation}
Putting these bounds on $Z_\phi$ into the expression for $Z^*$ in~\eqref{equ:Net_Z*Z1} gives
\begin{align*}
\sum_{\phi\in \Phi^*} 3^{tm}\cdot 2^{(\sum_{i\in [3]}\abs{\mathrm{Mon}_i(\phi)})\cdot rs}\cdot e^{-\delta} \le Z^*&\le \sum_{\phi\in \Phi^*} 3^{tm}\cdot 2^{(\sum_{i\in [3]}\abs{\mathrm{Mon}_i(\phi)})\cdot rs}\cdot e^\delta.
\end{align*}
Since $\sum_{i\in [3]}\abs{\mathrm{Mon}_i(\phi)}=m-K$ for each $\phi \in \Phi^*$ and $\abs{\Phi^*}=T$ we obtain
\begin{equation*}
T \cdot 3^{tm}\cdot 2^{(m-K)\cdot rs} \cdot e^{-\delta} \le Z^*\le T \cdot 3^{tm}\cdot 2^{(m-K)\cdot rs} \cdot e^\delta.
\end{equation*} 
We set $L=3^{tm}2^{(m-K)rs}$ to obtain
\begin{equation}\label{equ:Z*}
 T \cdot e^{-\delta}\le Z^*/L \le T \cdot e^{\delta}.
\end{equation}

Assume for now that $\hom{(J,\boldS)}{H}/L\in [Z^*/L, Z^*/L+1/4]$. Then the algorithm that makes a $\Ret{H}$ oracle call with input $((J,\boldS),\eps/42)$ returns a solution $Q$ such that $Z^*/L \cdot e^{-\eps/2}\le \floor{Q/L} \le Z^*/L \cdot e^{\eps/2}$ as was shown in~\cite[Proof of Theorem 3]{DGGJApprox}. Using~\eqref{equ:Z*} and our choice of $\delta=\eps/2$ this gives
\[
T \cdot e^{-\eps}\le \floor{Q/L} \le T \cdot e^{\eps}.
\]
Therefore the output $\floor{Q/L}$ approximates $T$ with the desired precision.
It remains to show the following claim.

\medskip
\noindent{\bf\boldmath Claim: $\hom{(J, \boldS)}{H}/L \in [Z^*/L, Z^*/L+1/4]$.}
\medskip

\noindent {\it Proof of the claim:}\quad
Recall that $\hom{(J, \boldS)}{H}=Z^* + Z_1$. It is immediate that $\hom{(J, \boldS)}{H}/L \ge Z^*/L$. It remains to show that $Z_1/L \le 1/4$.

To obtain the following expression we first use~\eqref{equ:Net_Z*Z1} and~\eqref{equ:Zphi}. The third inequality then uses the fact that, for every $\phi\in \Phi\setminus\Phi^*$, we have $\abs{\Ecut(\phi)}\ge K+1$ and hence $\sum_{i\in [3]}\abs{\mathrm{Mon}_i(\phi)}\le m-(K+1)$. Finally, in the fourth inequality we use the fact that $\abs{\Phi\setminus\Phi^*}\le \abs{V(H)}^n$
\begin{align*}
Z_1 &= \sum_{\phi\in\Phi\setminus \Phi^*} Z_\phi\\
&\le \sum_{\phi\in\Phi\setminus \Phi^*} 3^{tm}\cdot 2^{(\sum_{i\in [3]}\abs{\mathrm{Mon}_i(\phi)})\cdot rs}\cdot e^\delta\\
&\le \sum_{\phi\in\Phi\setminus \Phi^*} 3^{tm}\cdot 2^{(m-K-1)\cdot rs}\cdot e^\delta\\
&\le \abs{V(H)}^n \cdot3^{tm}\cdot 2^{(m-K-1)\cdot rs}\cdot e^\delta
\end{align*}
Recall the definition $L=3^{tm}\cdot 2^{(m-K)\cdot rs}$. It follows that
\begin{align*}
Z_1/L &\le \frac{\abs{V(H)}^n \cdot e^\delta}{2^{rs}}\le 1/4,
\end{align*}
where the last inequality holds for sufficiently large $n$ by our choice of $s=n^2$ and the fact that $r\ge 1$. This proves the claim and completes the proof.
{\it (End of the proof of the claim.)}
\end{proof}

\subsection{Square-Free Graphs with an Induced Reflexive Cycle of Length at least $5$} \label{sec:inducedCycle}

The goal of this section is to show the following lemma. We build up to its proof, which is given at the end of this section. 
\newcommand{\stateleminducedCycleGeneral}{Let $H$ be a square-free graph. If $H$ contains a reflexive cycle of length at least $5$ as an induced subgraph then $\sat\leap \Ret{H}$.}		
\begin{lem}\label{lem:inducedCycleGeneral}
	\stateleminducedCycleGeneral
\end{lem}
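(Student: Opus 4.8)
The plan is to mimic the strategy used for the induced net (Lemma~\ref{lem:inducedNet}): reduce from $\TCut{q}$ for a suitable $q\ge 3$, using single-vertex pins onto selected vertices of the induced reflexive cycle, vertex gadgets that force each vertex of~$G$ into one of $q$ ``states'', and edge gadgets built from independent sets that impose an anti-ferromagnetic interaction (monochromatic edges are favoured over dichromatic ones by a fixed multiplicative factor). Let $C=c_0 c_1 \cdots c_{\ell-1}$ be an induced reflexive cycle in~$H$ of length $\ell\ge 5$. Since $H$ is square-free, two distinct vertices $c_i, c_j$ of~$C$ have a common neighbour in~$H$ only when they are adjacent on~$C$, and in that case (again by square-freeness) their only common neighbours are $c_i$ and $c_j$ themselves. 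This ``no unexpected common neighbours'' property is exactly what made the net argument work, and it is what I would exploit here.

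\textbf{Key steps.}
First I would fix $q=\ell\ge 5$ (or, if a smaller terminal set turns out to be more convenient, $q=3$ while using only three well-separated vertices of~$C$ as genuine terminals) and designate pins $\tau_0,\dots,\tau_{q-1}$ with $S_{\tau_i}=\{c_i\}$. For each $v\in V(G)$ the vertex gadget attaches $v$ to an appropriate set of the pins so that, in any list-homomorphism, $h(v)$ is forced to lie on~$C$; the cleanest way is to connect $v$ to two pins $\tau_i,\tau_{i+2}$ (distance~$2$ on the cycle), whose only common neighbour is $c_{i+1}$, giving exactly the choices $\{c_1,\dots,c_{\ell-1}\}$ after ranging over all such pairs — here I would need to check that the gadget really does leave precisely $\ell$ states available and no more, which uses $\ell\ge5$ so that distance-$2$ neighbourhoods are singletons. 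Second, for each edge $e=\{u,v\}\in E(G)$ I would introduce independent sets $I_e^i$ of sizes $t_i$ (one per state~$i$), each $I_e^i$ joined to $u$, to $v$, and to $\tau_i$; counting the homomorphic images of $I_e$ shows that a dichromatic edge contributes a factor $|\NH(c_i)\cap\NH(c_j)|^{t}$ which, because $c_i,c_j$ are non-adjacent on~$C$ (generic case) equals $0$ unless $u,v$ map to adjacent cycle vertices — so I would instead wire $I_e$ to $c_{h(u)}$ and $c_{h(v)}$ only through a common-neighbour structure, exactly paralleling the net proof, where a monochromatic edge in state~$i$ gets the bonus factor $|\NH(c_i)|^{t_i}$ and a dichromatic edge only $2^{t}$ (the two cycle-neighbours). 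Third, as in Lemma~\ref{lem:inducedNet}, the degrees $|\NH(c_i)|$ need not be equal, so I would balance the edge weights by choosing $t_i = s\cdot\log_{|\NH(c_i)|/2}2$ and then apply Dirichlet's approximation lemma (Lemma~\ref{lem:Dirichlet}) to replace the $t_i$ by integers $t_i\approx r s_i$ with error $O(\delta'/m)$, absorbing the slack into an $e^{\pm\delta}$ factor. Finally I would split $\hom{(J,\boldS)}{H}=Z^*+Z_1$ into $K$-small (full) homomorphisms, which encode size-$K$ multiterminal cuts and satisfy $Z^*/L\in[T e^{-\delta},T e^{\delta}]$, and $K$-large ones, bounding $Z_1/L\le 1/4$ using $s=n^2$, $|\Phi\setminus\Phi^*|\le|V(H)|^n$, and the gap factor between the $(m-K)$-exponent and the $(m-K-1)$-exponent; then one oracle call with precision $\eps/42$ followed by taking $\lfloor Q/L\rfloor$ recovers~$T$ to precision~$\eps$, exactly as in~\cite[Proof of Theorem 3]{DGGJApprox}.

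\textbf{Main obstacle.}
The hard part will be isolating the correct common-neighbour structure on the cycle and verifying there are no ``leaks'': because $C$ is only induced (not a component) and the cycle has length $\ge5$, the non-consecutive cycle vertices might a priori share outside neighbours, but square-freeness rules this out — the delicate point is to phrase the vertex and edge gadgets so that (a) each $v\in V(G)$ is forced onto~$C$ with exactly $\ell$ states and no spurious ones, (b) a dichromatic edge really does collapse its independent set onto the (at most two) common cycle-neighbours and nothing else, and (c) the ``full homomorphism'' count is clean enough that non-full homomorphisms contribute a negligible $Z_2$-type term (or, as in the net case, there is no such term because the gadgets contain no large cliques, only independent sets pinned to the cycle, so every homomorphism is automatically ``full'' in the relevant sense). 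I expect that, as in Lemma~\ref{lem:inducedNet}, the absence of clique gadgets means $Z_2=0$ and the whole proof reduces to the Dirichlet balancing plus the two inequalities $Z^*/L\in[Te^{-\delta},Te^{\delta}]$ and $Z_1/L\le1/4$; the remaining care is entirely in the case analysis establishing the per-edge contribution factors $3^t$ vs.\ $|\NH(c_i)|^{t_i}3^{t-t_i}$ (with $3$ possibly replaced by~$2$ depending on how many cycle-neighbours a dichromatic pair has — for $\ell\ge5$, distance-$2$ vertices have exactly one common cycle-neighbour, so the relevant constant may differ from the net's, and I would track this carefully).
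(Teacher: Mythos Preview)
Your proposal has a genuine gap at the vertex-gadget step, and a related one at the edge-gadget step.

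In the net proof the three state vertices $w_1,w_2,w_3$ form a reflexive triangle, so connecting $v$ to all three pins forces $h(v)\in\NH(\{w_1,w_2,w_3\})=\{w_1,w_2,w_3\}$ by square-freeness, and every dichromatic pair $(w_i,w_j)$ has the \emph{same} common neighbourhood $\{w_1,w_2,w_3\}$. Neither property survives for a cycle of length $\ell\ge5$: the vertices $c_0,\dots,c_{\ell-1}$ are not pairwise adjacent, so there is no set of pins whose common neighbourhood is all of $V(C)$ (or even any three-element subset of it). Your concrete suggestion --- connect $v$ to $\tau_i$ and $\tau_{i+2}$ --- forces $h(v)=c_{i+1}$, a single value; ``ranging over all such pairs'' does not produce a union of allowed images for a fixed $v$. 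Even if you had a gadget forcing $h(v)\in V(C)$, the edge-gadget analysis would break: two non-adjacent cycle vertices at distance~$2$ have exactly one common neighbour and at distance $\ge3$ have none, so the contribution of a dichromatic edge depends on the particular pair of states, not merely on being dichromatic, and the $\TCut{q}$ arithmetic collapses.

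The paper takes a different route. It first invokes Lemmas~\ref{lem:hardtriangles1}, \ref{lem:hardtriangles2}, \ref{lem:inducedWR3General} and~\ref{lem:inducedNet} to reduce to the case where $H$ has no mixed triangle and the looped component $H^{**}$ containing $C$ is a reflexive triangle-extended cycle (this structural fact is Lemma~\ref{lem:extendedCycle}). It then reduces not from $\TCut{q}$ but from the list-homomorphism problem $\Hom{H,\calL}$ with $\calL=\{\{c_0,c_1\},\dots,\{c_0,c_{q-1}\}\}$, already known to be $\sat$-hard by Lemma~\ref{lem:LHomReflexiveCycleHardness}. The core gadget (Lemma~\ref{lem:extendedCycleAnalysis}) simulates each two-element list $\{c_0,c_\ell\}$ using only singleton lists: two ``thick paths'' of $2$-cliques are pinned at their far ends to the midpoints of the two arcs of $C$ between $c_0$ and $c_\ell$, and the triangle-extended-cycle structure of $H^{**}$ forces their shared endpoint to map to $c_0$ or $c_\ell$, with the rest of the gadget then rigidly determined. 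Iterating over the lists in $\calL$ yields $\Hom{H,\calL}\leap\Ret{H}$.
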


Let $H$ be a connected square-free graph with an induced reflexive cycle of length at least $5$. If all cycles in $H$ have length at least $5$, then $H$ has girth at least $5$ and the complexity of $\Ret{H}$ is classified by Theorem~\ref{thm:RetGirth5}. In the special case where $H$ is reflexive this classification is straightforward to see: Either there is just a single cycle in $H$, then $H$ is a pseudotree and $\sat$-hardness follows from $\NP$-hardness for the decision problem~\cite[Corollary 4.2, Theorem 5.1]{FederPseudoForest} together with~\cite[Theorem 1]{DGGJApprox} --- or there are multiple cycles (all of which have length at least $5$), then there exists an induced $\WR{3}$ (as $H$ is connected) and hardness follows from Lemma~\ref{lem:inducedWR3General}.

Thus, it remains to show hardness if $H$ contains both a cycle of length at least $5$ as well as a cycle of length at most $5$, i.e.~(since $H$ is square-free) it contains a triangle. The hardness proof we give in this section will handle the case where $H$ includes triangles but will not rely on this fact (i.e.~it will also cover the before-mentioned case where all cycles have length at least $5$ without relying on hardness results for the decision problem).

As mentioned before, it is known that approximately counting list homomorphisms to reflexive graphs with an induced cycle of length at least $4$ is $\sat$-hard~\cite[Lemma 3.4]{GGJList}. That proof makes use of a certain set of two-vertex lists. In the proof of Lemma~\ref{lem:inducedCycleGeneral} we will use single-vertex lists to simulate these two-vertex lists.

As we have already shown $\sat$-hardness results for square-free graphs with an induced $\WR{3}$ or an induced net in Sections~\ref{sec:inducedWR3} and~\ref{sec:inducedNet} we now focus on graphs that do not contain such subgraphs. When considering reflexive graphs it turns out that this leaves a class of graphs which we call reflexive triangle-extended cycles. We will also make use of these reflexive triangle-extended cycles when considering square-free graphs $H$ that are not necessarily reflexive. When we do this we will restrict to the (reflexive) subgraph induced by the looped vertices of $H$.

\begin{defn}\label{def:extendedCycle}
A \emph{reflexive triangle-extended cycle} of length $q$ consists of a reflexive cycle $c_0, \dots, c_{q-1}$ together with a set $\calI\subseteq \{0, \dots, q-1\}$, and a reflexive triangle $d_i, c_i, c_{i+1\bmod q}$ for each $i\in \calI$.
An example of a reflexive triangle-extended cycle is depicted in Figure~\ref{fig:extendedCycle}.
\end{defn}

\begin{figure}[ht]
	\centering
	\begin{tikzpicture}[scale=1, every loop/.style={min distance=10mm,looseness=10}]

		\filldraw (0,0) node(c0){} circle[radius=3pt] --++ (72:1cm)  node(c4){} circle[radius=3pt] --++ (144:1cm )node(c3){} circle[radius=3pt] --++ (216:1cm) node(c2){} circle[radius=3pt]--++ (288:1cm) node(c1){} circle[radius=3pt] --++ (360:1cm);
		\filldraw (c2.center) --++ (228:1cm) node(d1){} circle[radius=3pt] --++ (348:1cm);
		\filldraw (c4.center) --++ (84:1cm) node(d3){} circle[radius=3pt] --++ (204:1cm);
		\filldraw (c0.center) --++ (12:1cm) node(d4){} circle[radius=3pt] --++ (132:1cm);

		\node at ($(c0)+(-54:-.3cm)$) {$c_0$}; 
		\node at ($(c1)+(54:.3cm)$) {$c_1$};
		\node at ($(c2)+(162:-.3cm)$) {$c_2$};
		\node at ($(c3)+(270:.3cm)$) {$c_3$};
		\node at ($(c4)+(18:-.3cm)$) {$c_4$};
		\node at ($(d1)+(198:-.4cm)$) {$d_1$};
		\node at ($(d3)+(54:-.4cm)$) {$d_3$};
		\node at ($(d4)+(-20:-.4cm)$) {$d_4$};
		
		\path[-] (c0.center) edge  [in=-24,out=-84,loop] node {} ();
		\path[-] (c1.center) edge  [in=204,out=264,loop] node {} ();
		\path[-] (c2.center) edge  [in=132,out=192,loop] node {} ();
		\path[-] (c3.center) edge  [in=60,out=120,loop] node {} ();
		\path[-] (c4.center) edge  [in=348,out=48,loop] node {} ();
		\path[-] (d1.center) edge  [in=168,out=228,loop] node {} ();
		\path[-] (d3.center) edge  [in=24,out=84,loop] node {} ();
		\path[-] (d4.center) edge  [in=-42,out=18,loop] node {} ();
	\end{tikzpicture}
	\caption{Reflexive triangle-extended cycle with $q=5$ and $\calI=\{1,3,4\}$.}
\label{fig:extendedCycle}
\end{figure}
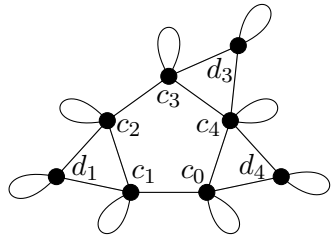
\begin{defn}\label{def:extendedPath}
	Analogously to Definition~\ref{def:extendedCycle}, a \emph{reflexive triangle-extended path} is a reflexive path $c_0, \dots, c_{q-1}$ together with a set $\calI\subseteq \{0, \dots, q-2\}$,  and a reflexive triangle $d_i, c_i, c_{i+1}$ for each $i\in \calI$.
\end{defn}

\begin{lem}\label{lem:extendedCycle}
Let $H$ be a connected reflexive square-free graph that does not contain an induced $\WR{3}$ and also does not contain an induced net. If $H$ contains an induced cycle of length at least $5$ then $H$ is a reflexive triangle-extended cycle of length at least $5$. Otherwise it is a reflexive triangle-extended path.
\end{lem}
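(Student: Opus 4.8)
The plan is to distil the local structure forced by square-freeness together with the two forbidden subgraphs, and then to read off the global shape in the two cases. The first step is to show that for every $v\in V(H)$ the graph induced on $\NH(v)\setminus\{v\}$ has maximum degree at most $1$ and at most two components, i.e.\ as a matching it is one of $\emptyset, K_1, 2K_1, K_2, K_1\cup K_2, 2K_2$: a path on three distinct vertices $a\sim b\sim c$ inside $\NH(v)\setminus\{v\}$ would, with $v$, form a $4$-cycle in $H$, and a size-$3$ independent set in $\NH(v)\setminus\{v\}$ together with $v$ would induce a $\WR{3}$. In particular $|\NH(v)\setminus\{v\}|\le 4$, and two distinct triangles of $H$ share at most one vertex (a common edge $\{x,y\}$ would give two common neighbours of $x,y$, hence a $4$-cycle). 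Using the net hypothesis I then establish the key fact: \emph{every triangle $T=\{x,y,z\}$ of $H$ contains a vertex $d$ with $\NH(d)\setminus\{d\}=T\setminus\{d\}$}. Indeed, if not, each of $x,y,z$ has a neighbour outside $T$; by the matching property an outside neighbour of $x$ is automatically non-adjacent to both $y$ and $z$; by square-freeness any adjacency between two of the three chosen outside neighbours would create a $4$-cycle, so they are pairwise non-adjacent; hence the six vertices induce a net, a contradiction.

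Suppose first that $H$ has an induced cycle $C=c_0c_1\cdots c_{q-1}$ with $q\ge 5$, indices mod $q$. Since $C$ is induced and $q\ge 5$, $c_{i-1}\not\sim c_{i+1}$, so in the matching $\NH(c_i)\setminus\{c_i\}$ the vertices $c_{i-1},c_{i+1}$ lie in distinct components; a further neighbour of $c_i$ matched to neither would yield a size-$3$ independent set and hence a $\WR{3}$, so every extra neighbour of $c_i$ is matched to exactly one of $c_{i-1},c_{i+1}$, therefore forms a triangle with a cycle-edge at $c_i$, and lies outside $V(C)$ (as $C$ is induced). By square-freeness each cycle-edge $\{c_i,c_{i+1}\}$ carries at most one such extra vertex, say $d_i$; applying the key fact to the triangle $\{d_i,c_i,c_{i+1}\}$ (whose vertices $c_i,c_{i+1}$ have at least three distinct neighbours each) forces $\NH(d_i)\setminus\{d_i\}=\{c_i,c_{i+1}\}$. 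Letting $\calI$ index the cycle-edges that carry an extra vertex, the set $W=V(C)\cup\{d_i : i\in\calI\}$ satisfies $\NH(x)\subseteq W$ for every $x\in W$, so $W=V(H)$ by connectedness, and unravelling the remaining adjacencies shows that $H$ is precisely the reflexive triangle-extended cycle of length $q\ge 5$ determined by $\calI$.

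Now suppose $H$ has no induced cycle of length $\ge 5$; being square-free, $H$ then has no cycle of length $\ge 4$ at all, and I proceed by induction on $|V(H)|$. If $H$ is triangle-free it is a tree, and the absence of an induced $\WR{3}$ forces each vertex to have at most two neighbours besides itself, so $H$ is a reflexive path, i.e.\ a reflexive triangle-extended path with empty index set; this also covers the base case. If $H$ has a triangle, the key fact gives a vertex $z$ with exactly two neighbours $x,y$, which are adjacent. Then $H\setminus z$ is connected (paths through $z$ reroute along $\{x,y\}$), inherits all the hypotheses, still has no induced cycle of length $\ge 5$, and is smaller, so by induction it is a reflexive triangle-extended path $P+\{d_i : i\in\calI'\}$. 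Since $z$ is adjacent exactly to the edge $\{x,y\}$ of $H\setminus z$ and $\{z,x,y\}$ is a triangle, that edge can be neither a triangle-side edge of $P+\{d_i : i\in\calI'\}$ (two triangles sharing an edge force a $4$-cycle) nor a path-edge already carrying an extra vertex (its endpoints would then have two common neighbours), so it is a plain path-edge $\{c_j,c_{j+1}\}$ with $j\notin\calI'$, and adding $z$ back as $d_j$ exhibits $H$ as the reflexive triangle-extended path with index set $\calI'\cup\{j\}$.

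The main obstacle is the key fact, and specifically the correct use of the no-net hypothesis: one must notice that square-freeness automatically makes the three ``private'' outside neighbours pairwise non-adjacent (and automatically non-adjacent to the opposite base vertices), so that merely having three such neighbours already forces an induced net, with no further adjacency checks. Once the key fact is in place, both cases are essentially bookkeeping about which cycle- or path-edges carry an extra vertex.
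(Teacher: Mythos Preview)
Your proof is correct and takes a somewhat different route from the paper's.

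The paper handles both cases by a direct structural argument: in Case~1 it picks a vertex $d\notin V(C)$ adjacent to $c\in V(C)$, uses the no-$\WR{3}$ hypothesis to force $d$ adjacent to a cycle-neighbour $c'$ of $c$, and then shows $\NH(d)=\{c,c'\}$ by exhibiting an induced net on $\{c^0,c,c',c'',d,d'\}$ should $d$ have a further neighbour $d'$. In Case~2 it runs the same argument along a maximal induced path, with extra case analysis for the endpoints.

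You instead first isolate two structural facts valid everywhere in $H$: the ``matching with at most two components'' description of $\NH(v)\setminus\{v\}$, and the key fact that every triangle has a vertex whose only non-loop neighbours are the other two triangle vertices. The net argument is thus proved once, abstractly, rather than inline in each case. In Case~1 you then read off the triangle-extended cycle from these facts, which is essentially the same mechanism as the paper but packaged more cleanly. In Case~2 you genuinely diverge: rather than working with a maximal induced path, you observe that square-freeness plus chordality forbids all cycles of length $\ge 4$, and run an induction by deleting the degree-two vertex of a triangle supplied by your key fact. This sidesteps the paper's endpoint case analysis entirely.

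Both arguments are of comparable length; yours is a bit more modular and the inductive Case~2 is arguably tidier, while the paper's direct approach keeps the two cases more visibly parallel.
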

\begin{proof}
\noindent{\bf\boldmath Case 1: $H$ contains an induced cycle $C=c_0, \dots, c_{q-1}$ with $q\ge 5$.}
If $H$ is just the cycle $C$, then the statement of the lemma is true ($\calI =\emptyset$). Otherwise, consider any $d\in V(H)\setminus V(C)$ with a neighbour $c\in V(C)$ (has to exist since $H$ is connected).

Since $H$ does not contain an induced $\WR{3}$ the vertex $d$ is adjacent to a neighbour $c'\in V(C)$ of $c$. Let $c^0$ and $c''$ be the other neighbours of $c$ and $c'$ in $C$, respectively, i.e.~ $\Nb{C}(c)=\{c^0, c'\}$ and $\Nb{C}(c')=\{c, c''\}$. The vertices $\{c^0, c, c', c''\}$ are all distinct as $C$ has length at least $5$. As $H$ is square-free we observe
\begin{equation}\label{equ:extendedCycle1}
\{d,c^0\} \notin E(H)\text{ and } \{d,c''\} \notin E(H).
\end{equation} 

The proof of the following claim directly proves that $H$ is a reflexive triangle-extended cycle.

\medskip
\noindent{\bf\boldmath Claim: $\NH(d)=\{c,c'\}$.}

\smallskip
\noindent{\it Proof of the claim:} Assume there exists a neighbour $d'\notin\{c, c'\}$ of $d$ in $H$. By~\eqref{equ:extendedCycle1} we have $d'\notin \{c^0, c''\}$. 
Furthermore, since $H$ is square-free, we obtain the following.
\begin{equation}\label{equ:extendedCycle2}
\text{There is no } u\neq d \text{ with } u\in \NH(c) \cap \NH(d') \text{ or } u\in \NH(c') \cap \NH(d').
\end{equation}
Let $H'$ be the subgraph of $H$ induced by the vertices $\{c^0, c, c', c'', d, d'\}$, see Figure~\ref{fig:extendedCycleHprime}. Then $H'$ is a net (cf.~Figure~\ref{fig:net} where $c=w_1$, $d=w_2$ and $c'=w_3$).

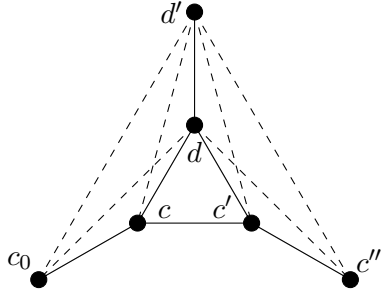
\begin{figure}[ht]
	\centering
	\begin{tikzpicture}[scale=1, baseline=0.36cm, every loop/.style={min distance=10mm,looseness=10}]
	
	\filldraw (0,0) node(d){} circle[radius=3pt] --++ (30:1.5cm) node(a){} circle[radius=3pt] --++ (60:1.5cm) node(b){} circle[radius=3pt] --++ (-60:1.5cm) node(c){} circle[radius=3pt]--++ (-30:1.5cm) node(f){} circle[radius=3pt];
	\filldraw (b.center) --++ (90:1.5cm) node(e){} circle[radius=3pt];
	\draw (a.center) -- (c.center);	
	
	\node at ($(d)+(-.25cm,.25cm)$) {$c_0$};	
	\node at ($(a)+(.35cm,.15cm)$) {$c$};	
	\node at ($(b)+(0cm,-.35cm)$) {$d$};	
	\node at ($(e)+(-.3cm,0cm)$) {$d'$};
	\node at ($(c)+(-.38cm,.22cm)$) {$c'$};
	\node at ($(f)+(.25cm,.25cm)$) {$c''$};
	
	\path[dashed] (e.center) edge (d.center);
	\path[dashed] (e.center) edge (a.center);
	\path[dashed] (e.center) edge (c.center);
	\path[dashed] (e.center) edge (f.center);
	\path[dashed] (b.center) edge (d.center);
	\path[dashed] (b.center) edge (f.center);

	\end{tikzpicture}
	\caption{The graph $H'$ induced by $\{c^0, c, c', c'', d, d'\}$. Loops are omitted. Dashed lines show edges that cannot exist by the fact that $H$ is square-free.}
	\label{fig:extendedCycleHprime}
\end{figure}

Because of~\eqref{equ:extendedCycle2} we have $\Nb{H'}(d')=\{d\}$. (The dashed edges incident to $d'$ in Figure~\ref{fig:extendedCycleHprime} cannot exist.) Because of~\eqref{equ:extendedCycle1} we have $\Nb{H'}(d)=\{d',c,c'\}$. (The dashed edges incident to $d$ in Figure~\ref{fig:extendedCycleHprime} cannot exist.) Finally, since $C$ is an induced cycle, there are no edges between the vertices $\{c^0, c, c', c''\}$ outside of $C$. Therefore, $H'$ is an \emph{induced} net in $H$, which gives a contradiction. This proves the claim in Case 1. {\bf (End of Case 1)}

\bigskip\noindent{\bf \boldmath Case 2: All induced cycles in $H$ are triangles.} This case is handled very similarly to the previous one: Let $P= c_0, \dots, c_{q-1}$ be a maximal induced path in $H$. If $H$ is just the path $P$, then the statement of the lemma is true ($\calI=\emptyset$). Otherwise, let $d\in V(H)\setminus V(P)$ be a neighbour of $c\in V(P)$. We show that $d$ is adjacent to a neighbour $c'\in V(P)$ of $c$:
\begin{itemize}
	\item If $c$ is an inner vertex of $P$ then, since $H$ does not contain an induced $\WR{3}$, $d$ is also adjacent to a neighbour of $c$ in $P$. 
	\item If $c$ is an endpoint of $P$, then $d$ has to be adjacent to a vertex $c'\in V(P)$ with $c'\neq c$ as $P$ is maximal induced. Without loss of generality assume that $c'$ is the neighbour of $d$ which is closest to $c$ in $P$. Then $c', c, d$ has to be a triangle ($c'$ has to be a neighbour of $c$) as $P$ is induced and all induced cycles in $H$ are triangles.
\end{itemize} 

Then the proof of the following claim shows that $H$ is a reflexive triangle-extended path.

\medskip
\noindent{\bf\boldmath Claim: $\NH(d)=\{c,c'\}$.}

\smallskip
\noindent{\it Proof of the claim:} Assume there exists a neighbour $d'\notin\{c, c'\}$ of $d$ in $H$. We observe the following properties:
\begin{itemize}
	\item The vertex $d'$ does not have a neighbour in $P$: Assume the opposite and let $u\in P$ be a neighbour of $d'$. Without loss of generality let $u$ be closer to $c'$ than $c$ in $P$. Furthermore, let $u$ be the neighbour of $d'$ in $P$ which is closest to $c'$. Then the edge $\{d',u\}$ and the path $d', d,c'$ close an induced cycle with $P$. If $u\neq c'$ this cycle has length greater than $3$, a contradiction. If $u=c'$ we obtain a contradiction to the fact that $H$ is square-free (see Figure~\ref{fig:extendedCycleHprime}). 
	\item Both $c$ and $c'$ are inner points of $P$: Suppose, for contradiction, that $c'$ is an end point of $P$. Since $d'$ does not have a neighbour in $P$, replacing $c'$ by $d,d'$ in $P$ gives an induced path $P'$. Moreover, $P'$ is longer than $P$ which is a contradiction to the maximality of $P$. This shows that $c'$ cannot be an end point of $P$. Analogously $c$ cannot be an end point of $P$.
\end{itemize}

Then let $c^0$ and $c''$ be the other neighbours of $c$ and $c'$ in $P$ (they have to exist since $c$ and $c'$ are inner points of $P$). The remainder of the argument is analogous to the proof of the claim in Case 1.
{\bf (End of Case 2)}
\end{proof}

The goal of the remainder of this section is to prove Lemma~\ref{lem:inducedCycleGeneral}. In order to prove Lemma~\ref{lem:inducedCycleGeneral} we work with the following parameterised version of the list homomorphism counting problem. Let $H$ be a graph and $\calL$ be a set of subsets of $V(H)$.

\prob
	{
	$\Hom{H,\calL}$.
	}
	{
	An irreflexive graph $G$ and a collection of lists $\boldS=\{S_v\in \calL\mid v\in V(G)\}$.
	}
	{
	$\hom{(G,\boldS)}{H}$.
	}
We also use the following lemma.
\begin{lem}[{\cite[Proof of Lemma 3.4]{GGJList}}]\label{lem:LHomReflexiveCycleHardness}
	Let $H$ be a graph that contains an induced reflexive cycle $C=c_0,\dots, c_{q-1}$ on $q\ge 4$ vertices. Let $\calL=\{\{c_0,c_1\},\{c_0,c_2\},\dots, \{c_0,c_{q-1}\}\}$. Then $\Hom{H,\calL}\eqap\sat$.
\end{lem}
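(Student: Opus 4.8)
The plan is to establish the two reductions $\Hom{H,\calL}\leap\sat$ and $\sat\leap\Hom{H,\calL}$ separately. For the first, I would observe that every list in $\calL$ has size exactly $2$, so a homomorphism from an instance $(G,\boldS)$ is determined by a single Boolean choice at each vertex of $G$ (map it to $c_0$, or to the other element of its list), and each edge of $G$ then imposes an arity-$2$ Boolean constraint on the two choices at its endpoints. This constraint simply records which of the (at most four) pairs of list-elements are adjacent-or-equal in $H$; since $c_0$ carries a loop the pair $(c_0,c_0)$ is always allowed, so the constraint is a conjunction of at most three $2$-clauses. Hence $\hom{(G,\boldS)}{H}$ equals the number of satisfying assignments of a CNF formula of size polynomial in $\abs{V(G)}+\abs{E(G)}$, which is a parsimonious (hence AP-preserving) reduction to $\sat$.

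For the converse I would first reduce to the case where $H$ is the reflexive $q$-cycle itself. Since each list is a $2$-subset of $V(C)$, every homomorphism counted by $\Hom{H,\calL}$ has image inside $V(C)$, and since $C$ is an \emph{induced} reflexive subgraph of $H$ the only edges of $H$ relevant to such maps are the cycle edges and the loops; thus $\Hom{H,\calL}$ and $\Hom{H[V(C)],\calL}$ are literally the same counting problem, and it suffices to prove $\sat\leap\Hom{H',\calL}$ for $H'$ the reflexive $q$-cycle. The next step is to catalogue the arity-$2$ relations that a single edge of $G$ can implement: for endpoints with lists $\{c_0,c_a\},\{c_0,c_b\}$ the permitted pairs are those whose $C$-images are adjacent or equal, and because in $C$ the vertex $c_0$ is adjacent only to $c_0,c_1,c_{q-1}$ one obtains, for appropriate $a,b$: the equality relation $\{(c_0,c_0),(c_a,c_a)\}$ when $a=b$ with $2\le a\le q-2$; the relation $\{(c_0,c_0),(c_0,c_{q-1}),(c_1,c_0)\}$ (a NAND in the natural $0/1$ encoding) when $\{a,b\}=\{1,q-1\}$; and various implications and pinnings in the remaining cases. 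These are the building blocks.

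Using the equality relations as rigid wires, and routing through auxiliary list-$\{c_0,c_1\}$ and list-$\{c_0,c_{q-1}\}$ vertices to exploit the NAND and implication relations, I would build a vertex gadget and an edge gadget so that homomorphisms of a constructed instance $(J,\boldS)$ correspond to assignments to a family of Boolean ``master'' variables in which a prescribed joint state of each gadget-edge is forbidden while the remaining joint states carry positive, controllable multiplicities (tunable by attaching pendant field gadgets and by taking the gadget parameters polynomially large in the input). From this one reduces from a source problem already known to have no FPRAS unless $\NP=\RP$ — a suitable weighted independent-set / two-state spin partition function in the style of~\cite{GJPTwoSpin}, or a cut-counting problem such as $\largecut$ or $\TCut{q}$ — by choosing the parameters so that $\hom{(J,\boldS)}{H}$, divided by a computable normalising factor, approximates the target quantity closely enough that the remaining error can be boosted away exactly as in~\cite[proof of Theorem~3]{DGGJApprox}. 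The main obstacle, I expect, is this: the hub vertex $c_0$ lies in every list and, via its loop and its neighbours $c_1,c_{q-1}$, is ``compatible'' with enough list-elements that no edge constraint can ever forbid a configuration mapping some vertex to $c_0$ — in particular one can never rigidly force agreement between two master variables living on lists of different types, only bias them softly, and all the ``copy-across-types'' and edge-interaction gadgets come out ferromagnetic-like or one-directional. The reduction must therefore be funnelled through a $\sat$-hard problem whose hardness survives soft (weighted) simulation, and the delicate part is verifying that the particular weights produced by the cycle gadgets land in a provably intractable regime rather than one admitting an FPRAS, and that this can be arranged uniformly for all $q\ge 4$.
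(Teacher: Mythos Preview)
This lemma is not proved in the present paper; it is quoted from~\cite{GGJList} (their Lemma~3.4) and used as a black box, so there is no in-paper proof to compare against. I therefore assess your proposal on its own.

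The easiness direction is fine, though it can be said in one line: the problem is visibly in $\numP$, and every $\numP$ problem AP-reduces to $\sat$ by~\cite{DGGJApprox}. Your more explicit CNF encoding also works.

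For the hardness direction, your reduction to the bare reflexive cycle $C$ is correct and useful, and your catalogue of the binary relations realised by a single edge is accurate: with the encoding $c_0\mapsto 0$ you obtain NAND between a $\{c_0,c_1\}$-vertex and a $\{c_0,c_{q-1}\}$-vertex, genuine equality on $\{c_0,c_a\}$ for $2\le a\le q-2$, and implications between consecutive list-types. The gap is that you stop there. You explicitly say you ``would build a vertex gadget and an edge gadget'' and then name the obstacle (the hub $c_0$ is compatible with too much, so one only gets soft/ferromagnetic-looking couplings across types) without resolving it. As written, this is an outline, not a proof.

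The obstacle is real but not fatal, and in fact your own relation catalogue already contains the fix. Chain a $\{c_0,c_2\}$-vertex $x$ through a fresh $\{c_0,c_1\}$-vertex $y$ and a fresh $\{c_0,c_{q-1}\}$-vertex $z$ to another $\{c_0,c_2\}$-vertex $w$, with edges $xy$, $yz$, $zw$. The three edge-relations are $x\Rightarrow y$, $\mathrm{NAND}(y,z)$, $w\Rightarrow z$, and summing out $y,z$ gives the symmetric binary weight $(3,1,1,0)$ on $(x,w)$: the state $(1,1)$ is \emph{genuinely forbidden}. Both endpoints now carry the same list $\{c_0,c_2\}$, on which you also have exact equality, so you can wire these gadgets onto an arbitrary (not bipartite) input graph. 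The resulting partition function is an antiferromagnetic two-state spin system with a hard constraint --- a weighted independent-set model --- and its $\sat$-hardness follows from the two-spin results you already cite (e.g.~\cite{GJPTwoSpin} or the independent-set hardness in~\cite{DGGJApprox}), after a standard normalisation/rounding step. Parallel copies of the $x\text{--}y\text{--}z\text{--}w$ path let you tune the off-diagonal ratio if a particular parameter regime is needed. Carrying this through (and checking it also covers $q=4$, where $\{c_0,c_2\}$ is the only ``middle'' list) would turn your outline into a proof; at present that step is missing.
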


The key to proving Lemma~\ref{lem:inducedCycleGeneral} is the following result. It states that for certain graphs that contain a reflexive triangle-extended cycle we can simulate each size-$2$ list of vertices in the corresponding cycle $C$ by gadgets using only single-vertex lists.

\begin{lem}\label{lem:extendedCycleAnalysis}
Let $H$ be a square-free graph that does not contain any mixed triangle as an induced subgraph and let $H^*$ be the graph induced by the looped vertices of $H$. Suppose that $H^*$ contains a connected component $H^{**}$ that is a reflexive triangle-extended cycle, where $C=c_0, \dots, c_{q-1}$ is the corresponding reflexive cycle as given by Definition~\ref{def:extendedCycle} and the length of $C$ is $q\ge 5$. Let $\calL$ and $\calL'$ be sets with 
\[
\calL' \subsetneq \calL \subseteq \{\{c_0, c_1\}, \{c_0, c_2\}, \dots, \{c_0, c_{q-1}\}\} \quad\text {such that}\quad\abs{\calL'}=\abs{\calL}-1.
\]
Let $\calL''= \calL'\cup \bigl\{S \subseteq V(H) \mid \abs{S}\in \{1,\abs{V(H)}\}\bigr\}$. Then
\[
\Hom{H,\calL} \leap \Hom{H, \calL''}.
\]
\end{lem}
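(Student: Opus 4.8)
The goal is to reduce $\Hom{H,\calL}$ to $\Hom{H,\calL''}$, where $\calL''$ contains all the lists of $\calL'$ plus all single-vertex lists and the full list. Since $\calL' = \calL\setminus\{S^*\}$ for exactly one ``missing'' list $S^*$, it suffices to construct, using only lists from $\calL''$, a gadget that simulates the list $S^* = \{c_0, c_j\}$ for the one value $j\in\{1,\dots,q-1\}$ with $\{c_0,c_j\}\notin\calL'$. Concretely, given an input $(G,\boldS)$ to $\Hom{H,\calL}$, I would build $(G',\boldS')$ for $\Hom{H,\calL''}$ by replacing each vertex $v$ of $G$ whose list is $S^*$ with the original vertex $v$ (now given the full list $V(H)$ in $\boldS'$) together with an attached gadget $\Gamma_v$, built from fresh vertices with single-vertex lists (pins to specific vertices of $H^{**}$, in particular pins to the $c_i$ and the $d_i$), such that: for any homomorphism $h$, the vertex $v$ can be mapped consistently with the gadget iff $h(v)\in\{c_0,c_j\}$, and moreover in that case the gadget admits exactly one extension. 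This last ``exactly one extension'' (parsimony) property is what makes it an AP-reduction rather than merely a Turing reduction, and is essential.

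**Designing the gadget.** The key structural input is the triangle-extended cycle: for each $i\in\calI$ there is a looped vertex $d_i$ forming a reflexive triangle with $c_i,c_{i+1\bmod q}$, and square-freeness of $H$ plus the absence of mixed triangles pins down the common-neighbourhood structure of pairs of vertices of $H^{**}$. The idea is to exploit that $c_0$ and $c_j$ have a controlled set of common neighbours in $H$: by square-freeness, any two vertices of $H$ have at most... well, $c_0$ and $c_j$ (for $j\ne 1, q-1$) have \emph{no} common neighbour unless a $d$-vertex bridges them, and adjacent $c$'s have their shared $d$ (if present) plus themselves. So I would attach to $v$ a path/tree of pinned vertices that walks around the cycle $C$ and forces $v$'s image to lie in the set of vertices reachable-consistently from both a pin at $c_0$ and a pin at $c_j$; using the $d_i$'s one can ``staple'' the two endpoints so that the only surviving images are $c_0$ and $c_j$ themselves. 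Concretely one wants a connected gadget $\Gamma$ with a designated vertex attached to $v$, all other vertices pinned, such that $\mathrm{Hom}((\Gamma,\text{pins}),H)$ restricted to the attachment vertex is the indicator of $\{c_0,c_j\}$ and is injective on that restriction. Here is where the case analysis of $j$ enters: one handles $j$ adjacent to $0$ in $C$ (use the triangle $d_0$ or $d_{q-1}$ directly), and $j$ non-adjacent (chain together triangles along one arc of $C$, making crucial use of $\calI$ — note that a triangle-extended \emph{cycle}, unlike a path, need not have all $d_i$ present, so one must check at least one arc between $c_0$ and $c_j$ is ``fully $d$-equipped'', or argue that both arcs suffice, or combine both arcs).

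**The main obstacle.** The hard part will be verifying that such a gadget exists for \emph{every} relevant $j$ and that it is parsimonious, given only the hypotheses available: $H$ square-free, no induced mixed triangle, and $H^{**}$ a triangle-extended cycle of length $\ge 5$. Square-freeness is what controls common neighbourhoods; the no-mixed-triangle hypothesis ensures that the unlooped vertices of $H$ cannot ``leak'' into the gadget analysis (an unlooped neighbour of a looped vertex has no other neighbour in the relevant induced subgraph). I expect the bulk of the argument to be a careful lemma: \emph{for $i\ne i'$, $\NH(c_i)\cap\NH(c_{i'})$ equals $\{c_i,c_{i'}\}$ if they are adjacent in $C$ with the relevant $d$ absent, $\{c_i,c_{i'},d\}$ if adjacent with $d$ present, and $\emptyset$ otherwise} — and then the gadget is essentially a caterpillar whose spine pins alternate $c$'s and $d$'s around an arc from $c_0$ to $c_j$, with the attachment vertex $v$ joined to two consecutive pinned spine vertices whose only common neighbours are $c_0$ and $c_j$. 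If no single arc works, attach two arms (one per arc). Once the gadget is in hand, the reduction equality $\hom{(G,\boldS)}{H} = \hom{(G',\boldS')}{H}$ is immediate from parsimony, and the precision bookkeeping is trivial (it is an exact, parsimonious reduction), so the AP-reduction follows. I would then conclude by noting this is applied iteratively (peeling off missing lists one at a time) in the proof of Lemma~\ref{lem:inducedCycleGeneral}, though that iteration lives outside the present statement.
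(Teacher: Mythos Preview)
Your high-level plan is right: replace each vertex carrying the missing list $\{c_0,c_\ell\}$ by a gadget, built from pins, that parsimoniously realises exactly that two-element set. But your concrete gadget idea does not work. You propose to ``chain together triangles along one arc'' using pins to the $d_i$'s, and you already flag the obstacle: a reflexive triangle-extended cycle may have $\calI$ arbitrary, in particular $\calI=\emptyset$ (a bare reflexive cycle of length $\ge 5$). In that case there are no $d_i$'s to pin to, neither arc is ``$d$-equipped'', and your construction collapses. Your common-neighbour analysis is also not what is needed: for non-adjacent $c_0,c_\ell$ there is no single pair of pins whose common neighbourhood is $\{c_0,c_\ell\}$, so the gadget cannot be local in that sense.

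The paper's construction avoids the $d_i$'s entirely and works uniformly in $\calI$. The three ideas you are missing are: (i) pin to the \emph{midpoints} $c_{\lfloor\ell/2\rfloor}$, $c_{\lceil\ell/2\rceil}$ of the first arc and analogously for the second arc, not to $c_0$, $c_\ell$, or any $d_i$; (ii) connect these pins to the attachment point by a ``thickened path'', i.e.\ a path in which every internal vertex is replaced by an edge (a $2$-clique) --- this is what forces, via the no-mixed-triangle hypothesis, at least one vertex of each $2$-clique to have a looped image, so the walk stays inside $H^{**}$; (iii) use \emph{both} arcs simultaneously: the looped vertices at distance $\lfloor\ell/2\rfloor$ from $c_{\ell/2}$ in $H^{**}$ form $\{c_0,\dots,c_\ell\}\cup\{d_i:i\in\calI\cap\{0,\dots,\ell-1\}\}$, the analogous set from the other arc is the complementary piece, and their intersection is exactly $\{c_0,c_\ell\}$ regardless of $\calI$. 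Finally, the attachment point itself is not kept as a single vertex with full list but is also replaced by a $2$-clique $\calA(u)$; one then shows (using square-freeness to rule out a fourth common neighbour of three consecutive $c_i$'s) that once one vertex of $\calA(u)$ lands in $\{c_0,c_\ell\}$ the entire gadget image is determined, giving the parsimony you correctly identified as essential.
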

\begin{proof}
For the reflexive triangle-extended cycle $H^{**}$ we use the notation ($C$, $\calI$ and $d_i$ for $i\in \calI$) as given by Definition~\ref{def:extendedCycle}. Let $\calL$, $\calL'$ and $\calL''$ be as given in the statement of the lemma.
We have $\calL'= \calL\setminus \{\{c_0, c_{\ell}\}\}$ for some $\ell\in [q-1]$.
Let $(G,\boldS^G)$ be an input to $\Hom{H,\calL}$. Let $U=\{u\in V(G) \mid S_u^G=\{c_0, c_{\ell}\}\}$. 
Since $\{c_0, c_{\ell}\}$ is not part of $\calL''$ the goal is to simulate $\{c_0, c_{\ell}\}$ using gadgetry and lists from $\calL''$.
 
From $(G,\boldS^G)$ we define an instance $(J, \boldS^J)$ of $\Hom{H, \calL''}$.
To this end we will define, for each $u\in U$, a vertex gadget $J_u$ and a corresponding set of lists $\boldS^u=\{S^{u}_v \in \calL'' \mid v\in V(J_u)\}$. 
There are two distinct paths in $C$ that connect $c_0$ and $c_{\ell}$: $P_1=c_0, c_1, \dots, c_{\ell}$ and $P_2=c_{\ell}, \dots, c_{q-1}, c_0$. The graph $J_u$ has two parts: a graph $J_{P_1}$ and a graph $J_{P_2}$, which depend on the paths $P_1$ and $P_2$. We first define $J_{P_1}$ and $J_{P_2}$ (and the corresponding sets of lists $\boldS^1=\{S^{1}_v \in \calL'' \mid v\in V(J_{P_1})\}$ and $\boldS^2=\{S^{2}_v \in \calL'' \mid v\in V(J_{P_2})\}$) and then we describe the way in which they are connected to form $J_u$. The definition of $J_{P_1}$ depends on $\ell$, the number of edges of $P_1$:

\begin{itemize}
\item If $\ell$ is even, think of a path on $\ell/2$ edges. Let $v^*$ be one of the end points of this path. We pin $v^*$ to $c_{\ell/2}$ (the vertex in the ``middle'' of $P_1$). This graph is depicted in Figure~\ref{fig:inducedCycleGadget1} on the left. The graph $J_{P_1}$ is then a modification of this graph where each vertex of the path is replaced by a clique of size $2$ (apart from $v^*$ which will be pinned to $c_{\ell/2}$ anyway). This modification will ensure that only looped vertices can be in the image of $J_{P_1}$. The graph $J_{P_1}$ is depicted in Figure~\ref{fig:inducedCycleGadget1} (on the right) and is formally defined as follows:
 $V(J_{P_1})=\{v_i, v_i' \mid i\in \{0, \dots, \ell/2-1\}\} \cup \{v^*\}$, where all these vertices are distinct from the vertices of $G$ (and distinct from the vertices used in other gadgets). $E(J_{P_1})= \bigl\{\{v_i, v_i'\} \mid i\in \{0, \dots, \ell/2-1\}\bigr\} \cup  \bigl\{\ucp{\{v_{i-1},v_{i-1}'\}}{\{v_{i},v_{i}'\}} \mid i\in [\ell/2-1]\bigr\} \cup \bigl\{\{v_{\ell/2-1},v^*\}, \{v_{\ell/2-1}',v^*\}\bigr\}$.
 We set $S^1_{v^*}=\{c_{\ell/2}\}$ and $S^1_v=V(H)$ for all $v\in V(J_{P_1})\setminus\{v^*\}$
\item If $\ell$ is odd, then $J_{P_1}$ is defined very similarly to the previous case, see Figure~\ref{fig:inducedCycleGadget2}. Formally, $V(J_{P_1})=\{v_i, v_i' \mid i\in \{0, \dots, \floor{\ell/2}\}\} \cup \{v_1^*, v_2^*\}$, where all these vertices are distinct from the vertices of $G$ (and distinct from the vertices used in other gadgets). $E(J_{P_1})= \bigl\{\{v_i, v_i'\} \mid i\in \{0, \dots, \floor{\ell/2}\}\bigr\} \cup \bigl\{\ucp{\{v_{i-1},v_{i-1}'\}}{\{v_{i},v_{i}'\}} \mid i\in [\floor{\ell/2}]\bigr\} \cup \bigl\{\ucp{\{v_{\floor{\ell/2}},v_{\floor{\ell/2}}'\}}{\{v^*_{1},v^*_{2}\}}\bigr\}$.
We set $S^{1}_{v_1^*}=\{c_{\ceil{\ell/2}}\}$, $S^{1}_{v_2^*}=\{c_{\floor{\ell/2}}\}$ and $S^{1}_v=V(H)$ for all $v\in V(J_{P_1})\setminus\{v_1^*, v_2^*\}$.
\end{itemize}

\begin{figure}[ht]
	\centering
	\begin{minipage}{.49 \textwidth}
		\centering
		\begin{tikzpicture}[scale=1, baseline=0.36cm, every loop/.style={min distance=10mm,looseness=10}]
		
		\node[circle,fill=black,inner sep=0pt,minimum size=5pt] (u) at (0,0){};
		\node[circle,fill=black,inner sep=0pt,minimum size=5pt] (wfirst) at (1,0){};
		\node[circle,fill=black,inner sep=0pt,minimum size=5pt] (wsecond) at (2,0){};
		\node[circle,fill=black,inner sep=0pt,minimum size=5pt] (wlast) at (4,0){};
		\node[circle,fill=black,inner sep=0pt,minimum size=5pt] (p) at (5,0){};
		\draw (u) -- (wfirst);
		\draw (wfirst) --  (wsecond);
		\draw (wlast) -- (p);
		
		\coordinate (pmid) at (3,0);
		\node[circle,fill=black,inner sep=0pt,minimum size=3pt] (cdot1) at ($(pmid)+(-.3cm,0cm)$){};
		\node[circle,fill=black,inner sep=0pt,minimum size=3pt] (cdot1) at (pmid){};
		\node[circle,fill=black,inner sep=0pt,minimum size=3pt] (cdot1) at ($(pmid)+(+.3cm,0cm)$){};
		
		\draw[<->,line width = 2pt] ($(u.east)+(0,.5cm)$) -- node[yshift=.5cm](){$\ell/2$ edges} ($(p.west)+(0,.5cm)$);
		
		\node at ($(p.center)+(1.1cm,-.05cm)$) {$v^* \rightarrow c_{\ell/2}$};
		
		\end{tikzpicture}
	\end{minipage}
	\begin{minipage}{.49 \textwidth}
		\centering
		\begin{tikzpicture}[scale=1, baseline=0.36cm, every loop/.style={min distance=10mm,looseness=10}]
		
		\node[circle,fill=black,inner sep=0pt,minimum size=5pt] (u1) at (0,.5){};
		\node[circle,fill=black,inner sep=0pt,minimum size=5pt] (u2) at (0,-.5){};
		\node[circle,fill=black,inner sep=0pt,minimum size=5pt] (wfirst1) at (1,.5){};
		\node[circle,fill=black,inner sep=0pt,minimum size=5pt] (wfirst2) at (1,-.5){};
		\node[circle,fill=black,inner sep=0pt,minimum size=5pt] (wsecond1) at (2,.5){};
		\node[circle,fill=black,inner sep=0pt,minimum size=5pt] (wsecond2) at (2,-.5){};
		\node[circle,fill=black,inner sep=0pt,minimum size=5pt] (wlast1) at (4,.5){};
		\node[circle,fill=black,inner sep=0pt,minimum size=5pt] (wlast2) at (4,-.5){};
		\node[circle,fill=black,inner sep=0pt,minimum size=5pt] (p) at (5,0){};
		\draw (u1) -- (wfirst1) -- (u2) -- (wfirst2);
		\draw (u2) -- (u1) -- (wfirst2) -- (wfirst1);
		\draw (wfirst1) -- (wsecond1) -- (wfirst2) -- (wsecond2);
		\draw (wfirst2) -- (wfirst1) -- (wsecond2) -- (wsecond1);
		\draw (wlast1) -- (p) -- (wlast2) -- (wlast1);
		
		\coordinate (pmid) at (3,0);
		\node[circle,fill=black,inner sep=0pt,minimum size=3pt] (cdot1) at ($(pmid)+(-.3cm,0cm)$){};
		\node[circle,fill=black,inner sep=0pt,minimum size=3pt] (cdot1) at (pmid){};
		\node[circle,fill=black,inner sep=0pt,minimum size=3pt] (cdot1) at ($(pmid)+(+.3cm,0cm)$){};
		
		\draw[<->,line width = 2pt] ($(u1.east)+(0,.5cm)$) -- node[yshift=.5cm](){$\ell/2$ edges} ($(p.west)+(0,1cm)$);
		
		\node at ($(u1)+(-.5cm,0)$) {$v_0$};
		\node at ($(u2)+(-.5cm,0)$) {$v_0'$};
		\node at ($(p.center)+(1.1cm,-.05cm)$) {$v^* \rightarrow c_{\ell/2}$};
		\end{tikzpicture}
	\end{minipage}
	\caption{Construction of the graph $J_{P_1}$ for even $\ell$. The label of the form $v^* \rightarrow c_{\ell/2}$ means  that the vertex $v^*\in V(J_{P_1})$ is ``pinned'' to $c_{\ell/2}\in V(H)$ since $S^1_{v^*}=\{c_{\ell/2}\}$.}
	\label{fig:inducedCycleGadget1}
\end{figure}
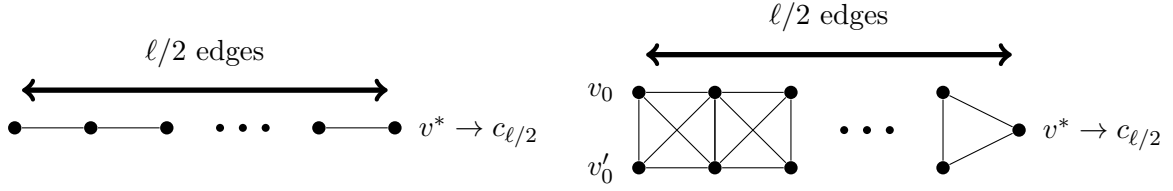

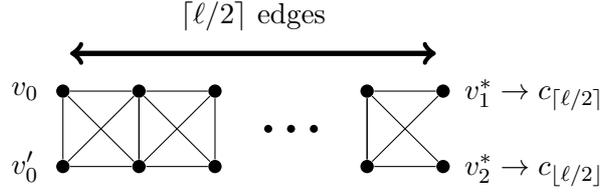
\begin{figure}[ht]
	\centering
	\begin{tikzpicture}[scale=1, baseline=0.36cm, every loop/.style={min distance=10mm,looseness=10}]
	
	\node[circle,fill=black,inner sep=0pt,minimum size=5pt] (u1) at (0,.5){};
	\node[circle,fill=black,inner sep=0pt,minimum size=5pt] (u2) at (0,-.5){};
	\node[circle,fill=black,inner sep=0pt,minimum size=5pt] (wfirst1) at (1,.5){};
	\node[circle,fill=black,inner sep=0pt,minimum size=5pt] (wfirst2) at (1,-.5){};
	\node[circle,fill=black,inner sep=0pt,minimum size=5pt] (wsecond1) at (2,.5){};
	\node[circle,fill=black,inner sep=0pt,minimum size=5pt] (wsecond2) at (2,-.5){};
	\node[circle,fill=black,inner sep=0pt,minimum size=5pt] (wlast1) at (4,.5){};
	\node[circle,fill=black,inner sep=0pt,minimum size=5pt] (wlast2) at (4,-.5){};
	\node[circle,fill=black,inner sep=0pt,minimum size=5pt] (p1) at (5cm,.5cm){};
	\node[circle,fill=black,inner sep=0pt,minimum size=5pt] (p2) at (5cm,-.5cm){};
	\draw (u1) -- (wfirst1) -- (u2) -- (wfirst2);
	\draw (u2) -- (u1) -- (wfirst2) -- (wfirst1);
	\draw (wfirst1) -- (wsecond1) -- (wfirst2) -- (wsecond2);
	\draw (wfirst2) -- (wfirst1) -- (wsecond2) -- (wsecond1);
	\draw (wlast1) -- (p1) -- (wlast2) -- (wlast1);
	\draw (wlast1) -- (p2) -- (wlast2) -- (wlast1);
	
	\coordinate (pmid) at (3,0);
	\node[circle,fill=black,inner sep=0pt,minimum size=3pt] (cdot1) at ($(pmid)+(-.3cm,0cm)$){};
	\node[circle,fill=black,inner sep=0pt,minimum size=3pt] (cdot1) at (pmid){};
	\node[circle,fill=black,inner sep=0pt,minimum size=3pt] (cdot1) at ($(pmid)+(+.3cm,0cm)$){};
	
	\draw[<->,line width = 2pt] ($(u1.east)+(0,.5cm)$) -- node[yshift=.5cm](){$\ceil{\ell/2}$ edges} ($(p1.west)+(0,.5cm)$);
	
	\node at ($(u1)+(-.5cm,0)$) {$v_0$};
	\node at ($(u2)+(-.5cm,0)$) {$v_0'$};
	\node at ($(p1.center)+(1.2cm,-.05cm)$) {$v_1^* \rightarrow c_{\ceil{\ell/2}}$};
	\node at ($(p2.center)+(1.2cm,-.05cm)$) {$v_2^* \rightarrow c_{\floor{\ell/2}}$};
	\end{tikzpicture}
	\caption{The graph $J_{P_1}$ for odd $\ell$. A label of the form $a \rightarrow b$ means that the vertex $a\in V(J_{P_1})$ is ``pinned'' to $b\in V(H)$ since $S^1_a=\{b\}$.}
	\label{fig:inducedCycleGadget2}
\end{figure}

This completes the definition of $J_{P_1}$. $J_{P_2}$ is defined analogously. However, the length of  $P_2$ is $q-\ell$ instead of $\ell$. Furthermore, if $q-\ell$ is even, note that the vertex in the ``middle'' of $P_2=c_{\ell}, \dots, c_{q-1}, c_0$ is $c_{(q+\ell)/2}$ rather than $c_{\ell/2}$. (Accordingly, if $q-\ell$ is odd, we use $c_{\ceil{(q+\ell)/2}}$ and $c_{\floor{(q+\ell)/2}}$ to ``pin'' to.) Formally, $J_{P_2}$ is defined as follows:
\begin{itemize}
	\item If $q-\ell$ is even, we have $V(J_{P_2})=\{w_i, w_i' \mid i\in \{0, \dots, (q-\ell)/2-1\}\} \cup \{w^*\}$ and $E(J_{P_2})= \bigl\{\{w_i, w_i'\} \mid i\in \{0, \dots, (q-\ell)/2-1\}\bigr\} \cup  \bigl\{\ucp{\{w_{i-1},w_{i-1}'\}}{\{w_{i},w_{i}'\}} \mid i\in [(q-\ell)/2-1]\bigr\} \cup \bigl\{\{w_{(q-\ell)/2-1},w^*\}, \{w_{(q-\ell)/2-1}',w^*\}\bigr\}$.
	We set $S^{2}_{w^*}=\{c_{(q+\ell)/2}\}$ and $S^{2}_w=V(H)$ for all $w\in V(J_{P_2})\setminus\{w^*\}$
	\item If $q-\ell$ is odd, we have $V(J_{P_2})=\{w_i, w_i' \mid i\in \{0, \dots, \floor{(q-\ell)/2}\}\} \cup \{w_1^*, w_2^*\}$ and $E(J_{P_2})= \bigl\{\{w_i, w_i'\} \mid i\in \{0, \dots, \floor{(q-\ell)/2}\}\bigr\} \cup \bigl\{\ucp{\{w_{i-1},w_{i-1}'\}}{\{w_{i},w_{i}'\}} \mid i\in [\floor{(q-\ell)/2}]\bigr\} \cup \bigl\{\ucp{\{w_{\floor{(q-\ell)/2}},w_{\floor{(q-\ell)/2}}'\}}{\{w^*_{1},w^*_{2}\}}\bigr\}$.
	We set $S^{2}_{w_1^*}=\{c_{\ceil{(q+\ell)/2}}\}$, $S^{2}_{w_2^*}=\{c_{\floor{(q+\ell)/2}}\}$ and $S^{2}_w=V(H)$ for all $w\in V(J_{P_2})\setminus\{w_1^*, w_2^*\}$.
\end{itemize}

We can now define the graph $J_u$ (for $u\in U$): $J_u$ is the graph obtained from $J_{P_1}$ and $J_{P_2}$ by identifying $v_0$ with $w_0$ and $v_0'$ with $w_0'$. As an example, if $P_1$ has even length and $P_2$ has odd length, the graph $J_u$ is depicted in Figure~\ref{fig:Ju}. Let $\calA(u)$ denote the set that contains the two vertices $v_0=w_0$ and $v_0'=w_0'$. The lists $\boldS^u$ of the vertices in $J_u$ are the union of $\boldS^1$ and $\boldS^2$. (Note that this is well-defined as $S_{v_0}^{1}=S_{v_0'}^{1}=S_{w_0}^{2}=S_{w_0'}^{2}=V(H)$.) 
This completes the definition of $J_u$ and $\boldS^u$.

We can finally define the instance $(J,\boldS^J)$: $J$ is the graph with vertices $V(J_u)=V(G)\setminus U \cup\bigcup_{u\in U} V(J_u)$ and edges
\begin{align*}
	E(J)=	&\bigl\{\{v,v'\} \mid \{v,v'\} \in 	E(G)\text{ and }v,v'\in V(G)\setminus U\bigr\}\\
			&\cup \bigl\{\ucp{\{v\}}{\calA(v')} \mid \{v,v'\} \in 	E(G)\text{ and }v \in V(G)\setminus U, v'\in U\bigr\}\\
			&\cup \bigl\{\ucp{\calA(v)}{\calA(v')} \mid \{v,v'\} \in E(G)\text{ and }v,v'\in U\bigr\}\\
			&\cup \bigcup_{u\in U} E(J_u).
\end{align*}
Finally, $\boldS^J=\{S_v^J \subseteq V(H) \mid v\in V(J)\}$ with 
\[
S_v^J=
\begin{cases}
S_v^G, &\text{if }v\in V(G)\setminus U\\
S^u_v &\text{if otherwise }v\in V(J_u) \text{ for some }u\in U.
\end{cases}
\]
Note that for each $v\in V(J)$ we have $S_v^J\in \calL''$ and therefore $(J, \boldS^J)$ is a valid input to $\Hom{H, \calL''}$.

To show how homomorphisms from $(J, \boldS^J)$ to $H$ relate to homomorphisms from  $(G, \boldS^G)$ to $H$ we determine some properties of the gadget $J_u$. Consider the case where $\ell$, the number of edges of $P_1$, is even, and $q-\ell$, the number of edges of $P_2$, is odd. (The other cases of $\ell$ and $q-\ell$ even or odd will be analogous.) Then $J_u$ is the gadget depicted in Figure~\ref{fig:Ju}. Let the vertices of $J_u$ be labelled accordingly.
Now let $h$ be a homomorphism from $(J, \boldS^J)$ to $H$.

\begin{figure}[ht]
	\centering
	\begin{tikzpicture}[scale=.9, baseline=0.36cm, every loop/.style={min distance=10mm,looseness=10}]
	
	\node[circle,fill=black,inner sep=0pt,minimum size=5pt] (u1) at (0,.75){};
	\node[circle,fill=black,inner sep=0pt,minimum size=5pt] (u2) at (0,-.75){};
	\node[circle,fill=black,inner sep=0pt,minimum size=5pt] (vfirst1) at (-1.5,.75){};
	\node[circle,fill=black,inner sep=0pt,minimum size=5pt] (vfirst2) at (-1.5,-.75){};
	\node[circle,fill=black,inner sep=0pt,minimum size=5pt] (vsecond1) at (-3,.75){};
	\node[circle,fill=black,inner sep=0pt,minimum size=5pt] (vsecond2) at (-3,-.75){};
	\node[circle,fill=black,inner sep=0pt,minimum size=5pt] (vlast1) at (-6,.75){};
	\node[circle,fill=black,inner sep=0pt,minimum size=5pt] (vlast2) at (-6,-.75){};
	\node[circle,fill=black,inner sep=0pt,minimum size=5pt] (p) at (-7.5,0){};
	\draw (u1) -- (vfirst1) -- (u2) -- (vfirst2);
	\draw (u2) -- (u1) -- (vfirst2) -- (vfirst1);
	\draw (vfirst1) -- (vsecond1) -- (vfirst2) -- (vsecond2);
	\draw (vfirst2) -- (vfirst1) -- (vsecond2) -- (vsecond1);
	\draw (vlast1) -- (p) -- (vlast2) -- (vlast1);
	
	\coordinate (pmid) at (-4.5,0);
	\node[circle,fill=black,inner sep=0pt,minimum size=3pt] (cdot1) at ($(pmid)+(-.3cm,0cm)$){};
	\node[circle,fill=black,inner sep=0pt,minimum size=3pt] (cdot1) at (pmid){};
	\node[circle,fill=black,inner sep=0pt,minimum size=3pt] (cdot1) at ($(pmid)+(+.3cm,0cm)$){};

	\node[circle,fill=black,inner sep=0pt,minimum size=5pt] (wfirst1) at (1.5,.75){};
	\node[circle,fill=black,inner sep=0pt,minimum size=5pt] (wfirst2) at (1.5,-.75){};
	\node[circle,fill=black,inner sep=0pt,minimum size=5pt] (wsecond1) at (3,.75){};
	\node[circle,fill=black,inner sep=0pt,minimum size=5pt] (wsecond2) at (3,-.75){};
	\node[circle,fill=black,inner sep=0pt,minimum size=5pt] (wlast1) at (6,.75){};
	\node[circle,fill=black,inner sep=0pt,minimum size=5pt] (wlast2) at (6,-.75){};
	\node[circle,fill=black,inner sep=0pt,minimum size=5pt] (p1) at (7.5,.75){};
	\node[circle,fill=black,inner sep=0pt,minimum size=5pt] (p2) at (7.5,-.75){};
	\draw (u1) -- (wfirst1) -- (u2) -- (wfirst2);
	\draw (u2) -- (u1) -- (wfirst2) -- (wfirst1);
	\draw (wfirst1) -- (wsecond1) -- (wfirst2) -- (wsecond2);
	\draw (wfirst2) -- (wfirst1) -- (wsecond2) -- (wsecond1);
	\draw (wlast1) -- (p1) -- (wlast2) -- (wlast1);
	\draw (wlast1) -- (p2) -- (wlast2) -- (wlast1);
	
	\coordinate (pmid) at (4.5,0);
	\node[circle,fill=black,inner sep=0pt,minimum size=3pt] (cdot1) at ($(pmid)+(-.3cm,0cm)$){};
	\node[circle,fill=black,inner sep=0pt,minimum size=3pt] (cdot1) at (pmid){};
	\node[circle,fill=black,inner sep=0pt,minimum size=3pt] (cdot1) at ($(pmid)+(+.3cm,0cm)$){};
	
	\draw [decorate,decoration={brace,amplitude=10pt},xshift=0pt,yshift=0pt]
	(0,-2cm) -- (-7.5cm,-2cm) node [midway, yshift=-.75cm] {$J_{P_1}$};
	\draw [decorate,decoration={brace,amplitude=10pt, mirror},xshift=0pt,yshift=0pt]
	(0,-2cm) -- (7.5cm,-2cm) node [midway, yshift=-.75cm] {$J_{P_2}$};

	\node at ($(u1)+(0,.5cm)$) {$v_0=w_0$};
	\node at ($(u2)+(0,-.5cm)$) {$v_0'=w_0'$};
	\draw[rounded corners] ($(u1)+(-.8,.8cm)$) rectangle ($(u2)+(.8,-.8cm)$);
	\node at ($(u1)+(0,1.15cm)$) {(replaces $u$ in $G$)};
	\node at ($(u1)+(0,1.6cm)$) {$\calA(u)$};
	
	\node at ($(vfirst1)+(0,.5cm)$) {$v_1$};
	\node at ($(vfirst2)+(0,-.5cm)$) {$v_1'$};
	\node at ($(vsecond1)+(0,.5cm)$) {$v_2$};
	\node at ($(vsecond2)+(0,-.5cm)$) {$v_2'$};
	\node at ($(vlast1)+(0,.5cm)$) {$v_{\ell/2-1}$};
	\node at ($(vlast2)+(0,-.5cm)$) {$v_{\ell/2-1}'$};
	\node at ($(p)+(-1,0cm)$) {$v^*\rightarrow c_{\ell/2}$};
	
	\node at ($(wfirst1)+(0,.5cm)$) {$w_1$};
	\node at ($(wfirst2)+(0,-.5cm)$) {$w_1'$};
	\node at ($(wsecond1)+(0,.5cm)$) {$w_2$};
	\node at ($(wsecond2)+(0,-.5cm)$) {$w_2'$};
	\node at ($(wlast1)+(0,.5cm)$) {$w_{\floor{\frac{q-\ell}{2}}}$};
	\node at ($(wlast2)+(0,-.5cm)$) {$w_{\floor{\frac{q-\ell}{2}}}'$};
	\node at ($(p1)+(1,.5cm)$) {$w_1^*\rightarrow c_{\ceil{\frac{q+\ell}{2}}}$};
	\node at ($(p2)+(1,-.5cm)$) {$w_2^*\rightarrow c_{\floor{\frac{q+\ell}{2}}}$};
	
	\end{tikzpicture}
	\caption{The graph $J_u$ if $\ell$ (the number of edges of $P_1$) is even and $q-\ell$ (the number of edges of $P_2$) is odd.}
	\label{fig:Ju}
\end{figure}
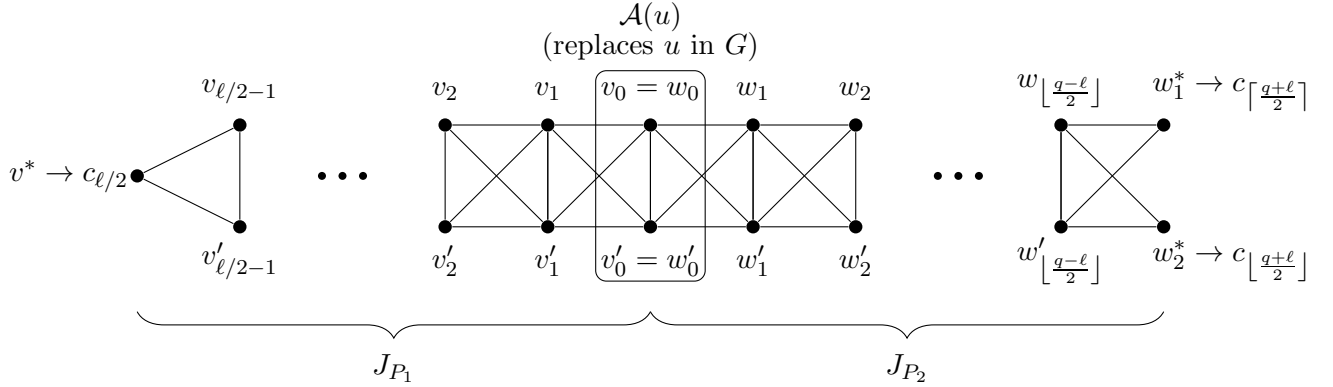

\bigskip
\noindent{\bf\boldmath Claim 1: For all $i\in \{0, \dots, \ell/2-1\}$, the image $h(\{v_i, v_i'\})$ contains at least one looped vertex. For all $j\in\{0, \dots, \floor{(q-\ell)/2}\}$, the image $h(\{w_j, w_j'\})$ contains at least one looped vertex.}

\smallskip
\noindent{\it Proof of Claim 1:} First consider $h(\{v_{\ell/2-1}, v_{\ell/2-1}'\})$. Assume that both $h(v_{\ell/2-1})$ and $h(v'_{\ell/2-1})$ are unlooped vertices of $H$. Since $\{v_{\ell/2-1}, v'_{\ell/2-1}\}$ is an edge in $J$, $h(v_{\ell/2-1})$ and $h(v'_{\ell/2-1})$ have to be connected by an edge in $H$ and therefore have to be different unlooped vertices. However, the vertices $v_{\ell/2-1}$ and $v_{\ell/2-1}'$ are also neighbours of $v^*$ which is pinned to the looped vertex $c_{\ell/2}$. It follows that $h(v_{\ell/2-1}),h(v'_{\ell/2-1}), c_{\ell/2}$ form a mixed triangle in $H$, a contradiction. This argument can be repeated iteratively for each $i= \ell/2-2, \dots, 0$ (using the fact that $h(\{v_{i+1}, v_{i+1}'\})$ contains a looped vertex).
The argument for $h(\{w_j, w_j'\})$ ($j\in\{0, \dots, \floor{(q-\ell)/2}\}$) is analogous.

\bigskip
\noindent{\bf\boldmath Claim 2: There exists a vertex $v\in \calA(u)$ with $h(v)\in \{c_0, c_\ell\}$.}

\smallskip
\noindent{\it Proof of Claim 2:}
Let $\calA(u)=\{v,v'\}$ where $h(v)$ is looped (this can be assumed without loss of generality by Claim 1). We will show that $h(v)\in\{c_0, c_{\ell}\}$: By Claim 1 and the construction of $J_{P_1}$ there exists a walk on $\ell/2$ edges in $H$ which uses looped vertices only and goes from $c_{\ell/2}$ to $h(v)$, which by assumption is looped itself. As this walk is looped it is in $H^*$, and as it contains $c_{\ell/2}$ it is in $H^{**}$. Hence,
\begin{equation}\label{equ:evenPath}
	h(v)\in \Nb{H^{**}}^{\ell/2}(c_{\ell/2}).
\end{equation}
Similarly, by the construction of $J_{P_2}$ we obtain
\begin{equation}\label{equ:oddPath}
h(v)\in \Nb{H^{**}}^{\ceil{(q-\ell)/2}}(c_{\floor{(q+\ell)/2}}) \cap \Nb{H^{**}}^{\ceil{(q-\ell)/2}}(c_{\ceil{(q+\ell)/2}}).
\end{equation}
Since $H^{**}$ is a reflexive triangle-extended cycle we have 
\[
\Nb{H^{**}}^{\ell/2}(c_{\ell/2})=\{c_0, \dots, c_{\ell}\} \cup \{d_i \mid i\in \calI\cap\{0,\dots, \ell-1\}\}
\] 
and
\[
\Nb{H^{**}}^{\ceil{(q-\ell)/2}}(c_{\floor{(q+\ell)/2}}) \cap \Nb{H^{**}}^{\ceil{(q-\ell)/2}}(c_{\ceil{(q+\ell)/2}})=\{c_\ell, \dots, c_{q-1}, c_0\} \cup \{d_i \mid i\in \calI\cap\{\ell,\dots, q-1\}\}.
\]  
Therefore, from Equations~\eqref{equ:evenPath} and~\eqref{equ:oddPath} it follows that $h(v)\in \{c_0, c_{\ell}\}$.

\bigskip
\noindent{\bf\boldmath Claim 3: Let $v\in \calA(u)$ and $h(v)\in \{c_0, c_\ell\}$. Then the image of the remaining vertices of $J_u$ under $h$ is determined completely. In particular, $h(\calA(u))=\{h(v)\}$.}

\smallskip
\noindent{\it Proof of Claim 3:}
Consider the case where $h(v)=c_0$ (the case $h(v)=c_\ell$ can be treated analogously). Since $H^{**}$ is a reflexive triangle-extended cycle and a connected component of $H^*$, the walk $c_{\ell/2}, c_{\ell/2-1}, \dots, c_0$ is the only $\ell/2$-edge walk on looped vertices in $H$ that leads from $c_{\ell/2}$ to $c_0$ (since there are no reflexive shortcuts in $C$). Thus, by Claim 1 and the construction of $J_{P_1}$, we have $\forall i\in \{0, \dots, \ell/2-1\},\ c_{i}\in h(\{v_i, v_i'\})$. We assume without loss of generality (by renaming) that for each $i\in \{0, \dots, \ell/2-1\}$ we have $h(v_i)=c_{i}$. 

Now consider the image $h(v_i')$ for some $i\in [\ell/2-1]$. The vertex $v_i'$ is a neighbour of $v_{i-1}$, $v_i$ and $v_{i+1}$ (or alternatively $v^*$ if $i=\ell/2-1$) in $J$. Therefore, by the fact that for each $i\in \{0, \dots, \ell/2-1\}$ we have $h(v_i)=c_{i}$ and by the pinning that ensures $h(v^*)=c_{\ell/2}$, we know that $h(v_i')$ has to be a neighbour of $c_{i-1}$, $c_{i}$ and $c_{i+1}$ in $H$. Since there is no edge between $c_{i-1}$ and $c_{i+1}$ we have $h(v_i')\notin \{c_{i-1}, c_{i+1}\}$. Then $h(v_i')=c_{i}$ as otherwise $c_{i-1},c_{i}, c_{i+1}, h(v_i')$ would form a square in $H$ which is a contradiction to the fact that $H$ is square-free. So we have established that for each $i\in [\ell/2-1]$ it holds that $h(v_i)=h(v_i')=c_{i}$. 

Similarly one establishes that for each $i\in [\floor{(q-\ell)/2}]$ it holds that $h(\{w_i, w_i'\})=c_{q-i}$: Note that by the construction of $J_{P_2}$ the homomorphism $h$ has to map $w_{\floor{(q-\ell)/2}}$ and $w_{\floor{(q-\ell)/2}}'$ to common neighbours of $c_{\floor{(q+\ell)/2}}$ and $c_{\ceil{(q+\ell)/2}}$. Since $H^{**}$ is a reflexive triangle-extended cycle and a connected component of $H^*$, the walk $c_{\ceil{(q+\ell)/2}} \dots, c_{q-1}, c_0$ is the only $\floor{(q-\ell)/2}$-edge walk on looped vertices in $H$ that leads from a common neighbour of $c_{\floor{(q+\ell)/2}}$ and $c_{\ceil{(q+\ell)/2}}$ to $c_0$. Therefore, we have $c_0\in h(\{w_0, w_0'\})$ and $\forall i\in [\floor{(q-\ell)/2}],\ c_{q-i}\in h(\{w_i, w_i'\})$. Then by the same arguments as before we establish that for each $i\in [\floor{(q-\ell)/2}]$ it holds that $h(w_i)=h(w_i')=c_{q-i}$.

Finally, $v_0'$ is a neighbour of $w_1$, $v_0 (=w_0)$ and $v_1$. Hence, $h(v_0')$ is a neighbour of $h(w_1)=c_{q-1}$, $h(v_0)=c_0$ and $h(v_1)=c_1$. Since there is no edge between $c_{q-1}$ and $c_{1}$ we have $h(v_i')\notin \{c_{q-1}, c_{1}\}$. Then $h(v_0')=c_{0}$ as otherwise $c_{q-1},c_{0}, c_{1}, h(v_0')$ would form a square in $H$ which is a contradiction to the fact that $H$ is square-free. We obtain $h(v_0')=h(v_0)=c_0$. This proves Claim 3.

\bigskip
From the construction of $J$ together with Claim 2 and Claim 3 we directly obtain that $\hom{(G,\boldS^G)}{H}=\hom{(J,\boldS^J)}{H}$, which gives the sought-for reduction in the case where $\ell$ is even and $q-\ell$ is odd. All other cases of $\ell$ even or odd and $q-\ell$ even or odd can be treated analogously.
\end{proof}

Now we can prove Lemma~\ref{lem:inducedCycleGeneral}.
\begin{leminducedCycleGeneral}
	\stateleminducedCycleGeneral
\end{leminducedCycleGeneral}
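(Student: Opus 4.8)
The plan is to reduce to the structural analysis already carried out for reflexive triangle-extended cycles. First I would dispose of the easy sub-cases: since $H$ is square-free and contains an induced reflexive cycle $C=c_0,\dots,c_{q-1}$ with $q\ge 5$, if $H$ contains a mixed triangle as an induced subgraph we are done by Lemmas~\ref{lem:hardtriangles1} and~\ref{lem:hardtriangles2}; if $H$ contains an induced $\WR{3}$ we are done by Lemma~\ref{lem:inducedWR3General}; and if $H$ contains an induced net we are done by Lemma~\ref{lem:inducedNet}. So assume from now on that $H$ has none of these three as an induced subgraph. Let $H^*$ be the subgraph of $H$ induced by its looped vertices. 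Every vertex of $C$ is looped, so $V(C)\subseteq V(H^*)$ and $C$ is an induced cycle of $H^*$; let $H^{**}$ be the connected component of $H^*$ containing $C$. Then $H^{**}$ is connected, reflexive, square-free (being an induced subgraph of $H$), contains $C$ as an induced cycle of length $q\ge 5$, and contains neither an induced $\WR{3}$ nor an induced net, since such a subgraph of $H^{**}$ would be induced in $H$. By Lemma~\ref{lem:extendedCycle}, $H^{**}$ is therefore a reflexive triangle-extended cycle whose associated reflexive cycle (in the sense of Definition~\ref{def:extendedCycle}) is exactly $C$, of length $q\ge 5$.

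Now set $\calL=\{\{c_0,c_1\},\{c_0,c_2\},\dots,\{c_0,c_{q-1}\}\}$. By Lemma~\ref{lem:LHomReflexiveCycleHardness} we have $\Hom{H,\calL}\eqap\sat$, so in particular $\sat\leap\Hom{H,\calL}$, and it suffices to prove $\Hom{H,\calL}\leap\Ret{H}$. Write $\mathcal{T}=\{S\subseteq V(H)\mid\abs{S}\in\{1,\abs{V(H)}\}\}$, so that $\Ret{H}=\Hom{H,\mathcal{T}}$. The idea is to peel off the two-element lists $\{c_0,c_i\}$ one at a time using Lemma~\ref{lem:extendedCycleAnalysis}, each time simulating one such list by gadgets that use only single-vertex and full lists. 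Concretely, for $1\le j\le q-1$ put $\calL_j=\{\{c_0,c_i\}\mid 1\le i\le j\}$ and $\calL_0=\emptyset$. Applying Lemma~\ref{lem:extendedCycleAnalysis} with $\calL_{j-1}\subsetneq\calL_j$ (note $\abs{\calL_{j-1}}=\abs{\calL_j}-1$ and $\calL_j\subseteq\calL$, and the hypotheses on $H$ are exactly what we established above) gives $\Hom{H,\calL_j}\leap\Hom{H,\calL_{j-1}\cup\mathcal{T}}$; moreover, the reduction constructed in the proof of that lemma modifies only the vertices whose list is the one being simulated and leaves every other vertex, hence every other list, untouched, so the same construction in fact establishes the stronger statement $\Hom{H,\calL_j\cup\mathcal{T}}\leap\Hom{H,\calL_{j-1}\cup\mathcal{T}}$. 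Chaining these reductions for $j=q-1,q-2,\dots,1$, together with the trivial identity reduction $\Hom{H,\calL}=\Hom{H,\calL_{q-1}}\leap\Hom{H,\calL_{q-1}\cup\mathcal{T}}$ (valid because $\calL_{q-1}\subseteq\calL_{q-1}\cup\mathcal{T}$), yields $\Hom{H,\calL}\leap\Hom{H,\calL_0\cup\mathcal{T}}=\Hom{H,\mathcal{T}}=\Ret{H}$. Combined with $\sat\leap\Hom{H,\calL}$ and transitivity of $\leap$, this gives $\sat\leap\Ret{H}$.

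The substantive content is already in the lemmas invoked: the structural Lemma~\ref{lem:extendedCycle}, which forces the looped part of $H$ into a triangle-extended cycle once the three forbidden induced subgraphs are excluded, and the gadget Lemma~\ref{lem:extendedCycleAnalysis}, which performs the single-vertex-list simulation of a two-element list. What must be checked with care in assembling this proof is the case analysis — that excluding mixed triangles, induced $\WR{3}$s, and induced nets genuinely puts the component $H^{**}$ of the looped subgraph into the hypotheses of Lemma~\ref{lem:extendedCycle}, in particular that an induced $\WR{3}$ or net of $H^{**}$ is also one of $H$ (immediate, since $H^{**}$ is an induced subgraph of $H$) — and the bookkeeping in the iterated use of Lemma~\ref{lem:extendedCycleAnalysis}: verifying that the list families $\calL_j$ nest as required and that the ``extended'' form of the reduction, in which trivial lists are already present on the input, is exactly what the proof of Lemma~\ref{lem:extendedCycleAnalysis} delivers. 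Neither of these steps requires a new idea, so I expect the final write-up to be short; if preferred, the $q-1$ gadget replacements can instead be performed simultaneously in a single reduction, since the gadgets attached to distinct vertices occupy disjoint vertex sets.
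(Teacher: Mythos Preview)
Your proposal is correct and follows essentially the same route as the paper's proof: dispose of mixed triangles, induced $\WR{3}$, and induced net via the earlier lemmas; apply Lemma~\ref{lem:extendedCycle} to the looped component containing $C$ to obtain a reflexive triangle-extended cycle; invoke Lemma~\ref{lem:LHomReflexiveCycleHardness} for $\sat$-hardness of $\Hom{H,\calL}$; then iterate Lemma~\ref{lem:extendedCycleAnalysis} to reach $\Ret{H}$. In fact you are more explicit than the paper about the iteration bookkeeping --- the paper simply writes ``use Lemma~\ref{lem:extendedCycleAnalysis} iteratively'' without spelling out that after one step the instance carries trivial lists as well, whereas you correctly note that the gadget construction in that lemma touches only the vertices carrying the list being eliminated and therefore yields the slightly stronger reduction $\Hom{H,\calL_j\cup\mathcal{T}}\leap\Hom{H,\calL_{j-1}\cup\mathcal{T}}$ needed for chaining.
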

	
\begin{proof}
	Suppose that $H$ contains a mixed triangle as an induced subgraph, then the statement of this lemma follows from Lemmas~\ref{lem:hardtriangles1} and~\ref{lem:hardtriangles2}.
	We can now assume that $H$ does not contain any mixed triangle as an induced subgraph. 
	
	Let $C=c_0, \dots, c_{q-1}$ be the reflexive cycle of length $q\ge 5$ in $H$. Let $H^*$ be the graph induced by the looped vertices in $H$. If $H$ (and hence $H^*$) contains an induced $\WR{3}$ or an induced net then $\sat\leap \Ret{H}$ by Lemmas~\ref{lem:inducedWR3General} and~\ref{lem:inducedNet}, respectively. Otherwise, the connected component of $H^*$ that contains the cycle $C$ has to be a reflexive triangle-extended cycle by Lemma~\ref{lem:extendedCycle} and therefore $H$ fulfills the requirements of Lemma~\ref{lem:extendedCycleAnalysis}. 
	
Let $\calL = \{\{c_0, c_1\}, \{c_0, c_2\}, \dots, \{c_0, c_{q-1}\}\}$. Then $\Hom{H,\calL}\eqap \sat$ by Lemma~\ref{lem:LHomReflexiveCycleHardness}. We can use Lemma~\ref{lem:extendedCycleAnalysis} iteratively to obtain $\Hom{H,\calL} \leap \Hom{H,\bigl\{S \subseteq V(H) \mid \abs{S}\in \{1,\abs{V(H)}\}\bigr\}}$. Note that by the problem definitions we have $\Hom{H,\bigl\{S \subseteq V(H) \mid \abs{S}\in \{1,\abs{V(H)}\}\bigr\}}=\Ret{H}$. Summarising,
	\[
	\sat \eqap \Hom{H,\calL} \leap \Hom{H,\bigl\{S \subseteq V(H) \mid \abs{S}\in \{1,\abs{V(H)}\}\bigr\}}=\Ret{H}.
	\]
\end{proof}

\section{Putting the Pieces together}\label{sec:finaltheorems}
This section contains the proof of Theorem~\ref{thm:RetNoSquare} (we restate it here for convenience). We will use the following theorem, which is a consequence of the classification in~\cite{FGZRet} for approximately counting retractions to graphs of girth at least $5$, since that proof does not use the fact that $H$ is triangle-free for irreflexive $H$.

\begin{thm}[{\cite[Theorem 2.3]{FGZRet}}]\label{thm:RetIrrNoSquare}
	Let $H$ be an irreflexive square-free graph.
	{
		\renewcommand{\theenumi}{\roman{enumi})}
		\renewcommand{\labelenumi}{\theenumi}
		\begin{myenumerate}            
			\item If every connected component of $H$ is a star, then $\Ret{H}$ is in $\FP$.
			\item Otherwise, if every connected component of $H$ is a caterpillar, then $\Ret{H}$ is approximation-equivalent to $\bis$.
			\item Otherwise, $\Ret{H}$ is approximation-equivalent to $\sat$.
		\end{myenumerate}
	}
\end{thm}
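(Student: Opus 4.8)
The plan is to obtain Theorem~\ref{thm:RetIrrNoSquare} as the restriction of the girth-$\ge 5$ classification, Theorem~\ref{thm:RetGirth5}, to \emph{irreflexive} input graphs. Two facts make this work. First, among the component types listed in Theorem~\ref{thm:RetGirth5}, the only ones that can occur as components of an irreflexive graph are the irreflexive stars of part~(i) and the irreflexive caterpillars of part~(ii): a single looped vertex, a looped edge, and a partially bristled reflexive path are all reflexive. Second --- and this is the point the paper is making --- when $H$ is irreflexive the proof of Theorem~\ref{thm:RetGirth5} in~\cite{FGZRet} never invokes triangle-freeness; the only global hypotheses it actually uses about an irreflexive $H$ are loop-freeness and the absence of an induced $4$-cycle, the latter so that common neighbourhoods of non-adjacent pairs stay small and controllable. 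Hence that proof reads verbatim with ``square-free'' in place of ``girth at least $5$''. I sketch the three parts below, indicating where triangle-freeness could have entered and why it does not.

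\textbf{Part~(i): $\FP$.} If every component of $H$ is an irreflexive star then $\Ret{H}\in\FP$. A homomorphism sends each component of the input graph $G$ into a component of $H$, so fix a component $G'$ of $G$ and a target star $K_{1,q}$ with centre $z$. The preimage of $z$ under any homomorphism, and its complement, are both independent in $G'$, so $G'$ must be bipartite; it then admits at most two proper $2$-colourings, and for each, the number of list-respecting homomorphisms (each list being a singleton or all of $V(H)$) is the product over the leaf-side vertices of the number of admissible leaves. Summing over the $O(1)$ colourings of each component of $G$ and multiplying over components gives the count in polynomial time. No triangle is used.

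\textbf{Part~(ii): $\bis$-equivalence.} Suppose not every component of $H$ is a star. For $\bis$-easiness: every component $H'$ of $H$ is an irreflexive caterpillar, hence a tree and therefore an irreflexive bipartite permutation graph, so the list-homomorphism trichotomy (Theorem~\ref{thm:LHomTricho}) gives $\LHom{H'}\leap\bis$; with $\Ret{H'}\leap\LHom{H'}$ (Observation~\ref{obs:HomToRetToLHom}) and the routine reduction of $\Ret{H}$ to a polynomially-sized combination of the $\Ret{H'}$ over components of the input graph, we obtain $\Ret{H}\leap\bis$, and triangle-freeness is nowhere used. For $\bis$-hardness: some component $H'$ is a caterpillar that is not a star, hence connected, irreflexive, not a reflexive clique and not an irreflexive star, so $\bis\leap\Hom{H'}$ by \cite[Theorem~1]{GGJBIS}; combining this with $\Hom{H'}\leap\Ret{H'}$ (Observation~\ref{obs:HomToRetToLHom}) and the standard reduction $\Ret{H'}\leap\Ret{H}$ from a connected component (pin a vertex of the input graph to a vertex of $H'$ to force the image into $H'$) yields $\bis\leap\Ret{H}$.

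\textbf{Part~(iii): $\sat$-equivalence --- the main obstacle.} Here $H$ has a component $H'$ that is irreflexive, square-free, and neither a star nor a caterpillar, and we must show $\sat\leap\Ret{H}$; by the component reduction it suffices to take $H=H'$ connected. If $H$ is a tree then it has girth $\infty\ge 5$, so Theorem~\ref{thm:RetGirth5} applies directly and gives $\Ret{H}\eqap\sat$. Otherwise $H$ contains a cycle; being square-free, a chordless cycle of $H$ has length $3$ or at least~$5$. If $H$ has an induced chordless cycle of length $\ge 5$, I would reduce from the $\sat$-hardness of the list-homomorphism problem with the size-$2$ lists of Lemma~\ref{lem:LHomReflexiveCycleHardness} and simulate those lists by gadgets using only singleton and full lists, exactly paralleling Lemma~\ref{lem:extendedCycleAnalysis} in the irreflexive setting (pin the ``midpoints'' of the two sub-paths of the cycle joining the two list vertices, using square-freeness to control common neighbours of vertices at even distance). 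The genuinely new configurations, invisible to the girth-$\ge 5$ proof, are those in which every chordless cycle of $H$ is a triangle; the simplest is a triangle component $K_3$, for which $\sat\leap\Ret{K_3}$ follows by simulating the small lists that make $\LHom{K_3}$ $\sat$-hard (Theorem~\ref{thm:LHomTricho}) via pinned neighbours of the triangle's vertices, and the general such case is handled similarly. The bulk of the work --- and the main obstacle --- is the audit: running through each $\sat$-hardness sub-case in the proof of Theorem~\ref{thm:RetGirth5}, confirming that for irreflexive $H$ only square-freeness is used (triangle-freeness serving there solely to discard the reflexive configurations, which cannot arise here), and supplying the triangle configurations above to bridge from ``girth at least $5$'' to ``square-free''.
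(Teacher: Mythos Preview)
The paper does not prove this theorem; it is imported verbatim as \cite[Theorem~2.3]{FGZRet}, with the one-line remark that the proof of the girth-$\ge 5$ classification in~\cite{FGZRet} does not use triangle-freeness when $H$ is irreflexive. Your high-level plan --- specialise Theorem~\ref{thm:RetGirth5} to irreflexive~$H$ and observe that the reflexive component types disappear --- is exactly the paper's justification, and your Parts~(i) and~(ii) are fine elaborations of routine material.

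Your Part~(iii), however, goes off the rails. You invoke Lemma~\ref{lem:LHomReflexiveCycleHardness} and propose to ``parallel'' Lemma~\ref{lem:extendedCycleAnalysis}, but both of these are about \emph{reflexive} cycles and reflexive triangle-extended cycles; their statements and gadget constructions rely essentially on loops (the size-$2$ cliques replacing path vertices, the walks on looped vertices, the pinning to looped midpoints). There is no straightforward irreflexive analogue, and in any case these lemmas are results of \emph{this} paper, developed precisely to handle the reflexive situations that Theorem~\ref{thm:RetIrrNoSquare} never sees. The actual $\sat$-hardness arguments for irreflexive square-free non-caterpillars in~\cite{FGZRet} use entirely different gadgetry, and the ``audit'' you flag as the main obstacle is the whole content of~\cite[Theorem~2.3]{FGZRet} --- it is neither reproduced in this paper nor something you can derive from the reflexive machinery here. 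So while your overall framing matches the paper's, your sketch of~(iii) does not constitute a proof and cites the wrong tools.
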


In the following lemma we collect the $\sat$-hardness results which we use to prove Theorem~\ref{thm:RetNoSquare}.

\begin{lem}\label{lem:sathardCollection}
	Let $H$ be a connected square-free graph other than a reflexive clique, a member of $\bisgraphs$, or an irreflexive caterpillar. Then $\Ret{H}$ is approximation-equivalent to $\sat$.
\end{lem}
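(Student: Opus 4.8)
The statement has two halves: the upper bound $\Ret{H}\leap\sat$ and the lower bound $\sat\leap\Ret{H}$. The upper bound is immediate: since $H$ is connected, Observation~\ref{obs:HomToRetToLHom} gives $\Ret{H}\leap\LHom{H}$, and Theorem~\ref{thm:LHomTricho} shows $\LHom{H}$ is at most $\sat$-equivalent in every case, so $\Ret{H}\leap\sat$. All the work is in the lower bound, which I would prove by contradiction: assume $\Ret{H}$ is \emph{not} $\sat$-hard and deduce that $H$ is a reflexive clique, a member of $\bisgraphs$, or an irreflexive caterpillar, contradicting the hypothesis.

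First dispose of the irreflexive case. If $H$ is irreflexive it is neither a reflexive clique nor a member of $\bisgraphs$ (every graph in $\bisgraphs$ has a looped vertex), so by hypothesis it is not an irreflexive caterpillar, and Theorem~\ref{thm:RetIrrNoSquare}(iii) gives $\Ret{H}\eqap\sat$ directly. So assume $H$ has a looped vertex and that $\Ret{H}$ is not $\sat$-hard. I would then extract four structural consequences. (a) $H$ has no mixed triangle as a subgraph, by Lemmas~\ref{lem:hardtriangles1} and~\ref{lem:hardtriangles2}. (b) $H$ has no induced $\WR{3}$, no induced net and no induced reflexive cycle of length at least $5$, by Lemmas~\ref{lem:inducedWR3General},~\ref{lem:inducedNet} and~\ref{lem:inducedCycleGeneral}. (c) For every looped vertex $b$, the graph $H[\NH(b)]$ has the form $\X{k_1}{k_2}{k_3}$ (as explained at the start of Section~\ref{sec:NeighbourhoodOfLoopedVertex}, using (a) and square-freeness, and noting $b$ has at least one neighbour since $H$ is connected with at least two vertices); by Observation~\ref{obs:PinNeighbourhood} together with the list of $\sat$-hard $\X{k_1}{k_2}{k_3}$ (Theorem~\ref{thm:RetGirth5} and Lemmas~\ref{lem:hardNeighbourhood2},~\ref{lem:hardNeighbourhood3},~\ref{lem:hardNeighbourhood4},~\ref{lem:inducedWR3General}), this $\X{k_1}{k_2}{k_3}$ must be one of the non-hard ones, in particular $k_2+k_3\le 2$. (d) Every unlooped vertex of $H$ is a degree-$1$ vertex whose unique neighbour is looped: if a looped $b$ had an unlooped neighbour $g$ with $\deg_H(g)\ge 2$, then square-freeness forces $b$ to be the only common neighbour of $g$ and of any other neighbour of $b$, so Lemma~\ref{lem:degree2bristle} would apply; and a connectivity argument (using that $H$ has a looped vertex) rules out a connected component consisting only of unlooped vertices.

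Now let $H^*$ be the subgraph of $H$ induced by its looped vertices. By (d), $H^*$ is obtained from $H$ by deleting degree-$1$ unlooped vertices, so $H^*$ is connected, reflexive and square-free, and by (b) it has no induced $\WR{3}$, no induced net and no induced reflexive cycle of length at least $5$. By Lemma~\ref{lem:extendedCycle} (the ``otherwise'' case), $H^*$ is a reflexive triangle-extended path with cycle $c_0,\dots,c_{q-1}$ and triangle set $\calI$ as in Definition~\ref{def:extendedCycle}. View $H^*$ as a chain of reflexive cliques $K_0,\dots,K_{q-2}$ with $K_i=\{c_i,c_{i+1}\}$ if $i\notin\calI$ and $K_i=\{c_i,c_{i+1},d_i\}$ if $i\in\calI$; square-freeness forces the $d_i$ to be distinct new vertices, so consecutive cliques meet in a single $c_i$ and non-consecutive ones are disjoint. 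It remains to locate the bristles, using (c): a bristle on an endpoint $c_0$ or $c_{q-1}$ makes its neighbourhood $\X{1}{1}{0}$ or $\X{1}{0}{1}$, both hard; a bristle on a triangle-vertex $d_i$ makes its neighbourhood $\X{1}{0}{1}$, hard; and for an interior vertex $c_i$ lying in $K_{i-1},K_i$ with $m$ bristles, the induced neighbourhood is $\X{m}{2-t}{t}$ where $t=\abs{\{i-1,i\}\cap\calI}\in\{0,1,2\}$, and the non-hard cases are exactly $m\le 1$ ($t=0$), $m\le 2$ ($t=1$), $m\le 4$ ($t=2$), i.e.\ $m\le(\abs{K_{i-1}}-1)(\abs{K_i}-1)$. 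Hence $H$ satisfies every condition of Definition~\ref{def:bisgraphs} with $Q=q-2$ and $p_i=c_i$ whenever $q\ge 3$, so $H\in\bisgraphs$; and if $q\le 2$ then, by the bristle analysis, $H=H^*$ is a reflexive clique ($K_1$, $K_2$ or $K_3$). In every case this contradicts the hypotheses on $H$, so $\sat\leap\Ret{H}$, and combined with the upper bound, $\Ret{H}\eqap\sat$.

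\textbf{Main obstacle.} The delicate part is the interplay of (c) and (d) with the final bristle bookkeeping: one must check that the established list of $\sat$-hard $\X{k_1}{k_2}{k_3}$ is sharp enough that \emph{every} non-hard looped neighbourhood is consistent with membership in $\bisgraphs$, and in particular that the numerical bound $\abs{B_i}\le(\abs{K_{i-1}}-1)(\abs{K_i}-1)$ in Definition~\ref{def:bisgraphs} coincides exactly with the boundary between the easy and hard cases of $\X{m}{2-t}{t}$ for each $t\in\{0,1,2\}$. Making this matching line up with no residual unclassified graphs is precisely what requires the full battery of hardness lemmas from Sections~\ref{sec:mixedTriangles}--\ref{sec:inducedCycle}.
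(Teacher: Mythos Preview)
Your proposal is correct and follows essentially the same route as the paper's proof: dispatch the irreflexive case via Theorem~\ref{thm:RetIrrNoSquare}, eliminate mixed triangles, induced $\WR{3}$, nets and long reflexive cycles via Lemmas~\ref{lem:hardtriangles1}--\ref{lem:hardtriangles2} and \ref{lem:inducedWR3General}--\ref{lem:inducedCycleGeneral}, apply Lemma~\ref{lem:extendedCycle} to identify $H^*$ as a reflexive triangle-extended path, and then use the $\X{k_1}{k_2}{k_3}$ hardness lemmas together with Lemma~\ref{lem:degree2bristle} to constrain the bristles exactly to the $\bisgraphs$ bound. Your contrapositive framing, the explicit $\Ret{H}\leap\sat$ upper bound, the direct argument that $H^*$ is connected via (d), and the explicit treatment of $q\le 2$ are tidy touches, but the substance and the lemmas invoked are the same.
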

\begin{proof}
	If $H$ is irreflexive then by assumption it is not a caterpillar. Thus $\Ret{H}$ is approximation-equivalent to $\sat$ by Theorem~\ref{thm:RetIrrNoSquare}. 
	
	If $H$ is not irreflexive, i.e.~if $H$ has at least one loop, then we collect different $\sat$-hardness results proved throughout this work to show hardness. If $H$ contains a mixed triangle as induced subgraph, then $\sat \leap \Ret{H}$ by Lemmas~\ref{lem:hardtriangles1} and~\ref{lem:hardtriangles2}. If $H$ does not contain a mixed triangle as an induced subgraph but contains a $\WR{3}$, a net or a reflexive cycle of length at least $5$ as an induced subgraph, then $\sat \leap \Ret{H}$ by Lemmas~\ref{lem:inducedWR3General},~\ref{lem:inducedNet} and~\ref{lem:inducedCycleGeneral}, respectively.
	It remains to show $\sat \leap \Ret{H}$ if $H$ is a graph with the following properties:
	\begin{myitemize}
		\item $H$ is connected and square-free.
		\item $H$ has at least one looped vertex.
		\item $H$ is not a reflexive clique.
		\item $H\notin \bisgraphs$.
		\item $H$ does not contain any of the following as an induced subgraph: a mixed triangle, a $\WR{3}$, a net, a reflexive cycle of length at least $5$.
	\end{myitemize}
	Let $H^*$ be a connected component in the graph induced by the looped vertices in $H$. (It will turn out that $H^*$ is actually the only connected component in this graph.) Then by the properties of $H$ and Lemma~\ref{lem:extendedCycle} we know that $H^*$ is a reflexive triangle-extended path. We recall the definition of a reflexive triangle-extended path from Definition~\ref{def:extendedPath}: $H^*$ is a reflexive path $c_0, \dots, c_{q-1}$ together with a set $\calI\subseteq \{0, \dots, q-2\}$, and a reflexive triangle $d_i, c_i, c_{i+1}$ for each $i\in \calI$.
	Since $H^*$ is not a reflexive clique it holds that $q-1\ge 2$.
	
	Note that $H^*\in \bisgraphs$ (where the set of bristles is empty and $c_i$ corresponds to $p_i$). For all $i\in\calI$ the clique $K_i$ has size $3$, for $i\notin \calI$ it has size $2$. Since $H\notin \bisgraphs$ and $H$ is connected, there exists a vertex $u$ outside of $H^*$ with a neighbour $v$ in $H^*$. The vertex $u$ has to be unlooped as otherwise it would be part of the reflexive connected component $H^*$. We consider four disjoint cases.
	\begin{itemize}
		\item If there exists a vertex $u \notin V(H^*)$ ($u$ is unlooped) which is adjacent to a vertex $v\in V(H^*)$ and $\deg_{H}(u)\ge 2$, then consider two different cases:
		\begin{itemize}
			\item If $u$ is adjacent to a vertex $w\in\NH(v)$ with $w\neq v$, then $w\neq u$ since $u$ is unlooped and $u,v,w$ is a mixed triangle, a contradiction.
			\item If $v$ is the only neighbour of $u$ in $\NH(v)$, then the requirements of Lemma~\ref{lem:degree2bristle} are met (with $b=v$ and $g=u$) and hence $\sat \leap \Ret{H}$.
		\end{itemize}
		\item If there exists a vertex $u \notin V(H^*)$ which is adjacent to a vertex $v\in V(H^*)$, $\deg_{H}(u)=1$ and $v\in \{d_i \mid i\in \calI\}$, then $H[\NH(v)]$ is a graph of the form $\X{k_1}{0}{1}$ where $k_1\ge 1$ (cf.~Figure~\ref{fig:NeighbourhoodOfLoopedVertex}) and therefore
		\[
		\sat\leap \Hom{\X{k_1}{0}{1}} \leap \Ret{\X{k_1}{0}{1}} = \Ret{H[\NH(v)]} \leap \Ret{H},
		\]
		by Lemma~\ref{lem:hardNeighbourhood2}, Observation~\ref{obs:HomToRetToLHom} and Observation~\ref{obs:PinNeighbourhood} (in the order of the reductions used).
		\item Suppose there exists a vertex $u \notin V(H^*)$ with $\deg_{H}(u)=1$ that is adjacent to a vertex $v\in \{c_0, c_{q-1}\}$.  Without loss of generality (by renaming the vertices of $H^*$) let $v=c_0$. If $0\in \calI$ then $c_0$ is part of a reflexive triangle $d_0, c_0, c_1$ in $H^*$ and $H[\NH(v)]$ is a graph of the form $\X{k_1}{0}{1}$ where $k_1\ge 1$. Then we have $\sat \leap \Ret{H}$ by the same arguments used in the previous case. If otherwise $0\notin \calI$ then $c_1$ is the only neighbour of $c_0$ in $H^*$ and $H[\NH(v)]$ is a graph of the form $\X{k_1}{1}{0}$ where $k_1\ge 1$. Then we use Theorem~\ref{thm:RetGirth5} to infer that $\sat \leap \Ret{\X{k_1}{1}{0}}$ (since $\X{k_1}{1}{0}$ has girth at least $5$ and therefore is subject to Theorem~\ref{thm:RetGirth5}). It follows that $\sat \leap \Ret{H}$ by the same arguments as in the previous case.
		\item If for every pair $u,v$ of adjacent vertices with $u\notin V(H^*)$ and $v\in V(H^*)$ we have $\deg_H(u)=1$ ($u$ is a so-called bristle) and $v\in \{c_1, \dots, c_{q-2}\}$, then $H$ is the triangle-extended path $H^*$ together with a number of bristles (all of which are attached to a vertex in $\{c_1, \dots, c_{q-2}\}$). 
		To match the notation of $\bisgraphs$ in Definition~\ref{def:bisgraphs} we set $Q=q-2$ and, for all $i\in \{0, \dots, Q+1\}$, we set $p_i=c_i$. Further, if $i\in \calI$ then $K_i=\{p_i, d_i,p_{i+1}\}$ and otherwise $K_i=\{p_i, p_{i+1}\}$. Note that $\abs{K_i}\in \{2,3\}$ which we will use in a moment.
		For each $i\in [Q]$ let $B_i$ be the set of unlooped neighbours (bristles) of $p_i$.
		By the fact that all unlooped vertices of $H$ have degree $1$ and a neighbour in $\{c_1, \dots, c_{q-2}\}=\{p_1, \dots, p_{Q}\}$ we have $V(H)= \bigcup_{i=0}^Q K_i \cup \bigcup_{i=1}^Q B_i$ and 
		$E(H) = \bigcup_{i=0}^Q \left(\ucp{K_i}{K_i}\right) \cup \bigcup_{i=1}^Q \left(\{p_i\} \times B_i\right)$.
		Since $H^*$ is a triangle-extended path we can also verify the properties $K_{i-1}\cap K_i = \{p_i\}$ (for $i\in [Q]$) and $K_{i}\cap K_j = \emptyset$ (for $i,j\in \{0, \dots, Q\}$ with $\abs{j-i}>1$).
		
		Therefore, since $H\notin \bisgraphs$, there exists $i\in [Q]$ such that at least one of the following holds:
		\begin{enumerate}
			\renewcommand\labelenumi{(\theenumi)}
			\renewcommand{\theenumi}{\roman{enumi}}
			\item \label{item:mainThm3} $\abs{K_{i-1}}=\abs{K_{i}}=2$ and $\abs{B_i}\ge2$.
			\item \label{item:mainThm4} $\abs{K_{i-1}}=2$ and $\abs{K_{i}}=3$ (or $\abs{K_{i-1}}=3$ and $\abs{K_{i}}=2$) and $\abs{B_i}\ge3$.
			\item \label{item:mainThm5} $\abs{K_{i-1}}=\abs{K_{i}}=3$ and $\abs{B_i}\ge5$.
		\end{enumerate}
		In all three cases we will show that the neighbourhood of $p_i$ induces a $\sat$-hard subgraph, i.e.~that $\sat \leap \Ret{H[\NH(p_i)]}$. Then, by Observation~\ref{obs:PinNeighbourhood}, we obtain $\sat \leap \Ret{H}$ which completes the proof of this case and with it the proof of the theorem.
		\begin{itemize}
			\item If item~\eqref{item:mainThm3} holds, then $H[\NH(p_i)]$ is of the form $\X{k_1}{2}{0}$ where $k_1\ge 2$. The graph $\X{k_1}{2}{0}$ has girth at least $5$ and therefore is subject to Theorem~\ref{thm:RetGirth5}. Since $\X{k_1}{2}{0}$ with $k_1\ge 2$ is a mixed graph but not a partially bristled reflexive path we obtain $\sat \leap \Ret{H[\NH(p_i)]}$ by Theorem~\ref{thm:RetGirth5}.
			\item If item~\eqref{item:mainThm4} holds, then $H[\NH(p_i)]$ is of the form $\X{k_1}{1}{1}$ where $k_1\ge 3$. Then $\sat \leap \Ret{H[\NH(p_i)]}$ by Lemma~\ref{lem:hardNeighbourhood3}.
			\item If item~\eqref{item:mainThm5} holds, then $H[\NH(p_i)]$ is of the form $\X{k_1}{0}{2}$ where $k_1\ge 5$. Then $\sat \leap \Ret{H[\NH(p_i)]}$ by Lemma~\ref{lem:hardNeighbourhood4}.
		\end{itemize}
	\end{itemize}
\end{proof}

We will use the following remark to deal with graphs that have multiple connected components.
\begin{rem}[{\cite[Remark 1.15]{FGZRet}}]\label{rem:connectivity}
	Let $H$ be a graph with connected components $H_1, \dots, H_k$. On the one hand it holds that $\forall j\in [k], \Ret{H_j}\leap \Ret{H}$. On the other hand, given an oracle for each $\Ret{H_j}$, we can construct a polynomial-time algorithm for $\Ret{H}$. 
\end{rem}

{\renewcommand{\thethm}{\getrefnumber{thm:RetNoSquare}}
	\begin{thm}
		\ThmRetNoSquare
	\end{thm}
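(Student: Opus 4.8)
The plan is to assemble Theorem~\ref{thm:RetNoSquare} from the ingredients already developed, splitting along the reflexivity structure of $H$ and then along connected components. First I would reduce to the connected case using Remark~\ref{rem:connectivity}: if each component of $H$ is handled, then $\Ret{H}$ inherits the classification (the FP and $\bis$ cases compose cleanly, and a single $\sat$-equivalent component pushes the whole problem to $\sat$, since $\bis\leap\sat$ and FP, $\bis$ are both $\leap\sat$). So the work is to classify $\Ret{H}$ for connected square-free $H$.

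\emph{Part (i).} If every component of $H$ is trivial (reflexive clique or irreflexive complete bipartite graph), then for a connected $H$ this means $H$ itself is a reflexive clique or an irreflexive complete bipartite graph. In either case $\Ret{H}$ is in FP: for reflexive cliques this is immediate (every list assignment extends), and for irreflexive complete bipartite graphs this follows from Theorem~\ref{thm:RetIrrNoSquare}(i) applied to stars plus the known FP algorithm for complete bipartite targets (equivalently one can cite the exact-counting classification~\cite{DG} combined with the fact that trivial components give $\FP$, noting that exact counting being in $\FP$ implies approximate counting is). Combining over components via Remark~\ref{rem:connectivity} gives case (i).

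\emph{Part (ii) ($\bis$-easiness and $\bis$-hardness).} Suppose every component is trivial, in $\bisgraphs$, or an irreflexive caterpillar, and at least one component is non-trivial. For $\bis$-easiness: trivial components are in FP (hence $\bis$-easy), components in $\bisgraphs$ are $\bis$-easy by Theorem~\ref{thm:bisEasyRet}, and irreflexive caterpillars are $\bis$-easy by Theorem~\ref{thm:RetIrrNoSquare}(ii); Remark~\ref{rem:connectivity} then gives $\Ret{H}\leap\bis$. For $\bis$-hardness I need that \emph{some} component $H_j$ satisfies $\bis\leap\Ret{H_j}$. The non-trivial component is either a non-trivial irreflexive caterpillar --- then Theorem~\ref{thm:RetIrrNoSquare}(ii) gives $\bis\leap\Ret{H_j}$ --- or a non-trivial member of $\bisgraphs$; in the latter case $H_j$ is connected with at least two vertices and is neither a reflexive clique nor an irreflexive star (a graph in $\bisgraphs$ with a non-trivial structure has a looped vertex, so it is not an irreflexive star, and if it had no bristles and all cliques overlapping it would be a reflexive path, which is a reflexive clique only when $Q=0$), so by \cite[Theorem~1]{GGJBIS} together with Observation~\ref{obs:HomToRetToLHom} we get $\bis\leap\Hom{H_j}\leap\Ret{H_j}\leap\Ret{H}$. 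One subtlety I would spell out: a single-edge reflexive graph $K_2$ (a reflexive clique) can occur as a $Q=0$ member-like object, so I must be careful that the ``non-trivial'' component really does fall under \cite[Theorem~1]{GGJBIS}; this is exactly the content of excluding reflexive cliques and irreflexive stars in that theorem, and a non-trivial $\bisgraphs$ graph avoids both.

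\emph{Part (iii) ($\sat$-hardness).} Otherwise, some component $H_j$ of $H$ is connected, square-free, and is \emph{not} a reflexive clique, \emph{not} a member of $\bisgraphs$, and \emph{not} an irreflexive caterpillar. Then Lemma~\ref{lem:sathardCollection} directly gives that $\Ret{H_j}$ is $\sat$-equivalent, and by Remark~\ref{rem:connectivity} $\sat\leap\Ret{H_j}\leap\Ret{H}$; the matching upper bound $\Ret{H}\leap\sat$ holds because $\Ret{H}\leap\LHom{H}$ (Observation~\ref{obs:HomToRetToLHom}) and $\LHom{H}\leap\sat$ always (or, more directly, $\Ret{H}$ is trivially in $\#\mathrm{P}$ and every $\#\mathrm{P}$ problem is $\leap\sat$). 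So $\Ret{H}$ is $\sat$-equivalent. The main obstacle here is purely bookkeeping: making sure the three cases of the theorem are genuinely exhaustive and mutually consistent at the component level --- in particular that ``not in the list of (ii)'' for the whole graph is equivalent to ``some component is outside the (ii)-list'', which is where the connectivity reduction and the observation that FP $\subseteq$ $\bis$-easy $\subseteq$ $\sat$-easy (with hardness propagating upward) do all the gluing. I would close by writing out this case analysis carefully, citing Lemma~\ref{lem:sathardCollection} for the heart of (iii) and Theorems~\ref{thm:bisEasyRet}, \ref{thm:RetIrrNoSquare}, and \ref{thm:RetGirth5} as needed.
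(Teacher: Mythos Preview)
Your proposal is correct and follows essentially the same route as the paper: reduce to connected components via Remark~\ref{rem:connectivity}, use Dyer--Greenhill for (i), combine Theorem~\ref{thm:bisEasyRet} and Theorem~\ref{thm:RetIrrNoSquare} with \cite[Theorem~1]{GGJBIS} for (ii), and invoke Lemma~\ref{lem:sathardCollection} for (iii). The only cosmetic difference is that the paper handles $\bis$-hardness in (ii) more tersely---it simply applies \cite[Theorem~1]{GGJBIS} to any non-trivial component without your case split on whether that component is in $\bisgraphs$ or is a caterpillar---so your digression about verifying that a non-trivial $\bisgraphs$ graph avoids the exceptional cases of that theorem is unnecessary (though not wrong).
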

	\addtocounter{thm}{-1}
}
\begin{proof}
	If $H$ is a trivial graph then $\LHom{H}\in \FP$ by the result of Dyer and Greenhill~\cite{DG} (see Theorem 7). From $\Ret{H}\leap \LHom{H}$ (Observation~\ref{obs:HomToRetToLHom}) it also follows that $\Ret{H}\in \FP$.
	Then item~{\it i)} follows from Remark~\ref{rem:connectivity}.
	
	If $H$ is a graph for which item~{\it i)} does not hold, then $H$ has a connected component $H'$ that is not a trivial graph. By Remark~\ref{rem:connectivity} we have $\Ret{H'} \leap \Ret{H}$. Then $\bis$-hardness in item {\it ii)} follows from the reduction $\Hom{H'}\leap \Ret{H'}$ (Observation~\ref{obs:HomToRetToLHom}) together with the fact that $\bis \leap \Hom{H'}$ since $H'$ is a non-trivial connected graph~\cite[Theorem 1]{GGJBIS}. 
	
	We will now prove $\bis$-easiness in item~{\it ii)}. If $H'$ is a trivial graph we have already pointed out that $\Ret{H'}\in \FP$ and hence $\Ret{H'}$ is trivially $\bis$-easy. If $H'\in \bisgraphs$ then $\Ret{H'}\leap \bis$ by Theorem~\ref{thm:bisEasyRet}. If $H'$ is an irreflexive caterpillar then $\Ret{H'}\leap \bis$ by Theorem~\ref{thm:RetIrrNoSquare}. Hence, $\bis$-easiness in item~{\it ii)} follows from Remark~\ref{rem:connectivity}.
	
	If $H$ is a graph for which item~{\it ii)} does not hold, then $H$ has a connected component $H'$ that is not trivial, not a member of $\bisgraphs$ and not an irreflexive caterpillar. Then $\Ret{H'}$ is approximation-equivalent to $\sat$ by Lemma~\ref{lem:sathardCollection} and $\Ret{H'} \leap \Ret{H}$ by Remark~\ref{rem:connectivity}. This proves item~{\it iii)}.
\end{proof}

\bibliography{\jobname}
\end{document}